%% file: main_and_si.tex
\def\CombinedDocument{1}
\documentclass[
  aps,pra,onecolumn,10pt,letterpaper,showkeys,
  secnumarabic,nobibnotes,nofootinbib,superscriptaddress,longbibliography
]{revtex4-2}
\ifdefined\StandaloneSI
  \usepackage{xr-hyper}
\fi
\input{preamble}
\ifdefined\CombinedDocument
  \usepackage{chapterbib}
\fi
\hypersetup{
  pdftitle={Minimal building blocks for molecular quantum circuits with exact spin symmetry},
  pdfauthor={Mengwei Liu and Zhenyu Li},
  pdfsubject={Dynamical Lie algebras and symmetry-preserving molecular variational circuits},
  pdfkeywords={variational quantum eigensolver, dynamical Lie algebra, spin adaptation, quantum chemistry, controllability},
  hidelinks,
  hypertexnames=false
}

\ifdefined\StandaloneSI
  \hypersetup{pdftitle={Supplementary Information: Minimal building blocks for molecular quantum circuits with exact spin symmetry}}
\fi

\makeatletter
\AtBeginDocument{%
  \let\auto@bib\@empty
  \let\auto@bib@innerbib\@empty
  \ifdefined\CombinedDocument
    \let\write@bibliographystyle\relax
    \let\ArticleTableOfContents\tableofcontents
    \renewcommand{\tableofcontents}{%
      \let\SavedCitationContext\the@ipfilectr
      \ArticleTableOfContents
      \global\let\the@ipfilectr\SavedCitationContext
    }%
  \fi
}
\makeatother

\newcommand{\ArticleBibliography}{%
  \ifdefined\CombinedDocument
\input{main_and_si.bbl}  \else
    \bibliography{refs}%
  \fi
}
\makeatletter
\newcommand{\SIBibliography}{%
  \begingroup
  \let\SIOriginalLabel\label
  \renewcommand{\label}[1]{\SIOriginalLabel{si-bib-##1}}%
  \def\NAT@bibsetnum##1{\setlength{\topsep}{0pt}\NATx@bibsetnum{##1}}%
  \let\@bibsetup\NAT@bibsetnum
  \renewcommand{\bibsection}{\section*{References}}%
  \ifdefined\CombinedDocument
\input{si_references.bbl}  \else
    \bibliography{refs}%
  \fi
  \endgroup
}
\makeatother

\begin{document}
\begin{cbunit}
\input{article_body}

\clearpage
\end{cbunit}
\begin{cbunit}
\input{supplementary/si_body}
\clearpage
\end{cbunit}
\end{document}

%% file: preamble.tex
\usepackage[letterpaper,scale=0.7,vmarginratio=2:3]{geometry}
\usepackage{amsthm,amsmath,amssymb,amsfonts,bm}
\usepackage{graphicx,xcolor}
\usepackage{enumitem,booktabs}
\usepackage{mathtools,array,placeins,longtable,needspace,microtype}
\usepackage{algorithm}
\usepackage{algpseudocode}
\usepackage[caption=false]{subfig}
\usepackage[english]{babel}

\usepackage[hidelinks,pdfusetitle]{hyperref}
\usepackage[capitalize]{cleveref}

\AtBeginDocument{%
  \renewcommand{\thesection}{\Roman{section}}%
  \renewcommand{\thesubsection}{\thesection.\arabic{subsection}}%
}

\theoremstyle{plain}
\newtheorem{theorem}{Theorem}
\newtheorem{lemma}[theorem]{Lemma}
\newtheorem{proposition}[theorem]{Proposition}
\newtheorem{corollary}[theorem]{Corollary}

\newtheorem{remark}[theorem]{Remark}
\AddToHook{env/theorem/begin}{\crefalias{theorem}{theorem}}
\AddToHook{env/lemma/begin}{\crefalias{theorem}{lemma}}
\AddToHook{env/proposition/begin}{\crefalias{theorem}{proposition}}
\AddToHook{env/corollary/begin}{\crefalias{theorem}{corollary}}
\AddToHook{env/assumption/begin}{\crefalias{theorem}{assumption}}
\AddToHook{env/definition/begin}{\crefalias{theorem}{definition}}
\AddToHook{env/remark/begin}{\crefalias{theorem}{remark}}

\crefname{theorem}{Theorem}{Theorems}
\Crefname{theorem}{Theorem}{Theorems}
\crefname{lemma}{Lemma}{Lemmas}
\Crefname{lemma}{Lemma}{Lemmas}
\crefname{proposition}{Proposition}{Propositions}
\Crefname{proposition}{Proposition}{Propositions}
\crefname{corollary}{Corollary}{Corollaries}
\Crefname{corollary}{Corollary}{Corollaries}
\crefname{assumption}{Assumption}{Assumptions}
\Crefname{assumption}{Assumption}{Assumptions}
\crefname{definition}{Definition}{Definitions}
\Crefname{definition}{Definition}{Definitions}
\crefname{remark}{Remark}{Remarks}
\Crefname{remark}{Remark}{Remarks}
\crefname{notation}{Notation}{Notations}
\Crefname{notation}{Notation}{Notations}
\crefname{fact}{Fact}{Facts}
\Crefname{fact}{Fact}{Facts}
\crefname{algorithm}{Algorithm}{Algorithms}
\Crefname{algorithm}{Algorithm}{Algorithms}



\input{manuscript_macros}

\makeatletter
\AtBeginDocument{%
  \def\pdfstartlink@attr{attr{/Border[0 0 0]/H/I}}%
}
\makeatother

%% file: manuscript_macros.tex
\newcommand{\Hsp}{\mathcal H}
\newcommand{\Fsp}{\mathcal F}
\newcommand{\Lie}{\operatorname{Lie}}
\newcommand{\End}{\operatorname{End}}
\newcommand{\Hom}{\operatorname{Hom}}
\newcommand{\im}{\operatorname{im}}
\newcommand{\ad}{\operatorname{ad}}
\newcommand{\so}{\mathfrak{so}}
\newcommand{\Span}{\operatorname{span}}
\newcommand{\Lpool}{\mathcal L}
\newcommand{\Xpool}{\mathcal X}
\newcommand{\Ypool}{\mathcal Y}
\newcommand{\Ypooldisj}{\mathcal Y^{\mathrm{disj}}}
\newcommand{\R}{\mathbb R}
\newcommand{\C}{\mathbb C}
\newcommand{\e}{\mathrm e}
\newcommand{\ket}[1]{\lvert #1\rangle}
\newcommand{\bra}[1]{\langle #1\rvert}
\newcommand{\tblcell}[2]{%
  \begin{minipage}[t]{#1}\raggedright\strut #2\strut\end{minipage}}

\allowdisplaybreaks[3]
\setlist{itemsep=0.2em,topsep=0.35em}
\crefname{equation}{Eq.}{Eqs.}
\Crefname{equation}{Equation}{Equations}
\crefname{figure}{Fig.}{Figs.}
\Crefname{figure}{Fig.}{Figs.}

%% file: si_references.bbl
\begin{thebibliography}{14}%
\makeatletter
\providecommand \@ifxundefined [1]{%
 \@ifx{#1\undefined}
}%
\providecommand \@ifnum [1]{%
 \ifnum #1\expandafter \@firstoftwo
 \else \expandafter \@secondoftwo
 \fi
}%
\providecommand \@ifx [1]{%
 \ifx #1\expandafter \@firstoftwo
 \else \expandafter \@secondoftwo
 \fi
}%
\providecommand \natexlab [1]{#1}%
\providecommand \enquote  [1]{``#1''}%
\providecommand \bibnamefont  [1]{#1}%
\providecommand \bibfnamefont [1]{#1}%
\providecommand \citenamefont [1]{#1}%
\providecommand \href@noop [0]{\@secondoftwo}%
\providecommand \href [0]{\begingroup \@sanitize@url \@href}%
\providecommand \@href[1]{\@@startlink{#1}\@@href}%
\providecommand \@@href[1]{\endgroup#1\@@endlink}%
\providecommand \@sanitize@url [0]{\catcode `\\12\catcode `\$12\catcode
  `\&12\catcode `\#12\catcode `\^12\catcode `\_12\catcode `\%12\relax}%
\providecommand \@@startlink[1]{}%
\providecommand \@@endlink[0]{}%
\providecommand \url  [0]{\begingroup\@sanitize@url \@url }%
\providecommand \@url [1]{\endgroup\@href {#1}{\urlprefix }}%
\providecommand \urlprefix  [0]{URL }%
\providecommand \Eprint [0]{\href }%
\providecommand \doibase [0]{https://doi.org/}%
\providecommand \selectlanguage [0]{\@gobble}%
\providecommand \bibinfo  [0]{\@secondoftwo}%
\providecommand \bibfield  [0]{\@secondoftwo}%
\providecommand \translation [1]{[#1]}%
\providecommand \BibitemOpen [0]{}%
\providecommand \bibitemStop [0]{}%
\providecommand \bibitemNoStop [0]{.\EOS\space}%
\providecommand \EOS [0]{\spacefactor3000\relax}%
\providecommand \BibitemShut  [1]{\csname bibitem#1\endcsname}%
\let\auto@bib@innerbib\@empty
\bibitem [{\citenamefont {Magoulas}\ and\ \citenamefont
  {Evangelista}(2025)}]{Magoulas2025SpinAdapted}%
  \BibitemOpen
  \bibfield  {author} {\bibinfo {author} {\bibfnamefont {I.}~\bibnamefont
  {Magoulas}}\ and\ \bibinfo {author} {\bibfnamefont {F.~A.}\ \bibnamefont
  {Evangelista}},\ }\href {https://doi.org/10.48550/arXiv.2511.13485} {\bibinfo
  {title} {Spin-adapted fermionic unitaries: From {Lie} algebras to compact
  quantum circuits}} (\bibinfo {year} {2025}),\ \Eprint
  {https://arxiv.org/abs/2511.13485v2} {arXiv:2511.13485v2 [quant-ph]}
  \BibitemShut {NoStop}%
\bibitem [{\citenamefont {Evangelista}\ \emph {et~al.}(2019)\citenamefont
  {Evangelista}, \citenamefont {Chan},\ and\ \citenamefont
  {Scuseria}}]{Evangelista2019UCC}%
  \BibitemOpen
  \bibfield  {author} {\bibinfo {author} {\bibfnamefont {F.~A.}\ \bibnamefont
  {Evangelista}}, \bibinfo {author} {\bibfnamefont {G.~K.-L.}\ \bibnamefont
  {Chan}},\ and\ \bibinfo {author} {\bibfnamefont {G.~E.}\ \bibnamefont
  {Scuseria}},\ }\bibfield  {title} {\bibinfo {title} {Exact parameterization
  of fermionic wave functions via unitary coupled cluster theory},\ }\href
  {https://doi.org/10.1063/1.5133059} {\bibfield  {journal} {\bibinfo
  {journal} {The Journal of Chemical Physics}\ }\textbf {\bibinfo {volume}
  {151}},\ \bibinfo {pages} {244112} (\bibinfo {year} {2019})}\BibitemShut
  {NoStop}%
\bibitem [{\citenamefont {Burton}\ \emph {et~al.}(2023)\citenamefont {Burton},
  \citenamefont {Marti-Dafcik}, \citenamefont {Tew},\ and\ \citenamefont
  {Wales}}]{Burton2023Exact}%
  \BibitemOpen
  \bibfield  {author} {\bibinfo {author} {\bibfnamefont {H.~G.~A.}\
  \bibnamefont {Burton}}, \bibinfo {author} {\bibfnamefont {D.}~\bibnamefont
  {Marti-Dafcik}}, \bibinfo {author} {\bibfnamefont {D.~P.}\ \bibnamefont
  {Tew}},\ and\ \bibinfo {author} {\bibfnamefont {D.~J.}\ \bibnamefont
  {Wales}},\ }\bibfield  {title} {\bibinfo {title} {Exact electronic states
  with shallow quantum circuits from global optimisation},\ }\href
  {https://doi.org/10.1038/s41534-023-00744-2} {\bibfield  {journal} {\bibinfo
  {journal} {npj Quantum Information}\ }\textbf {\bibinfo {volume} {9}},\
  \bibinfo {pages} {75} (\bibinfo {year} {2023})}\BibitemShut {NoStop}%
\bibitem [{\citenamefont {Burton}(2024)}]{Burton2024TUPS}%
  \BibitemOpen
  \bibfield  {author} {\bibinfo {author} {\bibfnamefont {H.~G.~A.}\
  \bibnamefont {Burton}},\ }\bibfield  {title} {\bibinfo {title} {Accurate and
  gate-efficient quantum ans{\"a}tze for electronic states without adaptive
  optimization},\ }\href {https://doi.org/10.1103/PhysRevResearch.6.023300}
  {\bibfield  {journal} {\bibinfo  {journal} {Physical Review Research}\
  }\textbf {\bibinfo {volume} {6}},\ \bibinfo {pages} {023300} (\bibinfo {year}
  {2024})}\BibitemShut {NoStop}%
\bibitem [{\citenamefont {Stergiou}\ and\ \citenamefont
  {Sawaya}(2026)}]{Stergiou2026}%
  \BibitemOpen
  \bibfield  {author} {\bibinfo {author} {\bibfnamefont {A.}~\bibnamefont
  {Stergiou}}\ and\ \bibinfo {author} {\bibfnamefont {N.~P.~D.}\ \bibnamefont
  {Sawaya}},\ }\href {https://doi.org/10.48550/arXiv.2605.00979} {\bibinfo
  {title} {Universality of quantum gates in particle and symmetry constrained
  subspaces}} (\bibinfo {year} {2026}),\ \Eprint
  {https://arxiv.org/abs/2605.00979v1} {arXiv:2605.00979v1 [quant-ph]}
  \BibitemShut {NoStop}%
\bibitem [{\citenamefont {Oszmaniec}\ and\ \citenamefont
  {Zimbor{\'a}s}(2017)}]{Oszmaniec2017}%
  \BibitemOpen
  \bibfield  {author} {\bibinfo {author} {\bibfnamefont {M.}~\bibnamefont
  {Oszmaniec}}\ and\ \bibinfo {author} {\bibfnamefont {Z.}~\bibnamefont
  {Zimbor{\'a}s}},\ }\bibfield  {title} {\bibinfo {title} {Universal extensions
  of restricted classes of quantum operations},\ }\href
  {https://doi.org/10.1103/PhysRevLett.119.220502} {\bibfield  {journal}
  {\bibinfo  {journal} {Physical Review Letters}\ }\textbf {\bibinfo {volume}
  {119}},\ \bibinfo {pages} {220502} (\bibinfo {year} {2017})}\BibitemShut
  {NoStop}%
\bibitem [{\citenamefont {Gard}\ \emph {et~al.}(2020)\citenamefont {Gard},
  \citenamefont {Zhu}, \citenamefont {Barron}, \citenamefont {Mayhall},
  \citenamefont {Economou},\ and\ \citenamefont {Barnes}}]{Gard2020}%
  \BibitemOpen
  \bibfield  {author} {\bibinfo {author} {\bibfnamefont {B.~T.}\ \bibnamefont
  {Gard}}, \bibinfo {author} {\bibfnamefont {L.}~\bibnamefont {Zhu}}, \bibinfo
  {author} {\bibfnamefont {G.~S.}\ \bibnamefont {Barron}}, \bibinfo {author}
  {\bibfnamefont {N.~J.}\ \bibnamefont {Mayhall}}, \bibinfo {author}
  {\bibfnamefont {S.~E.}\ \bibnamefont {Economou}},\ and\ \bibinfo {author}
  {\bibfnamefont {E.}~\bibnamefont {Barnes}},\ }\bibfield  {title} {\bibinfo
  {title} {Efficient symmetry-preserving state preparation circuits for the
  variational quantum eigensolver algorithm},\ }\href
  {https://doi.org/10.1038/s41534-019-0240-1} {\bibfield  {journal} {\bibinfo
  {journal} {npj Quantum Information}\ }\textbf {\bibinfo {volume} {6}},\
  \bibinfo {pages} {10} (\bibinfo {year} {2020})}\BibitemShut {NoStop}%
\bibitem [{\citenamefont {Arrazola}\ \emph {et~al.}(2022)\citenamefont
  {Arrazola}, \citenamefont {{Di Matteo}}, \citenamefont {Quesada},
  \citenamefont {Jahangiri}, \citenamefont {Delgado},\ and\ \citenamefont
  {Killoran}}]{Arrazola2022}%
  \BibitemOpen
  \bibfield  {author} {\bibinfo {author} {\bibfnamefont {J.~M.}\ \bibnamefont
  {Arrazola}}, \bibinfo {author} {\bibfnamefont {O.}~\bibnamefont {{Di
  Matteo}}}, \bibinfo {author} {\bibfnamefont {N.}~\bibnamefont {Quesada}},
  \bibinfo {author} {\bibfnamefont {S.}~\bibnamefont {Jahangiri}}, \bibinfo
  {author} {\bibfnamefont {A.}~\bibnamefont {Delgado}},\ and\ \bibinfo {author}
  {\bibfnamefont {N.}~\bibnamefont {Killoran}},\ }\bibfield  {title} {\bibinfo
  {title} {Universal quantum circuits for quantum chemistry},\ }\href
  {https://doi.org/10.22331/q-2022-06-20-742} {\bibfield  {journal} {\bibinfo
  {journal} {Quantum}\ }\textbf {\bibinfo {volume} {6}},\ \bibinfo {pages}
  {742} (\bibinfo {year} {2022})}\BibitemShut {NoStop}%
\bibitem [{\citenamefont {Tang}\ \emph {et~al.}(2021)\citenamefont {Tang},
  \citenamefont {Shkolnikov}, \citenamefont {Barron}, \citenamefont {Grimsley},
  \citenamefont {Mayhall}, \citenamefont {Barnes},\ and\ \citenamefont
  {Economou}}]{Tang2021}%
  \BibitemOpen
  \bibfield  {author} {\bibinfo {author} {\bibfnamefont {H.~L.}\ \bibnamefont
  {Tang}}, \bibinfo {author} {\bibfnamefont {V.~O.}\ \bibnamefont
  {Shkolnikov}}, \bibinfo {author} {\bibfnamefont {G.~S.}\ \bibnamefont
  {Barron}}, \bibinfo {author} {\bibfnamefont {H.~R.}\ \bibnamefont
  {Grimsley}}, \bibinfo {author} {\bibfnamefont {N.~J.}\ \bibnamefont
  {Mayhall}}, \bibinfo {author} {\bibfnamefont {E.}~\bibnamefont {Barnes}},\
  and\ \bibinfo {author} {\bibfnamefont {S.~E.}\ \bibnamefont {Economou}},\
  }\bibfield  {title} {\bibinfo {title} {{Qubit-ADAPT-VQE}: An adaptive
  algorithm for constructing hardware-efficient ans{\"a}tze on a quantum
  processor},\ }\href {https://doi.org/10.1103/PRXQuantum.2.020310} {\bibfield
  {journal} {\bibinfo  {journal} {PRX Quantum}\ }\textbf {\bibinfo {volume}
  {2}},\ \bibinfo {pages} {020310} (\bibinfo {year} {2021})}\BibitemShut
  {NoStop}%
\bibitem [{\citenamefont {Shkolnikov}\ \emph {et~al.}(2023)\citenamefont
  {Shkolnikov}, \citenamefont {Mayhall}, \citenamefont {Economou},\ and\
  \citenamefont {Barnes}}]{Shkolnikov2023}%
  \BibitemOpen
  \bibfield  {author} {\bibinfo {author} {\bibfnamefont {V.~O.}\ \bibnamefont
  {Shkolnikov}}, \bibinfo {author} {\bibfnamefont {N.~J.}\ \bibnamefont
  {Mayhall}}, \bibinfo {author} {\bibfnamefont {S.~E.}\ \bibnamefont
  {Economou}},\ and\ \bibinfo {author} {\bibfnamefont {E.}~\bibnamefont
  {Barnes}},\ }\bibfield  {title} {\bibinfo {title} {Avoiding symmetry
  roadblocks and minimizing the measurement overhead of adaptive variational
  quantum eigensolvers},\ }\href {https://doi.org/10.22331/q-2023-06-12-1040}
  {\bibfield  {journal} {\bibinfo  {journal} {Quantum}\ }\textbf {\bibinfo
  {volume} {7}},\ \bibinfo {pages} {1040} (\bibinfo {year} {2023})}\BibitemShut
  {NoStop}%
\bibitem [{\citenamefont {Haidar}\ \emph {et~al.}(2025)\citenamefont {Haidar},
  \citenamefont {Adjoua}, \citenamefont {Badreddine}, \citenamefont {Peruzzo},\
  and\ \citenamefont {Piquemal}}]{Haidar2025}%
  \BibitemOpen
  \bibfield  {author} {\bibinfo {author} {\bibfnamefont {M.}~\bibnamefont
  {Haidar}}, \bibinfo {author} {\bibfnamefont {O.}~\bibnamefont {Adjoua}},
  \bibinfo {author} {\bibfnamefont {S.}~\bibnamefont {Badreddine}}, \bibinfo
  {author} {\bibfnamefont {A.}~\bibnamefont {Peruzzo}},\ and\ \bibinfo {author}
  {\bibfnamefont {J.-P.}\ \bibnamefont {Piquemal}},\ }\bibfield  {title}
  {\bibinfo {title} {Non-iterative disentangled unitary coupled-cluster based
  on {Lie}-algebraic structure},\ }\href
  {https://doi.org/10.1088/2058-9565/adb3c5} {\bibfield  {journal} {\bibinfo
  {journal} {Quantum Science and Technology}\ }\textbf {\bibinfo {volume}
  {10}},\ \bibinfo {pages} {025031} (\bibinfo {year} {2025})}\BibitemShut
  {NoStop}%
\bibitem [{\citenamefont {Viswanathan}\ \emph {et~al.}(2026)\citenamefont
  {Viswanathan}, \citenamefont {Adjoua}, \citenamefont {Feniou}, \citenamefont
  {Badreddine},\ and\ \citenamefont {Piquemal}}]{Viswanathan2026}%
  \BibitemOpen
  \bibfield  {author} {\bibinfo {author} {\bibfnamefont {Y.}~\bibnamefont
  {Viswanathan}}, \bibinfo {author} {\bibfnamefont {O.}~\bibnamefont {Adjoua}},
  \bibinfo {author} {\bibfnamefont {C.}~\bibnamefont {Feniou}}, \bibinfo
  {author} {\bibfnamefont {S.}~\bibnamefont {Badreddine}},\ and\ \bibinfo
  {author} {\bibfnamefont {J.-P.}\ \bibnamefont {Piquemal}},\ }\href
  {https://doi.org/10.48550/arXiv.2511.22593} {\bibinfo {title} {An optimized
  construction of {Lie} algebra generator pools for variational quantum
  eigensolvers in chemistry}} (\bibinfo {year} {2026}),\ \Eprint
  {https://arxiv.org/abs/2511.22593v3} {arXiv:2511.22593v3 [quant-ph]}
  \BibitemShut {NoStop}%
\bibitem [{\citenamefont {Anselmetti}\ \emph {et~al.}(2021)\citenamefont
  {Anselmetti}, \citenamefont {Wierichs}, \citenamefont {Gogolin},\ and\
  \citenamefont {Parrish}}]{Anselmetti2021}%
  \BibitemOpen
  \bibfield  {author} {\bibinfo {author} {\bibfnamefont {G.-L.~R.}\
  \bibnamefont {Anselmetti}}, \bibinfo {author} {\bibfnamefont
  {D.}~\bibnamefont {Wierichs}}, \bibinfo {author} {\bibfnamefont
  {C.}~\bibnamefont {Gogolin}},\ and\ \bibinfo {author} {\bibfnamefont {R.~M.}\
  \bibnamefont {Parrish}},\ }\bibfield  {title} {\bibinfo {title} {Local,
  expressive, quantum-number-preserving {VQE} ans{\"a}tze for fermionic
  systems},\ }\href {https://doi.org/10.1088/1367-2630/ac2cb3} {\bibfield
  {journal} {\bibinfo  {journal} {New Journal of Physics}\ }\textbf {\bibinfo
  {volume} {23}},\ \bibinfo {pages} {113010} (\bibinfo {year}
  {2021})}\BibitemShut {NoStop}%
\bibitem [{\citenamefont {Sapova}\ and\ \citenamefont
  {Fedorov}(2022)}]{Sapova2022}%
  \BibitemOpen
  \bibfield  {author} {\bibinfo {author} {\bibfnamefont {M.~D.}\ \bibnamefont
  {Sapova}}\ and\ \bibinfo {author} {\bibfnamefont {A.~K.}\ \bibnamefont
  {Fedorov}},\ }\bibfield  {title} {\bibinfo {title} {Variational quantum
  eigensolver techniques for simulating carbon monoxide oxidation},\ }\href
  {https://doi.org/10.1038/s42005-022-00982-4} {\bibfield  {journal} {\bibinfo
  {journal} {Communications Physics}\ }\textbf {\bibinfo {volume} {5}},\
  \bibinfo {pages} {199} (\bibinfo {year} {2022})}\BibitemShut {NoStop}%
\end{thebibliography}%

%% file: article_body.tex
\title{Minimal building blocks for molecular quantum circuits\texorpdfstring{\\}{ }
with exact spin symmetry}

\input{author_information.tex}

\begin{abstract}
Preserving particle number and spin helps quantum circuits target molecular electronic states, but does not guarantee access to every state with the required quantum numbers. We determine which additional operations, combined with spin-independent orbital rotations connecting all spatial orbitals, generate every real state-space rotation within each complete subspace of fixed particle number \(N\), total spin \(S\), and spin projection \(M_S\). We consider spin-free molecular calculations in real orbitals, without an additional spatial-symmetry restriction, and continuously tunable operations that preserve particle number, full spin symmetry, and real amplitudes. Below the maximal-spin limits set by the electron and hole numbers, repeated singlet-pair transfer between any two fixed spatial orbitals is sufficient. On nontrivial maximal-spin boundaries, pair transfer vanishes, and an additional generator built from one- and two-electron terms is sufficient exactly when its action in the target subspace is not a linear combination of orbital-rotation generators. The minimum number of spatial orbitals needed by one additional generator is two in the interior and three on nontrivial boundaries, even when terms involving more than two electrons are allowed. Both minima are attained using only one- and two-electron terms. The three-orbital optimum rotates two orbitals according to the occupation of a third, and its unitary factors exactly into eight commuting Pauli rotations. Molecular benchmarks show that this operation removes the observed boundary energy-error plateaus, while sparse interior constructions attain the prescribed energy accuracy with fewer compiled CNOT gates in selected fixed-orbital comparisons.
\end{abstract}

\keywords{dynamical Lie algebra, spin-preserving quantum circuits,
real-state controllability, variational quantum deflation, quantum chemistry}

\maketitle
\tableofcontents
\vspace{2em}

\input{sections/main_results_working.tex}

\begin{acknowledgments}
This work was supported by the National Natural Science Foundation of China
(Grant No. 22393913).
\end{acknowledgments}

\section*{Data Availability}
The formal-verification development, numerical data, and reproducibility
code for this work are available through Zenodo at
\href{https://doi.org/10.5281/zenodo.22230014}{10.5281/zenodo.22230014}~\cite{Liu2026Zenodo}.
The archive contains the Lean formalization, figure and table data,
accepted circuit parameters and states, numerical problem definitions, and an
independent Qiskit~\cite{Qiskit2024} workflow that reconstructs and replays
the 650 circuits underlying main Figs.~\ref{fig:benchmark-boundary-repair}
and~\ref{fig:benchmark-sota} and Supplementary Fig.~2.
Production VQD
optimization and the Q\textsuperscript{2}Chemistry compilation used for the
manuscript resource counts were performed with
Q\textsuperscript{2}Chemistry~\cite{Fan2022Q2Chemistry}, which is not
distributed in the archive. The Qiskit resource results follow a separate
compilation protocol and provide an independent circuit-replay and validation
route rather than replacements for the manuscript resource counts.

\appendix
\renewcommand{\thesubsection}{\thesection.\arabic{subsection}}
\input{appendices/appendix_a_preliminaries.tex}

\input{appendices/appendix_b_boundary.tex}
\input{appendices/appendix_c_interior.tex}
\input{appendices/appendix_d_support_factorization.tex}
\input{appendices/appendix_f_methods.tex}

\nocite{Tang2021,Shkolnikov2023,Haidar2025,Viswanathan2026,Sapova2022}
  \ifdefined\CombinedDocument
\input{main_and_si.bbl}  \else
    \bibliography{refs}%
  \fi

%% file: author_information.tex
\author{Mengwei Liu}
\affiliation{State Key Laboratory of Precision and Intelligent Chemistry,
University of Science and Technology of China, Hefei 230026, China}
\affiliation{Hefei National Laboratory,
University of Science and Technology of China, Hefei 230088, China}
\author{Zhenyu Li}
\email{zyli@ustc.edu.cn}
\affiliation{State Key Laboratory of Precision and Intelligent Chemistry,
University of Science and Technology of China, Hefei 230026, China}
\affiliation{Hefei National Laboratory,
University of Science and Technology of China, Hefei 230088, China}

%% file: sections/main_results_working.tex
\input{sections/01_introduction.tex}
\input{sections/02_classification.tex}
\input{sections/03_mechanism.tex}
\input{sections/04_implementation.tex}
\input{sections/05_benchmarks.tex}
\input{sections/06_discussion.tex}

%% file: sections/01_introduction.tex
\section{Introduction}

Molecular electronic states are distinguished not only by their energies but also by their particle number, total spin, and spin projection. When the quantum numbers of a target state are known, building them into a quantum circuit restricts the search to the subspace containing states with those quantum numbers, called a symmetry sector \cite{Gard2020}. This is particularly useful for excited-state calculations: a state that is excited in the full molecular spectrum can be the lowest-energy state in its own sector. Excluding states with incompatible quantum numbers simplifies state targeting without a spin-targeting penalty; methods such as variational quantum deflation (VQD) then need only distinguish lower-energy states within the chosen sector \cite{Higgott2019VQD}.

We consider the complete subspace specified simultaneously by particle number \(N\), total spin \(S\), and spin projection \(M_S\), with no additional restriction by molecular point-group symmetry. Our physical setting is a nonrelativistic, spin-free molecular Hamiltonian, represented in real spatial orbitals and without magnetic or spin--orbit terms. In this setting the Hamiltonian has a real matrix representation in a real many-electron basis with definite particle number and spin, and its eigenvectors may be chosen with real amplitudes \cite{Helgaker2000}. We therefore ask whether every real state in the complete fixed-\((N,S,M_S)\) subspace can be reached from a chosen real reference in that subspace, using only operations that preserve the specified quantum numbers.

We start from orbital rotations, which mix spatial orbitals in the same way for both spin components. Additional operations are built from terms acting on one electron or an electron pair at a time, conventionally called one- and two-body terms. This counts the electrons involved in each term, not the number of spatial orbitals it uses. Each operation has an adjustable angle; the operator specifying how the state changes as this angle is varied is called its generator. Every allowed operation must preserve particle number, commute with all three components of total spin, and preserve real amplitudes.

Applying operations in different orders, together with their inverses, produces transformations described at small angles by commutators. Repeated commutators of one- and two-body generators can contain higher-body terms, a mechanism used in general constructions of arbitrarily accurate fermionic wavefunctions \cite{Evangelista2019UCC}.

Neither symmetry preservation nor the appearance of higher-body terms guarantees complete access within a sector: operations may vanish or become redundant after restriction, and their combinations may still leave allowed states inaccessible. Such a restriction cannot be removed by a longer sequence of the same operations or a better optimizer.

Previous work has developed spin-preserving products of simple operations and shown how some operations can be removed while their effects are recovered through combinations of those retained \cite{Burton2023Exact,Magoulas2025SpinAdapted}. Here we address a structural question in symmetry-preserving quantum-circuit design: with orbital rotations already available, how simple can the elementary operations remain while generating every real state-space rotation within a complete fixed-\((N,S,M_S)\) sector? The resulting classification is sector-dependent but independent of the particular molecular Hamiltonian. We therefore ask which single additional generator is sufficient, how few spatial orbitals it must involve, and whether the resulting construction remains useful in a finite circuit.

\subsection{Main results and physical picture}

The relevant distinction is set by the largest total spin compatible with the orbital space. For \(M\) spatial orbitals and \(N\) electrons, the spin is bounded both by \(N/2\) and by \((2M-N)/2\). The second bound counts holes---the missing electrons relative to a fully occupied orbital space. We call sectors saturating either bound the electron or hole maximal-spin boundary, respectively; sectors below both bounds form the interior.

Throughout the interior, a particularly simple operation is enough: it transfers a spin-singlet electron pair from one spatial orbital to another. We denote its generator on distinct orbitals \(p\) and \(q\) by \(X_{pq}\). Any one fixed \(X_{pq}\) with \(p\ne q\), together with orbital rotations along a connected graph, generates every real rotation of the many-electron amplitude vector.

Commutators of pair transfers make a transfer depend on the occupations of other orbitals. This allows us to separate changes that a single pair transfer would apply to several configurations at once.

At either maximal-spin boundary, the same pair transfer has no effect. Its failure is transparent in the component with \(M_S=S\): on the electron boundary there is no opposite-spin pair to remove, whereas on the hole boundary every spatial orbital already contains a spin-up electron, blocking transfer of an additional pair. Because the operations commute with the full spin action, this conclusion applies to every allowed \(M_S\), not only to the component used to explain it. Some limiting sectors already admit all required transformations using orbital rotations alone, or contain only one physical state. We call the remaining boundary sectors nontrivial. In these sectors, the disappearance of pair transfer leaves a genuine restriction on the accessible states.

The classification extends beyond this example. Within the allowed class of one- and two-body generators, we establish a necessary-and-sufficient boundary criterion: an additional generator restores all real state-space rotations precisely when its action within the target sector is not a linear combination of the orbital-rotation generators on all orbital pairs. The comparison includes rotations obtainable indirectly from the available connected set. This criterion applies to arbitrary generators in that class, rather than only to selected excitation types.

The classification determines how many spatial orbitals an additional generator must involve. Every allowed generator acting on only two spatial orbitals reduces to an orbital rotation on a maximal-spin boundary. Changing the two-orbital operation, or adding more of them, therefore cannot resolve the obstruction on a nontrivial boundary. We prove that the minimum number of spatial orbitals needed by an additional generator to produce all real state-space rotations is two in the interior and three on nontrivial boundaries. The remaining endpoint sectors require no addition. This count is taken for the generator before restriction to the target sector and is called its orbital support. Both minima can be attained using only one- and two-body terms; involving three spatial orbitals does not require a three-body elementary generator.

For a sector of dimension \(d\), the sufficient operation sets identified here generate the full dynamical Lie algebra \(\mathfrak{so}(d)\), a stronger result than state reachability alone \cite{DAlessandro2007}.

The boundary optimum is attained by an operation that lets the occupation of a third orbital determine a rotation between two others. The rotation proceeds in opposite directions when the third orbital is empty or doubly occupied, and is absent when that orbital is singly occupied. If \(L_{pq}\) generates the spin-independent rotation between orbitals \(p\) and \(q\), and \(n_c\) counts the electrons in a distinct orbital \(c\), the additional generator is
\[
C_{c;pq}=(n_c-1)L_{pq}.
\]
The factor \(n_cL_{pq}\) is two-body, and subtracting \(L_{pq}\) adds only an already available one-body rotation. Together with orbital rotations, any one fixed choice of three distinct orbitals is sufficient on every nontrivial boundary. Under the Jordan--Wigner mapping, its unitary factors exactly into eight mutually commuting Pauli rotations, with no product-formula error.

For \(M\ge3\), the classification also yields a fixed three-orbital generator that suffices with orbital rotations in every sector separately, although two orbitals are enough in the interior.

Finally, the ability to reach every state does not specify the depth or optimization effort needed to reach a chemically relevant one. We examine the boundary mechanism in high-spin \(\mathrm{H}_4\) and an active-space model of \(\mathrm{O}_2\), and test sparse interior constructions on linear \(\mathrm{H}_6\), tetramethyleneethane, and \(\mathrm{N}_2\) dissociation. The conditioned rotation removes the observed boundary energy-error plateaus, while sparse constructions based on the classification can reduce compiled two-qubit-gate counts.

\subsection{Relation to previous work}

The most direct circuit starting point for this work is the quantum-number-preserving (QNP) gate fabric of Anselmetti et al.\ \cite{Anselmetti2021}. Its QNP-Q blocks combine a spin-independent orbital rotation with a singlet-pair exchange, corresponding, up to parameter conventions, to the generators \(L_{pq}\) and \(X_{pq}\) used here. The published five-parameter QNP-F block has additional local rotation axes; on strict maximal-spin boundaries it reduces to orbital rotations at finite angles, so its all-pair fabric is incomplete on nontrivial boundaries (Supplemental Material~\cite{SupplementalMaterial}, Sec.~2). The authors numerically tested reachability between randomly chosen states in prescribed quantum-number sectors and identified high-spin exceptions in which pair exchange becomes inactive. Both the basic operations and the existence of exceptional sectors therefore precede the present analysis. The broader combination of generalized singles and paired doubles also appears in the earlier \(k\)-UpCCGSD construction of Lee et al.\ \cite{Lee2019UpCCGSD}.

Subsequent work developed this operator family into both variable-sequence and fixed-structure circuits. Burton et al.\ introduced symmetry-preserving unitary product states and DISCO-VQE, which jointly optimizes the choice and ordering of operations and their continuous parameters \cite{Burton2023Exact}. They also supplied a commutator-based universality argument for spin-adapted singles and paired doubles, with an explicit requirement for a doubly occupied and an empty spatial orbital in the reference. Burton's tiled unitary product state (tUPS) construction places a pair transfer between two orbital rotations in a predetermined local arrangement and uses orbital optimization and a perfect-pairing reference to improve finite-depth performance. It also establishes that adjacent-orbital singles and paired doubles generate their arbitrary-pair counterparts \cite{Burton2024TUPS}. These works provide analytical foundations as well as practical circuit designs. The Supplemental Material~\cite{SupplementalMaterial}, Sec.~1, details the relation to these commutator constructions, including the treatment of spin-resolved intermediate operations and occupation-dependent factors.

A complementary route starts with larger collections of spin-adapted excitations and asks which operations can be removed. Magoulas and Evangelista develop exact circuit factorizations for these operations and show how retained excitation classes can recover the action of omitted classes in the non-fully-spin-polarized target spaces they consider \cite{Magoulas2025SpinAdapted}. Their work therefore addresses both exact implementation and operator-pool reduction, including settings with point-group constraints. Those constraints matter for the comparison: Magoulas, Zhang, and Evangelista show that enforcing spatial symmetry can leave the singles-plus-pairing collection with additional conserved occupation parities, obstructing otherwise allowed transformations \cite{Magoulas2026Dilemmas}. Our target instead includes all states with the prescribed \(N\), \(S\), and \(M_S\), imposing no point-group restriction on the orbital rotations used for state preparation. This freedom is essential to the orbital-transport arguments used here.

Stergiou and Sawaya~\cite{Stergiou2026} establish the full real orthogonal Lie algebra on fixed-occupation qubit subspaces using occupation-conditioned exchanges without nonlocal Jordan--Wigner strings. We apply their fixed-occupation result to hard-core-pair slices with fixed singly occupied orbitals, where the remaining orbitals are empty or doubly occupied. Completing the molecular spin sector additionally requires strict localization and extension in the interior, and derivation of conditioned exchanges from admissible spin-preserving generators on the boundaries. The detailed comparison is given in Supplemental Material, Sec.~1.3.

\begingroup
\widowpenalty=10000
Building on these developments, we fix connected orbital rotations and ask which single additional generator suffices in a prescribed sector, and how few spatial orbitals it can involve. We determine the generated algebra directly within each complete fixed-\((N,S,M_S)\) sector: the interior result proves sufficiency of every fixed pair transfer \(X_{pq}\) with \(p\ne q\), while the boundary result classifies all admissible one- and two-body additions. The same two-orbital limitation determines the strict maximal-spin boundary action of the five-parameter \(F\) block introduced with QNP (Supplemental Material~\cite{SupplementalMaterial}, Sec.~2). The accompanying minimum-orbital theorem identifies when increasing the support of an additional operation is unavoidable, connecting the established circuit ingredients to a sector-dependent design principle.
\par
\endgroup

\subsection{Organization}

Section~\ref{sec:results} gives the sector classification, the minimum orbital
requirements, and explicit sufficient operations. Section~\ref{sec:mechanism}
explains the mechanisms and proof strategy. The exact implementation in
Section~\ref{sec:implementation} and the finite-depth tests in
Section~\ref{sec:numerical-protocol} can each be read using the results of
Section~\ref{sec:results}, without following the proofs. Section~\ref{sec:discussion}
discusses their implications and limits. Proofs of the main classification and
minimum-support results, together with numerical protocols, appear in the
appendices; detailed comparisons, extension results, and additional diagnostics
are given in the Supplemental Material~\cite{SupplementalMaterial}.

%% file: sections/02_classification.tex
\section{Sector-dependent requirements for complete state access}
\label{sec:results}

\subsection{Target sectors and elementary operations}
\label{sec:target-operations}

For \(M\) spatial orbitals, we consider the complete simultaneous eigenspace
\[
\Hsp_{\C}=\Hsp^{\C}_{M,N,S,M_S}
=\{\psi:\widehat N\psi=N\psi,\ 
\widehat S^2\psi=S(S+1)\psi,\ \widehat S_z\psi=M_S\psi\}.
\]
There is no further restriction to a molecular point-group representation.
In the spin-free, real-orbital setting specified in the Introduction, choose
a real orthonormal configuration-state-function (CSF) basis and let
\(\Hsp\) be its real span, with \(d=\dim\Hsp\). The complex sector is
\(\Hsp\otimes_{\R}\C\), and its molecular eigenvectors may be chosen real
\cite{Helgaker2000}. The dimension formula and equivalence of magnetic
components are given in Appendix~\ref{app:magnetic-equivalence}; the results
below apply to every allowed \(M_S\).

An allowed continuously tunable operation has the form \(e^{\theta K}\).
Its generator \(K\) is anti-Hermitian, conserves particle number, commutes
with all three components of total spin, and has real matrix elements in
the occupation basis. It is therefore real skew-symmetric in the compatible
real CSF bases. We call such a generator admissible. Our sufficient
constructions use only one- and two-body terms, which act on one electron
or an electron pair at a time. The minimum-orbital question below allows
admissible generators of arbitrary body rank.

Orbital rotations mix the spatial orbitals identically for both spin
components. With \(a_{p\sigma}^{\dagger}\) creating an electron of spin
\(\sigma\) in orbital \(p\), their generators are
\begin{equation}
\label{eq:s1-L}
E_{pq}=\sum_{\sigma=\alpha,\beta}a_{p\sigma}^{\dagger}a_{q\sigma},
\qquad n_p=E_{pp},\qquad L_{pq}=E_{pq}-E_{qp}.
\end{equation}
The second building block transfers a singlet electron pair from one
spatial orbital to another:
\begin{equation}
\label{eq:pd-X}
X_{pq}=P_p^{\dagger}P_q-P_q^{\dagger}P_p,
\qquad P_p^{\dagger}=a_{p\alpha}^{\dagger}a_{p\beta}^{\dagger},
\quad P_p=a_{p\beta}a_{p\alpha}.
\end{equation}
Both \(L_{qp}=-L_{pq}\) and \(X_{qp}=-X_{pq}\); the orientation changes
only the sign of the adjustable angle. These operations are the orbital
rotation and pair exchange used in QNP, up to parameter conventions
\cite{Anselmetti2021}, and also belong to established singles-and-pairing
product constructions \cite{Lee2019UpCCGSD,Burton2024TUPS}.

The available orbital pairs form a graph \(G\) on \([M]\), with edge set
\(\operatorname{Ed}(G)\). We require this
graph to be connected, rather than requiring a direct rotation between
every pair. Indeed, \([L_{pq},L_{qr}]=L_{pr}\), so commutators along paths
supply the missing orbital rotations. Their action on the many-electron
sector generates
\begin{equation}
\label{eq:connected-backbone}
\mathfrak g_{\mathrm{orb}}
=\Lie\{L_{pq}|_{\Hsp}:\{p,q\}\in\operatorname{Ed}(G)\}
=\rho_{M,N,S}(\so(M)).
\end{equation}
Here \(\rho_{M,N,S}\) denotes the orbital representation on the chosen
sector, with its magnetic label suppressed.

For any generator collection, \(\Lie\) denotes the real linear span of
its restricted generators and all their nested commutators. This is the
dynamical Lie algebra: it describes the infinitesimal transformations
obtainable from arbitrary sequences of the allowed operations. Our target is
\[
\Lie\bigl(\mathfrak g_{\mathrm{orb}},K|_{\Hsp}\bigr)=\so(\Hsp).
\]
The right-hand side contains all real skew-symmetric \(d\times d\)
matrices. These rotate the amplitudes of the many-electron basis states,
not just the underlying spatial orbitals. For \(d\ge2\), they generate
\(SO(d)\), which can carry any normalized real state to any other
\cite{DAlessandro2007}. Generating the full algebra is a stronger target
than state reachability alone. For \(d=1\), there is only one real
projective state. A fixed additional generator may be used repeatedly,
with independently chosen angles at each occurrence.

\subsection{Classification across the sector family}
\label{sec:classification}

The largest possible spin is bounded by both the electron number and the
hole number: \(2S\le N\) and \(2S\le2M-N\). It is useful to write
\begin{equation}
\label{eq:bhr}
b=\frac{N-2S}{2},\qquad r=2S,\qquad
h=\frac{2M-N-2S}{2}.
\end{equation}
These nonnegative integers obey \(b+r+h=M\). In a highest-weight
occupation pattern with the fewest singly occupied orbitals, they count
doubly occupied, singly occupied, and empty orbitals. They do not assign
the same occupations to every state in the sector. The interior has
\(b,h\ge1\), including partially filled singlet sectors. The electron
maximal-spin boundary has \(b=0\), and the hole boundary has \(h=0\).
On a boundary let \(k=N\) for \(b=0\) and \(k=2M-N\) for \(h=0\); in
either case \(k=2S\). We call the
boundary nontrivial when \(2\le k\le M-2\); the remaining boundary
sectors need no additional operation.

\begin{theorem}[Sector-dependent classification]
\label{thm:main}
Consider a complete fixed-\((N,S,M_S)\) sector with no additional spatial
symmetry restriction and a connected graph of orbital rotations.
\begin{enumerate}[label=\textup{(\roman*)}]
\item In every interior sector, each fixed \(p\ne q\) satisfies
\[
\Lie\bigl(\mathfrak g_{\mathrm{orb}},X_{pq}|_{\Hsp}\bigr)
=\so(\Hsp).
\]
\item On either maximal-spin boundary, every \(X_{pq}|_{\Hsp}\) vanishes.
On a nontrivial boundary, every admissible generator \(K\) of body rank
at most two obeys the necessary-and-sufficient criterion
\begin{equation}
\label{eq:boundary-action-criterion}
\boxed{\Lie\bigl(\mathfrak g_{\mathrm{orb}},K|_{\Hsp}\bigr)
=\so(\Hsp)
\quad\Longleftrightarrow\quad
K|_{\Hsp}\notin\mathfrak g_{\mathrm{orb}}.}
\end{equation}
\item In the reduced boundary representation, \(k=1\) and \(M-1\)
correspond to a single particle and a single hole, respectively. Thus
\(d=M\), and orbital rotations already generate \(\so(d)\). At \(k=0\)
or \(M\), the sector has \(d=1\). Neither case requires an additional generator.
\end{enumerate}
\end{theorem}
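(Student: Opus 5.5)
The plan is to treat the three parts separately. In each case we work first in the highest-weight component \(M_S=S\) and then transport the conclusion to every allowed \(M_S\). This transport uses the magnetic equivalence of Appendix~\ref{app:magnetic-equivalence}: admissible generators commute with \(\widehat S_\pm\), so their restricted matrices are identical in all magnetic components.

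\emph{Boundary reduction (parts (ii) and (iii)).} For \(b=0\) the component \(M_S=S\) consists of fully \(\alpha\)-polarized determinants. The sector is then isomorphic to \(\Lambda^k(\R^M)\) with \(k=N=2S\), and \(d=\binom{M}{k}\). The hole boundary \(h=0\) reduces to the same picture by particle--hole conjugation.

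\begin{itemize}
\item Vanishing of pair transfer: \(X_{pq}\) either annihilates a \(\beta\) electron that does not exist, or creates a \(\beta\) electron on a site already holding an \(\alpha\) electron (hole case). Hence \(X_{pq}|_{\Hsp}=0\).
\item Part (iii) is then immediate. For \(k=1\) or \(M-1\), \(\Lambda^k\cong\R^M\) carries the defining representation, and \(\rho(\so(M))=\so(M)=\so(d)\). For \(k=0\) or \(M\), \(d=1\).
\end{itemize}

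For part (ii), an admissible generator of body rank at most two is a real skew combination of \(E_{pq}\) and spin-free products \(E_{pq}E_{rs}\). On polarized states \(E_{pq}\) acts as the spinless hopping \(c_p^\dagger c_q\). Therefore \(K|_{\Hsp}\) lies in the image \(\mathcal V\) of spinless one- plus two-body skew operators on \(\Lambda^k\). The direction \(\Leftarrow\) of \eqref{eq:boundary-action-criterion} is trivial. For \(\Rightarrow\) I would proceed as follows.
\begin{enumerate}
\item Decompose \(\mathcal V=\mathfrak g_{\mathrm{orb}}\oplus W\) as a module under \(\ad\,\mathfrak g_{\mathrm{orb}}\). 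The complement \(W\) is spanned by skew parts of \(c_p^\dagger c_q^\dagger c_r c_s\) modulo one-body terms. As an \(\so(M)\)-module it is a quotient of \(\Lambda^2\otimes\Lambda^2\) restricted to the antisymmetric-in-exchange part. One must check, for \(2\le k\le M-2\), that the relevant isotypic components act nonzero and that the ideal generated by any single element of \(W\) recovers a fixed conditioned rotation.
\item Concretely, starting from \(K|_{\Hsp}\notin\mathfrak g_{\mathrm{orb}}\), use repeated \(\ad L_{pq}\) and projections obtainable from \(\mathfrak g_{\mathrm{orb}}\) (averaging over torus elements generated by the \(L\)'s) to extract a nonzero multiple of \(C_{c;pq}|_{\Hsp}=(n_c-1)L_{pq}\), modulo \(\mathfrak g_{\mathrm{orb}}\).
\item Show \(\Lie(\mathfrak g_{\mathrm{orb}},C_{c;pq})=\so(\Hsp)\). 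Conjugating by orbital rotations gives all \(C_{c;pq}\). Their commutators give exchanges \(p\leftrightarrow q\) conditioned on arbitrary occupation patterns of the other orbitals. These are exactly the occupation-conditioned exchanges on the fixed-weight qubit subspace \(\Lambda^k\) in the setting of Stergiou and Sawaya~\cite{Stergiou2026}, whose result yields the full \(\so\) algebra on that subspace.
\end{enumerate}
I expect step 1, the module/irreducibility bookkeeping, to be the main obstacle. The delicate case is \(M=2k\), \(M\equiv0\bmod 4\), where Hodge duality could in principle split \(\Lambda^k\) or \(W\). There I would check directly that \(C_{c;pq}\) mixes the self-dual and anti-self-dual parts.

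\emph{Interior (part (i)).} Now \(b,h\ge1\). Commutators such as \([X_{pq},[L_{qr},X_{pq}]]\) and their conjugates produce pair transfers conditioned on the occupations of third orbitals. Fix the positions of the singly occupied orbitals, and with them a spin coupling. The doubly occupied and empty orbitals then form a hard-core-pair slice, a fixed-weight qubit space on which the conditioned pair exchanges again generate its full \(\so\) by~\cite{Stergiou2026}. Because \(b,h\ge1\), each slice is nontrivial. Since \(X_{pq}\) kills singly occupied sites, the algebra can be strictly localized to individual slices and spin-coupling labels. The remaining steps are:
\begin{itemize}
\item Use orbital rotations, which move single occupations, to connect different slices and different CSF couplings.
\item Argue that \(\mathfrak g_{\mathrm{orb}}\) acts irreducibly on \(\Hsp\) and does not preserve the block structure.
\item Combine this with a standard lemma: a Lie algebra containing \(\so\) of a subspace together with an irreducible action that mixes it in is all of \(\so(\Hsp)\). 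This upgrades the block algebra to \(\so(\Hsp)\).
\end{itemize}
The localization step, isolating a single slice from the conditioned commutators, is the secondary hard point.
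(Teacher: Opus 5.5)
Your architecture matches the paper's in both regimes, but each half has a genuine gap.

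\textbf{Interior.} The final step fails as stated, because \(\mathfrak g_{\mathrm{orb}}\) does \emph{not} in general act irreducibly on \(\Hsp\).

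\begin{itemize}
\item \emph{Irreducibility is false.} In every singlet sector with \(N=2b\), \(1\le b\le M-1\), the vector \((\sum_aP_a^\dagger)^b\ket{0}\) is annihilated by every \(L_{pq}\); this is the singlet obstruction in the Supplemental Material. More generally, the restriction of \(\mathbb S_{(2^b,1^r)}\C^M\) to \(\so(M)\) is typically reducible. The paper replaces irreducibility by \emph{cyclicity} of the highest-weight vector \(v_\lambda\) under \(U(\so(M))\), proved by the height induction \(f_iw=e_iw-\ell_iw\). Cyclicity only excludes proper orbital-invariant subspaces that \emph{contain} \(v_\lambda\), so the controlled block must contain \(v_\lambda\).
\item \emph{The block must have dimension at least three.} For \(b=h=1\) every slice \(K_R\) is two-dimensional. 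The lemma ``\(\so(Q)\oplus0\) plus an irreducible mixing action gives everything'' is false when \(\dim Q=2\): \(\mathfrak u(n)\subset\so(2n)\) contains the rotation of a single complex line and acts irreducibly on \(\R^{2n}\). The paper therefore builds \(Q_0=\Span_{\R}\{u,v,w\}\ni v_\lambda\), adjoining the singlet-split state \(w\) and using \(A\), \([A,L_{pq}]\), \([A,[A,L_{pq}]]\). Only then does it apply one-sided infection with a maximal-subspace argument.
\item \emph{Localization works only at minimum seniority.} The localization you flag is resolved only on slices with \(|R|=r=2S\). There the occupation--seniority bound \(S\le s/2\) forces every component surviving \(Z_{E_R\setminus\{p,q\}}X_{pq}\) to have singly occupied set exactly \(R\), all \(\alpha\). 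On higher-seniority slices the spin-free conditioned transfers act as the identity on the spin-coupling multiplicity. They therefore cannot localize to individual coupling labels as you propose.
\end{itemize}

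\textbf{Boundary.} Your reduction to spinless one- plus two-body operators on \(\bigwedge^k\R^M\) agrees with the paper. So does your step~3, which adds conditions by commutators until single Johnson-graph edges \(F_{IJ}\) are isolated.

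\begin{itemize}
\item \emph{The hard direction is left open.} The nontrivial implication is \(K|_{\Hsp}\notin\mathfrak g_{\mathrm{orb}}\Rightarrow\) completion; your \(\Rightarrow\)/\(\Leftarrow\) labels are swapped. It rests entirely on step~1, which you do not prove.
\item \emph{Step~2 cannot substitute for it.} Repeated \(\ad L_{pq}\) and averaging over orbital conjugations never leave the \(\so(M)\)-submodule generated by the seed's new component. They reach the conditioned-rotation direction only if that submodule contains it.
\item \emph{The missing ingredient is irreducibility of the effective module.} The paper shows that \(\mathcal W=\mathcal J_1^\perp\subset\so(\bigwedge^2V)\) is a real \emph{irreducible} \(\so(V)\)-module for every \(M\ge4\). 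Then any nonzero \(P_{\mathcal W}(A_1)\) generates \(\gamma_k(\mathcal W)\). Since \(\gamma_k(j_1(\xi))=(k-1)\,d\Gamma_k(\xi)\) already lies in the backbone, all of \(\gamma_k(\so(\bigwedge^2V))\) is generated, in particular the shared-pair lift \(n_c^{\mathrm{sl}}L_{pq}\).
\item \emph{The delicate cases are misidentified.} They are not \(M=2k\), but \(M=4\) and \(M=6\), independent of \(k\). For \(M=4\), the Hodge splitting of \(\bigwedge^2\R^4\) gives \(\mathcal W\cong V_+\otimes V_-\). For \(M=6\), \(\mathcal W\otimes\C\) is a sum of two conjugate \(45\)-dimensional irreducibles, and real irreducibility holds only because conjugation exchanges them.
\item \emph{The Hodge star you worry about is harmless.} At \(M=2k\), the \(F_{IJ}\) do not commute with the Hodge star on \(\bigwedge^kV\), so it causes no obstruction once they are generated.
\end{itemize}
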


Thus a useful one- or two-body addition on a nontrivial boundary is exactly
one whose restricted action supplies a direction outside the linear space
of orbital rotations. Within this allowed class, any such direction is
sufficient. Section~\ref{sec:pair-space} gives a directly computable version
of this complete boundary classification in terms of the pair coefficients
of \(K\). The interior statement has a different quantifier: it proves
sufficiency of every individual \(X_{pq}\) with \(p\ne q\), without classifying arbitrary
two-body linear combinations. The proofs are developed in
Section~\ref{sec:mechanism} and Appendices~\ref{app:boundary-completion}
and~\ref{app:interior-full-proof}.

A concrete addition that completes every nontrivial boundary is
\begin{equation}
\label{eq:conditioned-single}
C_{c;pq}=(n_c-1)L_{pq},\qquad c,p,q\ \text{pairwise distinct}.
\end{equation}
It rotates \(p,q\) in opposite directions when orbital \(c\) is empty
or doubly occupied, and has no effect when \(c\) is singly occupied.
Any one fixed pairwise distinct choice of \(c,p,q\) suffices:
\begin{equation}
\label{eq:C-coordinatewise-universality}
\Lie\bigl(\mathfrak g_{\mathrm{orb}},C_{c;pq}|_{\Hsp}\bigr)
=\so(\Hsp)\qquad\text{on every nontrivial boundary}.
\end{equation}
The third orbital supplies an occupation condition on an otherwise ordinary
orbital rotation. The next result shows why that extra orbital is necessary.

\subsection{Minimum orbital requirements and an optimal construction}
\label{sec:minimum-orbitals}

The orbital support \(\operatorname{supp}_{\rm orb}(K)\) is the minimum
number of spatial orbitals whose fermionic modes support \(K\). It is
counted before restriction to the target sector. A global sector projector,
or a factor depending on other orbital occupations, contributes every
orbital on which it acts. When orbital rotations do not already generate
the full algebra, define
\begin{equation}
\label{eq:minimum-support-definition}
\begin{split}
s_{\min}(M,N,S)=\min\{&\operatorname{supp}_{\rm orb}(K):
K\ \text{admissible},\\
&\Lie(\mathfrak g_{\mathrm{orb}},K|_{\Hsp})=\so(\Hsp)\}.
\end{split}
\end{equation}
Orbital rotations already connect all \(M\) spatial orbitals;
\(s_{\min}\) counts only the orbitals involved in the additional generator.
This minimum allows arbitrary body rank. In particular, its lower bound
does not only exclude two-body operators on two orbitals.

\begin{theorem}[Minimum orbital support]
\label{thm:min-support}
Under the assumptions of \cref{thm:main},
\[
s_{\min}(M,N,S)=
\begin{cases}
2,&b,h\ge1,\\
3,&bh=0,\quad2\le k\le M-2.
\end{cases}
\]
Every \(X_{pq}\) with \(p\ne q\) attains the interior minimum. Every \(C_{c;pq}\) with
three distinct orbitals attains the nontrivial-boundary minimum.
\end{theorem}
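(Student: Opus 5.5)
Upper bounds come directly from \cref{thm:main}. In the interior, part~(i) gives \(\Lie(\mathfrak g_{\mathrm{orb}},X_{pq}|_{\Hsp})=\so(\Hsp)\) for every \(p\ne q\), and \(\operatorname{supp}_{\rm orb}(X_{pq})=2\). On a nontrivial boundary, \cref{eq:C-coordinatewise-universality} gives universality of \(C_{c;pq}\), whose support is three. To see this, we check that \(C_{c;pq}\) acts nontrivially on each of \(c,p,q\), so it cannot be written on fewer modes.

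The work is in the lower bounds. Support one is excluded in every sector. An admissible \(K\) supported on a single spatial orbital \(c\) commutes with \(\widehat N\), with \(\widehat S^2\), and with the local number operator. Its only spin-invariant, number-preserving, real skew-symmetric part on the four-dimensional Fock space of orbital \(c\) is zero. Indeed, the local states with \(0\), \(1\), \(2\) electrons carry distinct \((n_c,\text{local spin})\) labels, and the doublet block admits only multiples of the identity. Skew-symmetry of a real scalar forces it to vanish. Hence \(K=0\), and \(s_{\min}\ge2\) whenever \(\mathfrak g_{\mathrm{orb}}\ne\so(\Hsp)\).

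We must also confirm that in the interior \(\mathfrak g_{\mathrm{orb}}\ne\so(\Hsp)\), since otherwise \(s_{\min}\) would be undefined. A dimension count suffices: \(\dim\rho_{M,N,S}(\so(M))\le M(M-1)/2\), while \(d\) exceeds \(M\) once \(b,h\ge1\). A cleaner argument may use the existence of an invariant such as the image of a nontrivial Casimir, or the fact that \(X_{pq}|_{\Hsp}\notin\mathfrak g_{\mathrm{orb}}\). Small cases such as \(M=2\), \(N=2\), \(S=0\), where \(d=3\) and \(\mathfrak g_{\mathrm{orb}}=\so(2)\), are handled directly.

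For the boundary lower bound of three, which I expect to be the main obstacle, we must show that every admissible \(K\) supported on two orbitals \(\{p,q\}\), \emph{of arbitrary body rank}, satisfies \(K|_{\Hsp}\in\mathfrak g_{\mathrm{orb}}\) when \(bh=0\). Then \(\Lie(\mathfrak g_{\mathrm{orb}},K|_{\Hsp})=\mathfrak g_{\mathrm{orb}}\neq\so(\Hsp)\); the inequality holds for \(2\le k\le M-2\), for instance by comparing \(\dim\so(d)\) with \(\dim\so(M)\) for \(d=\binom{M}{k}\) in the reduced representation.

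The plan is to decompose the 16-dimensional two-orbital Fock space into \(\mathrm{SU}(2)\)-spin and particle-number blocks. We then enumerate the admissible real skew operators, which span \(L_{pq}\), \(X_{pq}\), and occupation-dependent dressings \(f(n_p,n_q,\ldots)L_{pq}\)-type terms confined to the two orbitals. We use the boundary structure, namely that in the \(M_S=S\) component with \(b=0\) every spatial orbital is empty or singly occupied with spin up (and dually for \(h=0\)). From this we argue that on \(\Hsp\) only the local sector with one electron in \(\{p,q\}\) forming a doublet, or the local triplet, can be moved. The local \(n=0\), \(n=2\) singlet pieces, including \(X_{pq}\), are killed.

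Within the surviving local blocks, the admissible operators are forced by spin invariance and skew-symmetry to act as multiples of \(L_{pq}\). The block for local occupation \((1,0)/(0,1)\) is an \(\so(2)\) acting identically on both spin components. The local triplet with both orbitals singly occupied is one-dimensional per \(m\), so its skew part vanishes. Different local blocks could in principle carry different multiples, \(\lambda_1L_{pq}\) on one-electron configurations and \(\lambda_3L_{pq}\) on three-electron configurations. On the electron boundary, though, the three-electron local states are absent from \(\Hsp\), and dually on the hole boundary. So \(K|_{\Hsp}=\lambda L_{pq}|_{\Hsp}\in\mathfrak g_{\mathrm{orb}}\). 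Finally, by the magnetic equivalence of Appendix~\ref{app:magnetic-equivalence}, the \(M_S=S\) analysis transfers to every \(M_S\).
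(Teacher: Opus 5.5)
Your proposal is correct and follows the paper's proof essentially step for step. You use Schur's lemma on the local \((0,0),(1,\tfrac12),(2,0)\) blocks to exclude support one. You restrict two-orbital generators on the boundary to a multiple of \(L_{pq}\): your full local \(SU(2)\) decomposition with the \(\lambda_1/\lambda_3\) bookkeeping carries the same content as the paper's spinless \(1,2,1\) block argument in Lemma~\ref{lem:two-orbital-boundary}. You then use a dimension count to show the orbital backbone is incomplete.

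The one step you assert without proof is \(d>M\) in the interior. The paper supplies it in Lemma~\ref{lem:interior-dimension-gap}: \(\dim K_{\min}=\binom Mb\binom{M-b}{r}\ge\binom Mb\ge M\), and the singlet-split vector \(w\), which has \(r+2\) singly occupied orbitals, is orthogonal to \(K_{\min}\), so \(d\ge M+1\). Your suggested fallback via \(X_{pq}|_{\Hsp}\notin\mathfrak g_{\mathrm{orb}}\) would itself need a separate argument.
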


Both optima are achieved using terms of body rank at most two:
\(X_{pq}\) is two-body, while \(C_{c;pq}=n_cL_{pq}-L_{pq}\) combines
two-body and one-body terms. The change from two to three counts spatial
orbitals, not electrons in an elementary interaction. On a nontrivial
boundary, every admissible generator genuinely confined to two orbitals
restricts to a multiple of an orbital-rotation generator;
adding more such generators cannot supply
a missing direction. Section~\ref{sec:boundary-mechanism} explains this
restriction, and Appendix~\ref{app:min-support-proof} gives the full
minimum-support proof, including the interior lower bound.

Figure~\ref{fig:boundary-classification} and Table~\ref{tab:classification}
summarize the sector dependence. The endpoint sectors are stated separately
because they already need no addition. These results minimize the orbital
support of one completing generator; they do not specify minimum circuit
depth, parameter count, compiled gate count, or measurement cost.

\begin{figure}[!htbp]
\centering
\includegraphics[width=0.75\textwidth]{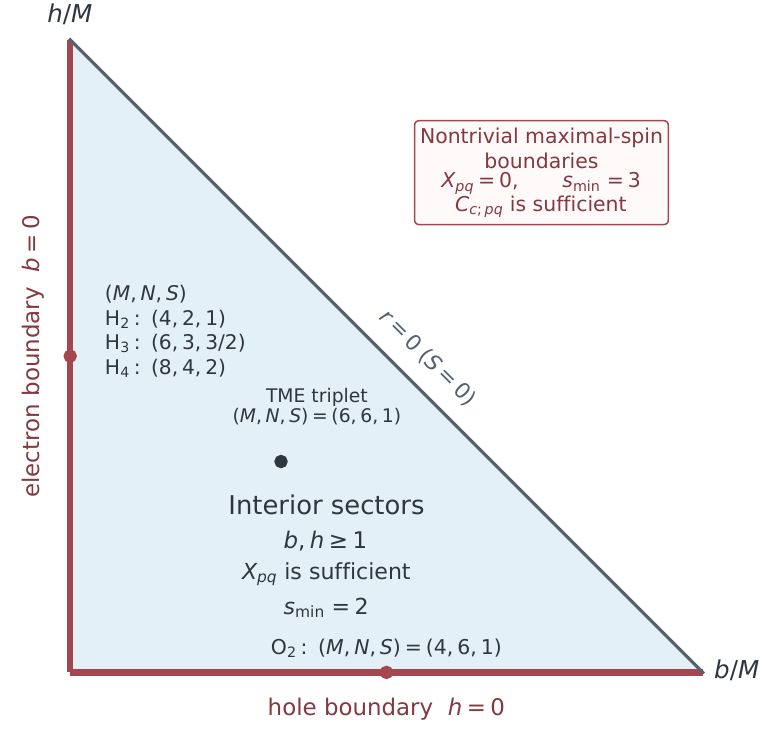}
\caption{Sufficient additions and minimum orbital support across the sector
family. The coordinates \(b,r,h\) in \cref{eq:bhr} count double, single,
and empty occupations in a minimum-seniority highest-weight pattern.
The interior, including singlets on \(r=0\), requires two orbitals and is
completed by any fixed \(X_{pq}\) with \(p\ne q\). On the nontrivial parts of
\(b=0\) and \(h=0\), \(X_{pq}=0\), and any fixed \(C_{c;pq}\) with pairwise
distinct \(c,p,q\) attains the three-orbital minimum. Endpoint exceptions are listed in
Table~\ref{tab:classification}. The marked H$_2$, H$_3$, H$_4$ sectors
have \((b/M,h/M)=(0,1/2)\); the O$_2$ \(\pi\)-CAS(6e,4o) sector has
\((1/2,0)\), and the TME triplet sector has \((1/3,1/3)\) in the interior.
The continuous simplex organizes normalized integer counts;
only physically allowed discrete sectors are classified.}
\label{fig:boundary-classification}
\end{figure}

\begin{table}[!htbp]
\centering\small
\caption{Complete real-rotation requirements with connected orbital rotations
already available. Orbital counts refer to one additional generator before
sector restriction. The full boundary criterion for arbitrary admissible
one- and two-body additions is \cref{eq:boundary-action-criterion}.}
\label{tab:classification}
\renewcommand{\arraystretch}{1.25}
\setlength{\tabcolsep}{4pt}
\begin{tabular}{@{}llll@{}}
\toprule
\tblcell{0.27\textwidth}{Sector}
&\tblcell{0.27\textwidth}{Sufficient addition}
&\tblcell{0.16\textwidth}{Resulting algebra}
&\tblcell{0.17\textwidth}{Minimum orbital count}\\
\midrule
\tblcell{0.27\textwidth}{Interior: \(b,h\ge1\)}
&\tblcell{0.27\textwidth}{Any fixed pair transfer \(X_{pq}\), \(p\ne q\)}
&\tblcell{0.16\textwidth}{\(\so(d)\)}
&\tblcell{0.17\textwidth}{\(2\)}\\
\tblcell{0.27\textwidth}{Nontrivial boundary: \(bh=0\), \(2\le k\le M-2\)}
&\tblcell{0.27\textwidth}{Any fixed conditioned rotation \(C_{c;pq}\), distinct \(c,p,q\)}
&\tblcell{0.16\textwidth}{\(\so(d)\)}
&\tblcell{0.17\textwidth}{\(3\)}\\
\tblcell{0.27\textwidth}{Boundary: \(k=1,M-1\)}
&\tblcell{0.27\textwidth}{No addition required}
&\tblcell{0.16\textwidth}{\(\so(M)=\so(d)\)}
&\tblcell{0.17\textwidth}{Not required}\\
\tblcell{0.27\textwidth}{Boundary: \(k=0,M\)}
&\tblcell{0.27\textwidth}{No addition required}
&\tblcell{0.16\textwidth}{\(d=1\), \(\so(d)=0\)}
&\tblcell{0.17\textwidth}{Not required}\\
\bottomrule
\end{tabular}
\end{table}

%% file: sections/03_mechanism.tex
\section{Mechanism and proof strategy}
\label{sec:mechanism}

The proofs explain how a small set of operations can distinguish all the
independent changes of a many-electron state. We work here in the
highest-weight component \(M_S=S\). The normalized spin-lowering map is
real orthogonal in compatible real CSF bases and intertwines every allowed
generator and commutator (Appendix~\ref{app:magnetic-equivalence}), so the
conclusions apply to every magnetic component. In this section \(\Hsp\)
denotes that real highest-weight space. The notation used in the detailed
proofs is collected in Table~\ref{tab:notation-guide}.
Figure~\ref{fig:controllability-mechanism} illustrates the interior and
boundary mechanisms developed below.

\subsection{From one pair transfer to the complete interior sector}
\label{sec:interior-control}

Assume \(b,h\ge1\). A fixed pair-transfer generator already provides all
pair-transfer coordinates in the presence of orbital rotations: conjugation
by a signed orbital permutation carries \(X_{pq}\) to \(\pm X_{rs}\).
The permutation can be chosen in \(SO(M)\), and the generated Lie algebra
is invariant under these conjugations. Appendix~\ref{app:pair-transport}
gives the sign and closure details. We may therefore work with
\[
\mathfrak g_{LX}
=\Lie\bigl(\mathfrak g_{\mathrm{orb}},\{X_{pq}|_{\Hsp}:p<q\}\bigr)
\]
without changing the algebra generated by any single fixed \(X_{pq}\).
Restrictions to \(\Hsp\) are implicit below.

\paragraph{Conditioning and isolating pair transfers.}
The first step makes a transfer respond differently to different electron
configurations. For three distinct orbitals,
\begin{equation}
\label{eq:px-conditional}
[X_{pc},X_{cq}]=(1-n_c)X_{pq}=Z_cX_{pq},\qquad Z_c=1-n_c.
\end{equation}
The occupation factor is \(+1\), \(0\), or \(-1\) when the third orbital
is empty, singly occupied, or doubly occupied. The transfer thus acquires
a condition on electrons that it does not itself move. Iterating this
identity produces \(Z_\Omega X_{pq}\), where
\(Z_\Omega=\prod_{a\in\Omega}(1-n_a)\) and
\(\Omega\subseteq[M]\setminus\{p,q\}\). These higher-body actions
are generated by commutators of the original two-body transfers;
Appendix~\ref{app:conditional-pairs} proves the induction.

To use these conditions, fix a set \(R\) of \(r=2S\) singly occupied
\(\alpha\) orbitals and let pairs redistribute over the remaining
orbitals \(E_R=[M]\setminus R\). This defines the subspace
\begin{equation}
\label{eq:interior-slice}
K_R=\Span_{\R}\{\ket{B;R}:B\subseteq E_R,\ |B|=b\}.
\end{equation}
Here \(B\) labels doubly occupied orbitals; the other orbitals are empty.
For distinct \(p,q\in E_R\), the operator
\(Z_{E_R\setminus\{p,q\}}X_{pq}\) kills every single
occupation in \(E_R\). Any surviving pattern can therefore have at most
\(r\) singly occupied orbitals, all in \(R\). Total spin \(S=r/2\)
requires at least \(r\) such orbitals, and \(M_S=S\) aligns all of them
with \(\alpha\) spin. Particle number then fixes exactly \(b\) pairs.
Consequently the operator acts only on \(K_R\) and vanishes on
\(K_R^\perp\).

Within \(K_R\), the conditioned transfers become exchanges of hard-core
pairs, up to a common sign. Occupation projectors separate their individual
configuration-to-configuration rotations, giving
\begin{equation}
\label{eq:interior-supported-slice}
\so(K_R)\oplus0_{K_R^\perp}\subseteq\mathfrak g_{LX}.
\end{equation}
Hereafter, \(\so(Q)\) for \(Q\subseteq\Hsp\) denotes its zero extension
to \(Q^\perp\) inside \(\Hsp\).
The hard-core-pair step uses the fixed-occupation construction of
Ref.~\cite{Stergiou2026}. The additional spin-sector argument above makes
its action strictly supported on \(K_R\), rather than merely recovering
the right matrix after compression to that subspace. Both steps are proved
in Appendices~\ref{app:hardcore-pairs} and~\ref{app:strict-localization}.
This strict support is what allows the next construction.

\paragraph{A three-state rotation block.}
Choose the standard highest-weight occupation vector
\begin{equation}
\label{eq:interior-highest-vector}
v_\lambda=P_1^\dagger\cdots P_b^\dagger
a_{b+1,\alpha}^\dagger\cdots a_{b+r,\alpha}^\dagger\ket0,
\qquad \lambda=(2^b,1^r,0^h).
\end{equation}
It has a doubly occupied orbital \(p\) and an empty orbital \(q\).
Remove that pair to obtain a normalized spectator state \(\ket\chi\).
With the other electrons unchanged, consider
\[
u=P_p^\dagger\ket\chi=v_\lambda,\qquad
v=P_q^\dagger\ket\chi,\qquad
w=\frac{a_{p\alpha}^\dagger a_{q\beta}^\dagger
-a_{p\beta}^\dagger a_{q\alpha}^\dagger}{\sqrt2}\ket\chi.
\]
The pair is on \(p\), on \(q\), or split between them as a singlet
[Fig.~\ref{fig:controllability-mechanism}(a)].
All three states belong to the same target sector and are orthonormal.
Let \(Q_0=\Span_{\R}\{u,v,w\}\). Equation~\eqref{eq:interior-supported-slice}
provides the single supported rotation
\(A=\ket u\bra v-\ket v\bra u\). Since
\(L_{pq}u=-\sqrt2\,w\) and \(L_{pq}v=\sqrt2\,w\), the three operators
\[
A,\qquad [A,L_{pq}],\qquad [A,[A,L_{pq}]]
\]
are supported on \(Q_0\) and span its three independent plane rotations.
Thus \(\so(Q_0)\subseteq\mathfrak g_{LX}\), with
\(v_\lambda\in Q_0\) and \(\dim Q_0=3\).
Appendix~\ref{app:three-dimensional-block} verifies the support and signs.
Three here counts many-electron states, not spatial orbitals: the pair is
rearranged on just \(p,q\). The same construction includes
\((M,N,S)=(2,2,0)\), for which \(Q_0=\Hsp\).

\paragraph{Extending the rotations to the complete sector.}
Two facts ensure that this independently rotatable space can be extended
to the whole sector. First, the real linear span of all vectors obtained
by applying finite products of represented orbital generators to
\(v_\lambda\) is \(\Hsp\). Equivalently, any linear subspace containing
\(v_\lambda\) and invariant under all orbital rotations is already the
whole sector. Appendix~\ref{app:highest-weight-cyclicity} proves this
cyclicity statement.

Second, suppose all zero-extended rotations on a subspace \(Q\) of
dimension at least three are available. If an allowed
generator has a nonzero block \(B:Q\to Q^\perp\), commutators with those
rotations give
\begin{equation}
\label{eq:interior-infection}
\so(Q\oplus\im B)
\subseteq\mathfrak g_{LX}.
\end{equation}
This expansion result, including the strict support and the requirement
\(\dim Q\ge3\), is proved in Appendix~\ref{app:infection-proof}.

Now take a largest subspace containing \(Q_0\) on which all supported
rotations are available. If it were smaller than \(\Hsp\), cyclicity
would force some orbital rotation to connect it to the outside. The
expansion result would then produce a larger such subspace, a contradiction.
Hence \(\mathfrak g_{LX}=\so(\Hsp)\). Together with coordinate transport,
this proves the interior part of \cref{thm:main}. The full-pool theorem
and single-coordinate corollary are stated and proved in
Appendix~\ref{app:interior-completion}.

\subsection{Why nontrivial boundaries require a third orbital}
\label{sec:boundary-mechanism}

On the electron maximal-spin boundary, the highest-weight component has
only \(\alpha\) electrons, so there is no opposite-spin pair to remove.
On the hole boundary, every spatial orbital already has an \(\alpha\)
electron, blocking pair creation elsewhere. Thus every \(X_{pq}\) with
\(p\ne q\) vanishes. Appendix~\ref{app:px-boundary-proof} also proves the
converse: each such coordinate is nonzero in every interior sector.

The boundary spaces reduce to \(\bigwedge^k\R^M\), with \(k\) spinless
particles or holes. Orbital rotations can leave real states inaccessible
there. To see this directly, let \(\gamma_\psi\) be a state's one-particle
density matrix. An orbital rotation changes it to
\(g\gamma_\psi g^{\mathsf T}\), preserving its spectrum. A determinant
has \(k\) occupations equal to one and the rest zero. For
\(2\le k\le M-2\), choose two determinants that differ by replacing
two occupied orbitals with two empty ones. Their normalized equal
superposition has instead
\[
\operatorname{spec}(\gamma_\psi)
=\bigl(1^{k-2},(1/2)^4,0^{M-k-2}\bigr).
\]
The one-body transition density between the determinants vanishes, so this
spectrum follows directly. They cannot lie on the same orbital-rotation
orbit. Particle--hole duality gives the same obstruction on the hole
boundary. This establishes actual missing state access, beyond comparing
Lie-algebra dimensions.

For pairwise distinct \(c,p,q\), the third orbital in \(C_{c;pq}\)
supplies the missing dependence on the
surrounding configuration. On the electron boundary its restriction is
\((n_c^{\mathrm{sl}}-1)L_{pq}\); on the hole boundary it is
\((1-n_c^{\mathrm h})L_{pq}^{\mathrm h}\). Since orbital rotations are
already available, either gives an occupation-conditioned rotation, up to
sign. Orbital conjugation supplies every choice of pairwise distinct condition
and rotated-pair labels. Further commutators add occupation conditions
until they isolate a single pair of configurations
[Fig.~\ref{fig:controllability-mechanism}(b)]:
\begin{equation}
\label{eq:edge}
\begin{aligned}
n_\Omega^{\mathrm{sl}}L_{pq}
&=\pm(\ket I\bra J-\ket J\bra I),\\
I&=\Omega\cup\{p\},\quad J=\Omega\cup\{q\},\\
\Omega&\subseteq[M]\setminus\{p,q\},\quad p\ne q,\\
|\Omega|&=k-1,\quad 2\le k\le M-2.
\end{aligned}
\end{equation}
Here \(n_\Omega^{\mathrm{sl}}=\prod_{a\in\Omega}n_a^{\mathrm{sl}}\).
The configurations are connected by exchanges of occupied and empty
orbitals, and commutators along these connections produce all plane
rotations. This independently establishes completion by one fixed
\(C_{c;pq}\) with pairwise distinct \(c,p,q\).
Appendices~\ref{app:boundary-car}--\ref{app:higher-conditioning}
give the explicit constructions, including shared-orbital and disjoint
triplet-pair coordinates; Appendix~\ref{app:min-support-proof} records
the precise particle and hole relations between those coordinates and
\(C\). Those relations hold after boundary restriction, with orbital
terms and signs retained.

There is also a short reason why no genuinely two-orbital alternative can
do the same job. On a maximal-spin boundary, two spatial orbitals reduce
to two spinless modes. Number conservation splits their local space into
blocks of dimensions \(1,2,1\). A real skew-symmetric operator vanishes
on either one-dimensional block; on the two-dimensional block it has
only the single orbital-rotation direction. True two-orbital support
requires the same coefficient for every spectator configuration. Thus
every such generator reduces to \(\kappa L_{pq}\), regardless of its
body rank. Appendix~\ref{app:two-orbital-boundary-proof} states and proves
the restriction precisely. Any collection of two-orbital additions remains
inside the orbital algebra, while \(C\) attains the three-orbital lower
bound.

\begin{figure}[!htbp]
\centering
\includegraphics[width=\textwidth]{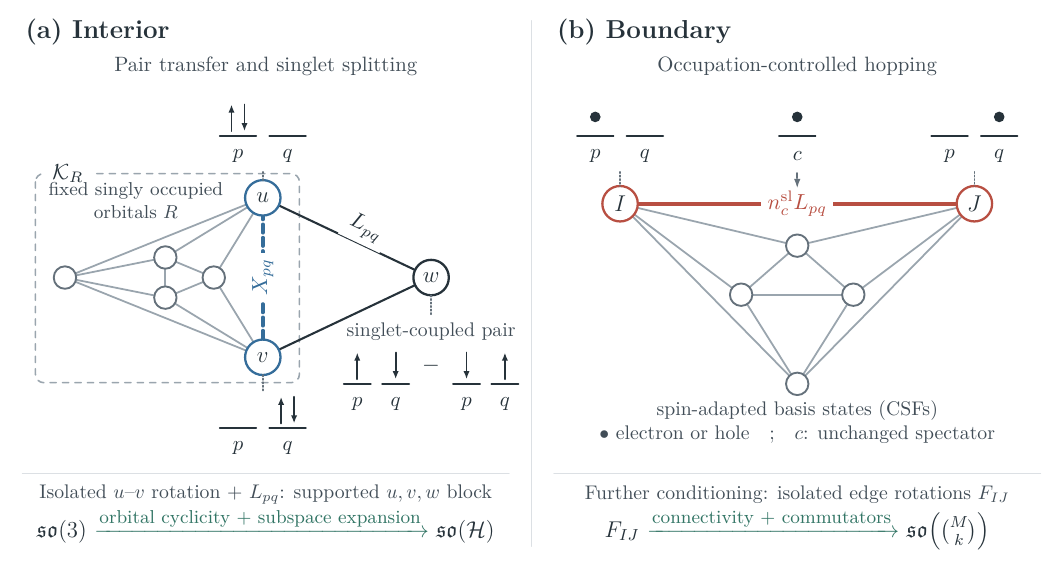}
\caption{Physical transitions underlying the two controllability mechanisms.
Vertices are spin-adapted many-electron basis states (CSFs) in the
\(M_S=S\) component; horizontal bars denote orbitals rather than energies.
(a) In the dashed subspace \(K_R\) of Eq.~\eqref{eq:interior-slice}, pair
transfer \(X_{pq}\) connects \(u=P_p^\dagger\ket\chi\) and
\(v=P_q^\dagger\ket\chi\), while an orbital rotation links them to \(w\).
The paired spin sketches represent one normalized singlet CSF. Isolating the
\(u\)--\(v\) rotation gives all rotations on \(\Span\{u,v,w\}\); orbital
cyclicity and subspace expansion extend them to the complete sector.
(b) On either nontrivial maximal-spin boundary, configurations form the
Johnson graph \(J(M,k)\) of spinless particles or holes. The highlighted
hopping is conditioned on the binary occupation \(n_c^{\mathrm{sl}}\) of an
unchanged spectator \(c\notin\{p,q\}\), with \(p\ne q\). This reduced
occupation differs from the full
spin-summed \(n_c\). The generator \(C_{c;pq}\), together with orbital
rotations, supplies \(n_c^{\mathrm{sl}}L_{pq}\) up to sign; further
conditions isolate rotations \(F_{IJ}=\ket I\bra J-\ket J\bra I\).
Connectivity and commutators generate all plane rotations. The conclusions
apply to every allowed \(M_S\).}
\label{fig:controllability-mechanism}
\end{figure}

\FloatBarrier
\subsection{Classification of general one- and two-body additions}
\label{sec:pair-space}

The preceding construction gives a sufficient operation. To decide whether
an arbitrary admissible one- or two-body addition is sufficient, we first
separate its two-electron coefficients according to the spin of the pair.
Let \(V=\R^M\). A spin-zero pair has a symmetric spatial state, whereas
a spin-one pair has an antisymmetric spatial state. Such pair tensors
are used in spin-adapted operator representations~\cite{LiPaldus1993};
their orbital spaces are
\[
\mathcal P_0=\operatorname{Sym}^2V,\qquad
\mathcal P_1=\bigwedge^2V.
\]
These labels describe the coupled pair; they do not restrict the total
spin of the many-electron target to zero or one.

Every admissible generator of body rank at most two has the unique
full-Fock-space representation
\begin{equation}
\label{eq:at-most-two-body-seed}
\begin{split}
K(\eta,A_0,A_1)&=d\Gamma(\eta)+\operatorname{Lift}_0(A_0)
+\operatorname{Lift}_1(A_1),\\
\eta&\in\so(V),\qquad A_j\in\so(\mathcal P_j).
\end{split}
\end{equation}
Here \(d\Gamma(\eta)=\sum_{p,q}\eta_{pq}E_{pq}\), and
\(\operatorname{Lift}_j\) applies a matrix of pair transfers, summed
over the pair's magnetic components. Normalized pair creators, the full
lift formula, completeness and coefficient counts appear in
Appendix~\ref{app:pair-space}.

For example, let \(e,f\in\binom{[M]}2\) be distinct, with the pair
orientations fixed, and let \(T_{e,m}^{\dagger}\) create a normalized
triplet pair on \(e\). The corresponding spin-scalar transfer is
\begin{equation}
\label{eq:dint1-Y}
Y_{e,f}=\sum_{m=-1}^{1}
\bigl(T_{e,m}^{\dagger}T_{f,m}-T_{f,m}^{\dagger}T_{e,m}\bigr).
\end{equation}
For orthonormal pair-basis vectors \(\ket\mu,\ket\nu\in\mathcal P_j\),
the skew pair coordinate is
\[
K^{(j)}_{\mu\nu}
=\operatorname{Lift}_j\!\left(\ket\mu\bra\nu-\ket\nu\bra\mu\right).
\]
The matrix inside the lift acts on \(\mathcal P_j\), whereas
\(K^{(j)}_{\mu\nu}\) acts on the full fermionic Fock space. The relevant
generators satisfy
\begin{equation}
\label{eq:pair-space-dictionary}
X_{pq}=K^{(0)}_{pp,qq},\qquad Y_{e,f}=K^{(1)}_{e,f},\qquad
n_cL_{pq}=K^{(0)}_{cp,cq}+K^{(1)}_{cp,cq}.
\end{equation}
The last identity assumes distinct \(c,p,q\) and consistent pair
orientations. These coordinates connect the classification to the
spin-adapted operators studied in
Refs.~\cite{Magoulas2025SpinAdapted,Jain2026Exact}. Exact conventions and
comparisons with prior universality arguments are given in the
Supplemental Material~\cite{SupplementalMaterial}, Sec.~1.

On a boundary, some pair coefficients merely reproduce orbital rotations.
Let \(j_1:\so(V)\to\so(\mathcal P_1)\) be the induced orbital action and
separate those directions from their Frobenius-orthogonal complement:
\begin{equation}
\label{eq:triplet-effective-splitting}
\mathcal J_1=j_1(\so(V)),\qquad
\so(\mathcal P_1)=\mathcal J_1\oplus\mathcal W.
\end{equation}
The component in \(\mathcal W\) can be computed directly from the
triplet-pair coefficient matrix. Write
\(\ell_{pq}=e_{pq}-e_{qp}\in\so(V)\) for the spatial matrix whose second
quantization is \(L_{pq}\). Then
\begin{equation}
\label{eq:PW-explicit}
P_{\mathcal W}(A_1)=A_1-\sum_{p<q}
\frac{\langle A_1,j_1(\ell_{pq})\rangle_F}{2(M-2)}j_1(\ell_{pq}).
\end{equation}
Here \(e_{pq}\) is the elementary matrix on \(V\), and in the orthonormal
pair basis \(\langle B,C\rangle_F=\operatorname{tr}(B^{\mathsf T}C)
=\sum_{\mu,\nu}B_{\mu\nu}C_{\mu\nu}\).
The matrices \(j_1(\ell_{pq})\) are mutually orthogonal, each with squared
Frobenius norm \(2(M-2)\). Thus \cref{eq:PW-explicit} subtracts the component
of \(A_1\) along every orbital-induced direction.

On either boundary, the real identification
\(\Hsp\simeq\bigwedge^kV\) sends \(\mathfrak g_{\mathrm{orb}}\) to
\(\mathfrak k=d\Gamma_k(\so(V))\). Here \(d\Gamma_k\) lifts orbital matrices
to one-body actions on \(\bigwedge^kV\), while
\(\gamma_k:\so(\mathcal P_1)\to\so(\bigwedge^kV)\) lifts triplet-pair
coefficient matrices to two-body actions. Under the same identification,
\(K|_{\Hsp}\) becomes \(\rho_{\mathrm e}K\) or \(\rho_{\mathrm h}K\).
Their formulas for \cref{eq:at-most-two-body-seed} are
\begin{align}
\rho_{\mathrm e}K&=d\Gamma_k(\eta)+\gamma_k(A_1),
\label{eq:general-electron-restriction}\\
\rho_{\mathrm h}K&=d\Gamma_k(\eta_{\mathrm{eff}})+\gamma_k(-A_1).
\label{eq:general-hole-restriction}
\end{align}
The singlet-pair coefficient \(A_0\) vanishes on the electron boundary;
on the hole boundary it generally contributes to
\(\eta_{\mathrm{eff}}\in\so(V)\), an already available orbital direction.
Also \(\gamma_k(j_1(\xi))=(k-1)d\Gamma_k(\xi)\), so only
\(P_{\mathcal W}(A_1)\) can provide a new direction.

Two facts make this a complete criterion. For \(M\ge4\),
\(\mathcal W\) is a real irreducible orbital module: the span generated
from any nonzero element by orbital commutators is all of \(\mathcal W\).
Furthermore, for \(2\le k\le M-2\), its lift is injective after
quotienting out \(\mathfrak k\). Thus a nonzero coefficient in this
space cannot disappear or become an orbital rotation upon restriction.
Appendix~\ref{app:complete-boundary-classification} proves both statements,
including the \(M=4,6\) real forms of the representation-theoretic
decomposition \cite{Avetisyan2020X2}.

It follows that a nonzero effective component supplies the whole lifted
\(\mathcal W\). Together with the available orbital directions, this
recovers a shared-pair coordinate giving \(n_c^{\mathrm{sl}}L_{pq}\)
up to sign and an orbital term. The independently established boundary
construction then completes the algebra. Conversely, a zero effective
component leaves the addition inside \(\mathfrak k\). Thus, on either
nontrivial boundary and for any admissible at-most-two-body
\(K=K(\eta,A_0,A_1)\), the coefficient form of
\cref{eq:boundary-action-criterion} is
\begin{equation}
\label{eq:complete-boundary-criterion}
\boxed{\Lie\bigl(\mathfrak g_{\mathrm{orb}},K|_{\Hsp}\bigr)
=\so(\Hsp)
\quad\Longleftrightarrow\quad P_{\mathcal W}(A_1)\ne0.}
\end{equation}
The condition \(P_{\mathcal W}(A_1)\ne0\) is equivalent to
\(K|_{\Hsp}\notin\mathfrak g_{\mathrm{orb}}\), recovering
\cref{eq:boundary-action-criterion}. The full linear failure set in
coefficient space is
\begin{equation}
\label{eq:boundary-failure-subspace}
\{(\eta,A_0,A_1):A_1\in\mathcal J_1\}.
\end{equation}
For a fixed orbital space, the same coefficient test applies to every
nontrivial electron and hole maximal-spin boundary.

Every shared-orbital or disjoint \(Y_{e,f}\) coordinate with distinct pair
labels \(e\ne f\) has a nonzero \(\mathcal W\) component and completes a
nontrivial boundary. By
contrast, choosing \(A_1=j_1(\xi)\), with arbitrary \(\eta,A_0\),
gives a failing linear combination. The direct linear space of restricted
one- and two-body actions need not equal its Lie closure; its dimension is
recorded in \cref{cor:boundary-two-body-image} in
Appendix~\ref{app:complete-boundary-classification}.
These statements complete the arbitrary-generator boundary part of
\cref{thm:main}. They do not extend to arbitrary interior additions:
for example, every \(Y\) vanishes on a two-electron singlet, whereas
each \(X_{pq}\) with \(p\ne q\) remains sufficient there.

%% file: sections/04_implementation.tex
\section{Three-orbital operations and their exact implementation}
\label{sec:implementation}

For pairwise distinct \(c,p,q\), we implement the boundary-optimal generator
\(C_{c;pq}\) of Eq.~\ref{eq:conditioned-single} and construct a single addition
that, with connected orbital rotations, suffices in every sector separately.

\subsection{Rotation conditioned on orbital occupation}
\label{sec:occupation-rotation}

General spin-adapted double-excitation unitaries and their exact Lie-algebraic
factorizations have been studied in
Refs.~\cite{Magoulas2025SpinAdapted,Jain2026Exact}.  The boundary-optimal
conditioned single below is a different three-orbital seed whose commuting
structure yields a particularly direct factorization.

Let \(\Pi_0,\Pi_1,\Pi_2\) be the spectral projectors of \(n_c=E_{cc}\)
for occupations \(0,1,2\), respectively; their explicit polynomials are
given in Appendix~\ref{app:support-factorization}.
Because \([n_c,L_{pq}]=0\),
\[
C_{c;pq}=-\Pi_0L_{pq}+\Pi_2L_{pq},
\]
and therefore
\begin{equation}
\label{eq:controlled-givens}
\boxed{
\e^{\theta C_{c;pq}}
=\Pi_0\e^{-\theta L_{pq}}+\Pi_1+\Pi_2\e^{\theta L_{pq}}.
}
\end{equation}
Writing \(G_{pq}(\theta)=e^{\theta L_{pq}}\), this gives an exact
controlled Givens rotation.
Controlled single-excitation gates provide universal primitives for
particle-conserving unitaries \cite{Arrazola2022}.  The factorization above
uses total spatial-orbital occupation as a spin-scalar control and attains
the boundary support optimum of \cref{thm:min-support}.

\subsection{Exact decomposition into commuting Pauli rotations}
\label{sec:pauli-implementation}

Under any fixed Jordan--Wigner ordering \cite{JordanWigner1928}, for spin
orbitals \(u<v\),
\begin{align*}
a_u^\dagger a_v-a_v^\dagger a_u
=\frac{i}{2}\bigl(&X_uZ_{u+1}\cdots Z_{v-1}Y_v\\
&-Y_uZ_{u+1}\cdots Z_{v-1}X_v\bigr),
\end{align*}
while \(E_{cc}-1=-\tfrac12(Z_{c\alpha}+Z_{c\beta})\).  Hence two choices of
condition spin, two choices of hopping spin, and two Majorana channels give
\(2\times2\times2=8\) nonzero terms.  More explicitly,
\begin{equation}
\label{eq:exact-eight-pauli}
C_{c;pq}=-\frac{i}{4}
\sum_{\tau,\sigma\in\{\alpha,\beta\}}
\sum_{\kappa=0}^{1}
s_{\tau\sigma\kappa}P_{\tau\sigma\kappa},
\qquad s_{\tau\sigma\kappa}\in\{\pm1\},
\end{equation}
where \(P_{\tau\sigma\kappa}\) is the Hermitian Jordan--Wigner Pauli image
of the corresponding quartic Majorana monomial.  The argument below shows that
the eight strings are distinct as well as mutually commuting.

\begin{proposition}[Commuting eight-string decomposition]
\label{prop:eight-pauli}
All Pauli strings in this Jordan--Wigner decomposition commute pairwise.  Thus
\[
\e^{\theta C_{c;pq}}
=\prod_{\ell=1}^{8}
\exp\left(-\frac{i\theta s_\ell}{4}P_\ell\right),
\qquad s_\ell\in\{\pm1\},
\]
with no Trotter error.
\end{proposition}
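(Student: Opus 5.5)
The plan is to avoid tracking Jordan--Wigner strings explicitly. I will work at the level of Majorana operators, where commutation is decided by counting shared factors, and only at the end transport the conclusion through the Jordan--Wigner map. For each spin orbital \(u\), write \(a_u=\tfrac12(\gamma_{u,0}+i\gamma_{u,1})\), with Hermitian \(\gamma_{u,\kappa}\) obeying \(\{\gamma_{u,\kappa},\gamma_{v,\lambda}\}=2\delta_{uv}\delta_{\kappa\lambda}\).

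First, I would rewrite the two factors of \(C_{c;pq}=(n_c-1)L_{pq}\) in Majorana form. Since \(n_{c\tau}-\tfrac12=-\tfrac{i}{2}\gamma_{c\tau,0}\gamma_{c\tau,1}\), we get \(n_c-1=-\tfrac{i}{2}\sum_\tau\gamma_{c\tau,0}\gamma_{c\tau,1}\). Likewise \(a_{p\sigma}^\dagger a_{q\sigma}-a_{q\sigma}^\dagger a_{p\sigma}=\tfrac{i}{2}(\gamma_{p\sigma,0}\gamma_{q\sigma,1}-\gamma_{p\sigma,1}\gamma_{q\sigma,0})\), which matches the displayed Pauli identity channel by channel. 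Multiplying these out writes \(C_{c;pq}\) as \(\tfrac14\sum_{\tau,\sigma,\kappa}\pm\,m_{\tau\sigma\kappa}\), with
\[
m_{\tau\sigma\kappa}=\gamma_{c\tau,0}\gamma_{c\tau,1}\,\gamma_{p\sigma,\kappa}\gamma_{q\sigma,1-\kappa}.
\]
Each \(m_{\tau\sigma\kappa}\) is a product of four distinct Majoranas, because \(c,p,q\) are pairwise distinct. It is therefore anti-Hermitian, so \(m=\pm iP\) for a Hermitian operator \(P\). The eight index sets are pairwise different, so after the Jordan--Wigner map they give eight distinct Pauli strings with nonzero coefficients. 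This recovers \cref{eq:exact-eight-pauli} with \(s_{\tau\sigma\kappa}\in\{\pm1\}\).

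Second, I would verify pairwise commutation using one rule: two Majorana monomials of even degree commute exactly when the number of Majorana factors they share is even. Compare \(m_{\tau\sigma\kappa}\) with \(m_{\tau'\sigma'\kappa'}\).
\begin{itemize}
\item The \(c\) factors share two Majoranas if \(\tau=\tau'\) and none otherwise.
\item The hopping factors share two Majoranas if \(\sigma=\sigma'\) and \(\kappa=\kappa'\). If \(\sigma=\sigma'\) but \(\kappa\ne\kappa'\), they share none, since \(\{\gamma_{p\sigma,0},\gamma_{q\sigma,1}\}\) and \(\{\gamma_{p\sigma,1},\gamma_{q\sigma,0}\}\) are disjoint. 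If \(\sigma\ne\sigma'\), they also share none.
\end{itemize}
So the total overlap is always \(0\), \(2\), or \(4\), and all eight monomials commute.

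The Jordan--Wigner map is an algebra isomorphism onto the qubit operator algebra, and it sends each Majorana monomial to a Pauli string times a phase. Commutation therefore transfers to the eight Pauli strings \(P_\ell\). This holds whether or not orbital \(c\) lies between \(p\) and \(q\) in the chosen ordering, and this is where the ordering-independence claim is settled.

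Finally, since \(\theta C_{c;pq}=\sum_\ell(-i\theta s_\ell/4)P_\ell\) is a sum of mutually commuting terms, the exponential factorizes exactly as \(\prod_\ell\exp(-i\theta s_\ell P_\ell/4)\), in any order and with no Trotter error. I expect the main obstacle to be the sign and phase bookkeeping needed to pin down each \(s_\ell\) and the Hermiticity of each \(P_\ell\). This is handled cleanly by the Majorana route: a product of four distinct Majoranas is anti-Hermitian, so \(i\) times it is Hermitian and the residual sign is absorbed into \(s_\ell\). The statement needs only \(s_\ell\in\{\pm1\}\), not their explicit values.
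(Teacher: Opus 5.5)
Your route is the paper's own (Appendix~\ref{app:exact-factorization}). You write each term as a quartic Majorana monomial, show that any two supports share $0$, $2$ or $4$ indices, note that the eight supports are distinct, and transport both facts through the Jordan--Wigner isomorphism. That skeleton is sound, but two intermediate claims are false, and their errors happen to cancel.

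\emph{First error: the hopping formula.} With your convention $a_u=\tfrac12(\gamma_{u,0}+i\gamma_{u,1})$, the anti-Hermitian hopping is $a_u^\dagger a_v-a_v^\dagger a_u=\tfrac12(\gamma_{u,0}\gamma_{v,0}+\gamma_{u,1}\gamma_{v,1})$. This is a same-channel pairing with a real coefficient. Your expression $\tfrac{i}{2}(\gamma_{u,0}\gamma_{v,1}-\gamma_{u,1}\gamma_{v,0})$ is Hermitian and equals $a_u^\dagger a_v+a_v^\dagger a_u$. Under Jordan--Wigner it maps to $\tfrac12(X_uZ\cdots ZX_v+Y_uZ\cdots ZY_v)$, so it does not match the identity displayed in Section~\ref{sec:pauli-implementation}.

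\emph{Second error: Hermiticity of the monomials.} A product of four distinct Majoranas is Hermitian, not anti-Hermitian, because reversing it costs $(-1)^{\binom{4}{2}}=+1$. For instance, under Jordan--Wigner $\gamma_{u,0}\gamma_{u,1}\gamma_{v,0}\gamma_{v,1}=-Z_uZ_v$. There is also a harmless sign slip: $n_u-\tfrac12=+\tfrac{i}{2}\gamma_{u,0}\gamma_{u,1}$.

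The correct bookkeeping is
\[
C_{c;pq}=\frac{i}{4}\sum_{\tau,\sigma\in\{\alpha,\beta\}}\sum_{\kappa=0}^{1}
\gamma_{c\tau,0}\gamma_{c\tau,1}\,\gamma_{p\sigma,\kappa}\gamma_{q\sigma,\kappa}.
\]
Each monomial here is Hermitian with square one, so it equals $\pm P_\ell$. The factor $i$ in $-\tfrac{i}{4}s_\ell$ comes from the occupation bilinear, not from the monomials. So the phase argument you single out as the main obstacle rests on a false lemma. It reaches the right coefficient only through the compensating error in the hopping formula.

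The repair costs nothing structurally. The correct hopping supports $\{\gamma_{p\sigma,0},\gamma_{q\sigma,0}\}$ and $\{\gamma_{p\sigma,1},\gamma_{q\sigma,1}\}$ are again disjoint for $\kappa\ne\kappa'$. Everything else therefore goes through unchanged:
\begin{itemize}
\item your overlap case analysis and the even-overlap commutation rule;
\item the distinctness of the eight supports, which are now exactly the paper's $A=\{(c,\tau,0),(c,\tau,1),(p,\sigma,\kappa),(q,\sigma,\kappa)\}$;
\item the injectivity of Jordan--Wigner on Majorana monomials up to phase;
\item the exact factorization of the exponential.
\end{itemize}
With these two formulas corrected, your proof coincides with the paper's.
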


\begin{proof}
Each term is the Jordan--Wigner image of a quartic Majorana monomial:
a condition bilinear on \(c\) times a hopping bilinear on \(p,q\).
Two such supports share zero, two, or four Majorana indices, so the
monomials commute. The two choices of condition spin, two of hopping
spin, and two hopping channels have distinct supports, giving eight
distinct Pauli strings. Appendix~\ref{app:exact-factorization} supplies
the explicit support and sign calculation.
\end{proof}

The commutativity is internal to one \(C\) block. Different orbital and
conditioned blocks need not commute; combining those blocks produces the
additional directions required for completeness.
The symmetry guarantee applies to the complete fermionic
\(e^{\theta C_{c;pq}}\) block; its individual Pauli-rotation factors need not
preserve particle number or total spin.

\subsection{A common construction across sectors}
\label{sec:common-construction}

The smallest sufficient addition depends on the sector, but one may also
want to fix the same generator before choosing the target quantum numbers.
Combining pair transfer with an occupation-weighted orbital rotation gives
such a choice while retaining body rank two.

\begin{corollary}[A common two-body generator across sectors]
\label{cor:sector-uniform-seed}
Let \(M\ge3\), let \(G\) be connected, and choose pairwise distinct
\(c,p,q\).  For any \(\xi,\zeta\in\R\setminus\{0\}\), the fixed
three-orbital generator
\begin{equation}
\boxed{
\widetilde Z_{c;pq}
=\xi X_{pq}+\zeta n_cL_{pq}}
\label{eq:sector-uniform-seed}
\end{equation}
is an admissible pure two-body generator. For every physically allowed
complete fixed-\((N,S,M_S)\) sector of the same \(M\)-orbital Fock
space, its restriction satisfies
\begin{equation}
\Lie\bigl(\mathfrak g_{\mathrm{orb}},\widetilde Z_{c;pq}|_{\Hsp}\bigr)
=\so(\Hsp).
\label{eq:sector-uniform-completion}
\end{equation}
For \(M\ge4\), three-orbital support is minimal among single seeds required
to complete every exact-spin sector.
\end{corollary}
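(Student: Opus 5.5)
The plan is to treat the three kinds of sectors in \cref{thm:main} separately. In each case I reduce \(\widetilde Z_{c;pq}\) to a generator already known to be sufficient, using only the fact that the generated algebra is closed under orbital conjugation.

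\emph{Admissibility.} Both \(X_{pq}\) and \(n_cL_{pq}\) are pure two-body. Each is anti-Hermitian, number conserving and real in the occupation basis. Each commutes with \(\widehat S_\pm,\widehat S_z\), because \(P_p^\dagger\) is a spin singlet and \(n_c\), \(L_{pq}\) are spin scalars. A real linear combination inherits all of these properties.

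\emph{Endpoint sectors.} These are \(k\in\{0,1,M-1,M\}\). Here \(\mathfrak g_{\mathrm{orb}}\) is already \(\so(\Hsp)\) by \cref{thm:main}(iii). Adding any admissible restricted generator, which lies in \(\so(\Hsp)\), cannot enlarge or shrink the algebra, so \eqref{eq:sector-uniform-completion} is immediate.

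\emph{Nontrivial boundaries.} By \cref{thm:main}(ii), \(X_{pq}|_{\Hsp}=0\), so \(\widetilde Z_{c;pq}|_{\Hsp}=\zeta\,(n_cL_{pq})|_{\Hsp}\). Since \(n_cL_{pq}=C_{c;pq}+L_{pq}\) and \(L_{pq}|_{\Hsp}\in\mathfrak g_{\mathrm{orb}}\), we get
\[
\Lie(\mathfrak g_{\mathrm{orb}},\widetilde Z_{c;pq}|_{\Hsp})=\Lie(\mathfrak g_{\mathrm{orb}},C_{c;pq}|_{\Hsp}),
\]
using \(\zeta\ne0\). The right-hand side is \(\so(\Hsp)\) by \eqref{eq:C-coordinatewise-universality}.

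\emph{Interior sectors, by symmetrization.} Here both terms survive, and I will separate them by symmetrization. Let \(g=\exp(\pi L_{pc})\) as an orbital rotation. In the one-particle space it is \(\operatorname{diag}\) with \(-1\) on \(p\) and \(c\) and \(+1\) elsewhere, an element of \(SO(M)\). Since \(g|_{\Hsp}=\exp(\pi L_{pc}|_{\Hsp})\) with \(L_{pc}|_{\Hsp}\in\mathfrak g_{\mathrm{orb}}\), conjugation by \(g|_{\Hsp}\) preserves \(\mathfrak g_{\mathrm g}:=\Lie(\mathfrak g_{\mathrm{orb}},\widetilde Z_{c;pq}|_{\Hsp})\).

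Under \(g\), the operators \(a_{p\sigma}\) and \(a_{c\sigma}\) change sign. Hence \(L_{pq}\mapsto-L_{pq}\) and \(n_c\mapsto n_c\). The pair operators \(P_p^\dagger\) and \(P_p\) are quadratic in orbital \(p\), so \(X_{pq}\mapsto X_{pq}\). Therefore
\[
g\,\widetilde Z_{c;pq}\,g^{-1}=\xi X_{pq}-\zeta n_cL_{pq}.
\]
Averaging this with \(\widetilde Z_{c;pq}\) shows \(\xi X_{pq}|_{\Hsp}\in\mathfrak g_{\mathrm g}\), and since \(\xi\ne0\) also \(X_{pq}|_{\Hsp}\in\mathfrak g_{\mathrm g}\). \Cref{thm:main}(i) then gives \(\mathfrak g_{\mathrm g}=\so(\Hsp)\). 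The same averaging also yields \(n_cL_{pq}|_{\Hsp}\) in every sector, which gives a second route on the boundary.

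The main obstacle is making this step rigorous: the transformation signs of \(L_{pq}\), \(n_c\) and \(X_{pq}\) under \(g\) must be checked carefully in the second-quantized representation, and it must be confirmed that \(\exp(\pi L_{pc})\) acts on \(\Hsp\) through the represented orbital group. I would verify this directly from \(\exp(\pi L_{pc})a_{p\sigma}^\dagger\exp(-\pi L_{pc})=-a_{p\sigma}^\dagger\), and similarly for \(c\), with all other modes fixed.

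\emph{Minimality for \(M\ge4\).} The sector with \(N=2\), \(S=1\) has \(b=0\) and \(k=2\le M-2\), so it is a nontrivial boundary. By \cref{thm:min-support}, no admissible seed of orbital support at most two completes it; this also follows from the two-orbital restriction \(K|_{\Hsp}=\kappa L_{pq}|_{\Hsp}\) of Section~\ref{sec:boundary-mechanism}. Hence any single seed that completes every sector needs support at least three, and \(\widetilde Z_{c;pq}\) attains this bound.
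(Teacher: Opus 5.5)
Your proof is correct, but it isolates \(X_{pq}\) by a different mechanism than the paper.

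\emph{The paper's route.} It works with an explicit Lie word on the full Fock space. Since \(c\notin\{p,q\}\), the term \(n_cL_{pq}\) commutes with \(L_{pq}\). The double commutator \([L_{pq},[L_{pq},\widetilde Z_{c;pq}]]=-4\xi X_{pq}\) therefore kills the conditioned term and leaves the pair transfer. Subtracting \(\xi X_{pq}\) and the backbone term \(L_{pq}\) then recovers \(C_{c;pq}\). From there \cref{thm:main} and the endpoint cases finish exactly as in your case split.

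\emph{Your route.} You use a discrete element of the orbital group, \(g=\exp(\pi L_{pc})\), which acts as a parity flip on orbitals \(p\) and \(c\). Under it \(X_{pq}\) and \(n_c\) are even and \(L_{pq}\) is odd, so averaging \(\widetilde Z_{c;pq}\) with its conjugate separates the two terms. Your sign bookkeeping is right: \(e^{\pi\operatorname{ad}_{L_{pc}}}\) acts as \(-1\) on \(\Span\{a^\dagger_{p\sigma},a^\dagger_{c\sigma}\}\). Invariance of the generated algebra under exponentials of backbone elements is the same fact the paper uses in \cref{lem:seed-orbit,lem:px-orbit}.

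\emph{Trade-offs.} Your argument needs no CAR computation beyond sign changes. It applies verbatim to any seed whose terms split under such an orbital parity, and it yields \(X_{pq}\) and \(n_cL_{pq}\) in every sector at once, so the separate boundary case is optional. The paper's identity instead gives a finite Lie word with an explicit normalization \(-4\). It does not appeal to closure of the algebra under \(\operatorname{Ad}\) of group elements.

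\emph{Minor gap.} The anti-Hermiticity and pure two-body normal form of \(n_cL_{pq}\) rely on \([n_c,L_{pq}]=0\), that is, on \(c\notin\{p,q\}\). You assert these properties without saying why; state this condition explicitly.

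\emph{Minimality.} Your argument uses the \(N=2\), \(S=1\) boundary, which is nontrivial precisely when \(M\ge4\), together with \cref{thm:min-support}. This is the paper's argument.
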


\par\noindent\emph{Proof.}
Because \(c\notin\{p,q\}\), \(n_cL_{pq}\) is pure, normal-ordered
two-body and commutes with \(L_{pq}\).  The full-Fock-space identities
\[
[L_{pq},[L_{pq},\widetilde Z_{c;pq}]]=-4\xi X_{pq},
\qquad
C_{c;pq}=\frac{\widetilde Z_{c;pq}-\xi X_{pq}}{\zeta}-L_{pq}
\]
therefore recover both \(X_{pq}\) and \(C_{c;pq}\) from the connected
backbone and \(\widetilde Z_{c;pq}\).  The interior and boundary parts of
\cref{thm:main} then give completion; at the remaining endpoints the
backbone is already complete or the target algebra is zero.  For \(M\ge4\),
the boundary lower bound in \cref{thm:min-support} proves optimality. \qed

The same fixed generator completes each sector separately; independent
simultaneous control of the direct sum is not asserted.

The conditioned rotation alone has a useful but different scope.
The Supplemental Material~\cite{SupplementalMaterial}, Sec.~3, proves that for \(M\ge3\), every fixed
\(C_{c;pq}\) with pairwise distinct \(c,p,q\) also completes a connected
orbital graph in every physically allowed positive-spin interior sector
\((b,h\ge1,\ S>0)\). In a nontrivial singlet sector, however, the orbital
rotations
and all elementary \(C\) generators share a nonzero fixed vector, so
their algebra is incomplete. That extension uses the already established
interior theorem; it is not an ingredient of the main classification.
The mixed generator above supplies a common sufficient choice by
recovering both \(X\) and \(C\). Its exponential is not covered by the
eight-commuting-string formula for \(C\) alone.

%% file: sections/05_benchmarks.tex
\section{Finite-depth consequences of the classification}
\label{sec:numerical-protocol}

Section~\ref{sec:results} establishes complete real-state rotations but does not determine the efficiency of finite-depth circuits. We test whether the conditioned rotation removes the observed boundary energy plateaus and whether a fixed pair transfer remains useful in sparse interior circuits.

The boundary tasks realize the two maximal-spin limits in different orbital
spaces. Linear H$_4$ has four electrons in eight spatial orbitals, with
\(S=M_S=2\), and lies on the electron boundary. The O$_2$ model has six
electrons in four \(\pi/\pi^\ast\) orbitals, with \(S=M_S=1\); the two
holes in this active space place it on the hole boundary. The interior
tasks all use six electrons in six spatial orbitals: linear H$_6$ in its
singlet sector, tetramethyleneethane (TME) in separate singlet and triplet
sectors, and N$_2$ in its singlet sector at eight bond lengths. These
interior calculations use \(M_S=0\), including the TME triplet.
Every task uses a complete fixed-\((N,S,M_S)\) space without an additional
point-group restriction. Molecular geometries, basis sets and reference
states are specified in Appendix~\ref{sec:benchmark-systems}.

Tree+one $X$ uses the $M-1$ path rotations with edges
$(0,1),(1,2),\ldots,(M-2,M-1)$ and one fixed pair transfer; Tree+$C$
replaces it with one fixed conditioned rotation. Dense $L$ contains all
orbital-pair rotations.
We compare with QNP-Q, tUPS, and tUPS with a perfect-pairing reference,
denoted tUPS (PP ref.) \cite{Anselmetti2021,Burton2024TUPS}, using fixed
orbitals and the stated local optimizer. Resources are parameter counts and
ansatz-only controlled-NOT (CNOT) counts compiled after parameter binding
under a common all-to-all connectivity and gate-decomposition protocol.
Figures~\ref{fig:benchmark-boundary-repair} and~\ref{fig:benchmark-sota}
test boundary repair and interior resource-to-accuracy endpoints,
respectively; their sample sizes and selection rules are summarized in
Table~\ref{tab:benchmark-protocols}. Additional backbone/seed ablations and
QNP-F comparisons are reported in the Supplemental Material.

\subsection{Removing boundary energy-error plateaus}
\label{sec:boundary-benchmarks}
A common-protocol boundary scan compares Dense $L$, Tree+$C$, boundary-effective tUPS, and tUPS+$C$ (\cref{fig:benchmark-boundary-repair}). Each checkpoint runs five independently initialized VQD chains and selects one complete chain by the sum of its final root objectives. On these strict maximal-spin boundaries, $X$ vanishes, so Dense $L$ and Dense $L+X$ have identical effective actions. Dense $L$ and unaugmented tUPS retain energy-error plateaus within the tested ranges, whereas the $C$-augmented products reach maximum-root errors below 0.1 mHa. Tree+$C$ and tUPS+$C$ both reach this criterion with 480 CNOTs for H$_4$ and 220 CNOTs for O$_2$. These calculations illustrate the distinction between repeating an insufficient operation set and adding the type of operation required by the classification. An independent common-deflator, single-root control is reported separately in Supplemental Material.

\begin{figure}[!htbp]
\centering
\includegraphics[width=\textwidth]{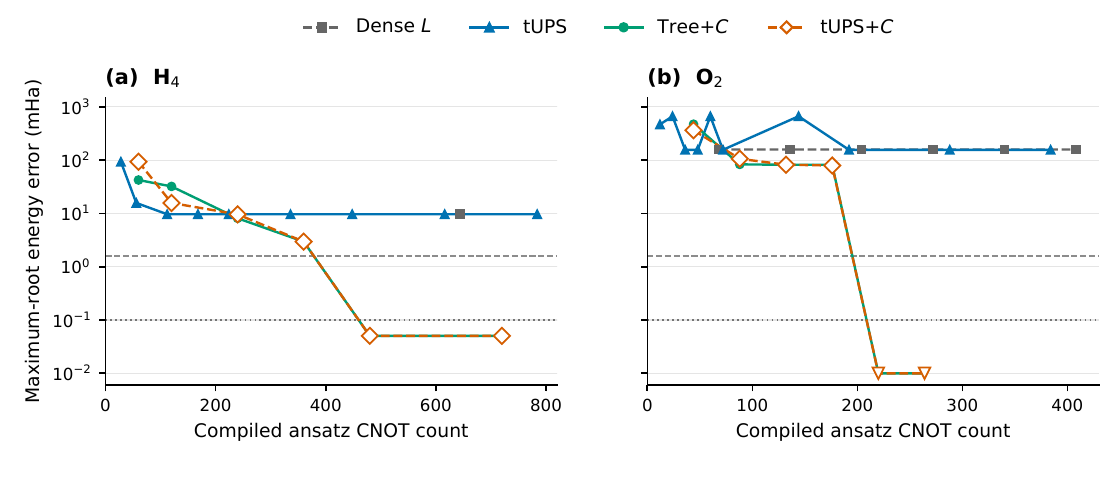}
\caption{A conditioned rotation removes the observed boundary energy plateau. (a) H$_4$ roots 0--1 and (b) O$_2$ roots 0--5. Each checkpoint selects one complete chain from five starts by the summed final VQD objectives. Energy error and ansatz CNOT count are maxima over the roots of that chain; the maximizing roots need not coincide. All 50 checkpoints within the displayed windows are shown. Five additional H$_4$ Dense $L$ checkpoints retain the approximately 9.67-mHa plateau up to 7728 CNOTs outside the window. Downward markers indicate errors below the 0.01-mHa display floor. Dashed and dotted lines mark 1.6 and 0.1 mHa; lines connect tested checkpoints. The optimization and compilation procedures are specified in Appendix~\ref{app:numerical-methods}.}
\label{fig:benchmark-boundary-repair}
\end{figure}

\subsection{Accuracy and compiled resources in interior sectors}
\label{sec:interior-benchmarks}
For each interior geometry and spin sector, each circuit size contributes the lowest-variational-energy output among 20 independently initialized starts and the compiled CNOT count of that same parameter-bound circuit (\cref{fig:benchmark-sota,tab:benchmark-headline}). The reported endpoint is the smallest observed CNOT count among tested circuits satisfying the 0.1-mHa task-energy criterion. For TME, the task error is the larger singlet or triplet energy error and the resource is the sum for both circuits. For $\mathrm{N}_2$, both quantities are maxima over the eight geometries at a common repetition count. At the reported endpoints, the selected optimizations satisfied an optimizer stopping condition before reaching the iteration and evaluation limits.

On linear \(\mathrm H_6\)/STO-3G, Tree+one \(X\) attains the 0.1-mHa
criterion with slightly fewer CNOTs than tUPS, but more parameters
(144 versus 120; \cref{tab:benchmark-headline}).

\Needspace{9\baselineskip}
For TME, Tree+one \(X\) uses both fewer parameters and 58.8\% fewer CNOT
gates than the fixed-orbital tUPS controls at the observed 16-repeat,
best-of-20 energy endpoint; its singlet and triplet outputs are selected
independently. At that size, none of the 20 paired starts meets the separate
joint strict-success criterion. Tree+one \(X\) first reaches that criterion
at 24 repeats (20/20 paired starts, 1632 CNOTs, maximum two-qubit depth 411),
compared with tUPS at 12 repeats (19/20, 2640 CNOTs, depth 384;
the TME joint-initialization table in
Supplemental Material).
The selected singlet--triplet gap error is 0.029 mHa. Because both
variational energies are upper bounds to their respective sector ground
states, the larger state-energy error also bounds the gap error.

\begingroup
\widowpenalty=10000
Across the eight-point \(\mathrm N_2\) dissociation curve, Tree+one \(X\)
attains the full-curve energy criterion with 22.7\% fewer CNOTs than
either tUPS reference. QNP-Q does not attain that criterion within the
displayed scan. The 16-layer tUPS error rebound reflects an independent local
search rather than a loss of expressivity: zero-padding a shallower solution
preserves its state in the deeper circuit, as checked in Supplemental Material.
\par
\endgroup

\begin{figure}[!htbp]
\centering
\includegraphics[width=\textwidth]{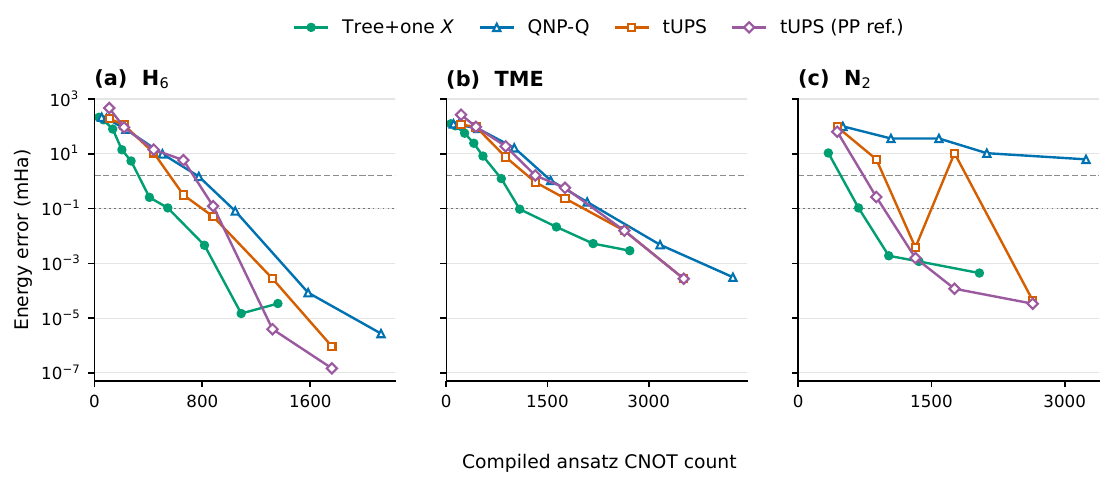}
\caption{Accuracy versus compiled ansatz CNOT count in complete fixed-\((N,S,M_S)\) interior sectors. Each point uses the variational-energy-selected output at a tested circuit size and the count of that same parameter-bound circuit. (a) Linear-$\mathrm{H}_6$ ground-state error and single-state count. (b) The larger TME singlet or triplet energy error and the summed count for both states. (c) Maximum energy error and maximum count over eight $\mathrm{N}_2$ geometries at a common repetition count; the two maxima may occur at different geometries. Lines connect sampled circuit sizes; dashed and dotted lines mark 1.6 and 0.1 mHa. All controls use fixed orbitals; PP ref. denotes the perfect-pairing reference. Supplemental Material gives initialization statistics and independent circuit validation.}
\label{fig:benchmark-sota}
\end{figure}

\FloatBarrier
\begin{table}[!htbp]
\centering
\small
\setlength{\tabcolsep}{4pt}
\caption{Observed interior resource-to-accuracy endpoints at 0.1~mHa.
The task error and ansatz-only CNOT count follow Fig.~\ref{fig:benchmark-sota}.
A star marks the largest tested checkpoint when the threshold is not reached.}
\label{tab:benchmark-headline}
\input{benchmark/main_interior_endpoints.tex}
\par\smallskip
\begin{minipage}{\textwidth}
\footnotesize\raggedright
$^{\dagger}$\enspace For TME, the common triplet result determines the reported
maximum error in both tUPS rows; the PP reference changes only the
singlet calculation.
\end{minipage}
\end{table}

\FloatBarrier

%% file: benchmark/main_interior_endpoints.tex
\begin{tabular*}{\textwidth}{@{\extracolsep{\fill}}llrrrr@{}}
\toprule
Task & Method & Repeats & Params. & CNOTs & Error (mHa) \\
\midrule
H$_6$ & Tree+one $X$ & 24 & 144 & 816 & 0.004571 \\
 & QNP-Q & 23 & 116 & 1044 & 0.081197 \\
 & tUPS & 8 & 120 & 880 & 0.050981 \\
 & tUPS (PP ref.) & 12 & 180 & 1320 & 0.000004 \\
TME & Tree+one $X$ & 16 & 192 & 1088 & 0.094966 \\
 & QNP-Q & 35 & 352 & 3168 & 0.004772 \\
 & tUPS$^{\dagger}$ & 12 & 360 & 2640 & 0.015516 \\
 & tUPS (PP ref.)$^{\dagger}$ & 12 & 360 & 2640 & 0.015516 \\
N$_2$ & Tree+one $X$ & 30 & 180 & 1020 & 0.001888 \\
 & QNP-Q$^{*}$ & 72 & 360 & 3240 & 6.265837 \\
 & tUPS & 12 & 180 & 1320 & 0.003800 \\
 & tUPS (PP ref.) & 12 & 180 & 1320 & 0.001565 \\
\bottomrule
\end{tabular*}

%% file: sections/06_discussion.tex
\section{Discussion}
\label{sec:discussion}

Complex electronic states can be represented by repeated combinations of
simple operations, but exact spin symmetry changes which simple operations
are sufficient. Within a complete fixed-\((N,S,M_S)\) sector, the extra
operation can remain confined to two spatial orbitals throughout the
interior. On a nontrivial maximal-spin boundary, every admissible two-orbital
generator restricts to an orbital rotation, and a third orbital is necessary. The
conditioned rotation attains this lower bound while still using only
one- and two-body terms. The change in required orbital support therefore
does not require increasing the elementary interaction's body rank.

The physical distinction is between producing higher-body terms and
obtaining enough independent transformations to rotate the entire allowed
real state space. Occupation conditioning separates changes that an
elementary operation applies simultaneously to different electron
configurations.

This distinction is useful when designing a finite circuit. If all of its
operations remain inside the insufficient orbital algebra on a nontrivial
boundary, increasing their repetition count cannot remove the restriction.
The conditioned rotation supplies the required addition and can be
implemented exactly as eight commuting Pauli rotations. The boundary
calculations demonstrate that adding this operation removes the observed
energy-error plateaus. In the interior H$_6$, TME, and N$_2$ calculations,
the fixed pair-transfer construction can reduce compiled ansatz CNOT
counts at the reported accuracy endpoints.

Minimal elementary operations do not imply a short circuit for an
arbitrary target state. The required repetitions and adjustable angles
depend on the task, the reference, and the optimization procedure. The
H$_6$ and N$_2$ constructions use fewer CNOT gates but greater compiled
two-qubit depth than tUPS at the reported energy endpoints
(see the resource comparison in Supplemental Material); H$_6$ also uses
more parameters.

The classification concerns complete fixed-\((N,S,M_S)\) spaces. Additional
spatial symmetry can invalidate the orbital transport and cyclicity used
here and leave singles-plus-pairing products with extra conserved
quantities \cite{Magoulas2026Dilemmas}. Point-group-resolved spaces and
Hamiltonians requiring complex amplitudes therefore need separate
analyses. Within the present real, spin-free setting, the next resource
question is how to assemble and optimize the classified operations
efficiently for chemically relevant states at finite depth.

%% file: appendices/appendix_a_preliminaries.tex
\section{Notation and algebraic preliminaries}

This appendix collects the common notation, the construction of the complete
fixed-spin sector, magnetic-component equivalence, and the orbital-covariance
facts used in the main classification. It also proves the complete pair-space
representation of pure two-body spin scalars.

\input{appendices/notation_guide.tex}
\input{appendices/01_preliminaries.tex}
\input{appendices/02_pair_space.tex}

%% file: appendices/notation_guide.tex
\begingroup
\small
\renewcommand{\arraystretch}{1.0}
\setlength{\LTleft}{0pt}
\setlength{\LTright}{0pt}
\setlength{\LTcapwidth}{\textwidth}
\newcommand{\ntrow}[2]{%
  \tblcell{0.28\textwidth}{#1}&\tblcell{0.66\textwidth}{#2}}
\begin{longtable}{@{}ll@{}}
\caption{Notation and symbol guide. The proof sections use the
highest-weight component \(M_S=S\) and the complete fixed-spin space,
without additional spatial-irrep restriction.}
\label{tab:notation-guide} \\
\toprule
\tblcell{0.28\textwidth}{Symbol}&\tblcell{0.66\textwidth}{Meaning or convention}\\
\midrule
\endfirsthead
\multicolumn{2}{@{}l}{\textbf{Table \thetable} \textit{(continued)}} \\
\toprule
\tblcell{0.28\textwidth}{Symbol}&\tblcell{0.66\textwidth}{Meaning or convention}\\
\midrule
\endhead
\midrule
\multicolumn{2}{r@{}}{\textit{Continued on the next page}} \\
\endfoot
\bottomrule
\endlastfoot
\multicolumn{2}{@{}l}{\textbf{Sector and occupation coordinates}} \\*
\ntrow{\(M,\ N\)}{Numbers of spatial orbitals and electrons, respectively.}\\
\ntrow{\(S,\ M_S\)}{Total spin and spin projection; the sector label also uses \(m=M_S\).}\\
\ntrow{\(\Hsp_{\C},\ \Hsp\)}{\(\Hsp_{\C}=\Hsp^{\C}_{M,N,S,M_S}\); \(\Hsp\) is its real CSF span. Proofs choose \(M_S=S\).}\\
\ntrow{\(J_{\mathrm{occ}}\)}{Coefficientwise complex conjugation in the fixed occupation basis.}\\
\ntrow{\(d=d_{M,N,S}\)}{Multiplicity dimension: \(\dim_{\R}\Hsp=\dim_{\C}\Hsp_{\C}\).}\\
\ntrow{\(b=(N-2S)/2\)}{Doubly occupied orbitals in the highest-weight pattern.}\\
\ntrow{\(r=2S\)}{Singly occupied orbitals in that pattern.}\\
\ntrow{\(h=(2M-N-2S)/2\)}{Empty orbitals in that pattern; \(b+r+h=M\), \(N=2b+r\).}\\
\addlinespace[0.35em]
\multicolumn{2}{@{}l}{\textbf{Fermionic operators and generators}} \\*
\ntrow{\(a^\dagger_{p\sigma},\ a_{p\sigma}\)}{Electron creation and annihilation; \(\sigma\in\{\alpha,\beta\}\).}\\
\ntrow{\(E_{pq},\ n_p\)}{\(E_{pq}=\sum_\sigma a^\dagger_{p\sigma}a_{q\sigma}\); \(n_p=E_{pp}\) is spin-summed occupation.}\\
\ntrow{\(P_{pq}^{0\dagger},\ P_p^\dagger\)}{Normalized singlet-coupled pair creation, \(p\le q\); the on-site pair is \(P_p^\dagger=P_{pp}^{0\dagger}=a^\dagger_{p\alpha}a^\dagger_{p\beta}\).}\\
\ntrow{\(T_{pq,m}^\dagger\)}{Triplet-coupled pair creation, \(p<q\); here \(m=-1,0,1\) labels pair spin projection.}\\
\ntrow{\(\ell_{pq},\ L_{pq}\)}{Spatial matrix \(\ell_{pq}=e_{pq}-e_{qp}\in\so(\R^M)\), with \(e_{pq}\) an elementary matrix, and its spin-free Fock-space lift \(L_{pq}=d\Gamma(\ell_{pq})=E_{pq}-E_{qp}\).}\\
\ntrow{\(X_{pq}\)}{Perfect-pairing transfer: \(X_{pq}=P_p^\dagger P_q-P_q^\dagger P_p\).}\\
\ntrow{\(Y_{e,f}\)}{Spin-scalar transfer between triplet-coupled pairs \(e,f\); the orbital pairs may overlap or be disjoint.}\\
\ntrow{\(C_{c;pq}\)}{Conditioned rotation: \((E_{cc}-1)L_{pq}\), with \(c,p,q\) distinct.}\\
\pagebreak
\addlinespace[0.35em]
\multicolumn{2}{@{}l}{\textbf{Generator pools, Lie algebras, and orbital support}} \\*
\ntrow{\(\mathcal P_0,\ \mathcal P_1\)}{Real singlet and triplet orbital pair spaces: \(\operatorname{Sym}^2V\) and \(\bigwedge^2V\), respectively.}\\
\ntrow{\(\mathcal K_2\)}{Full-Fock-space real vector space of pure two-body, number-conserving, spin-scalar, real-skew controls.}\\
\ntrow{\(K(\eta,A_0,A_1)\)}{General admissible real at-most-two-body seed: one-body coefficient \(\eta\) and singlet/triplet pair coefficients \(A_0,A_1\).}\\
\ntrow{\(\mathcal J_1,\ \mathcal W\)}{Orbital-induced triplet subspace \(\mathcal J_1=j_1(\so(V))\) and its Frobenius-orthogonal complement \(\mathcal W\) in \(\so(\mathcal P_1)\).}\\
\ntrow{\(P_{\mathcal W}\)}{Explicit coefficient-space projection onto the effective boundary triplet module \(\mathcal W\).}\\
\ntrow{\(G,\ \operatorname{Ed}(G),\ \Lpool(G)\)}{Orbital graph, its edge set, and the real span of its edge rotations \(L_{pq}\); \(\Lpool\) denotes the complete-graph span.}\\
\ntrow{\(\Xpool,\ \Ypool,\ \Ypooldisj\)}{Real spans of all \(X\) coordinates, all \(Y\) coordinates, and disjoint \(Y\) coordinates, respectively; see Supplemental Material, Sec.~1.2.}\\
\ntrow{\(\Pi_{M,N,S},\ K^{(M,N,S)}\)}{Highest-weight projector and restricted generator: \(K^{(M,N,S)}=\Pi_{M,N,S}K\Pi_{M,N,S}\).}\\
\ntrow{\(\mathfrak g_{\mathrm{orb}}\)}{Represented orbital algebra on the target sector; connected orbital edges generate its full image.}\\
\ntrow{\(\rho_{M,N,S}\)}{Orbital representation \(\so(M)\longrightarrow\so(\Hsp)\).}\\
\ntrow{\(\mathfrak g_{M,N,S}(\mathcal A),\ \Lie\)}{Restricted dynamical Lie algebra (DLA) of pool \(\mathcal A\); \(\Lie\) denotes real Lie closure under commutators.}\\
\ntrow{\(\mathfrak g_{LX}\)}{DLA generated by the orbital backbone and the full \(X\) pool on the target multiplicity space.}\\
\ntrow{\(\so(d)\)}{Real skew-symmetric matrix algebra; the full-DLA target for real-state controllability.}\\
\ntrow{\(s_{\min}(M,N,S)\)}{For an incomplete orbital backbone: minimum number of spatial orbitals supporting one admissible completing seed, before sector restriction.}\\
\addlinespace[0.35em]
\multicolumn{2}{@{}l}{\textbf{Repeated proof notation}} \\*
\ntrow{\(K_R,\ K_{\min}\)}{Minimum-seniority slice with singly occupied set \(R\), \(|R|=r\); \(K_{\min}=\bigoplus_{|R|=r}K_R\).}\\
\ntrow{\(D,\ Q_0\)}{Slice dimension \(D=\binom{b+h}{b}\); initial supported control block \(Q_0\) has dimension three and contains \(v_\lambda\).}\\
\ntrow{\(v_\lambda\)}{Standard highest-weight occupation vector, with \(\lambda=(2^b,1^r,0^h)\); cyclic for the orbital algebra.}\\
\ntrow{\(Z_p,\ D_a\)}{Occupation shifts: \(Z_p=1-n_p\), \(D_a=E_{aa}-1=-Z_a\).}\\
\ntrow{\(q_a\)}{Single-occupation projector: \(q_a=1-D_a^2=E_{aa}(2-E_{aa})\).}\\
\end{longtable}

\noindent
\begin{minipage}{\textwidth}
\small
\textit{Boundary key.} \(b=0\): electron maximal-spin boundary, with
\(k=N\); \(h=0\): hole maximal-spin boundary, with \(k=2M-N\).
Here \(k\) counts spinless particles or holes in the boundary reduction.
\(b,h\ge1\): interior (including nontrivial singlet sectors).
The counts \((b,r,h)\) describe the minimum-seniority highest-weight
occupation pattern, not a common occupation pattern of every state in
the sector. CSF denotes configuration state function.
\end{minipage}
\par\medskip
\endgroup

%% file: appendices/01_preliminaries.tex
\label{app:preliminaries}

\subsection{Sector construction and magnetic-component equivalence}
\label{app:magnetic-equivalence}
Let \(V_{\R}=\R^M\), let \(V_{\C}=V_{\R}\otimes_{\R}\C\simeq\C^M\), and let
\(W=\C^2=\Span_{\C}\{\ket\alpha,\ket\beta\}\).  The one-electron space is
\(V_{\C}\otimes_{\C}W\). The \(N\)-electron fermionic space is
\[
\Fsp_N=\bigwedge_{\C}^{N}(V_{\C}\otimes_{\C}W),
\qquad \dim_{\C}\Fsp_N=\binom{2M}{N}.
\]
For an allowed total spin \(S\) and magnetic quantum number \(m\), define
\[
\Hsp^{\C}_{M,N,S,m}
=\{\psi\in\Fsp_N:\widehat S^2\psi=S(S+1)\psi,
\ \widehat S_z\psi=m\psi\}.
\]
The full spin-\(S\) space decomposes as
\[
\Hsp^{\C,\mathrm{full}}_{M,N,S}
\simeq \mathcal M_{M,N,S}\otimes\mathcal V_S,
\]
where \(\mathcal V_S\) is the \((2S+1)\)-dimensional irreducible spin
representation and \(\mathcal M_{M,N,S}\) is the multiplicity space.

The generator families studied below conserve particle number and commute
with the full spin action, equivalently with all three components
\(\widehat S_x,\widehat S_y,\widehat S_z\).  They therefore preserve the
simultaneous quantum numbers \(N\), \(S\), and \(M_S=m\).

Fix an ordering of the spin orbitals and let \(J_{\mathrm{occ}}\) be coefficientwise complex
conjugation in the resulting occupation basis.  The real Fock space and real
spin sector are
\[
\Fsp_N^{\R}=\bigwedge_{\R}^{N}(\R^M\otimes_{\R}\R^2),
\qquad
\Hsp^{\R}_{M,N,S,m}=\Hsp^{\C}_{M,N,S,m}\cap\Fsp_N^{\R}.
\]
The matrices of \(\widehat S_z\) and
\(\widehat S_- =\sum_p a_{p\beta}^{\dagger}a_{p\alpha}\) are real in this
basis, as is \(\widehat S_+=\widehat S_-^{\dagger}\).  Hence
\(\widehat S^2=\widehat S_z^2+\widehat S_z+
\widehat S_-\widehat S_+\) commutes with \(J_{\mathrm{occ}}\).  Since the eigenvalues
\(S(S+1)\) and \(m\) are real, the real and imaginary parts of each vector
in \(\Hsp^{\C}_{M,N,S,m}\) satisfy the same eigenvalue equations.  Thus
\(\Hsp^{\C}_{M,N,S,m}\) is the complexification of
\(\Hsp^{\R}_{M,N,S,m}\).

The multiplicity dimension is independent of the magnetic component. With
\[
N_\alpha=\frac N2+S,
\qquad
N_\beta=\frac N2-S,
\]
the multiplicity dimension is
\begin{equation}
\label{eq:dimension}
d_{M,N,S}=
\binom{M}{N_\alpha}\binom{M}{N_\beta}
-\binom{M}{N_\alpha+1}\binom{M}{N_\beta-1},
\end{equation}
where out-of-range binomial coefficients are zero
\cite[Chap.~2, pp.~34--79]{Helgaker2000}.

For the proof notation, \(\Pi_{M,N,S}\) projects onto the complete
highest-weight space and \(K^{(M,N,S)}=\Pi_{M,N,S}K\Pi_{M,N,S}\).
Write \(\Lpool(G)=\Span_{\R}\{L_{pq}:\{p,q\}\in\operatorname{Ed}(G)\}\), with
\(\Lpool\) for the complete graph, and
\(\mathfrak g_{M,N,S}(\mathcal A)=\Lie\{K^{(M,N,S)}:K\in\mathcal A\}\).
Restrictions are implicit in Lie-algebra expressions using these pools.
We abbreviate the represented orbital action by \(\rho=\rho_{M,N,S}\)
when the sector labels are fixed.

\begin{lemma}[Equivalence of magnetic components]
\label{lem:magnetic-equivalence}
Let every \(K\) in a set \(\mathcal A\) conserve particle number and commute
with the full spin action.  For every allowed \(m\), the restrictions of
\(\mathcal A\) to \(\Hsp^{\C}_{M,N,S,m}\) and to the highest-weight space
\(\Hsp^{\C}_{M,N,S,S}\) are unitarily conjugate, and conjugation identifies
their generated Lie algebras.  If, in addition, every \(K\in\mathcal A\)
commutes with \(J_{\mathrm{occ}}\) and is skew-adjoint, its restrictions preserve the real
spin sectors; the conjugacy is real orthogonal and identifies their real
dynamical Lie algebras inside \(\so(d_{M,N,S})\).
\end{lemma}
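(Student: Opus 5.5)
The plan is to build the intertwiner from the spin-lowering operator and the commutation of \(\mathcal A\) with the full spin action. For \(S\ge m>-S\), let \(\widehat S_-\) map \(\Hsp^{\C}_{M,N,S,m}\) into \(\Hsp^{\C}_{M,N,S,m-1}\). The standard identity \(\widehat S_+\widehat S_-=\widehat S^2-\widehat S_z^2+\widehat S_z\) gives, on the spin-\(S\), projection-\(m\) space,
\[
\widehat S_+\widehat S_-=\bigl(S(S+1)-m(m-1)\bigr)\mathbf 1 .
\]
The scalar \(c_m=S(S+1)-m(m-1)\) is strictly positive for \(m>-S\). Hence \(U_m=c_m^{-1/2}\widehat S_-\) is an isometry from \(\Hsp^{\C}_{M,N,S,m}\) into \(\Hsp^{\C}_{M,N,S,m-1}\). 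By the symmetric argument, \(c_m^{-1/2}\widehat S_+\) is an isometry in the opposite direction and is a left and right inverse of \(U_m\). Therefore \(U_m\) is unitary between the two spaces; in particular their dimensions agree, consistent with \cref{eq:dimension}. Composing \(U_S,U_{S-1},\dots,U_{m+1}\) gives a unitary \(U:\Hsp^{\C}_{M,N,S,S}\to\Hsp^{\C}_{M,N,S,m}\).

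Next I show that \(U\) intertwines the restrictions. Each \(K\in\mathcal A\) commutes with \(\widehat S_-\), \(\widehat S^2\) and \(\widehat S_z\) (the first because \(\widehat S_-=\widehat S_x-i\widehat S_y\)), and it preserves \(N\), so it leaves every \(\Hsp^{\C}_{M,N,S,m}\) invariant. Consequently \(K|_{m}\,U=U\,K|_{S}\), that is, \(K|_m=UK|_SU^{-1}\). Conjugation by \(U\) is a Lie-algebra isomorphism of \(\End\) that commutes with brackets and real linear spans. It therefore carries \(\Lie\{K|_S\}\) onto \(\Lie\{K|_m\}\), which is the first claim. I expect this part to be routine.

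For the real statement, I use that \(\widehat S_\pm\) have real matrices in the occupation basis, as noted above, so \(U\) commutes with \(J_{\mathrm{occ}}\). By the complexification argument already given, \(\Hsp^{\C}_{M,N,S,m}\) is the complexification of \(\Hsp^{\R}_{M,N,S,m}\). A unitary that commutes with \(J_{\mathrm{occ}}\) maps real vectors to real vectors, so it restricts to a real orthogonal map \(\Hsp^{\R}_{M,N,S,S}\to\Hsp^{\R}_{M,N,S,m}\). A skew-adjoint \(K\) commuting with \(J_{\mathrm{occ}}\) likewise preserves each real sector and acts there as a real skew-symmetric operator, an element of \(\so(d_{M,N,S})\) once orthonormal real bases are chosen. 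Conjugating by the real orthogonal \(U\) then identifies the two real Lie closures inside \(\so(d_{M,N,S})\).

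The only delicate point is the positivity of \(c_m\), which is needed to normalize \(\widehat S_-\) and to guarantee that \(U_m\) is surjective rather than merely injective. I will check this carefully at the endpoint \(m=-S+1\), and confirm that \(m\) ranges over \(S,S-1,\dots,-S\) in integer steps. Surjectivity follows because \(\widehat S_+\) supplies the inverse on the target, using the same Casimir identity there with \(\widehat S_-\widehat S_+=S(S+1)-(m-1)m\) at projection \(m-1\), which equals \(c_m\).
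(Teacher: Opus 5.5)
Your proof is correct and follows essentially the paper's route: the normalized lowering map intertwines the restrictions because \([K,\widehat S_-]=0\), and realness follows from the real occupation-basis matrices of \(\widehat S_\pm\). The only difference is how unitarity is established. You proceed one step at a time from \(\widehat S_+\widehat S_-=(S+m)(S-m+1)\) on the projection-\(m\) spin-\(S\) space, with \(c_m^{-1/2}\widehat S_+\) as an explicit inverse, whereas the paper normalizes \((\widehat S_-)^{S-m}\) in one step via its norm on each irreducible spin-\(S\) copy. Your product of step constants reproduces exactly the paper's \(c_{S,m}\), and your version makes the isometry and surjectivity on the whole multiplicity space a direct operator identity rather than a per-copy computation.
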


\begin{proof}
The normalized lowering map
\[
U_m=c_{S,m}(\widehat S_-)^{S-m}:
\Hsp^{\C}_{M,N,S,S}\longrightarrow\Hsp^{\C}_{M,N,S,m}
\]
is unitary for
\[
c_{S,m}=\left[\frac{(S+m)!}{(2S)!(S-m)!}\right]^{1/2}.
\]
Indeed, on every copy of the spin-\(S\) representation, the squared norm of
\((\widehat S_-)^{S-m}\ket{S,S}\) is
\((2S)!(S-m)!/(S+m)!\).  Thus \(U_m\) is unitary on the whole multiplicity
space, including when its dimension is greater than one.  Number
conservation and spin commutation make both restrictions well defined, and
\([K,\widehat S_-]=0\) gives
\(K^{(m)}U_m=U_mK^{(S)}\), hence
\(K^{(m)}=U_mK^{(S)}U_m^{-1}\).  Conjugation preserves real linear
combinations and commutators.

For the real claim, \(J_{\mathrm{occ}}\widehat S_-=\widehat S_-J_{\mathrm{occ}}\) and \(c_{S,m}\) is real,
so \(U_m\) maps \(\Hsp^{\R}_{M,N,S,S}\) onto
\(\Hsp^{\R}_{M,N,S,m}\).  Its complex unitarity makes this restriction a
real isometry, represented by an orthogonal matrix in orthonormal real CSF
bases.  Each \(K\) commuting with \(J_{\mathrm{occ}}\) restricts to a real operator; if it
is also skew-adjoint, that restriction is real skew-symmetric.  The displayed
intertwining identity therefore identifies the real Lie closures by
orthogonal conjugation.  This applies in particular to all admissible
generator families considered here.
\end{proof}

\subsection{Orbital conjugacy and boundary seed orbits}
\label{app:orbital-transport}
On the boundary exterior-power space, write \(R_k(g)=\bigwedge^k g\)
and \(\rho_k=dR_k=d\Gamma_k\) for its differential, the orbital
one-body action on \(\bigwedge^k V_{\R}\).
\begin{lemma}[Orbit transport]
\label{lem:seed-orbit}
If \(\mathfrak g_Z=\Lie(\rho_k(\so(M)),Z)\), then
\(R_k(g)ZR_k(g)^{-1}\in\mathfrak g_Z\) for every \(g\in SO(M)\).  When
\(M\ge4\), elementary seeds form two signed-permutation orbits:
overlapping seeds \(\pm D_{cp,cq}\) and disjoint seeds
\(\pm D_{ab,cd}\).  For \(M=3\), only the overlapping orbit exists.
\end{lemma}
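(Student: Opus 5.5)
The plan has two parts: an exponential argument for the transport statement, and a combinatorial analysis of signed permutations for the orbit statement.

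\emph{Transport.} I would first show that \(R_k(g)ZR_k(g)^{-1}\in\mathfrak g_Z\) for every \(g\in SO(M)\). Since \(SO(M)\) is compact and connected, the exponential map is surjective, so we may write \(g=\e^{\xi}\) with \(\xi\in\so(M)\). Because \(R_k\) is a Lie group homomorphism with differential \(\rho_k\), we have \(R_k(\e^{\xi})=\e^{\rho_k(\xi)}\). Hence
\[
R_k(g)ZR_k(g)^{-1}=\e^{\ad_{\rho_k(\xi)}}Z=\sum_{n\ge0}\frac{1}{n!}\ad_{\rho_k(\xi)}^{\,n}Z .
\]
Each partial sum is a nested commutator of \(Z\) with the element \(\rho_k(\xi)\in\rho_k(\so(M))\subseteq\mathfrak g_Z\), so each partial sum lies in \(\mathfrak g_Z\). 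The subspace \(\mathfrak g_Z\) is finite dimensional, hence closed, and therefore contains the limit. If one prefers to avoid surjectivity of the exponential, it suffices that \(SO(M)\) is generated by one-parameter subgroups; the claim then follows by composing finitely many such conjugations.

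\emph{Equivariance of seeds.} Next I would record how the elementary boundary seeds transform. Here \(D_{e,f}\) is the boundary lift \(\gamma_k\) of the elementary skew matrix \(\ket e\bra f-\ket f\bra e\) on \(\mathcal P_1=\bigwedge^2V\), where \(e\ne f\) are oriented orbital pairs. The lift is built from second-quantized pair operators, so it is \(SO(M)\)-equivariant:
\[
R_k(g)\,\gamma_k(A)\,R_k(g)^{-1}=\gamma_k\bigl(\textstyle\bigwedge^2 g\,A\,\bigwedge^2 g^{-1}\bigr).
\]
Take \(g\) to be a signed permutation matrix of determinant one, namely a permutation \(\sigma\) followed, if \(\sigma\) is odd, by a sign flip of one coordinate. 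Then \(\bigwedge^2 g\) sends each oriented pair \(e\) to \(\pm\sigma(e)\), so \(D_{e,f}\) is carried to \(\pm D_{\sigma e,\sigma f}\). Reversing the orientation of either pair, or exchanging \(e\) and \(f\), changes only the overall sign, since \(D_{f,e}=-D_{e,f}\). Consequently the signed orbits are labeled by unordered pairs \(\{e,f\}\) of distinct 2-subsets, taken up to relabeling by \(S_M\).

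\emph{Classifying orbits.} The only invariant of such a pair \(\{e,f\}\) under \(S_M\) is \(|e\cap f|\in\{0,1\}\). The symmetric group acts transitively on configurations with \(|e\cap f|=1\), which have the form \(\{c,p\},\{c,q\}\). It also acts transitively on configurations with \(|e\cap f|=0\), which have the form \(\{a,b\},\{c,d\}\) and require four distinct orbitals, so \(M\ge4\). Since permutations preserve the intersection size, the two orbits are distinct. For \(M=3\), any two distinct 2-subsets of \([3]\) must intersect, so only the overlapping orbit \(\pm D_{cp,cq}\) exists.

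The only step I expect to need care is the sign bookkeeping under the determinant-one correction. Specifically, I would check that the extra coordinate flip multiplies \(D_{e,f}\) by \(\pm1\) and does not produce a mixture of seeds. This holds because a diagonal sign flip maps each basis pair of \(\bigwedge^2V\) to \(\pm\) itself.
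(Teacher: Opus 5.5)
Your proposal is correct and follows essentially the same route as the paper. For transport, you use the exponential series of \(\ad_{\rho_k(\xi)}\) and pass to the limit using closedness of the finite-dimensional \(\mathfrak g_Z\), where the paper invokes Cayley--Hamilton; for the orbits, you classify by the intersection type \(|e\cap f|\) and use one coordinate sign flip to land in \(SO(M)\) at the cost of an overall sign, exactly as the paper does, and your explicit equivariance of \(\gamma_k\) and sign check on \(D_{e,f}\) just spell out what the paper leaves implicit.
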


\begin{proof}
For \(X\in\rho_k(\so(M))\), the space \(\mathfrak g_Z\) is invariant under
\(\ad_X\), hence under \(\e^{t\ad_X}\) by Cayley--Hamilton.  Since connected
\(SO(M)\) is generated by exponentials, the conjugacy statement follows.
When \(M\ge4\), signed permutations act transitively within each
intersection type and preserve whether the two orbital pairs overlap; for
\(M=3\), two distinct pairs necessarily overlap.  If the underlying
permutation has determinant \(-1\), changing one coordinate sign makes the signed permutation
have determinant \(+1\); this changes the transported seed by at most an
irrelevant overall sign.  Thus the required representatives lie in \(SO(M)\),
not merely in \(O(M)\).
\end{proof}

\subsection{Transport of perfect-pairing coordinates}
\label{app:pair-transport}
\begin{lemma}[One seed gives the full coordinate pool]
\label{lem:px-orbit}
For every \(p\ne q\),
\(\Lie(\rho(\so(M)),X_{pq})\) contains every coordinate \(X_{rs}\)
with \(r\ne s\).
\end{lemma}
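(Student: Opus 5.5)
The plan is to transport the single seed $X_{pq}$ to every other coordinate by conjugating with orbital rotations. This follows the same mechanism as Lemma~\ref{lem:seed-orbit}.

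First, set $\mathfrak g=\Lie(\rho(\so(M)),X_{pq})$. For any $Y\in\rho(\so(M))$, the space $\mathfrak g$ is invariant under $\ad_Y$, so it is also invariant under $e^{t\ad_Y}=\operatorname{Ad}(e^{tY})$. Since $SO(M)$ is connected and generated by exponentials, $\mathfrak g$ is stable under conjugation by $U(g)|_{\Hsp}$ for every $g\in SO(M)$, where $U(g)$ is the Fock-space lift. The lift acts identically on both spin components, so it commutes with the spin action and preserves $\Hsp$. Conjugation also commutes with restriction: $U(g)X_{pq}U(g)^{-1}$ restricted to $\Hsp$ equals $U(g)|_{\Hsp}\,X_{pq}|_{\Hsp}\,U(g)|_{\Hsp}^{-1}$.

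Second, compute the action of signed permutations on pair creators. If $g$ sends $e_p\mapsto \epsilon_p e_{\pi(p)}$ with $\epsilon_p=\pm1$, then $U(g)a^\dagger_{p\sigma}U(g)^{-1}=\epsilon_p a^\dagger_{\pi(p)\sigma}$. Hence $U(g)P_p^\dagger U(g)^{-1}=\epsilon_p^2P_{\pi(p)}^\dagger=P_{\pi(p)}^\dagger$, because the sign enters once per spin component. Therefore
\[
U(g)X_{pq}U(g)^{-1}=X_{\pi(p)\pi(q)} .
\]
Given $r\ne s$, choose any permutation $\pi$ with $\pi(p)=r$ and $\pi(q)=s$. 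If $\det\pi=-1$, flip the sign of one coordinate so that the signed permutation matrix lies in $SO(M)$. This is always possible for $M\ge2$, which the hypothesis $p\ne q$ guarantees, and the flip does not change the result because the signs square away. Since $X_{sr}=-X_{rs}$, either orientation lies in $\mathfrak g$, which proves the claim.

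The main point requiring care is the second step. One must verify that the orbital Fock-space lift implements the signed permutation exactly on the paired creators, including the fermionic ordering. One must also confirm that the determinant fix is available, so the conjugating element lies in the connected group whose Lie algebra is represented by $\mathfrak g_{\mathrm{orb}}$. Both reduce to the fact that $P_p^\dagger$ is quadratic in spin-orbital creators of the same spatial orbital, so the $\epsilon_p$ factors cancel.
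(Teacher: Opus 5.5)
Your proposal is correct and follows the paper's proof: $SO(M)$-conjugation invariance of the generated algebra, signed permutations sending $P_i^\dagger\mapsto P_{\pi(i)}^\dagger$ because the orbital sign appears squared, and a determinant correction by flipping one coordinate sign. You handle $M=2$ through the same sign flip, whereas the paper notes there is only one unordered pair there; both are valid.
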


\begin{proof}
The generated algebra is invariant under \(SO(M)\) conjugation.  A signed
permutation sends \(P_i^\dagger\) to \(P_{\pi(i)}^\dagger\), because the orbital
sign is squared, and hence sends \(X_{ij}\) to \(X_{\pi(i)\pi(j)}\), up to the
irrelevant orientation sign.  Such transformations are transitive on unordered
orbital pairs.  As in \cref{lem:seed-orbit}, a coordinate sign can be adjusted
so that the signed permutation has determinant \(+1\); for \(M=2\) there is
only one unordered pair and no transport is needed.
\end{proof}

The connected-backbone commutator is stated in the main text,
Eq.~\ref{eq:connected-backbone}; it does not require a separate
representation-faithfulness assumption.

%% file: appendices/02_pair_space.tex
\subsection{Complete pair-space representation of pure two-body spin scalars}
\label{app:pair-space}

Symmetric and antisymmetric two-particle tensors provide a standard
spin-adapted representation of one- and two-body
operators~\cite{LiPaldus1993}. Here we use normalized pair bases and real,
anti-Hermitian generators to prove the full-Fock-space decomposition in
\cref{eq:pair-space-decomposition} and describe its restriction to the
target sector.

Let \(V=\R^M\), and write \(V_{\C}=V\otimes_{\R}\C\).  The Cauchy
decomposition for two fermions gives
\begin{equation}
\bigwedge\nolimits^2\!\left(V_{\C}\otimes\C^2\right)
\simeq
\left(\operatorname{Sym}^2V_{\C}\right)\otimes\mathcal V_0
\oplus
\left(\bigwedge\nolimits^2V_{\C}\right)\otimes\mathcal V_1,
\label{eq:app-two-electron-decomposition}
\end{equation}
where \(\mathcal V_0\) and \(\mathcal V_1\) are the spin-zero and spin-one
irreducible representations.  The corresponding real orbital pair spaces are
\[
\mathcal P_0=\operatorname{Sym}^2V,
\qquad
\mathcal P_1=\bigwedge\nolimits^2V.
\]
Use the normalized singlet creators
\begin{equation}
P_{pq}^{0\dagger}
=\frac{a_{p\alpha}^{\dagger}a_{q\beta}^{\dagger}
-a_{p\beta}^{\dagger}a_{q\alpha}^{\dagger}}
{\sqrt{2(1+\delta_{pq})}},
\qquad p\le q,
\label{eq:app-singlet-pair-basis}
\end{equation}
and the triplet creators \(P^{1\dagger}_{pq,m}=T^\dagger_{pq,m}\) for
\(p<q\).

For \(p<q\), let \(T_{pq,m}^\dagger\), \(m=-1,0,+1\), be the usual triplet-pair
creation operators,
\begin{align*}
T_{pq,+1}^\dagger&=a_{p\alpha}^\dagger a_{q\alpha}^\dagger,\\
T_{pq,0}^\dagger&=\frac{a_{p\alpha}^\dagger a_{q\beta}^\dagger
+a_{p\beta}^\dagger a_{q\alpha}^\dagger}{\sqrt2},\\
T_{pq,-1}^\dagger&=a_{p\beta}^\dagger a_{q\beta}^\dagger.
\end{align*}
  Pair-basis labels \(\mu,\nu\) in \(\mathcal P_0\) represent symmetric
orbital pairs \((p,q)\) with \(p\le q\), while labels in \(\mathcal P_1\)
represent oriented antisymmetric pairs with \(p<q\).  When unsorted labels are convenient, we
extend them by
\[
P^{0\dagger}_{qp}=P^{0\dagger}_{pq},
\qquad
P^{1\dagger}_{qp,m}=-P^{1\dagger}_{pq,m}.
\]
Define
\begin{align}
Q^{(j)}_{\mu\nu}
&=\sum_{m=-j}^{j}P^{j\dagger}_{\mu,m}P^j_{\nu,m},
\label{eq:app-pair-matrix-unit}\\
K^{(j)}_{\mu\nu}
&=Q^{(j)}_{\mu\nu}-Q^{(j)}_{\nu\mu}.
\label{eq:app-pair-skew-coordinate}
\end{align}
For \(j=0\), the sum in \cref{eq:app-pair-matrix-unit} has the single term
\(m=0\).  More generally, for a real matrix
\(A\in\End(\mathcal P_j)\), let
\begin{equation}
\operatorname{Lift}_j(A)
=\sum_{\mu,\nu}A_{\mu\nu}Q^{(j)}_{\mu\nu}.
\label{eq:app-pair-lift}
\end{equation}

\begin{proposition}[Complete pure two-body control space]
\label{prop:pair-space-completeness}
Let \(\mathcal K_2\) be the real vector space of pure, normal-ordered
two-body, particle-number-conserving operators that commute with the full
spin action and are real skew-symmetric in the occupation basis.  Then the
map
\[
(A_0,A_1)\longmapsto
\operatorname{Lift}_0(A_0)+\operatorname{Lift}_1(A_1)
\]
is a real-linear isomorphism
\begin{equation}
\so(\mathcal P_0)\oplus\so(\mathcal P_1)
\xrightarrow{\ \simeq\ }\mathcal K_2.
\label{eq:pair-space-decomposition}
\end{equation}
Consequently,
\[
\dim_{\R}\mathcal K_2
=\binom{M(M+1)/2}{2}+\binom{M(M-1)/2}{2}
=\frac{M^2(M^2-1)}4.
\]
\end{proposition}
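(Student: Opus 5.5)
The plan is to identify pure two-body spin scalars with a space of two-electron operators, and then reduce that space using Schur's lemma for the spin action. A pure, normal-ordered, number-conserving two-body operator has the form \(\sum c_{ij,kl}\,a_i^\dagger a_j^\dagger a_l a_k\) over spin orbitals. Its coefficient tensor is antisymmetric in \((i,j)\) and in \((k,l)\), so it is the same as an operator \(A\in\End\bigl(\bigwedge^2(V_{\C}\otimes\C^2)\bigr)\). I will show that this correspondence is bijective and equivariant. The key observation is that the action of \(K\) on two-electron states (the \(N=2\) sector, where there are no one-body or higher terms) is exactly \(A\). Hence \(K\mapsto K|_{\Fsp_2}\) is injective on pure normal-ordered two-body operators. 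It also intertwines conjugation by spin rotations, since \(U\,a_i^\dagger a_j^\dagger a_l a_k\,U^{-1}\) is again normal-ordered two-body, with its coefficient tensor transformed by \(U\otimes U\).

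Next I would impose the symmetry conditions on \(A\). Commuting with the full spin action means \(A\) commutes with \(SU(2)\) on the decomposition \cref{eq:app-two-electron-decomposition}. The isotypic components \(\operatorname{Sym}^2V_{\C}\otimes\mathcal V_0\) and \(\bigwedge^2V_{\C}\otimes\mathcal V_1\) are inequivalent, so Schur's lemma forces \(A=B_0\otimes 1_{\mathcal V_0}\oplus B_1\otimes 1_{\mathcal V_1}\). Here \(B_j\in\End(\mathcal P_j\otimes\C)\) is arbitrary. In the normalized pair bases, \(P^{j\dagger}_{\mu,m}\ket0\) is an orthonormal basis adapted to this tensor decomposition. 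Therefore \(B_j\otimes 1\) is, on two-electron states, exactly the action of \(\sum_{\mu\nu}(B_j)_{\mu\nu}Q^{(j)}_{\mu\nu}\). The operator \(Q^{(j)}_{\mu\nu}\) is itself pure normal-ordered two-body, being a product of pair creators and pair annihilators. Injectivity from the first step then gives \(K=\operatorname{Lift}_0(B_0)+\operatorname{Lift}_1(B_1)\) on all of Fock space.

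Reality and skew-symmetry are handled next. The pair creators are real combinations of occupation-basis operators, so the vectors \(P^{j\dagger}_{\mu,m}\ket0\) are real. Real matrix elements of \(K\) in the occupation basis are therefore equivalent to \(B_j\) being real. In the orthonormal pair basis, anti-Hermiticity of \(K\) is then equivalent to \(B_j^{\mathsf T}=-B_j\). This gives \(B_j\in\so(\mathcal P_j)\), and the lift map is a surjective isomorphism onto \(\mathcal K_2\). Linear injectivity of the lift itself follows again from restriction to \(\Fsp_2\).

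The dimension follows from \(\dim\mathcal P_0=M(M+1)/2\) and \(\dim\mathcal P_1=M(M-1)/2\), since \(\dim\so(n)=\binom n2\). Writing \(x=M(M+1)/2\) and \(y=M(M-1)/2\), a direct computation gives \(\bigl(x(x-1)+y(y-1)\bigr)/2=M^2(M^2-1)/4\).

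I expect the main obstacle to be the first step: making rigorous that a pure normal-ordered two-body operator is determined by its two-electron matrix elements, and that normal-ordered form is preserved under spin conjugation. To settle this, I would cite the uniqueness of the normal-ordered coefficient tensor (antisymmetrized) and compute \(\bra0 a_k a_l\,K\,a_j^\dagger a_i^\dagger\ket0\) directly.
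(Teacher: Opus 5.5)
Your proposal is correct and follows essentially the same route as the paper: injectivity of pure normal-ordered two-body operators on the two-electron space, Schur's lemma on the two-electron Cauchy decomposition to obtain the block form $B_0\otimes 1\oplus B_1\otimes 1$, identification of $Q^{(j)}_{\mu\nu}$ with lifted matrix units in the normalized real pair bases, and then reality plus skew symmetry forcing $B_j\in\so(\mathcal P_j)$. Your observation that spin conjugation transforms the coefficient tensor by $U\otimes U$ also gives, through the same injectivity, the converse inclusion that every lift is a full-spin scalar on the whole Fock space; the paper asserts this inclusion without proof.
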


\begin{proof}
A pure normal-ordered number-conserving two-body operator is uniquely
determined by its restriction to the two-electron space: the quartic
monomials act there as matrix units between antisymmetric two-electron basis
states.  Hence no nonzero pure two-body coefficient tensor can vanish on the
entire two-electron space.

By \cref{eq:app-two-electron-decomposition}, an operator commuting with the
full spin action has, on the two-electron space, the block form
\[
A_0\otimes I_{\mathcal V_0}
\oplus
A_1\otimes I_{\mathcal V_1},
\qquad
A_j\in\End(\mathcal P_j).
\]
Indeed, the spin-zero and spin-one irreducible representations are
inequivalent, and Schur's lemma leaves an arbitrary operator on each orbital
multiplicity space and the identity on its spin factor.  In the normalized
pair bases, \(Q^{(j)}_{\mu\nu}\) is precisely the second-quantized lift of the
corresponding matrix unit, contracted over the spin factor.  Requiring real
coefficients and skew symmetry is therefore equivalent to
\(A_j^{\mathsf T}=-A_j\).  Conversely, every pair of real skew matrices gives
through \cref{eq:app-pair-lift} a pure two-body, number-conserving,
full-spin-scalar real-skew operator.  Injectivity on the two-electron space
gives uniqueness, completing the proof.
\end{proof}

\noindent
The isomorphism in \cref{eq:pair-space-decomposition} is an isomorphism of
real coefficient spaces.  It is not a Lie-algebra isomorphism for
full-Fock-space operators, because a commutator of pure two-body operators
can contain higher-body terms.

\paragraph{Coordinate count by exact orbital support.}
In a fixed orbital basis, the skew pair-edge coordinates can be grouped by
the number of distinct spatial orbitals in their support.  On each chosen
two-orbital set there are three singlet edges; on each three-orbital set
there are six singlet and three triplet edges; and on each four-orbital set
there are three singlet and three triplet disjoint-pair edges.  Hence
\begin{equation}
\begin{aligned}
N^{\mathrm{coord}}_2&=3\binom M2,&
N^{\mathrm{coord}}_3&=9\binom M3,&
N^{\mathrm{coord}}_4&=6\binom M4,\\
\sum_{s=2}^{4}N^{\mathrm{coord}}_s
&=\frac{M^2(M^2-1)}4=\dim\mathcal K_2.
\end{aligned}
\label{eq:app-pair-support-count}
\end{equation}
These numbers count coordinate-basis directions of exact support two, three,
and four, respectively.  They are not asserted to define orbital-invariant
submodules.

\paragraph{Restriction is not a second direct-sum theorem.}
For the highest-weight fixed-spin sector, define the linear map
\[
\operatorname{res}_{M,N,S}:\mathcal K_2\longrightarrow\so(\Hsp),
\qquad
K\longmapsto\Pi_{M,N,S}K\Pi_{M,N,S}.
\]
This map can have a nontrivial kernel.  Moreover, even when the two
full-Fock-space coefficient blocks in
\cref{eq:pair-space-decomposition} are distinct, their restricted images
can overlap each other or the represented one-body orbital algebra.  An
explicit source of this redundancy is as follows.  For
\(\eta\in\so(V)\), let
\(j_0(\eta)=d(\operatorname{Sym}^2)(\eta)\) and
\(j_1(\eta)=d(\bigwedge^2)(\eta)\) be the induced actions on the two pair
spaces, and write
\[
d\Gamma(\eta)=\sum_{p,q}\eta_{pq}E_{pq}.
\]
Direct second quantization gives the full-Fock-space identity
\begin{equation}
\operatorname{Lift}_0\!\left(j_0(\eta)\right)
+\operatorname{Lift}_1\!\left(j_1(\eta)\right)
=(\widehat N-1)d\Gamma(\eta).
\label{eq:app-fixed-N-redundancy}
\end{equation}
On the \(N\)-electron sector, the right-hand side is
\((N-1)d\Gamma(\eta)\), an orbital one-body direction.  Thus the unique
coefficient decomposition before restriction should not be read as a
classification of linearly independent restricted controls or of arbitrary
single-seed Lie completion.

\paragraph{Dictionary for the generators used in the article.}
Choose consistent orientations for antisymmetric pair labels.  The diagonal
singlet-pair coordinates and the triplet block give
\[
K^{(0)}_{pp,qq}
=P_p^\dagger P_q-P_q^\dagger P_p
=X_{pq},
\qquad
K^{(1)}_{e,f}=Y_{e,f}.
\]
For pairwise distinct \(c,p,q\), resolving the spin of the electron on the
shared orbital \(c\) into the singlet and triplet channels yields
\begin{equation}
Q^{(0)}_{cp,cq}+Q^{(1)}_{cp,cq}=n_cE_{pq}.
\label{eq:app-shared-pair-resolution}
\end{equation}
Antisymmetrizing in \(p,q\) therefore gives
\begin{equation}
K^{(0)}_{cp,cq}+K^{(1)}_{cp,cq}=n_cL_{pq},
\qquad
C_{c;pq}=K^{(0)}_{cp,cq}+K^{(1)}_{cp,cq}-L_{pq}.
\label{eq:app-XYC-dictionary}
\end{equation}
Thus \(X\) and \(Y\) are individual skew coordinates in the singlet and
triplet pair blocks, respectively, while the pure two-body part of \(C\) is
an equal-weight shared-orbital direction across the two blocks.

%% file: appendices/appendix_b_boundary.tex
\section{Boundary completion and classification}
\label{app:boundary-completion}

\subsection{Perfect-pairing boundary criterion}
\label{app:px-boundary-proof}

\begin{theorem}[Perfect-pairing boundary theorem]
\label{thm:px-boundary}
For every physically allowed \((M,N,S)\), all projected \(X_{pq}\) with
\(p\ne q\) vanish if and only if \(b=0\) or \(h=0\).  If \(b,h\ge1\), every
such coordinate \(X_{pq}\) is nonzero.
\end{theorem}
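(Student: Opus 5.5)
By \cref{lem:magnetic-equivalence} it suffices to work in the highest-weight component \(M_S=S\), with \(\Hsp\) the real span of the \(M_S=S\) vectors of total spin \(S\). We treat the two directions separately. We use \(\widehat S_+\Hsp=0\), and that \(X_{pq}\) commutes with the spin action, so it maps \(\Hsp\) into itself. The restriction is therefore the genuine operator \(X_{pq}|_{\Hsp}\), not merely a compression.

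\emph{Vanishing on the boundary.} For \(b=0\) we have \(N=2S\), so \(N_\beta=N/2-S=0\). Every vector with \(M_S=S\) is then a combination of determinants containing only \(\alpha\) electrons. Since \(P_q=a_{q\beta}a_{q\alpha}\) requires a \(\beta\) electron, \(P_q\) annihilates every such determinant. Hence both terms of \(X_{pq}\) vanish on \(\Hsp\).

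For \(h=0\) we have \(N_\alpha=N/2+S=M\), so every spatial orbital carries an \(\alpha\) electron in each determinant with \(M_S=S\). Then \(P_p^\dagger=a^\dagger_{p\alpha}a^\dagger_{p\beta}\) gives zero because \(a^\dagger_{p\alpha}\) acts on an occupied mode. Thus \(P_p^\dagger P_q\) kills every determinant: after \(P_q\) removes an \(\alpha\beta\) pair from \(q\), orbital \(p\ne q\) still holds its \(\alpha\) electron. By symmetry \(P_q^\dagger P_p\) also vanishes, so \(X_{pq}|_{\Hsp}=0\).

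This argument holds on the whole \(M_S=S\) determinant space, so it needs no knowledge of the spin projector. By \cref{lem:magnetic-equivalence} the conclusion transfers to every allowed \(M_S\).

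\emph{Nonvanishing in the interior.} Assume \(b,h\ge1\). By \cref{lem:px-orbit}, or more simply by orbital conjugation with a signed permutation, which preserves \(\Hsp\) and sends \(X_{pq}\) to \(\pm X_{\pi(p)\pi(q)}\), it suffices to exhibit a single pair \(p\ne q\) and a single vector with \(X_{pq}\) nonzero on it. Take the highest-weight vector \(v_\lambda\) of \cref{eq:interior-highest-vector}, choosing \(p\) doubly occupied (possible since \(b\ge1\)) and \(q\) empty (possible since \(h\ge1\)).

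We first confirm \(v_\lambda\in\Hsp\). It has \(M_S=S\), and \(\widehat S_+v_\lambda=0\): the closed pairs are annihilated, and the open shell consists only of \(\alpha\) electrons. Therefore \(v_\lambda\) has total spin \(S\).

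Next, \(P_q v_\lambda=0\) because \(q\) is empty, while \(P_q^\dagger P_p v_\lambda=\pm v'\), where \(v'\) is the determinant with the pair moved from \(p\) to \(q\). Since \(X_{pq}\) commutes with spin, \(v'\in\Hsp\), and \(v'\) is nonzero and orthogonal to \(v_\lambda\). Hence \(X_{pq}v_\lambda=\pm v'\ne0\).

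The only delicate point is sign and orientation bookkeeping. The orientation affects only an overall sign, so it is harmless for a nonvanishing statement. I do not expect a real obstacle beyond being careful that \(\Hsp\) is the full spin eigenspace, so that invariance holds and no projector is needed. Combining the two directions, all \(X_{pq}\) vanish exactly when \(bh=0\), and each is nonzero when \(b,h\ge1\).
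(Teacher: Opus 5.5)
Your proof is correct and follows the paper's argument essentially step for step. Vanishing on the boundary comes from \(N_\beta=0\) when \(b=0\) and from Pauli blocking by the filled \(\alpha\) shell when \(h=0\); nonvanishing in the interior comes from evaluating \(X_{pq}\) on a highest-weight determinant in which one of \(p,q\) is doubly occupied and the other is empty. The one difference is your signed-permutation transport step. It is valid, because conjugation by an orbital transformation that preserves \(\Hsp\) preserves nonvanishing. Note that membership in the generated Lie algebra via \cref{lem:px-orbit} would not by itself give nonvanishing, so only your conjugation clause carries the argument. The step is also unnecessary: the paper simply builds the determinant adapted to the given pair \(p\ne q\) directly.
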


\begin{proof}[Proof of \cref{thm:px-boundary}]
By magnetic-component equivalence, it is enough to work in the highest-weight
component \(M_S=S\).  If \(b=0\), then \(N_\beta=0\) there, so every pair
annihilator \(P_q\) vanishes.  If \(h=0\), then \(N_\alpha=M\); every orbital
already contains an \(\alpha\) electron, and pair creation on a different
orbital is Pauli blocked.  Thus \(X_{pq}=0\) on either boundary.

Conversely, suppose \(b,h\ge1\) and fix \(p\ne q\).  Choose a determinant in
which \(q\) is doubly occupied, \(p\) is empty, another \(b-1\) orbitals are
doubly occupied, \(r=2S\) orbitals are singly occupied by \(\alpha\)
electrons, and \(h-1\) orbitals are empty.  It obeys \(S_z=S\) and
\(S_+=0\), hence lies in the highest-weight spin-\(S\) space.  On this
determinant, \(P_p^\dagger P_q\ne0\) and
\(P_q^\dagger P_p=0\), so \(X_{pq}\ne0\).
\end{proof}

\subsection{CAR derivations for the boundary lemmas}
\label{app:boundary-car}

This appendix fixes the signs and distinct-index assumptions used in
the first-conditioned-rotations lemma and the hole-overlap equation.  In
this section only, spinless
labels denote mutually anticommuting fermionic modes and
\(A_{uv}^\dagger=a_u^\dagger a_v^\dagger\),
\(A_{uv}=a_v a_u\), with the displayed order defining the pair orientation.
The canonical anticommutation relations are
\[
\{a_i,a_j^\dagger\}=\delta_{ij},\qquad
\{a_i,a_j\}=\{a_i^\dagger,a_j^\dagger\}=0.
\]

On the electron boundary \(\Hsp\cong\bigwedge^k\R^M\), let
\(\rho_k=d(\bigwedge^k)\) denote the orbital action and
\(n_p^{\mathrm{sl}}=a_p^\dagger a_p\). Define
\begin{equation}
\label{eq:D}
D_{uv,xy}=A_{uv}^\dagger A_{xy}-A_{xy}^\dagger A_{uv}.
\end{equation}

\begin{lemma}[First conditioned rotations]
\label{lem:first-conditioned}
For pairwise distinct \(c,p,q\), the first identity holds; for pairwise
distinct \(a,b,c,d\), the second holds, up to orientation signs:
\[
D_{cp,cq}=n_c^{\mathrm{sl}}L_{pq},
\qquad
[L_{bc},D_{ab,cd}]=(n_c^{\mathrm{sl}}-n_b^{\mathrm{sl}})L_{ad}.
\]
Consequently, if \(2\le k\le M-2\), any individual intermediate-triplet
coordinate \(D_{e,f}\) with \(e,f\in\binom{[M]}2\) and \(e\ne f\),
together with the singles, generates \(n_c^{\mathrm{sl}}L_{pq}\) for all
pairwise distinct \(c,p,q\).
\end{lemma}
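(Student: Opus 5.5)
The two identities are finite CAR computations, and I will do them first. The consequence then follows by orbit transport and commutators with the singles.

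\emph{First identity.} Take pairwise distinct \(c,p,q\). Then
\[
A_{cp}^\dagger A_{cq}=a_c^\dagger a_p^\dagger a_q a_c .
\]
I move \(a_c\) leftward past \(a_q\) and \(a_p^\dagger\); both carry distinct indices, so each step gives a sign and the two signs cancel. This yields \(a_c^\dagger a_c\,a_p^\dagger a_q=n_c^{\mathrm{sl}}a_p^\dagger a_q\). Antisymmetrizing in \(p,q\) gives
\[
D_{cp,cq}=n_c^{\mathrm{sl}}(a_p^\dagger a_q-a_q^\dagger a_p).
\]
On the electron boundary the spin-free \(L_{pq}\) restricts to the spinless hopping \(a_p^\dagger a_q-a_q^\dagger a_p\), because only \(\alpha\) electrons are present. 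This gives the identity, up to the orientation convention.

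\emph{Second identity.} Take pairwise distinct \(a,b,c,d\). I compute \([a_b^\dagger a_c-a_c^\dagger a_b,\;A_{ab}^\dagger A_{cd}]\) using \([a_i^\dagger a_j,a_k^\dagger]=\delta_{jk}a_i^\dagger\) and \([a_i^\dagger a_j,a_k]=-\delta_{ik}a_j\).

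\begin{enumerate}
\item \(a_b^\dagger a_c\) acting on \(A_{ab}^\dagger A_{cd}\) replaces \(a_c\) in the annihilator by \(a_b\). This gives \(a_a^\dagger a_b^\dagger a_d a_b\), which reduces to \(\pm n_b a_a^\dagger a_d\).
\item \(a_c^\dagger a_b\) replaces \(a_b^\dagger\) by \(a_c^\dagger\). This gives \(a_a^\dagger a_c^\dagger a_d a_c\), which reduces to \(\pm n_c a_a^\dagger a_d\).
\end{enumerate}

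The same computation applied to the Hermitian-conjugate term \(A_{cd}^\dagger A_{ab}\) produces the adjoint terms. Collecting everything gives \((n_c^{\mathrm{sl}}-n_b^{\mathrm{sl}})L_{ad}\) up to an overall orientation sign. The only real care is tracking the relative sign between the two contributions, and I will fix it by evaluating both on one explicit basis state.

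\emph{Consequence.} Set \(\mathfrak g=\Lie(\rho_k(\so(M)),D_{e,f})\). By \cref{lem:seed-orbit}, \(\mathfrak g\) contains every signed-permutation image of \(D_{e,f}\), hence every coordinate in its orbit up to sign.

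\begin{enumerate}
\item Overlapping seed. Here \(D_{cp,cq}=n_c^{\mathrm{sl}}L_{pq}\) is already in hand for every pairwise distinct triple by transport.
\item Disjoint seed (this needs \(M\ge4\)). The second identity puts \((n_c^{\mathrm{sl}}-n_b^{\mathrm{sl}})L_{ad}\in\mathfrak g\). Transporting \(b\leftrightarrow c\)-type relabelings does not separate the two terms, so I take a further commutator with an orbital rotation that moves the \(L_{ad}\) factor:
\[
[L_{de},(n_c-n_b)L_{ad}]=(n_c-n_b)L_{ae}\quad\text{for } e\notin\{a,b,c,d\}.
\]
This still gives only differences, so instead I commute with \(L_{bx}\) for a fresh orbital \(x\). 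Since \([L_{bx},n_b]\) produces hopping \(b\leftrightarrow x\), a cleaner route is this. Commuting \((n_c-n_b)L_{ad}\) with \((n_{c'}-n_b)L_{ad'}\)-type transports builds linear combinations of differences \(n_uL_{pq}-n_vL_{pq}\). I then sum over all \(u\notin\{p,q\}\):
\[
\sum_{u\notin\{p,q\}}n_u^{\mathrm{sl}}L_{pq}=\bigl(k-n_p^{\mathrm{sl}}-n_q^{\mathrm{sl}}\bigr)L_{pq}=(k-1)L_{pq}
\]
on \(\bigwedge^k\). This uses the fact that \((n_p+n_q)L_{pq}=L_{pq}\) on the spinless space. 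Hence
\[
(M-2)\,n_c^{\mathrm{sl}}L_{pq}=(k-1)L_{pq}+\sum_{u\ne c}\bigl(n_c^{\mathrm{sl}}-n_u^{\mathrm{sl}}\bigr)L_{pq}\in\mathfrak g,
\]
using the difference terms and the singles.
\end{enumerate}

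\emph{Main obstacle.} The main obstacle is the disjoint case: every difference \((n_c-n_u)L_{pq}\) with \(u\notin\{p,q\}\) must be shown to lie in \(\mathfrak g\). The transported second identity gives these only when \(u\) plays the role of \(b\), where \(b\) is adjacent to the rotated pair through the original seed. I need to check that transitivity of signed permutations on ordered quadruples \((a,b,c,d)\) covers every admissible \((c,u,p,q)\) configuration. I also need to check that the hypothesis \(2\le k\le M-2\) guarantees \(n_c^{\mathrm{sl}}L_{pq}\ne0\) and is not a multiple of an orbital rotation. That fact is not strictly needed for membership, only for nontriviality.
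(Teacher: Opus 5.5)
Your proof follows the paper's route: the same CAR reorderings give both identities, orbit transport (\cref{lem:seed-orbit}) supplies every signed image of the seed, and for a disjoint seed the differences \((n_u^{\mathrm{sl}}-n_v^{\mathrm{sl}})L_{pq}\) are converted into \(n_c^{\mathrm{sl}}L_{pq}\) using \((n_p^{\mathrm{sl}}+n_q^{\mathrm{sl}})L_{pq}=L_{pq}\) and \(\sum_{u\notin\{p,q\}}n_u^{\mathrm{sl}}L_{pq}=(k-1)L_{pq}\) on \(\bigwedge^k\R^M\), exactly as in the paper. Your ``main obstacle'' is not an obstacle: permutations act transitively on ordered quadruples of distinct labels, so any distinct \(u,v\notin\{p,q\}\) can play the roles of \(c,b\).

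There is one slip in your second-identity computation. The operator \(a_b^\dagger a_c\) actually commutes with \(A_{ab}^\dagger A_{cd}\), so it produces no term. Both terms come from \(-a_c^\dagger a_b\): one from the substitution \(a_b^\dagger\mapsto a_c^\dagger\) in the creators, and one from \(a_c\mapsto -a_b\) in the annihilators. This directly fixes the relative minus sign, \([L_{bc},D_{ab,cd}]=D_{ab,bd}-D_{ac,cd}\), so no basis-state check is needed.
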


The following CAR calculations, together with the orbital transport of
Appendix~\ref{app:orbital-transport}, prove this lemma independently of
the coefficient-space classification.

\subsubsection{Overlapping and disjoint two-body seeds}

Let \(c,p,q\) be pairwise distinct.  Moving the rightmost \(a_c\) through
the two distinct modes gives
\begin{align*}
A_{cp}^\dagger A_{cq}
&=a_c^\dagger a_p^\dagger a_q a_c
=-a_c^\dagger a_p^\dagger a_c a_q
=a_c^\dagger a_c a_p^\dagger a_q
=n_c^{\mathrm{sl}}a_p^\dagger a_q.
\end{align*}
The reverse product is its Hermitian conjugate.  Subtraction gives
the full-Fock-space identity
\begin{equation}
D_{cp,cq}=n_c^{\mathrm{sl}}L_{pq}.
\label{eq:app-overlap-car}
\end{equation}

Now let \(a,b,c,d\) be mutually distinct.  From the CAR,
\begin{align*}
[L_{bc},a_x^\dagger]
&=\delta_{cx}a_b^\dagger-\delta_{bx}a_c^\dagger,\\
[L_{bc},a_x]
&=-\delta_{bx}a_c+\delta_{cx}a_b.
\end{align*}
Consequently,
\begin{align*}
[L_{bc},A_{ab}^\dagger]&=-A_{ac}^\dagger,
&[L_{bc},A_{cd}]&=A_{bd},\\
[L_{bc},A_{cd}^\dagger]&=A_{bd}^\dagger,
&[L_{bc},A_{ab}]&=-A_{ac}.
\end{align*}
Applying the ordinary Leibniz rule (both factors are even) gives
\begin{equation}
[L_{bc},D_{ab,cd}]=D_{ab,bd}-D_{ac,cd}.
\label{eq:app-disjoint-step}
\end{equation}
Pair antisymmetry and \cref{eq:app-overlap-car} give
\begin{align*}
D_{ab,bd}
&=-D_{ba,bd}
=-n_b^{\mathrm{sl}}L_{ad},\\
D_{ac,cd}
&=-D_{ca,cd}
=-n_c^{\mathrm{sl}}L_{ad}.
\end{align*}
Substitution into \cref{eq:app-disjoint-step} proves
\begin{equation}
[L_{bc},D_{ab,cd}]
=(n_c^{\mathrm{sl}}-n_b^{\mathrm{sl}})L_{ad}.
\end{equation}
Reversing any pair orientation multiplies the corresponding displayed seed
and the resulting identity by the same overall sign.

For completeness, consider the four local occupation states of modes \(p,q\).
The operator \(L_{pq}\) is zero on \(\ket{00}\) and \(\ket{11}\), and rotates
the span of \(\ket{10},\ket{01}\).  Hence
\begin{equation}
(n_p^{\mathrm{sl}}+n_q^{\mathrm{sl}})L_{pq}=L_{pq}
\label{eq:app-local-number}
\end{equation}
on the complete Fock space, not only on the nonzero support of a selected
matrix element.  Since \(L_{pq}\) conserves total particle number, restriction
to \(\bigwedge^k\R^M\) gives the operator identity
\begin{align}
\sum_{d\notin\{p,q\}}n_d^{\mathrm{sl}}L_{pq}
&=\left(\widehat N-n_p^{\mathrm{sl}}-n_q^{\mathrm{sl}}\right)L_{pq}\notag\\
&=(k-1)L_{pq}.                                    \label{eq:app-fixed-k}
\end{align}
Equations \eqref{eq:app-local-number}--\eqref{eq:app-fixed-k} justify the
recovery of one absolute conditioned rotation from the differences produced
by a disjoint seed.

\subsubsection{Shared-orbital seed on the hole boundary}

Let \(c,p,q\) be pairwise distinct and let \(\Pi_\alpha\) project onto the
filled \(\alpha\)-sea.  Pair orientations are fixed so that the final
\(\beta\)-spin hop is \(L_{pq}^{(\beta)}\); reversing a pair changes all three
displayed contributions by the same overall sign.  For \(m=+1\),
\[
\Pi_\alpha A_{cp,\alpha}^\dagger A_{cq,\alpha}\Pi_\alpha
=\Pi_\alpha n_{c\alpha}a_{p\alpha}^\dagger a_{q\alpha}\Pi_\alpha=0.
\]
After \(a_{q\alpha}\) creates a hole at \(q\), the mode \(p\alpha\) is still
occupied, so creation at \(p\alpha\) is Pauli blocked; the Hermitian-conjugate
term vanishes in the same way.

For \(m=0\), expand both normalized triplet operators.  Of the four products
in \(T_{cp,0}^\dagger T_{cq,0}\), only the term that removes and restores the
shared \(c\alpha\) electron returns to the filled \(\alpha\)-sea:
\begin{align*}
\Pi_\alpha T_{cp,0}^\dagger T_{cq,0}\Pi_\alpha
&=\frac12\Pi_\alpha
a_{c\alpha}^\dagger a_{p\beta}^\dagger
a_{q\beta}a_{c\alpha}\Pi_\alpha\\
&=\frac12a_{p\beta}^\dagger a_{q\beta}\Pi_\alpha.
\end{align*}
The other three products either attempt creation in an occupied \(\alpha\)
mode or leave an \(\alpha\) hole and are killed by the final projector.
Subtracting the reverse product gives
\[
\Pi_\alpha\bigl(T_{cp,0}^\dagger T_{cq,0}-\mathrm{h.c.}\bigr)\Pi_\alpha
=\tfrac12L_{pq}^{(\beta)}\Pi_\alpha.
\]

For \(m=-1\), \cref{eq:app-overlap-car} applies directly to the three
\(\beta\) modes:
\[
\Pi_\alpha\bigl(T_{cp,-1}^\dagger T_{cq,-1}-\mathrm{h.c.}\bigr)\Pi_\alpha
=n_{c\beta}L_{pq}^{(\beta)}\Pi_\alpha.
\]
With
\(h_p^\dagger=a_{p\beta}\), \(h_p=a_{p\beta}^\dagger\), and
\(n_p^{\mathrm h}=1-n_{p\beta}\), the CAR give
\(L_{pq}^{\mathrm h}=L_{pq}^{(\beta)}\) for \(p\ne q\).
Summing the three magnetic components therefore proves
\begin{equation}
\label{eq:hole-overlap}
\Pi_\alpha Y_{\{c,p\},\{c,q\}}\Pi_\alpha
=\left(n_{c\beta}+\frac12\right)L_{pq}^{(\beta)}
=\left(\frac32-n_c^{\mathrm h}\right)L_{pq}^{\mathrm h}.
\end{equation}

The singlet-single algebra already contains
\(\tfrac32L_{pq}^{\mathrm h}\); subtracting it isolates
\(-n_c^{\mathrm h}L_{pq}^{\mathrm h}\).  The irrelevant minus sign may be
removed by reversing the seed orientation.  Thus the local calculation gives
exactly the conditioned hole rotation needed by the Johnson-graph proof.

\subsection{Higher conditioning and half filling}
\label{app:higher-conditioning}
We give the full induction and the independent coordinate-completion
conclusion used in Section~\ref{sec:boundary-mechanism} and below.
Set \(n_\Omega^{\mathrm{sl}}=\prod_{a\in\Omega}n_a^{\mathrm{sl}}\).
\begin{lemma}[Higher-order conditioning]
\label{lem:si-higher-condition}
\label{lem:higher-condition}
Fix \(2\le k\le M-2\) and an intermediate-triplet coordinate \(D_{e,f}\)
with distinct pair labels \(e\ne f\). For every \(p\ne q\) and every
\(\Omega\subseteq[M]\setminus\{p,q\}\) with \(0\le|\Omega|\le k-1\),
\(\Lie(\rho_k(\so(M)),D_{e,f})\) contains
\(n_\Omega^{\mathrm{sl}}L_{pq}\).
\end{lemma}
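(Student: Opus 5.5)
Let \(\mathfrak g=\Lie(\rho_k(\so(M)),D_{e,f})\). The proof goes by induction on \(|\Omega|\), after first noting that \(\mathfrak g\) is invariant under conjugation by \(R_k(g)\) for \(g\in SO(M)\) (\cref{lem:seed-orbit}).

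\emph{Base cases.} For \(|\Omega|=0\), \(L_{pq}=\rho_k(\ell_{pq})\) already lies in \(\mathfrak g\). For \(|\Omega|=1\), \cref{lem:first-conditioned} gives \(n_c^{\mathrm{sl}}L_{pq}\in\mathfrak g\) for all pairwise distinct \(c,p,q\).

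\emph{Inductive step.} Assume that every \(n_{\Omega'}^{\mathrm{sl}}L_{pq}\) with \(|\Omega'|=j\) lies in \(\mathfrak g\), for some \(1\le j\le k-2\). Take \(\Omega=\Omega'\cup\{a\}\) of size \(j+1\), disjoint from \(\{p,q\}\), and pick a spectator \(a\in\Omega\). I would use commutators of two conditioned rotations sharing one orbital, for instance
\[
[\,n_{\Omega'}^{\mathrm{sl}}L_{pa},\; n_{c}^{\mathrm{sl}}L_{aq}\,].
\]
A direct CAR computation should reduce it as follows. Since \(n_x^{\mathrm{sl}}\) commutes with \(L_{uv}\) whenever \(x\notin\{u,v\}\), only the hopping parts interact. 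One has \([L_{pa},L_{aq}]=L_{pq}\), and the condition factors get multiplied, with extra terms arising from \([L_{pa},n_c^{\mathrm{sl}}]\) only when \(c\in\{p,a\}\). The clean approach is to choose the labels so that the second factor's condition orbital lies in \(\Omega\setminus\Omega'\) or already in \(\Omega'\). The identity to aim for is
\[
[\,n_{A}^{\mathrm{sl}}L_{pa},\;n_{B}^{\mathrm{sl}}L_{aq}\,]=n_{A\cup B}^{\mathrm{sl}}L_{pq}
\qquad (a\notin A\cup B,\ p,q\notin A\cup B),
\]
which holds because on the relevant local three-mode space \(\{p,a,q\}\) the conditions are spectators. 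Taking \(A=\Omega'\) and \(B=\{c\}\) with \(c\in\Omega\setminus\Omega'\), however, loses the condition on \(a\), since the intermediate orbital \(a\) ends up unconditioned. I would therefore first compute the commutator of \(L_{pa}\) and \(L_{aq}\) restricted to the three local modes. It gives \(L_{pq}\) times a factor depending on \(n_a^{\mathrm{sl}}\): the sum of the paths through \(a\) yields \((1-n_a^{\mathrm{sl}})L_{pq}-n_a^{\mathrm{sl}}L_{pq}\), up to sign, i.e.\ \((1-2n_a^{\mathrm{sl}})L_{pq}\). Hence
\[
[n_{A}^{\mathrm{sl}}L_{pa},n_{B}^{\mathrm{sl}}L_{aq}]=\pm n_{A\cup B}^{\mathrm{sl}}(1-2n_a^{\mathrm{sl}})L_{pq}.
\]
Subtracting the inductively available \(n_{A\cup B}^{\mathrm{sl}}L_{pq}\) (with \(|A\cup B|\le j\)) and dividing by \(-2\) isolates \(n_{A\cup B\cup\{a\}}^{\mathrm{sl}}L_{pq}\). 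Choosing \(A\cup B=\Omega'\), e.g.\ \(A=\Omega'\), \(B=\emptyset\), and \(a\) as the new orbital gives the step \(j\to j+1\) using only the level-\(j\) hypothesis and the base \(L_{aq}\). Orbital transport then supplies all labellings.

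\emph{Main obstacle.} The hard part is verifying the local three-mode commutator, including the fermionic signs, and its interaction with the spectator conditions. The products \(n_A^{\mathrm{sl}}\) commute with \(L_{aq}\) when \(a,q\notin A\), which is why \(A\cap\{a,q\}=\emptyset\) is imposed; I would check the factor \(1-2n_a^{\mathrm{sl}}\) on the basis states \(\ket{p\,a\,q}\in\{100,010,001,110,\dots\}\). The range restriction \(|\Omega|\le k-1\) (with \(k\le M-2\)) is where I expect nondegeneracy to matter. The constructed operators are nonzero on \(\bigwedge^k\R^M\) exactly when \(|\Omega|+1\le k\) and \(M-|\Omega|-1\ge1\) empty modes exist, so the induction never relies on a vanishing intermediate. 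If the sign bookkeeping of \([L_{pa},L_{aq}]\) turns out to give \(L_{pq}\) with no \(n_a\) dependence on the full Fock space, I would instead use \([n_a^{\mathrm{sl}}L_{pa},\,n_{\Omega'}^{\mathrm{sl}}L_{aq}]\) together with \cref{eq:app-local-number} to extract the \(n_a\)-conditioned piece.
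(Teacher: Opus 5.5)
\paragraph{There is a genuine gap in your inductive step.} It rests on the claim that \([L_{pa},L_{aq}]\) carries a factor \((1-2n_a^{\mathrm{sl}})\), and this is false for spinless fermions. Since \(L_{uv}=d\Gamma(\ell_{uv})\) and \(d\Gamma\) is a Lie-algebra homomorphism, one has \([L_{pa},L_{aq}]=d\Gamma([\ell_{pa},\ell_{aq}])=L_{pq}\) exactly on the full Fock space.

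In your path count, the route through an empty \(a\) gives \(a_p^\dagger(1-n_a^{\mathrm{sl}})a_q\). The route through an occupied \(a\) gives \(+a_p^\dagger n_a^{\mathrm{sl}}a_q\) once the anticommutation signs are tracked. The occupation of \(a\) therefore cancels. A factor \(1-2\nu_a\) is what hard-core bosons produce, as in the pair-transfer identity \([X_{pc},X_{cq}]=Z_cX_{pq}\); fermionic hopping does not.

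Consequently \([n_{\Omega'}^{\mathrm{sl}}L_{pa},L_{aq}]=n_{\Omega'}^{\mathrm{sl}}L_{pq}\), and your step \(j\to j+1\) yields nothing new. Your fallback does not repair this. A direct CAR computation gives \(n_a^{\mathrm{sl}}L_{pa}=-a_a^\dagger a_p\), which is not skew. Moreover, no available generator carries a condition on one of its own rotated orbitals.

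\paragraph{The missing idea is to take the intermediate orbital outside \(\Omega\).} The identity you wrote first is correct and is all you need:
\[
[n_A^{\mathrm{sl}}L_{ps},n_B^{\mathrm{sl}}L_{sq}]=n_{A\cup B}^{\mathrm{sl}}L_{pq},\qquad A,B\subseteq[M]\setminus\{p,q,s\}.
\]
For \(|\Omega|\ge2\), pick \(c\in\Omega\) and a fresh \(s\notin\Omega\cup\{p,q\}\). Induction supplies \(n_{\Omega\setminus\{c\}}^{\mathrm{sl}}L_{ps}\) and \(n_c^{\mathrm{sl}}L_{sq}\). Their commutator is \(n_\Omega^{\mathrm{sl}}L_{pq}\), up to orientation sign.

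No condition is lost, because all of \(\Omega\) lies in \(A\cup B\). Your worry about ``losing'' a condition arose only because you forced the new condition orbital to coincide with the intermediate orbital.

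The fresh \(s\) exists because \(|\Omega\cup\{p,q\}|\le(k-1)+2\le M-1\). That is where both \(|\Omega|\le k-1\) and \(k\le M-2\) are used, not in any nonvanishing argument. This is exactly the paper's proof.
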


\begin{proof}
The cases \(|\Omega|=0,1\) are the orbital rotations and first-conditioned rotations established above.  For \(|\Omega|\ge2\), take
\(c\in\Omega\).  Since \(|\Omega|\le k-1\) and \(k\le M-2\),
\[
|\Omega\cup\{p,q\}|\le k+1\le M-1,
\]
so one may choose \(s\notin\Omega\cup\{p,q\}\).  Induction gives
\[
n_{\Omega\setminus\{c\}}^{\mathrm{sl}}L_{ps},
\qquad n_c^{\mathrm{sl}}L_{sq},
\]
where each conditioner is disjoint from the two indices of its rotation.
Their commutator is, up to sign,
\(n_\Omega^{\mathrm{sl}}L_{pq}\).
For \(k=2\), the range \(|\Omega|\le k-1\) stops at the already proved base
case.  For \(k=M-2\), the displayed inequality still leaves at least one
choice of \(s\), so both endpoint values are included.
\end{proof}

\begin{theorem}[Electron-boundary single-seed theorem]
\label{thm:electron-boundary}
For \(2\le k\le M-2\) and every intermediate-triplet coordinate
\(D_{e,f}\) with distinct pair labels \(e\ne f\),
\[
\Lie\left(\rho_k(\so(M)),D_{e,f}\right)
=\so\!\left(\binom Mk\right).
\]
\end{theorem}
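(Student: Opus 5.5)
By \cref{lem:si-higher-condition}, applied with \(|\Omega|=k-1\), the algebra \(\mathfrak g=\Lie(\rho_k(\so(M)),D_{e,f})\) contains \(n_\Omega^{\mathrm{sl}}L_{pq}\) for every \(p\ne q\) and every \(\Omega\subseteq[M]\setminus\{p,q\}\) with \(|\Omega|=k-1\). On \(\bigwedge^k\R^M\), with basis the determinants \(\ket I\), \(I\in\binom{[M]}k\), this operator is supported on a single pair of configurations. The conditioner \(n_\Omega^{\mathrm{sl}}\) forces all of \(\Omega\) to be occupied, leaving exactly one further electron. Since \(L_{pq}\) is nonzero only when exactly one of \(p,q\) is occupied, the remaining electron must sit on \(p\) or \(q\). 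Hence
\[
n_\Omega^{\mathrm{sl}}L_{pq}=\pm F_{IJ},\qquad F_{IJ}=\ket I\bra J-\ket J\bra I,
\]
with \(I=\Omega\cup\{p\}\) and \(J=\Omega\cup\{q\}\); this is \cref{eq:edge}. As \(\Omega,p,q\) range over all admissible choices, these give every plane rotation \(F_{IJ}\) with \(|I\triangle J|=2\), that is, every edge of the Johnson graph \(J(M,k)\). The sign comes from the Jordan--Wigner ordering and does not matter, since only real spans are needed.

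The second step closes these edge rotations under commutators. For distinct configurations \(I,J,K\), one has \([F_{IJ},F_{JK}]=\pm F_{IK}\). The Johnson graph \(J(M,k)\) is connected: one passes between any two \(k\)-subsets by exchanging one element at a time. Induction on graph distance along a path \(I=I_0,I_1,\dots,I_t=K\) therefore gives \(F_{IK}\in\mathfrak g\) for all \(I\ne K\). The elements \(F_{IK}\) form a basis of \(\so\bigl(\binom Mk\bigr)\), so \(\mathfrak g=\so\bigl(\binom Mk\bigr)\). The reverse inclusion is trivial, because every generator is real skew-symmetric.

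The main point to check is the exact-support claim in the first step, namely that \(n_\Omega^{\mathrm{sl}}L_{pq}\) annihilates every other basis state on the full space \(\bigwedge^k\), not just after compression. This follows from the local identity \cref{eq:app-local-number} together with particle-number counting. One must also confirm that the hypotheses \(2\le k\le M-2\) make \(|\Omega|=k-1\) admissible in \cref{lem:si-higher-condition}; this is exactly the range stated there. The constraint \(k\le M-2\) is what guarantees that the auxiliary orbital \(s\) exists in the induction, so no further case analysis is needed beyond invoking that lemma.
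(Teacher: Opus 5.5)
Your proposal is correct and follows essentially the paper's proof. You invoke the higher-order conditioning lemma at \(|\Omega|=k-1\) to isolate each Johnson-graph edge rotation \(\pm F_{IJ}\), then close under \([F_{IJ},F_{JK}]=F_{IK}\) along paths in the connected graph \(J(M,k)\). The paper's proof also says explicitly that orbital transport and the first-conditioned-rotation lemma supply the base case; your direct citation of the higher-conditioning lemma already covers this, since that lemma's conclusion is stated for \(\Lie(\rho_k(\so(M)),D_{e,f})\).
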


\begin{proof}
Orbital transport and \cref{lem:first-conditioned} give all first
occupation-conditioned rotations from any one shared or disjoint
coordinate. Lemma~\ref{lem:higher-condition} gives all
\(n_\Omega^{\mathrm{sl}}L_{pq}\) for \(p\ne q\) and
\(\Omega\subseteq[M]\setminus\{p,q\}\), up to \(|\Omega|=k-1\).
At that order, only the configurations \(I=\Omega\cup\{p\}\) and
\(J=\Omega\cup\{q\}\) survive, so the operator is
\(\pm F_{IJ}\), where \(F_{IJ}=\ket I\bra J-\ket J\bra I\).
The graph of \(k\)-subsets connected by these single replacements is
the connected Johnson graph \(J(M,k)\). For three distinct vertices,
\begin{equation}
\label{eq:connected-edges}
[F_{IJ},F_{JK}]=F_{IK}.
\end{equation}
Induction along paths therefore gives all elementary skew matrices,
which span \(\so(\binom Mk)\).
\end{proof}

The hole contraction \eqref{eq:hole-overlap} supplies the same conditioned
rotation after subtracting an available orbital term. Disjoint triplet
coordinates likewise reduce to disjoint hole transfers, up to sign.
Thus the theorem also holds on the hole boundary. In particular this
proof is independent of the coefficient-space criterion proved next.

When \(M=2k\), the Hodge star commuting with the bare \(\so(M)\) action causes
no exception.  For adjacent \(I,J\), the generated \(F_{IJ}\) obeys
\([F_{IJ},*]\ne0\), because \(*\ket I=\pm\ket{I^c}\) and
\(I^c\notin\{I,J\}\).  Thus the additional seed breaks the Hodge invariant;
the Johnson-graph argument already gives the full algebra.  The appearance of
special orthogonal or symplectic intermediate groups from particle--hole
invariants at half filling has a broader complex passive-linear-optics
precedent in Ref.~\cite{Oszmaniec2017}; the displayed commutator verifies that
the concrete real seed used here does not preserve the relevant Hodge
invariant.

\subsection{Complete classification of at-most-two-body boundary seeds}
\label{app:complete-boundary-classification}

This subsection proves the coefficient-space criterion in
\cref{eq:complete-boundary-criterion} for both maximal-spin boundaries.  Let
\(V=\R^M\), \(\mathcal P_1=\bigwedge^2V\), and let
\[
j_1:\so(V)\longrightarrow\so(\mathcal P_1)
\]
be the induced orbital action. Write
\(\ell_{pq}=e_{pq}-e_{qp}\in\so(V)\), where \(e_{pq}\) is the elementary
matrix on \(V\) and \(L_{pq}=d\Gamma(\ell_{pq})\). In the orthonormal
unordered-pair basis set
\[
\mathcal J_1=j_1(\so(V)),
\qquad
\mathcal W=\mathcal J_1^\perp.
\]
The matrix \(j_1(\ell_{pq})\) rotates the \(M-2\) pair-label planes
\(\{c,p\}\leftrightarrow\{c,q\}\).  These matrices are mutually orthogonal
and have squared Frobenius norm \(2(M-2)\).  Hence
\[
P_{\mathcal W}(A)
=A-\sum_{p<q}
\frac{\langle A,j_1(\ell_{pq})\rangle_F}{2(M-2)}j_1(\ell_{pq}),
\]
which proves \cref{eq:PW-explicit}.
Write \(P_{\mathcal J_1}=I-P_{\mathcal W}\) for the complementary
orthogonal projection.

\begin{lemma}[Effective triplet module]
\label{lem:boundary-effective-module}
For \(M\ge4\),
\[
\so(\mathcal P_1)=\mathcal J_1\oplus\mathcal W,
\qquad
\dim\mathcal W
=\frac{M(M-1)(M-3)(M+2)}8,
\]
and \(\mathcal W\) is irreducible as a real \(\so(V)\)-module.
\end{lemma}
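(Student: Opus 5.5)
The direct sum and the dimension count are bookkeeping; the substance is irreducibility, which I would obtain by identifying \(\mathcal W\) with a known irreducible orthogonal-group module. The main obstacle is the real form at \(M=6\), with \(M=4\) also needing separate attention.

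\emph{Decomposition and dimension.} The map \(j_1=d(\bigwedge^2)\) is a Lie-algebra homomorphism \(\so(V)\to\so(\mathcal P_1)\). It is injective for \(M\ge3\): the matrices \(j_1(\ell_{pq})\) are mutually Frobenius-orthogonal and nonzero, as recorded before the lemma. Hence \(\dim\mathcal J_1=M(M-1)/2\), and \(\so(\mathcal P_1)=\mathcal J_1\oplus\mathcal W\) holds by definition of the orthogonal complement. To see that \(\mathcal W\) is an \(\so(V)\)-submodule, let \(\so(V)\) act on \(\so(\mathcal P_1)\) by \(\xi\cdot A=[j_1(\xi),A]\). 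The Frobenius form is \(\ad\)-invariant, and \(\mathcal J_1\) is invariant because \([j_1(\xi),j_1(\eta)]=j_1([\xi,\eta])\); therefore its orthogonal complement \(\mathcal W\) is invariant too. With \(D=M(M-1)/2\), the count is
\[
\binom D2-\frac{M(M-1)}2=\frac{M(M-1)(M-2)(M+1)}8-\frac{4M(M-1)}8=\frac{M(M-1)(M-3)(M+2)}8 .
\]

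\emph{Irreducibility.} As an \(O(V)\)-module, \(\so(\mathcal P_1)\simeq\bigwedge^2(\bigwedge^2V)\). The plethysm gives the \(GL(V)\)-module \(\bigwedge^2(\bigwedge^2V)\simeq S_{(2,1,1)}V\). Restricting to \(O(M)\) and removing the single trace contraction gives
\[
S_{(2,1,1)}V\simeq[2,1,1]_{O(M)}\oplus[1,1]_{O(M)},
\]
where \([1,1]\simeq\bigwedge^2V\simeq\so(V)\) is exactly \(\mathcal J_1\). Hence \(\mathcal W\) is the traceless \(O(M)\)-module \([2,1,1]\). By the Weyl dimension formula (or by the count above) its dimension matches.

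The restriction to \(SO(M)\), equivalently to \(\so(V)\), stays complexly irreducible unless the first column of the diagram has length exactly \(M/2\). This gives three cases.
\begin{itemize}
\item \(M=5\) and \(M\ge7\): the complexification is irreducible, so \(\mathcal W\) is real irreducible.
\item \(M=4\): the column length \(3\) exceeds \(M/2\), so \([2,1,1]\) is associated to \([2]\). On \(SO(4)\) it restricts to the symmetric traceless module, of dimension \(9\), which matches the formula. That module is irreducible.
\item \(M=6\): the column length is \(3=M/2\). The complexification splits into self-dual and anti-self-dual parts, distinguished by the Hodge star on the three-column index block with eigenvalues \(\pm i\). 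These parts are complex conjugates of each other. A real module whose complexification is \(U\oplus\bar U\) with \(U\not\simeq\bar U\) is real irreducible, because any real submodule complexifies to a conjugation-stable submodule, which must be \(0\) or everything.
\end{itemize}
This reproduces the \(M=4,6\) real-form remarks cited in the main text.

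\emph{Main obstacle and fallback.} I expect the \(M=6\) real-form step to be the delicate part. I would check it directly by showing that Hodge-star conjugation exchanges the two complex summands, rather than relying on tables. If the representation-theoretic route proved awkward, the fallback is elementary, in the style of \cref{lem:first-conditioned}. Take a nonzero \(A\in\mathcal W\). Use the Cartan torus of \(\so(V)\), acting by commutators with \(j_1(\ell_{12}),j_1(\ell_{34}),\dots\), to project \(A\) onto a weight component. Then use raising operators to reach the highest-weight vector of \([2,1,1]\), which is a disjoint-pair coordinate of the form \(D_{ab,cd}\) with its \(\mathcal J_1\) part removed. Finally, show that the orbital transport of this vector spans \(\mathcal W\).
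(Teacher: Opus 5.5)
Your proof is correct and is essentially the paper's argument. Both identify \(\mathcal W\) with the irreducible complement \([2,1,1]\) (the paper's \(X_2\)) of the adjoint inside \(\so(\mathcal P_1)\simeq\bigwedge^2(\operatorname{ad})\), treat \(M=4\) and \(M=6\) separately, and settle \(M=6\) by the same conjugate-pair real-form argument. The difference is in sourcing: you derive the identification from the plethysm \(\bigwedge^2\bigwedge^2V\simeq S_{(2,1,1)}V\), orthogonal branching, and associated diagrams, whereas the paper cites the universal \(X_2\) decomposition and, at \(M=4\), uses the explicit Hodge splitting \(\bigwedge^2\R^4=V_+\oplus V_-\).

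Drop the fallback, which is unnecessary and, as stated, wrong. A real vector such as \(D_{ab,cd}\) can be a weight vector of the compact torus only for weight zero, so it is never the highest-weight vector of \([2,1,1]\). That vector is a complex wedge of two pair states sharing one complex orbital, with weight \(2\varepsilon_1+\varepsilon_2+\varepsilon_3\) for \(M\ge7\).
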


\begin{proof}
Using \(\bigwedge^2V\simeq\so(V)\), the orbital action on
\(\mathcal P_1\) is the adjoint representation, and
\(\so(\mathcal P_1)\simeq\bigwedge^2(\operatorname{ad})\).  For \(M=5\)
and \(M\ge7\), the standard decomposition
\(\bigwedge^2(\operatorname{ad})=\operatorname{ad}\oplus X_2\)
has irreducible complementary summand \(X_2\)
\cite{Avetisyan2020X2}; its natural real form is \(\mathcal W\), hence is
real irreducible.

For \(M=4\), the Hodge splitting
\(\bigwedge^2\R^4=V_+\oplus V_-\), with \(\dim V_\pm=3\), gives
\[
\so(\bigwedge^2V)
=\so(V_+)\oplus\so(V_-)\oplus(V_+\otimes V_-).
\]
The first two summands form \(\mathcal J_1\), while
\(V_+\otimes V_-\) is the real irreducible \(3\otimes3\) module and has
dimension nine.  For \(M=6\), use
\(\so(6)\simeq\mathfrak{su}(4)\): the complexification of \(\mathcal W\)
is the sum of two inequivalent conjugate 45-dimensional irreducibles, which
complex conjugation exchanges.  A real invariant subspace would complexify
to a conjugation-stable sub-sum, so it is either zero or the whole module.
Thus the 90-dimensional real module is irreducible.  Finally, subtracting
\(\dim\mathcal J_1=\binom M2\) from
\(\dim\so(\mathcal P_1)=\binom{\binom M2}{2}\) gives the displayed
dimension.
\end{proof}

Let \(b_p^\dagger,b_p\) be the creation and annihilation operators of the
abstract spinless exterior-power space \(\bigwedge V\). On the electron
boundary they represent the surviving \(\alpha\) modes; before the
particle--hole change on the filled-\(\alpha\) sea, they represent the
\(\beta\) modes. For the spinless exterior-power representation, define
\begin{equation}
\gamma_k(A)=
\left.\sum_{p<q,\,r<s}
A_{pq,rs}b_p^\dagger b_q^\dagger b_s b_r
\right|_{\bigwedge^kV},
\qquad A\in\so(\bigwedge^2V).
\label{eq:app-spinless-lift}
\end{equation}
It is equivariant under orbital rotations, and counting the \(k-1\) pairs
containing each occupied orbital gives
\begin{equation}
[d\Gamma_k(\eta),\gamma_k(A)]=\gamma_k([j_1(\eta),A]),
\qquad
\gamma_k(j_1(\eta))=(k-1)d\Gamma_k(\eta).
\label{eq:app-lift-covariance}
\end{equation}

On the electron boundary, every singlet-pair annihilator removes a
\(\beta\) electron and therefore vanishes, while only the \(m=+1\) triplet
component survives.  This proves
\[
\rho_{\mathrm e}K=d\Gamma_k(\eta)+\gamma_k(A_1).
\]
The hole boundary is not obtained by simply declaring the singlet block
zero.  Let \(\Pi_\alpha\) project onto the filled-\(\alpha\) subspace.  A
representative mixed-spin contraction is
\begin{equation}
\Pi_\alpha
a^\dagger_{p\alpha}a^\dagger_{q\beta}
a_{s\beta}a_{r\alpha}\Pi_\alpha
=\delta_{pr}a^\dagger_{q\beta}a_{s\beta}\Pi_\alpha.
\label{eq:app-filled-alpha-contraction}
\end{equation}
After CAR reordering, all other mixed terms reduce in the same way.  Thus the
singlet block and the mixed-spin part of the triplet block contribute only a
real-skew one-body matrix \(H_{\mathrm{mix}}(A_0,A_1)\); the remaining
\(\beta\beta\) two-body kernel is \(A_1\).

To convert the \(\beta\) electrons to holes, set
\(d_p^\dagger=b_p\) and \(d_p=b_p^\dagger\).  Two CAR reorderings give
\begin{align}
d_pd_qd_s^\dagger d_r^\dagger
={}&\delta_{qs}\delta_{pr}-\delta_{ps}\delta_{qr}
-\delta_{qs}d_r^\dagger d_p+\delta_{ps}d_r^\dagger d_q\notag\\
&+\delta_{qr}d_s^\dagger d_p-\delta_{pr}d_s^\dagger d_q
+d_r^\dagger d_s^\dagger d_qd_p .
\label{eq:app-hole-car}
\end{align}
For a skew pair coefficient \(A\), the scalar contraction vanishes.  Writing
\(e_{uv}\) for the elementary matrix on \(V\), the one-body coefficient
matrix is
\begin{equation}
\mathfrak c(A)=
\sum_{p<q,\,r<s}A_{pq,rs}
\left(
-\delta_{qs}e_{rp}+\delta_{ps}e_{rq}
+\delta_{qr}e_{sp}-\delta_{pr}e_{sq}
\right).
\label{eq:app-hole-contraction}
\end{equation}
Consequently,
\[
\gamma^{(b)}(A)
=\gamma^{(d)}(-A)+d\Gamma_d(\mathfrak c(A)),
\]
and restriction to the \(k\)-hole sector yields
\begin{equation}
\rho_{\mathrm h}K
=d\Gamma_k(\eta_{\mathrm{eff}})+\gamma_k(-A_1),
\qquad
\eta_{\mathrm{eff}}
=\eta+H_{\mathrm{mix}}(A_0,A_1)+\mathfrak c(A_1).
\label{eq:app-general-hole-restriction}
\end{equation}
This proves the two boundary reduction formulas used in the main text.  In
particular, \(A_0\) generally does not vanish on the hole boundary; it changes
only the already available orbital backbone.

\begin{lemma}[Survival of the effective module]
\label{lem:boundary-quotient-injective}
For \(2\le k\le M-2\), the equivariant map
\[
\mathcal W\longrightarrow
\so(\bigwedge^kV)/d\Gamma_k(\so(V)),
\qquad A\longmapsto[\gamma_k(A)]
\]
is injective.
\end{lemma}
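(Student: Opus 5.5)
The map is \(\so(V)\)-equivariant by \cref{eq:app-lift-covariance}. The image \(d\Gamma_k(\so(V))\) is an \(\ad\)-invariant subspace, so the quotient is an \(\so(V)\)-module and the induced map is equivariant. Its kernel is therefore a submodule of \(\mathcal W\). For \(M\ge4\), \cref{lem:boundary-effective-module} says \(\mathcal W\) is real irreducible, so the kernel is either zero or all of \(\mathcal W\). It therefore suffices to exhibit a single \(A\in\mathcal W\) with \(\gamma_k(A)\notin d\Gamma_k(\so(V))\). The small case \(M=3\) has no \(k\) with \(2\le k\le M-2\), and \(M=4\) forces \(k=2\); both are covered by the general argument, with \(M\ge4\) automatic from the range of \(k\).

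For the witness I would take the overlapping seed \(A=\ket{cp}\bra{cq}-\ket{cq}\bra{cp}\), with \(c,p,q\) distinct. Its lift is \(\gamma_k(A)=n_c^{\mathrm{sl}}L_{pq}\) by \cref{eq:app-overlap-car}. Its \(\mathcal J_1\) component follows from \cref{eq:PW-explicit}: only \(j_1(\ell_{pq})\) overlaps \(A\), with inner product \(\pm2\). Hence
\[
P_{\mathcal W}(A)=A\mp\tfrac{1}{M-2}\,j_1(\ell_{pq}),
\]
whose lift is \(n_c^{\mathrm{sl}}L_{pq}\mp\tfrac{k-1}{M-2}L_{pq}\) by \cref{eq:app-lift-covariance}. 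This is nonzero in \(\mathcal W\) since \(A\notin\mathcal J_1\). It then remains to show that \(n_c^{\mathrm{sl}}L_{pq}-\lambda L_{pq}\notin d\Gamma_k(\so(V))\) for every real \(\lambda\), or equivalently that \(n_c^{\mathrm{sl}}L_{pq}\) is not in the orbital image.

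I would argue this by matrix elements in the determinant basis of \(\bigwedge^kV\). Any \(d\Gamma_k(\eta)\) connects \(I=\Omega\cup\{p\}\) to \(J=\Omega\cup\{q\}\) with coefficient \(\pm\eta_{pq}\), and the magnitude of that coefficient is independent of the spectator set \(\Omega\). Because \(2\le k\le M-2\), I can choose two spectator sets \(\Omega,\Omega'\subseteq[M]\setminus\{p,q\}\) of size \(k-1\) with \(c\in\Omega\) and \(c\notin\Omega'\). Here \(k-1\ge1\) permits including \(c\), and \(M-3\ge k-1\) permits excluding it. The operator \(n_c^{\mathrm{sl}}L_{pq}\) has matrix element of magnitude \(1\) between the first pair of configurations and \(0\) between the second. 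Matching it with some \(d\Gamma_k(\eta)\) would require \(|\eta_{pq}|=1\) and \(\eta_{pq}=0\) at once, a contradiction.

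The main obstacle is the sign bookkeeping in this matrix-element comparison. The Jordan--Wigner-type signs of \(\pm\eta_{pq}\) depend on \(\Omega\), so I would compare absolute values only, which suffices. The reliance on irreducibility for \(M=4,6\) is already handled by \cref{lem:boundary-effective-module}.
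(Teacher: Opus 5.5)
Your proposal is correct and follows the paper's proof essentially step for step. Irreducibility of \(\mathcal W\) reduces injectivity to showing the map is nonzero. The overlapping seed with \(\gamma_k(A)=n_c^{\mathrm{sl}}L_{pq}\) serves as the witness, and its \(\mathcal J_1\) component lifts into the orbital backbone. Comparing spectator sets, one containing \(c\) and one omitting it (both allowed by \(2\le k\le M-2\)), shows that \(n_c^{\mathrm{sl}}L_{pq}\) is not a one-body action. Two differences are cosmetic: you write out \(P_{\mathcal W}(A)\) explicitly, and you compare absolute values instead of factoring out the common fermionic sign.
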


\begin{proof}
By \cref{lem:boundary-effective-module}, its kernel is either zero or all of
\(\mathcal W\).  It remains to show that the map is nonzero.  Let \(B\) be
the shared-pair coefficient for which
\(\gamma_k(B)=n_c^{\mathrm{sl}}L_{pq}\), with \(c,p,q\) distinct.  For
\(R\subseteq[M]\setminus\{p,q\}\), \(|R|=k-1\), the matrix element from
\(\ket{q,R}\) to \(\ket{p,R}\), apart from the common fermionic sign, is
\(\mathbf1_{c\in R}\).  The corresponding matrix element of any one-body
action \(d\Gamma_k(\eta)\) is instead the spectator-independent coefficient
\(\eta_{pq}\).  The bounds on \(k\) permit choices of \(R\) both containing
and omitting \(c\), so \(n_c^{\mathrm{sl}}L_{pq}\) is not a one-body
action.  By \cref{eq:app-lift-covariance},
\(\gamma_k(P_{\mathcal J_1}B)\) lies in the backbone.  Hence
\([\gamma_k(P_{\mathcal W}B)]\ne0\), and irreducibility excludes a nonzero
kernel.
\end{proof}

\begin{proof}[Proof of the boundary part of \cref{thm:main}]
Under the real boundary identification,
\(\mathfrak k=d\Gamma_k(\so(V))\) represents
\(\mathfrak g_{\mathrm{orb}}\), and \(\rho_{\mathrm{e/h}}K\) represents
\(K|_{\Hsp}\).
Write \(A_1=A_{\mathcal J_1}+A_{\mathcal W}\) according to
\cref{eq:triplet-effective-splitting}.  The electron and hole reduction
formulas, together with \cref{eq:app-lift-covariance}, show that modulo
\(\mathfrak k=d\Gamma_k(\so(V))\), the seed has class
\([\gamma_k(A_{\mathcal W})]\) or its negative.  If
\(A_{\mathcal W}=0\), the restricted seed is already in \(\mathfrak k\), a
proper subalgebra for \(2\le k\le M-2\), and cannot complete the DLA.

If \(A_{\mathcal W}\ne0\), orbital commutators and
\cref{lem:boundary-effective-module,eq:app-lift-covariance} generate
\(\gamma_k(\mathcal W)\).  The backbone already contains
\(\gamma_k(\mathcal J_1)\), so the generated algebra contains the lift of
every triplet-pair coefficient, in particular a shared coordinate whose
restriction is \(n_c^{\mathrm{sl}}L_{pq}\) up to an orbital term and an
overall sign.  The independently proved conditioned-rotation and
Johnson-graph construction in the preceding subsections then gives
\(\so\!\left(\binom Mk\right)\).  Finally,
\cref{lem:boundary-quotient-injective} shows that
\(P_{\mathcal W}(A_1)\ne0\) is equivalent to the restricted seed lying
outside \(\mathfrak k\).  This proves
\cref{eq:complete-boundary-criterion,eq:boundary-failure-subspace} on both
boundaries.
\end{proof}

The failure set is therefore a linear subspace of codimension
\(\dim\mathcal W\) in the coefficient space and is the same for every
nontrivial electron and hole maximal-spin boundary; only the exterior-power
representation of its surviving component depends on \(k\).

\begin{corollary}[Boundary image of the at-most-two-body control space]
\label{cor:boundary-two-body-image}
For \(M\ge4\) and \(2\le k\le M-2\), the vector-space image of all
admissible seeds in \cref{eq:at-most-two-body-seed} under either maximal-spin
boundary restriction is
\begin{equation}
\boxed{
\bigl\{\rho_{\mathrm{e/h}}K(\eta,A_0,A_1)\bigr\}
=\mathfrak k\oplus\gamma_k(\mathcal W),
\qquad
\dim\bigl\{\rho_{\mathrm{e/h}}K\bigr\}
=\binom{r_1}{2},
\quad r_1=\binom M2 .}
\label{eq:boundary-two-body-image}
\end{equation}
Thus, for fixed \(M\), the linear control-space dimension is independent of
\(k\) throughout \(2\le k\le M-2\). The sum in
\cref{eq:boundary-two-body-image} is a vector-space decomposition. Its image
equals \(\so(d)\) for \(k=2\) or \(M-2\), and is a proper subspace for
\(3\le k\le M-3\); in the latter case, a completing seed and the backbone
generate the remaining directions through iterated commutators.
\end{corollary}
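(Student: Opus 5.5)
The image is computed in three steps: first show that the image lies in \(\mathfrak k+\gamma_k(\mathcal W)\), then that it contains this space and the sum is direct, and finally count dimensions.

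\emph{Step 1: the image is contained in \(\mathfrak k+\gamma_k(\mathcal W)\).} Use the reduction formulas \cref{eq:general-electron-restriction,eq:general-hole-restriction}, proved in \cref{eq:app-general-hole-restriction}. Every restricted seed has the form \(d\Gamma_k(\eta')\pm\gamma_k(A_1)\) for some \(\eta'\in\so(V)\). Split \(A_1=A_{\mathcal J_1}+A_{\mathcal W}\). By \cref{eq:app-lift-covariance}, \(\gamma_k(A_{\mathcal J_1})\in\mathfrak k\). Hence every restricted seed lies in \(\mathfrak k+\gamma_k(\mathcal W)\).

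\emph{Step 2: the image contains this space, and the sum is direct.} For the reverse inclusion on the electron boundary, take \(K(\eta,0,A)\) with \(A\in\mathcal W\) and \(\eta\in\so(V)\) arbitrary; this gives every element \(d\Gamma_k(\eta)+\gamma_k(A)\). On the hole boundary, take \(A_1=-A\). The extra term \(\eta_{\mathrm{eff}}\) is an affine shift of \(\eta\), and \(\eta\) ranges freely over \(\so(V)\), so again everything is obtained. For directness, \cref{lem:boundary-quotient-injective} says \(\gamma_k\) is injective on \(\mathcal W\) modulo \(\mathfrak k\), which gives \(\mathfrak k\cap\gamma_k(\mathcal W)=0\).

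\emph{Step 3: the dimension.} Three facts are needed.
\begin{itemize}
\item \(\dim\mathfrak k=\binom M2\), because \(d\Gamma_k\) is faithful for \(1\le k\le M-1\). Since \(\so(M)\) is simple for \(M\ne4\), the kernel is an ideal not containing everything; for \(M=4\), check directly that neither \(\so(3)\) factor acts trivially when \(k=2\).
\item \(\dim\gamma_k(\mathcal W)=\dim\mathcal W\), by the injectivity of Step 2.
\item \(\binom M2+\dim\mathcal W=\dim\so(\mathcal P_1)=\binom{r_1}{2}\), by \cref{lem:boundary-effective-module}.
\end{itemize}
Together these give \(\binom{r_1}2\), independent of \(k\).

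\emph{Comparison with \(\so(d)\).} For \(k=2\) or \(M-2\) we have \(d=\binom M2=r_1\), so \(\dim\so(d)=\binom{r_1}{2}\) and the image is all of \(\so(d)\). For \(3\le k\le M-3\), \(\binom Mk>\binom M2\), so the image is a proper subspace. The final clause, that a completing seed and the backbone generate the remaining directions, follows from the boundary part of \cref{thm:main}.

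The main obstacle I expect is the faithfulness check for \(M=4\), and the bookkeeping showing that the hole-boundary one-body correction does not restrict the range. Both are routine.
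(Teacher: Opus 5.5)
Your proposal is correct and follows essentially the same route as the paper. The steps match: containment via the electron/hole reduction formulas together with \(\gamma_k(\mathcal J_1)\subseteq\mathfrak k\); the reverse inclusion by letting \(\eta\) absorb the hole-side one-body shift; directness from \cref{lem:boundary-quotient-injective}; and the count \(\binom M2+\dim\mathcal W=\binom{r_1}{2}\) via faithfulness of the orbital action. You also make explicit two points the paper only asserts: the faithfulness argument (simplicity of \(\so(M)\) for \(M\ne4\), and the Hodge check for \(M=4\), \(k=2\)), and the comparison \(d=\binom Mk\) versus \(r_1\) behind the \(\so(d)\) clause.
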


\par\noindent\emph{Proof.}
The boundary reduction formulas place every image in
\(\mathfrak k+\gamma_k(\mathcal W)\), while
\(\gamma_k(\mathcal J_1)\subseteq\mathfrak k\). Conversely, choosing
\(\eta\) and \(A_1\) realizes arbitrary orbital terms and
\(\gamma_k(\mathcal W)\), with \(\eta\) absorbing the hole-side one-body
contribution. Injectivity of
\(\gamma_k\) on \(\mathcal W\) modulo \(\mathfrak k\) makes the sum direct.
The represented orbital action and \(j_1\) are faithful in the stated range,
so its dimension is
\(\dim\mathcal J_1+\dim\mathcal W=\dim\so(\mathcal P_1)=\binom{r_1}{2}\).
\qed

For completeness, the coordinate consequences can be read directly in pair
coefficient space.  A disjoint pair edge is orthogonal to every
\(j_1(\ell_{pq})\), hence lies in \(\mathcal W\).  A shared edge \(B\) has
\(\|B\|_F^2=2\) and overlaps only the corresponding orbital-induced
direction, with overlap magnitude two, so
\begin{equation}
\|P_{\mathcal W}B\|_F^2
=2-\frac{2}{M-2}
=\frac{2(M-3)}{M-2}>0.
\label{eq:app-shared-W-norm}
\end{equation}
The triplet coefficient of \(C_{c;pq}\) is the same shared edge. Thus every
shared or disjoint \(Y_{e,f}\) with \(e\ne f\), and every \(C_{c;pq}\) with
pairwise distinct \(c,p,q\), succeeds.
This coordinatewise fact does not mean that every nonzero coordinate sum
succeeds: with consistent orientations,
\[
\sum_{c\notin\{p,q\}}B_{cp,cq}=j_1(\ell_{pq})\in\mathcal J_1,
\]
so the sum lies exactly in the failure subspace.

\subsection{An alternative root-string description}
\begin{remark}
For \(E_{pq}=\sum_\sigma a_{p\sigma}^\dagger a_{q\sigma}\), the canonical
anticommutation relations give
\[
E_{pq}^2=2P_p^\dagger P_q,
\qquad
X_{pq}=\tfrac12(E_{pq}^2-E_{qp}^2).
\]
On the Schur module \(\mathbb S_{(2^b,1^r)}V\), the corresponding root string
has length at least two exactly when \(b,h>0\).
\end{remark}

%% file: appendices/appendix_c_interior.tex
\section{Full proof of interior full-pool completion}
\label{app:interior-full-proof}

We use the notation of the main text throughout.  On the fixed
\((N,S,M_S=S)\) multiplicity space, write
\[
\mathfrak g_{LX}
 =\Lie\bigl(\rho_{M,N,S}(\so(M)),\{X_{pq}:p<q\}\bigr).
\]
This appendix proves Theorem~\ref{thm:interior-full-pool} and its
single-coordinate consequence in \cref{thm:main}: hard-core-pair completion, strict
localization, a three-dimensional initial block, highest-weight cyclicity,
and one-sided infection.

\subsection{Conditional pair transfers}
\label{app:conditional-pairs}
Set \(Z_p=1-n_p\). We give the full spectator induction underlying
\cref{eq:px-conditional}.
\begin{lemma}[Conditional perfect-pairing commutator]
\label{lem:si-px-conditional}
\label{lem:px-conditional}
For distinct \(p,c,q\),
\begin{equation}
\label{eq:si-px-conditional}
[X_{pc},X_{cq}]=Z_cX_{pq}.
\end{equation}
Consequently, \(Z_\Omega X_{pq}\) belongs to the generated algebra for every
\(\Omega\subseteq[M]\setminus\{p,q\}\), where
\(Z_\Omega=\prod_{a\in\Omega}Z_a\).
\end{lemma}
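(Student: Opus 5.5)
The proof splits into a direct second-quantized computation of the commutator and an induction on \(|\Omega|\).

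\emph{The commutator.} Write \(X_{pc}=P_p^\dagger P_c-P_c^\dagger P_p\) and \(X_{cq}=P_c^\dagger P_q-P_q^\dagger P_c\). The on-site pair operators \(P_i^\dagger,P_i\) are bosonic bilinears (even in fermion operators). They satisfy the hard-core relations
\[
[P_i,P_j^\dagger]=\delta_{ij}(1-n_i),\qquad [P_i,P_j]=[P_i^\dagger,P_j^\dagger]=0 .
\]
These follow directly from the CAR, because operators on distinct orbitals commute and \([P_i,P_i^\dagger]=1-n_{i\alpha}-n_{i\beta}\).

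Expand the commutator into four terms. Every term whose two factors share no orbital index in a creator/annihilator pairing vanishes. The two surviving terms involve the shared orbital \(c\):
\[
[P_p^\dagger P_c,\,P_c^\dagger P_q]=P_p^\dagger[P_c,P_c^\dagger]P_q=P_p^\dagger Z_cP_q,
\qquad
[P_c^\dagger P_p,\,P_q^\dagger P_c]=-P_q^\dagger[P_c,P_c^\dagger]P_p .
\]
The remaining terms need care. The term \([P_p^\dagger P_c,-P_q^\dagger P_c]\) vanishes, since \(p,q,c\) are distinct and \(P_c\) commutes with \(P_c\). The term \([-P_c^\dagger P_p,P_c^\dagger P_q]\) vanishes for the same reason.

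Since \(Z_c\) commutes with \(P_p^{(\dagger)}\) and \(P_q^{(\dagger)}\), the sum is
\[
Z_c(P_p^\dagger P_q-P_q^\dagger P_p)=Z_cX_{pq}.
\]
I would also record two facts for the next step. First, \(Z_c\) commutes with every \(X_{rs}\) having \(c\notin\{r,s\}\). Second, the computation is unchanged if both sides are multiplied by any occupation polynomial on orbitals outside \(\{p,q,c\}\).

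\emph{The induction on \(|\Omega|\).} By \cref{lem:px-orbit}, every \(X_{rs}\) lies in the generated algebra. This is the case \(\Omega=\emptyset\); the case \(|\Omega|=1\) is the identity just proved.

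For the inductive step, pick \(c\in\Omega\) and set \(\Omega'=\Omega\setminus\{c\}\). Assume \(Z_{\Omega'}X_{pc}\) lies in the algebra; this is legitimate because \(\Omega'\cap\{p,c\}=\emptyset\). Then compute
\[
[Z_{\Omega'}X_{pc},X_{cq}] .
\]
Since \(Z_{\Omega'}\) acts on orbitals disjoint from \(\{p,c,q\}\), it commutes with \(X_{cq}\). The commutator therefore factors as
\[
Z_{\Omega'}[X_{pc},X_{cq}]=Z_{\Omega'}Z_cX_{pq}=Z_\Omega X_{pq}.
\]
The base coordinate \(X_{cq}\) is available by transport. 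All of these are full-Fock-space identities, so restricting to \(\Hsp\) commutes with taking commutators, because every operator involved preserves \(\Hsp\).

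\emph{Main obstacle.} The delicate point is sign and ordering bookkeeping in the hard-core commutator. In particular, one must confirm that \(P_p^\dagger Z_cP_q\) equals \(Z_cP_p^\dagger P_q\) exactly, with no extra fermionic sign; this holds because each \(P\) is even and \(Z_c\) is diagonal on orbital \(c\) only. I would check the identity on the local occupation basis of the three orbitals as a sanity test.
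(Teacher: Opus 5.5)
Your proposal is correct and matches the paper's proof essentially step for step: the same two surviving hard-core commutator terms give \(Z_cX_{pq}\), and the same induction applies. That induction picks \(c\in\Omega\), applies the hypothesis to the pair \((p,c)\) with spectators \(\Omega\setminus\{c\}\), and uses \([Z_{\Omega\setminus\{c\}},X_{cq}]=0\) to obtain \(Z_\Omega X_{pq}\). Your explicit remark that restriction to \(\Hsp\) commutes with commutators, and your use of \cref{lem:px-orbit} for the base case, are harmless additions that the paper leaves implicit; its \(\mathfrak g_{LX}\) already contains every \(X_{rs}\).
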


\begin{proof}
The only nonzero overlapping terms are
\([
P_p^\dagger P_c,P_c^\dagger P_q]=Z_cP_p^\dagger P_q
\)
and its reverse counterpart, which is
\(-Z_cP_q^\dagger P_p\).  This proves \cref{eq:si-px-conditional}.  For the
higher-order statement, suppose \(c\in\Omega\).  Every
\(d\in\Omega\setminus\{c\}\) is distinct from \(p,c,q\), so
\([Z_d,X_{cq}]=0\).  The induction hypothesis is applicable to the pair
\((p,c)\), because
\(\Omega\setminus\{c\}\subseteq[M]\setminus\{p,c\}\), and therefore
\begin{align*}
 [Z_{\Omega\setminus\{c\}}X_{pc},X_{cq}]
 &=Z_{\Omega\setminus\{c\}}[X_{pc},X_{cq}]\\
 &=Z_{\Omega\setminus\{c\}}Z_cX_{pq}
 =Z_\Omega X_{pq}.
\end{align*}
Starting from \(|\Omega|=0\) and \cref{eq:si-px-conditional} for
\(|\Omega|=1\), this proves the claim for every allowed spectator set.  In
particular, no additional terms arise: all previously accumulated spectator
factors have indices disjoint from the two overlapping pair transfers in the
new commutator.
\end{proof}

\subsection{Hard-core-pair completion}
\label{app:hardcore-pairs}

Consider \(m\) sites occupied by exactly \(t\) hard-core pairs, with basis
\(\ket B\), \(B\subseteq[m]\), \(|B|=t\).  Let \(x_{pq}\) transfer a pair from
\(q\) to \(p\) skew-symmetrically, and set \(z_c=1-2\nu_c\), where \(\nu_c\) is
the pair-occupation projector.

\begin{lemma}[Hard-core-pair orthogonal algebra]
\label{lem:hardcore}
If \(1\le t\le m-1\), then
\[
\Lie\{x_{pq}:p\ne q\}=\so\!\left(\binom mt\right).
\]
\end{lemma}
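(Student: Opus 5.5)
The plan is to follow the same strategy as the boundary Johnson-graph proof (\cref{thm:electron-boundary}), using the conditional commutator identity now transplanted to hard-core pairs. First I would establish the hard-core analogue of \cref{lem:px-conditional}. For distinct \(p,c,q\), the pair operators obey \([b_i,b_j^\dagger]=\delta_{ij}(1-2\nu_i)\), which gives
\[
[x_{pc},x_{cq}]=z_cx_{pq},\qquad z_c=1-2\nu_c .
\]
Only the terms overlapping on \(c\) contribute, exactly as in the fermionic calculation. Indices outside \(\{p,c,q\}\) commute with everything involved. Hence induction on the spectator set \(\Omega\subseteq[m]\setminus\{p,q\}\) yields \(z_\Omega x_{pq}=\prod_{a\in\Omega}z_a\,x_{pq}\) in the generated algebra, including \(z_\Omega x_{pq}\) for every \(\Omega\).

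Next I would convert these sign conditions into occupation projectors. Because \(\nu_a=(1-z_a)/2\), the product \(\prod_{a\in\Omega}\nu_a\prod_{a\in\Omega'}(1-\nu_a)\,x_{pq}\) for disjoint \(\Omega,\Omega'\) expands as a real linear combination of operators \(z_T x_{pq}\) with \(T\subseteq\Omega\cup\Omega'\). Taking \(\Omega'=[m]\setminus(\Omega\cup\{p,q\})\) and \(|\Omega|=t-1\) fixes every spectator occupation. On the \(t\)-pair sector this operator is then \(\pm F_{IJ}\), the single plane rotation between \(I=\Omega\cup\{p\}\) and \(J=\Omega\cup\{q\}\). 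The hypothesis \(1\le t\le m-1\) guarantees that such \(\Omega\) exists for every adjacent pair \(I,J\). The sign is fixed because the basis \(\ket B\) consists of bosonic-like hard-core pair states, so there are no fermionic strings. Only the skew orientation matters, and it is irrelevant for spanning.

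Finally, the \(t\)-subsets connected by single replacements form the Johnson graph \(J(m,t)\), which is connected. Applying \([F_{IJ},F_{JK}]=F_{IK}\) (\cref{eq:connected-edges}) along paths produces every elementary skew matrix \(F_{IK}\). These span \(\so\bigl(\binom mt\bigr)\). The reverse inclusion is immediate, since each \(x_{pq}\) is real skew-symmetric in the basis \(\ket B\).

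The main obstacle is the first commutator identity. I must check that the hard-core commutation relations, with \(b_p^2=0\) and \(b_p,b_q\) commuting for \(p\ne q\), give exactly \(z_c\) with no extra terms. For this I would verify the terms \([b_p^\dagger b_c,b_c^\dagger b_q]=b_p^\dagger(1-2\nu_c)b_q\) and its reverse \(-z_cb_q^\dagger b_p\) directly on basis states. This also mirrors \cref{eq:px-conditional}, since \(Z_c\) restricted to pair-occupied or empty sites equals \(z_c\).
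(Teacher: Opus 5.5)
Your proposal is correct and follows essentially the same route as the paper. Both use the local identity \([x_{pc},x_{cq}]=z_cx_{pq}\), recursive spectator conditioning to obtain every \(z_Ax_{pq}\), and expansion of the full spectator projector (with \(|C|=t-1\)) into conditioned terms to isolate a single edge of \(J(m,t)\). Both then take commutators \([F_{IJ},F_{JK}]=F_{IK}\) along paths in the connected Johnson graph. The only difference is that you verify the hard-core commutator directly, whereas the paper defers that step to the fixed-Hamming-weight construction of Stergiou and Sawaya.
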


\begin{proof}
Identifying empty and doubly occupied sites with effective qubit states,
this is the fixed-Hamming-weight construction of Proposition~1 in
Ref.~\cite{Stergiou2026}, expressed in pair-occupation variables with the
full pair-exchange pool.
The local identity \([x_{pc},x_{cq}]=z_cx_{pq}\) recursively generates every
\(z_Ax_{pq}\).  For \(C\subseteq[m]\setminus\{p,q\}\), \(|C|=t-1\), the
spectator projector
\[
\pi_C=\prod_{c\in C}\frac{1-z_c}{2}
\prod_{d\notin C\cup\{p,q\}}\frac{1+z_d}{2}
\]
gives \(\pi_Cx_{pq}\) as a linear combination of such conditioned terms and
isolates one edge of \(J(m,t)\). For distinct configurations,
\[
[F_{AB},F_{BC}]=F_{AC},\qquad F_{AB}=\ket A\bra B-\ket B\bra A.
\]
The graph is connected, so commutators along paths give every skew edge
and hence the full orthogonal algebra.
\end{proof}

\subsection{Strict localization on minimum-seniority slices}
\label{app:strict-localization}

Fix an \(r\)-subset \(R\subseteq[M]\), let \(E_R=[M]\setminus R\), and define
\[
K_R=\Span\{\ket{B;R}:B\subseteq E_R,\ |B|=b\}.
\]
Here \(B\) labels doubly occupied orbitals, each orbital in \(R\) carries one
\(\alpha\) electron, and the rest are empty.  Thus
\(
\dim K_R=D=\binom{b+h}{b}
\).

\medskip
\noindent\textbf{Occupation-seniority spin bound.}\quad
In a fixed occupation pattern with \(s\) singly occupied orbitals, empty
and doubly occupied orbitals are spin singlets, so the spin representation
is \((\C^2)^{\otimes s}\). Hence \(S\le s/2\). If \(s=2S\), the
highest-weight condition \(M_S=S=s/2\) forces every singly occupied orbital
to carry \(\alpha\) spin.
\medskip

\begin{proposition}[Strict localization on a slice]
\label{prop:slice-local}
For every \(R\),
\[
\so(K_R)\oplus0_{K_R^\perp}\subseteq\mathfrak g_{LX}.
\]
\end{proposition}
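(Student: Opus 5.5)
The plan is to combine the conditioned pair transfers of \cref{lem:px-conditional} with the seniority bound to build operators that are strictly supported on \(K_R\), and then to invoke \cref{lem:hardcore} inside \(K_R\).

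\textbf{Step 1: localized conditioned transfers.} Fix distinct \(p,q\in E_R\) and set \(\Omega=E_R\setminus\{p,q\}\). By \cref{lem:px-conditional}, \(Z_\Omega X_{pq}\in\mathfrak g_{LX}\), using \(\Omega\subseteq[M]\setminus\{p,q\}\). I will show this operator annihilates \(K_R^\perp\) and maps \(K_R\) into itself. Work in an occupation basis adapted to spin: the operator conserves each orbital occupation outside \(\{p,q\}\), and \(Z_a\) vanishes on single occupation of \(a\). So on any occupation pattern where some \(a\in\Omega\) is singly occupied, the operator gives zero.

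\textbf{Step 2: the surviving patterns.} On the surviving patterns, \(X_{pq}\) requires one of \(p,q\) to be doubly occupied and the other empty, otherwise \(P_q\) or \(P_p^\dagger\) kills the state. Hence every orbital of \(E_R\) is empty or doubly occupied, and all single occupations lie in \(R\). By the occupation-seniority spin bound, a highest-weight spin-\(S\) vector needs at least \(r=2S\) singly occupied orbitals. So exactly the orbitals of \(R\) are singly occupied, all with \(\alpha\) spin, and particle number fixes \(|B|=b\).

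To make this rigorous, I will decompose \(\Hsp\) by the spatial occupation pattern \(\lambda\in\{0,1,2\}^M\). The operators \(n_a\) commute with total spin, so \(\Hsp=\bigoplus_\lambda\Hsp_\lambda\), and the components with singly occupied set exactly \(R\) and all other orbitals in \(\{0,2\}\) span precisely \(K_R\), since the spin factor there is one-dimensional. The operator \(Z_\Omega X_{pq}\) is block-diagonal except that it moves pairs within \(E_R\). It vanishes on every \(\Hsp_\lambda\) whose singles set is not contained in \(R\cup\{p,q\}\), and the \(X_{pq}\) factor kills singles at \(p\) or \(q\). Its adjoint has the same property. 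Therefore the operator is supported on \(K_R\) with zero blocks to and from \(K_R^\perp\). The main obstacle is this step: checking that the spectator factors \(Z_a\), \(a\in\Omega\), do not allow leakage through components where an orbital of \(R\) is doubly occupied or empty. Such components have fewer than \(r\) singles outside \(E_R\), so the seniority bound excludes them from \(\Hsp\). I will state this explicitly.

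\textbf{Step 3: hard-core pairs on \(K_R\).} On \(K_R\), \(Z_a=\pm1\) according to whether \(a\) is empty or doubly occupied, and the number of doubly occupied orbitals in \(\Omega\) is \(b-1\) whenever \(X_{pq}\) acts nontrivially. So \(Z_\Omega X_{pq}|_{K_R}=(-1)^{b-1}x_{pq}\), the hard-core pair exchange on \(m=b+h\) sites with \(t=b\) pairs, up to the fermionic sign convention. The pair operators \(P^\dagger\) are even and commute, so their sign is trivial. Since \(1\le b\le m-1\) in the interior, \cref{lem:hardcore} shows that these operators generate \(\so(K_R)\). All their commutators remain supported on \(K_R\) and vanish on \(K_R^\perp\), which gives \(\so(K_R)\oplus0_{K_R^\perp}\subseteq\mathfrak g_{LX}\).
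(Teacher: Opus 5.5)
Your proposal is correct and follows essentially the same route as the paper. The paper also takes \(Z_{E_R\setminus\{p,q\}}X_{pq}\) from \cref{lem:px-conditional} and removes single occupations in \(E_R\) using the spectator factors and \(X_{pq}\). It then uses the occupation-seniority bound \(s\ge 2S=r\) to force the singly occupied set to be exactly \(R\), all with \(\alpha\) spin, with \(b\) pairs, and identifies the restriction with \((-1)^{b-1}x_{pq}\) so that \cref{lem:hardcore} applies. Your decomposition by spatial occupation pattern and your adjoint argument for the range make explicit what the paper leaves to skew-symmetry.
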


\begin{proof}
For \(p,q\in E_R\), set
\[
\widetilde X_{pq}^{(R)}=Z_{E_R\setminus\{p,q\}}X_{pq}.
\]
It belongs to the generated algebra by
Lemma~\ref{lem:px-conditional}.  If it is
nonzero on an occupation-pattern component, the spectator factors eliminate
that component if it has a singly occupied orbital in
\(E_R\setminus\{p,q\}\), while \(X_{pq}\) annihilates it if either \(p\) or
\(q\) is singly occupied.  Hence every singly occupied orbital of a surviving
component lies in \(R\), so its number \(s\) satisfies \(s\le |R|=r\).
On the other hand, the target sector has \(S=r/2\), and the
occupation-seniority spin bound just established gives \(s\ge2S=r\).  Thus
\(s=r\), the
singly occupied set is exactly \(R\), and in the chosen \(M_S=S\)
highest-weight component all of these orbitals carry \(\alpha\) spin.  The
particle-number identity \(N=2b+r\) then forces exactly \(b\) doubly occupied
orbitals in \(E_R\).  Every surviving occupation component therefore belongs
to \(K_R\), and the operator vanishes on \(K_R^\perp\).
On \(K_R\),
\(
\widetilde X_{pq}^{(R)}=(-1)^{b-1}x_{pq}
\), and \cref{lem:hardcore} gives the full supported \(\so(K_R)\).
\end{proof}

Let \(K_{\min}=\bigoplus_{|R|=r}K_R\).

\subsection{A three-dimensional initial block}
\label{app:three-dimensional-block}

The standard highest-weight vector is
\begin{equation}
\label{eq:highest-vector}
v_\lambda=P_1^\dagger\cdots P_b^\dagger
a_{b+1,\alpha}^\dagger\cdots a_{b+r,\alpha}^\dagger\ket0.
\end{equation}
It lies in \(K_{R_\lambda}\), where
\(R_\lambda=\{b+1,\ldots,b+r\}\), empty when \(r=0\).

\begin{lemma}[Three-dimensional initial block]
\label{lem:three-dimensional-block}
If \(b,h\ge1\), there is a three-dimensional subspace
\(Q_0\subseteq\Hsp\) containing \(v_\lambda\) such that
\[
\so(Q_0)\oplus0_{Q_0^\perp}\subseteq\mathfrak g_{LX}.
\]
\end{lemma}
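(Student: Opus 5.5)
The plan is to build \(Q_0\) exactly as in the main-text sketch of Sec.~\ref{sec:interior-control}. Since \(b,h\ge1\), the vector \(v_\lambda\) has orbital \(p=1\) doubly occupied and orbital \(q=b+r+1\) empty. Write \(v_\lambda=P_p^\dagger\ket\chi\), where \(\ket\chi\) is the normalized spectator state containing the other \(b-1\) pairs and the \(r\) \(\alpha\) electrons on \(R_\lambda\). From it define
\[
u=P_p^\dagger\ket\chi=v_\lambda,\qquad v=P_q^\dagger\ket\chi,\qquad w=\tfrac{1}{\sqrt2}\bigl(a_{p\alpha}^\dagger a_{q\beta}^\dagger-a_{p\beta}^\dagger a_{q\alpha}^\dagger\bigr)\ket\chi .
\]

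\textbf{Sector membership.} Each of the three states is a singlet pair on \(\{p,q\}\) tensored with \(\ket\chi\). Here \(\ket\chi\) has spin \(S\) and \(M_S=S\), since its singly occupied orbitals are the \(r\) aligned \(\alpha\) electrons. Hence \(S_+\) annihilates \(u,v,w\) and \(S_z=S\) on them, so all three lie in \(\Hsp\). They are orthonormal because their occupation patterns on \(p,q\) are distinct.

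\textbf{The supported rotation \(A\).} Both \(u\) and \(v\) lie in the same slice \(K_{R_\lambda}\). By Proposition~\ref{prop:slice-local}, the zero-extended skew matrix \(A=\ket u\bra v-\ket v\bra u\) belongs to \(\mathfrak g_{LX}\), and it vanishes on \(K_{R_\lambda}^\perp\ni w\).

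\textbf{Action of \(L_{pq}\).} Compute \(L_{pq}=E_{pq}-E_{qp}\) on the three states. On \(u\), \(E_{qp}\) moves one electron from \(p\) to \(q\); summing both spins gives \(\sqrt2\,w\) up to sign, while \(E_{pq}u=0\). So \(L_{pq}u=-\sqrt2\,w\) and similarly \(L_{pq}v=\sqrt2\,w\), with the signs fixed by the CAR as stated in the main text. Skew-symmetry then gives \(L_{pq}w\in\Span\{u,v\}\), namely \(L_{pq}w=\sqrt2(u-v)\).

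\textbf{Generating \(\so(Q_0)\).} The key structural point is that \(A\) is supported on \(Q_0=\Span\{u,v,w\}\), so any commutator with \(A\) on the right kills \(Q_0^\perp\). Likewise, \(AL_{pq}\) has range in \(\Span\{u,v\}\) and vanishes on \(Q_0^\perp\), because \(A\) kills \(Q_0^\perp\) while \(L_{pq}\) may map \(Q_0^\perp\) anywhere. The term \(L_{pq}A\) needs more care. Its range is \(L_{pq}\Span\{u,v\}\subseteq\Span\{w\}\), using \(L_{pq}u,L_{pq}v\in\R w\); this is exact, because the only components of \(L_{pq}u\) are the ones moved within \(\{p,q\}\). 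Therefore \([A,L_{pq}]\) is a skew operator supported on \(Q_0\). Explicitly, it is a combination of \(\ket u\bra w-\ket w\bra u\) and \(\ket v\bra w-\ket w\bra v\), proportional to the rotation between \(u+v\) and \(w\). Then \([A,[A,L_{pq}]]\) gives the rotation between \(u-v\) and \(w\), up to a nonzero factor. Together with \(A\), which is the rotation of the \(u\)--\(v\) plane, these three elements are linearly independent in the three-dimensional \(\so(Q_0)\), so they span it.

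The main obstacle is the support verification for \([A,L_{pq}]\): checking that \(L_{pq}\) maps \(u,v\) only into \(\R w\) with no leakage to other sector states, and tracking the fermionic signs so that the final coefficients are nonzero. Nonvanishing is all that is needed, so sign errors are harmless provided the \(2\times2\) coefficient determinant stays nonzero. The case \((M,N,S)=(2,2,0)\) is covered by the same computation with \(\ket\chi=\ket0\).
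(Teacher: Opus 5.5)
Your construction is the same as the paper's. You use the same $u,v,w$, the same supported rotation $A=\ket u\bra v-\ket v\bra u$ from Proposition~\ref{prop:slice-local}, the same values $L_{pq}u=-\sqrt2\,w$ and $L_{pq}v=\sqrt2\,w$, and the same generating set $A$, $[A,L_{pq}]$, $[A,[A,L_{pq}]]$. Your identification of the last two with the $(u+v)$--$w$ and $(u-v)$--$w$ rotations matches the paper's $-\sqrt2\,(F_{uw}\pm F_{vw})$.

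The support verification, which you rightly single out as the crux, is argued incorrectly as written. You have the two products backwards. It is $L_{pq}A$ (with $A$ on the right) that trivially kills $Q_0^\perp$, while $AL_{pq}$ trivially has range in $\Span\{u,v\}$. Your reason for $AL_{pq}$ vanishing on $Q_0^\perp$ (``$A$ kills $Q_0^\perp$ while $L_{pq}$ may map $Q_0^\perp$ anywhere'') does not work: if $L_{pq}z$ had a component along $u$ or $v$, then $AL_{pq}z\ne0$. Similarly, a commutator with $A$ on the right need not kill $Q_0^\perp$; only the product does.

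The missing ingredient is the skew-symmetry of $L_{pq}$, which is how the paper argues. For $z\in Q_0^\perp$, $\langle u,L_{pq}z\rangle=-\langle L_{pq}u,z\rangle=\sqrt2\langle w,z\rangle=0$, and likewise for $v$, so $AL_{pq}z=0$. Alternatively, you can use what you already showed: $\im[A,L_{pq}]\subseteq\Span\{u,v\}+\R w=Q_0$. Every skew $B$ has $\ker B=(\im B)^\perp$, so $[A,L_{pq}]$ vanishes on $Q_0^\perp$.

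Also, $L_{pq}w=\sqrt2(u-v)$ is true, but it does not follow from skew-symmetry. Skew-symmetry only fixes the components of $L_{pq}w$ along $u,v,w$. The full statement follows from the direct CAR computation $E_{pq}w=\sqrt2\,u$ and $E_{qp}w=\sqrt2\,v$, using that $\ket\chi$ has $p,q$ empty. With these one-line repairs the argument is complete.
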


\begin{proof}
Choose \(p\in\{1,\ldots,b\}\) and
\(q\in\{b+r+1,\ldots,M\}\). The normalized state
\(\ket\chi=P_pv_\lambda\) has both \(p\) and \(q\) empty. Define
\begin{align*}
u&=P_p^\dagger\ket\chi=v_\lambda,
&v&=P_q^\dagger\ket\chi,\\
w&=P_{pq}^{0\dagger}\ket\chi
=\frac{a_{p\alpha}^\dagger a_{q\beta}^\dagger
-a_{p\beta}^\dagger a_{q\alpha}^\dagger}{\sqrt2}\ket\chi.
\end{align*}
These states are orthonormal. The spectator state has spin
\(S=r/2\) and projection \(M_S=S\), and each added two-electron state is a
singlet. Hence all three states lie in the same complete target sector.
Set \(Q_0=\Span_{\R}\{u,v,w\}\). Although \(w\) lies outside
\(K_{R_\lambda}\), no restriction to that slice is required for \(Q_0\).

For orthonormal vectors \(x,y\), write
\(F_{xy}=\ket x\bra y-\ket y\bra x\), extended by zero on their orthogonal
complement in \(\Hsp\). Since \(u,v\in K_{R_\lambda}\),
\cref{prop:slice-local} gives \(A=F_{uv}\in\mathfrak g_{LX}\).
The canonical anticommutation relations yield
\[
L_{pq}u=-\sqrt2\,w,\qquad L_{pq}v=\sqrt2\,w.
\]
The commutator \(B=[A,L_{pq}]\) is strictly supported on \(Q_0\).
Indeed, for every \(z\in Q_0^\perp\), one has \(Az=0\) and
\[
\langle u,L_{pq}z\rangle=-\langle L_{pq}u,z\rangle=0,
\qquad
\langle v,L_{pq}z\rangle=0.
\]
Thus \(AL_{pq}z=0\), so \(Bz=0\); skew-symmetry also places the range of
\(B\) in \(Q_0\). Its action and one further commutator are
\begin{align}
B&=-\sqrt2\,(F_{uw}+F_{vw}),\notag\\
[A,B]&=-\sqrt2\,(F_{uw}-F_{vw}).
\label{eq:three-dimensional-rotations}
\end{align}
Together with \(A=F_{uv}\), these span the three independent plane
rotations on \(Q_0\), proving the supported \(\so(3)\) claim.
This construction also applies to \((M,N,S)=(2,2,0)\), where
\(Q_0=\Hsp\).
\end{proof}

\subsection{Highest-weight cyclicity}
\label{app:highest-weight-cyclicity}

By skew Schur--Weyl duality (equivalently, the exterior Cauchy decomposition)
\cite{FultonHarris1991},
\[
\Hsp_{\C}\cong\mathbb S_\lambda\C^M,
\qquad \lambda=(2^b,1^r),
\]
with standard highest-weight vector \(v_\lambda\) from
\eqref{eq:highest-vector}.

\begin{lemma}[Singlet-single cyclicity]
\label{lem:cyclicity}
The highest-weight vector is cyclic for the complexified orbital-rotation
algebra:
\[
U(\so(M,\C))v_\lambda=\mathbb S_\lambda\C^M.
\]
Consequently, the real singlet-single cyclic span of \(K_{\min}\) is all of
\(\Hsp\).
\end{lemma}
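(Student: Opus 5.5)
The statement has two parts. The first is that \(v_\lambda\) is cyclic for \(U(\so(M,\C))\) acting on \(\mathbb S_\lambda\C^M\). The second is the real consequence that the span of \(K_{\min}\) under the orbital algebra is all of \(\Hsp\).

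For the first part, I will pass through \(\mathfrak{gl}(M,\C)\). Under skew Schur--Weyl duality, \(\Hsp_{\C}\cong\mathbb S_\lambda\C^M\) is an irreducible \(\mathfrak{gl}(M,\C)\)-module with highest-weight vector \(v_\lambda\). So \(U(\mathfrak{gl}(M,\C))v_\lambda\) is the whole space. It then suffices to show that the orbital algebra already generates everything that \(\mathfrak{gl}\) does on this module. For that I plan to use the Lie-algebra generation in the usual way: \(\so(M,\C)\) together with the symmetric one-body operators generates \(\mathfrak{gl}(M,\C)\). However, the symmetric operators \(E_{pq}+E_{qp}\) are not available in \(\so\), so I need a different route.

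The route I will try is to show that \(\Hsp_{\C}\) restricted to \(\so(M,\C)\) is generated by \(v_\lambda\). In an \(\so(M,\C)\)-adapted basis (isotropic coordinates), \(v_\lambda\) need not be a weight vector. Instead, I will use a polarization argument on the group level. The real span \(W\) of \(SO(M)\cdot v_\lambda\) is an \(SO(M)\)-invariant subspace. Let \(U\) be the Zariski closure of \(SO(M)\) acting on the set of all \(g\cdot v_\lambda\), \(g\in GL(M)\). Because the highest-weight orbit is \(GL\)-irreducible and \(GL(M)=SO(M)\cdot B\) up to finitely many cells, I can reduce to a density argument. The precise claim is that \(SO(M)\cdot(\text{Borel orbit of }v_\lambda)\) spans the module, and the Borel orbit of \(v_\lambda\) is just the line \(\C v_\lambda\). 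Concretely, the Iwasawa-type decomposition \(GL(M,\R)=O(M)\cdot B_+\), with \(B_+\) upper triangular and fixing the line of \(v_\lambda\), gives \(\Span GL(M,\R)v_\lambda=\Span O(M)v_\lambda\).

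The remaining points then follow quickly. The left side is the whole real module, by irreducibility and the Zariski density of \(GL(M,\R)\) in \(GL(M,\C)\). Next, \(O(M)=SO(M)\cup SO(M)\cdot\sigma\) with \(\sigma\) a reflection diagonal in the orbital basis, and \(\sigma v_\lambda=\pm v_\lambda\) for a suitable choice of reflection (flip an empty orbital when \(h\ge1\), or otherwise an occupied one; the sign is harmless). Finally, the \(SO(M)\)-span equals \(U(\so(M))v_\lambda\), since \(SO(M)\) is connected and exponentials expand into polynomials of the generators. Complexifying gives the displayed equality.

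The \emph{main obstacle} I expect is verifying that \(B_+\) stabilizes the line \(\R v_\lambda\). This holds because upper-triangular matrices map \(e_1\wedge\cdots\) flags into themselves: \(v_\lambda\) is the product of pairs on the first \(b\) orbitals and \(\alpha\) electrons on the next \(r\), a Plücker-type vector determined by the flag \(\langle e_1,\dots,e_b\rangle\subset\langle e_1,\dots,e_{b+r}\rangle\). Any \(g\in B_+\) preserves this flag and so scales \(v_\lambda\) by a determinant factor. I will state this explicitly.

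For the real consequence, \(v_\lambda\in K_{R_\lambda}\subseteq K_{\min}\), and the real orbital representation is the real form of the complex one. So the real cyclic span of \(v_\lambda\) is a real subspace whose complexification is \(\Hsp_{\C}\), hence it equals \(\Hsp\). The span of all of \(K_{\min}\) contains this, which completes the argument.
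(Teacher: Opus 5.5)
Your argument is correct, but it takes a different, group-level route from the paper's.

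The paper works entirely inside \(U(\so(M,\C))\). It grades \(\mathbb S_\lambda\C^M\) by the height of \(\lambda-\mu\) and uses \(\mathbb S_\lambda\C^M=U(\mathfrak n_-)v_\lambda\). It then recovers each simple lowering operator on the cyclic span through \(f_iw=e_iw-\ell_iw\), with \(\ell_i=e_i-f_i\in\so(M)\); here \(e_iw\) has smaller height and is already in the span by induction.

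You instead use the Gram--Schmidt (QR) decomposition \(GL(M,\R)=O(M)B_+\). Since \(B_+\) scales \(v_\lambda\) by a product of leading principal minors, this gives \(\Span_\R GL(M,\R)v_\lambda=\Span_\R O(M)v_\lambda\). You then:
\begin{itemize}
\item identify the left side with \(\Hsp\) by irreducibility of the \(\mathfrak{gl}(M,\C)\)-module;
\item remove the non-identity component of \(O(M)\) with a diagonal reflection satisfying \(\sigma v_\lambda=\pm v_\lambda\);
\item convert the \(SO(M)\)-orbit span into \(U(\so(M,\R))v_\lambda\) using connectedness.
\end{itemize}

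The two proofs are the same idea at different levels. The paper's identity \(f_i=e_i-\ell_i\) is the infinitesimal Iwasawa splitting \(\mathfrak{gl}(M,\R)=\so(M)\oplus\mathfrak b_+\), which underlies your QR decomposition. The paper's version stays purely algebraic: it needs no group topology, Zariski density, or component bookkeeping, and it formalizes directly. Yours is shorter and gives a transparent geometric reason: every real flag is an orthogonal image of the standard flag. It also yields the real statement \(U(\so(M,\R))v_\lambda=\Hsp\) first, with the complex one following by complexification.

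Two editorial points.
\begin{itemize}
\item Delete the exploratory sentences in your second paragraph, namely the ``Zariski closure of \(SO(M)\) acting on the set of all \(g\cdot v_\lambda\)'' and ``up to finitely many cells.'' They are not meaningful as written and are unnecessary, since QR is an exact decomposition of \(GL(M,\R)\).
\item For the density step, it is simplest to argue directly: the complex span of \(GL(M,\R)v_\lambda\) is invariant under \(\mathfrak{gl}(M,\R)\) by differentiation, hence under \(\mathfrak{gl}(M,\C)\) by \(\C\)-linearity, hence is all of \(\Hsp_{\C}\) by irreducibility.
\end{itemize}
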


\begin{proof}
Let \(e_i=E_{i,i+1}\), \(f_i=E_{i+1,i}\), and
\(\ell_i=e_i-f_i\in\so(M,\C)\).  Grade the highest-weight module by the height
of \(\lambda-\mu\): if \(V_\mu\) is the weight space of weight \(\mu\), set
\[
\mathcal V^{\mathrm{ht}}_n=
\bigoplus_{\operatorname{ht}(\lambda-\mu)=n}V_\mu,
\qquad \mathcal V^{\mathrm{ht}}_0=\C v_\lambda.
\]
We use here the standard highest-weight structure of
\(\mathbb S_\lambda\C^M\) as an irreducible
\(\mathfrak{sl}(M,\C)\)-module.  The
Poincar\'e--Birkhoff--Witt theorem gives
\(\mathbb S_\lambda\C^M=U(\mathfrak n_-)v_\lambda\), where
\(\mathfrak n_-\) is the negative-root subalgebra.
The simple lowering operators \(f_i\) generate \(U(\mathfrak n_-)\) as an
associative algebra, and each increases weight height by one. Consequently,
for \(n\ge1\),
\begin{equation}
\label{eq:height-recursion}
\mathcal V^{\mathrm{ht}}_n=\sum_i f_i\mathcal V^{\mathrm{ht}}_{n-1}.
\end{equation}

Let \(\mathcal C=U(\so(M,\C))v_\lambda\).  Suppose inductively that
\(\mathcal V^{\mathrm{ht}}_m\subseteq\mathcal C\) for every \(m<n\).  For
\(w\in\mathcal V^{\mathrm{ht}}_{n-1}\), the raising operator satisfies
\(e_iw\in\mathcal V^{\mathrm{ht}}_{n-2}\) or \(e_iw=0\), while
\(\ell_iw\in\mathcal C\) because \(w\in\mathcal C\) and
\(\ell_i\in\so(M,\C)\).  The identity
\[
f_iw=e_iw-\ell_iw
\]
therefore puts \(f_iw\) in \(\mathcal C\).  Equation
\eqref{eq:height-recursion} completes the induction and proves complex
cyclicity.

No irreducibility of the restricted \(\so(M,\C)\) representation is assumed.

Finally, set
\[
\mathcal C_{\R}=U(\so(M,\R))v_\lambda\subseteq\Hsp.
\]
The vector \(v_\lambda\) and all represented orbital generators are real, so
\[
\mathcal C_{\R}\otimes_{\R}\C
=U(\so(M,\C))v_\lambda
=\Hsp_{\C}.
\]
Consequently \(\dim_{\R}\mathcal C_{\R}=\dim_{\R}\Hsp\), whence
\(\mathcal C_{\R}=\Hsp\).  Since \(v_\lambda\in K_{\min}\), the real
singlet-single cyclic span of \(K_{\min}\) is likewise all of \(\Hsp\).
\end{proof}

\subsection{One-sided infection}
\label{app:infection-proof}

We first record the matrix-algebra fact needed in the expansion step.

\begin{lemma}[Associative envelope]
\label{lem:assoc-envelope}
For \(n\ge3\), the associative algebra generated by the standard
representation of \(\so(n)\) is \(\End(\R^n)\), and the identity is a linear
combination of positive-length words in \(\so(n)\).
\end{lemma}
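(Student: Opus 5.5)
The plan is to exhibit every elementary matrix \(e_{ik}\) of \(\End(\R^n)\) explicitly as a positive-length word in the standard generators. Write \(F_{ij}=e_{ij}-e_{ji}\in\so(n)\) for the plane rotations; they span \(\so(n)\). Then the associative algebra generated by \(\so(n)\) equals the algebra generated by the \(F_{ij}\). Since the \(e_{ik}\) span \(\End(\R^n)\), producing each of them proves the first claim.

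\emph{Off-diagonal units.} For pairwise distinct \(i,j,k\), expand
\[
F_{ij}F_{jk}=(e_{ij}-e_{ji})(e_{jk}-e_{kj})
=e_{ij}e_{jk}-e_{ij}e_{kj}-e_{ji}e_{jk}+e_{ji}e_{kj}.
\]
Here \(e_{ij}e_{kj}=0\) because \(j\ne k\), \(e_{ji}e_{jk}=0\) because \(i\ne j\), and \(e_{ji}e_{kj}=0\) because \(i\ne k\). Hence
\[
F_{ij}F_{jk}=e_{ik}.
\]
Given any \(i\ne k\), the hypothesis \(n\ge3\) is exactly what supplies a third index \(j\notin\{i,k\}\). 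Therefore every off-diagonal unit \(e_{ik}\) is a word of length two.

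\emph{Diagonal units.} For \(i\ne k\), \(e_{ii}=e_{ik}e_{ki}\), which is a word of length four. Alternatively one can use \(-F_{ik}^2=e_{ii}+e_{kk}\) together with the off-diagonal products. In either case all \(e_{ii}\) lie in the generated algebra as linear combinations of positive-length words.

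Since \(\{e_{ik}\}\) spans \(\End(\R^n)\), the envelope is all of \(\End(\R^n)\). Moreover, \(I=\sum_i e_{ii}=\sum_i F_{ij_i}F_{j_ik_i}F_{k_ij_i}F_{j_ii}\), with \(k_i\ne i\) and \(j_i\notin\{i,k_i\}\), is an explicit linear combination of words of length four. No empty word is needed.

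There is no real obstacle here. The only point to flag is that \(n\ge3\) is essential: for \(n=2\), \(\so(2)\) is spanned by the single element \(F_{12}\). Its envelope is then the commutative algebra \(\R[F_{12}]\cong\C\), not \(\End(\R^2)\). This is why the expansion step in the main text requires \(\dim Q\ge3\).
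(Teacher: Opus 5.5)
Your proof is correct and takes essentially the paper's route: the identity \(F_{ij}F_{jk}=e_{ik}\) for pairwise distinct \(i,j,k\), which is exactly where \(n\ge3\) enters, gives all off-diagonal units, and the diagonal units and the identity follow from them. The only cosmetic difference is that the paper uses length-two words, \(e_{ii}=-\tfrac12(F_{ij}^2+F_{ik}^2-F_{jk}^2)\) and \(\sum_{i<j}F_{ij}^2=-(n-1)I_n\), whereas you use the length-four word \(e_{ii}=e_{ik}e_{ki}\); both are valid.
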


\begin{proof}
Let \(e_{ij}\) be the standard matrix units in \(\End(\R^n)\), and set
\(F_{ij}=e_{ij}-e_{ji}\). Distinct \(i,j,k\) give
\[
F_{ij}F_{jk}=e_{ik},
\qquad
e_{ii}=-\tfrac12(F_{ij}^2+F_{ik}^2-F_{jk}^2).
\]
Thus all matrix units occur.  Moreover,
\(\sum_{i<j}F_{ij}^2=-(n-1)I_n\).
\end{proof}

\begin{lemma}[One-sided infection]
\label{lem:infection}
Let \(\mathcal H\) be a finite-dimensional real inner-product space and
\(Q\subseteq\mathcal H\) a subspace with \(\dim Q\ge3\). Suppose
\(\mathfrak g\subseteq\so(\mathcal H)\) is a real Lie subalgebra containing
\(\so(Q)\oplus0\). If some \(H\in\mathfrak g\), written in
\(\mathcal H=Q\oplus Q^\perp\), has a nonzero
off-diagonal block
\[
H=\begin{pmatrix}H_{11}&-C^{\mathsf T}\\C&H_{22}\end{pmatrix},
\qquad Y=\im C,
\]
then \(\mathfrak g\) contains the supported algebra \(\so(Q\oplus Y)\).
\end{lemma}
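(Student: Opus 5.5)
We will work with block matrices with respect to \(\mathcal H=Q\oplus Q^\perp\). For \(X\in\so(Q)\), write \(\widehat X=X\oplus0\in\mathfrak g\). A direct computation gives
\[
[\widehat X,H]=\begin{pmatrix}[X,H_{11}]&-XC^{\mathsf T}\\ -CX&0\end{pmatrix}
\]
(up to the sign convention for the lower block), since \(XC^{\mathsf T}\) and \(C^{\mathsf T}\) pair correctly through skew-symmetry. The plan is to use iterated commutators with \(\so(Q)\oplus0\) to strip off the diagonal blocks, isolate pure off-diagonal elements \(\begin{pmatrix}0&-T^{\mathsf T}\\T&0\end{pmatrix}\) with \(T:Q\to Q^\perp\), and then commute two such elements to obtain rotations on \(Y\) and on \(Q\oplus Y\).

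\textbf{Step 1 (isolating off-diagonal blocks).} Consider the subspace \(\mathcal T=\{T\in\Hom(Q,Q^\perp):\ \text{some element of }\mathfrak g\text{ has off-diagonal block }T\}\). This subspace contains \(C\), and by the displayed commutator it is closed under \(T\mapsto TX\) for all \(X\in\so(Q)\). By \cref{lem:assoc-envelope} (which needs \(\dim Q\ge3\)), the identity is a combination of positive-length words in \(\so(Q)\), and every matrix unit occurs as such a word. Hence \(\mathcal T\) contains \(C\,e_{ij}\) for all matrix units of \(Q\), that is, all maps \(y\langle x,\cdot\rangle\) with \(y\in Y\) and \(x\in Q\). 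In particular it contains every \(T\) with \(\im T\subseteq Y\). To obtain pure off-diagonal elements, I would use a double commutator \([\widehat X,[\widehat X',H]]\). This produces diagonal block \([X,[X',H_{11}]]\) on \(Q\), off-diagonal block \(CX'X\) (up to sign), and contributes nothing on \(Q^\perp\). The \(Q\)-diagonal part lies in \(\so(Q)\oplus0\subseteq\mathfrak g\), so it can be subtracted. The \(Q^\perp\)-diagonal block vanishes after the first commutator. Taking linear combinations of words, we therefore get every pure off-diagonal element \(O_T\) with \(T=CW\) for \(W\) in the span of words of length \(\ge1\), which is all of \(\End(Q)\). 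So \(O_T\in\mathfrak g\) for every \(T:Q\to Y\).

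\textbf{Step 2 (closing on \(Q\oplus Y\)).} Choose orthonormal bases \(\{q_i\}\) of \(Q\) and \(\{y_a\}\) of \(Y\). Taking \(T=y_a\langle q_i,\cdot\rangle\), the element \(O_T\) is exactly the plane rotation \(F_{y_a q_i}\), up to sign, supported on \(Q\oplus Y\). Then \([F_{y_aq_i},F_{q_iy_b}]=F_{y_ay_b}\) for \(a\ne b\). Together with \(\so(Q)\) and the mixed rotations \(F_{y_aq_i}\), this gives all elementary skew matrices on the orthonormal basis of \(Q\oplus Y\), so it spans \(\so(Q\oplus Y)\oplus0\). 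All of these operators are strictly supported on \(Q\oplus Y\), because each \(O_T\) vanishes on \((Q\oplus Y)^\perp\).

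\textbf{Main obstacle.} The delicate point is Step 1: making sure that commutators with \(\so(Q)\oplus0\) generate all right-multiples \(CW\) for \(W\in\End(Q)\), not only \(CX\) for skew \(X\), while leaving no residual \(Q^\perp\)-diagonal block. A single commutator already kills \(H_{22}\), and iterating introduces only \(Q\)-block terms, which are removable. The associative-envelope lemma is exactly what upgrades the skew multiples to all of \(\End(Q)\). This is where \(\dim Q\ge3\) is essential: for \(\dim Q=2\), \(\so(Q)\) is abelian and its envelope is only \(\C\) viewed inside \(\End(\R^2)\). I would carefully track the sign of the lower block in the commutator formula, but the span argument is insensitive to these signs.
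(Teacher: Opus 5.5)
Your proposal is correct and follows the paper's proof essentially step for step. You strip the $Q$-diagonal part of commutators with $\so(Q)\oplus0$ to obtain pure cross blocks, use the associative-envelope lemma (including the positive-length representation of the identity) to get all blocks $CB$ with $B\in\End(Q)$ and hence all of $\Hom(Q,Y)$, and then commute cross blocks to produce $\so(Y)$. The differences are cosmetic: you reach $\Hom(Q,Y)$ through the products $Ce_{ij}$ rather than through a right inverse of $C$, and you assemble $\so(Y)$ from $[F_{y_aq_i},F_{q_iy_b}]=F_{y_ay_b}$ rather than from the paper's block formula for $[\mathcal C_S,\mathcal C_T]$ with $S=y_1q^{\mathsf T}$ and $T=y_2q^{\mathsf T}$.
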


\begin{proof}
For \(K\in\so(Q)\), write
\[
\widehat K=\begin{pmatrix}K&0\\0&0\end{pmatrix},
\qquad
\widehat H_{11}=\begin{pmatrix}H_{11}&0\\0&0\end{pmatrix},
\qquad
\mathcal C_T=\begin{pmatrix}0&-T^{\mathsf T}\\T&0\end{pmatrix}.
\]
Because \(H_{11}\in\so(Q)\), both \(\widehat K\) and
\(\widehat H_{11}\) belong to \(\mathfrak g\).  Direct block multiplication
gives
\[
[\widehat K,H]-[\widehat K,\widehat H_{11}]
=\mathcal C_{-CK}.
\]
Repeated commutation with supported elements of \(\so(Q)\) therefore gives
\(\mathcal C_{CK_1\cdots K_m}\in\mathfrak g\).  By
\cref{lem:assoc-envelope}, including its positive-length representation of
the identity, linear combinations of these words give
\[
\mathcal C_{CB}\in\mathfrak g
\qquad\text{for every }B\in\End(Q).
\]

The map \(C:Q\to Y\) is surjective.  Choose a linear right inverse
\(R:Y\to Q\), so that \(CR=I_Y\).  For any
\(T\in\Hom(Q,Y)\), the endomorphism \(B=RT\) obeys
\(CB=CRT=T\).  Hence every cross block \(\mathcal C_T\),
\(T\in\Hom(Q,Y)\), belongs to \(\mathfrak g\).

It remains to obtain a diagonal algebra supported only on \(Y\).  For
\(S,T\in\Hom(Q,Y)\),
\[
[\mathcal C_S,\mathcal C_T]=
\begin{pmatrix}
T^{\mathsf T}S-S^{\mathsf T}T&0\\
0&TS^{\mathsf T}-ST^{\mathsf T}
\end{pmatrix}.
\]
Choose a unit vector \(q\in Q\) and arbitrary \(y_1,y_2\in Y\), and set
\(S=y_1q^{\mathsf T}\), \(T=y_2q^{\mathsf T}\).  Then the \(Q\)-block
vanishes, while the \(Y\)-block is
\[
y_2y_1^{\mathsf T}-y_1y_2^{\mathsf T}.
\]
Such rank-two skew matrices span \(\so(Y)\).  Thus \(\mathfrak g\) contains
the supported \(\so(Q)\), all \(Q\)--\(Y\) cross blocks, and the supported
\(\so(Y)\), which together are exactly \(\so(Q\oplus Y)\).  The assumption
\(\dim Q\ge3\) is used only in \cref{lem:assoc-envelope}.
\end{proof}

\subsection{Completion of the interior theorem}
\label{app:interior-completion}

\begin{theorem}[Interior full-pool completion]
\label{thm:interior-full-pool}
If \(b,h\ge1\), then
\[
\Lie\bigl(\rho_{M,N,S}(\so(M)),
\{X_{rs}:1\le r<s\le M\}\bigr)=\so(d_{M,N,S}).
\]
\end{theorem}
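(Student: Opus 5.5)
The plan is a maximality argument in the highest-weight component \(M_S=S\), which suffices by \cref{lem:magnetic-equivalence}. Write \(\mathfrak g_{LX}\) for the algebra in the statement. Call a subspace \(Q\subseteq\Hsp\) \emph{controlled} if \(\so(Q)\oplus0_{Q^\perp}\subseteq\mathfrak g_{LX}\). \Cref{lem:three-dimensional-block} supplies a controlled subspace \(Q_0\) with \(\dim Q_0=3\) and \(v_\lambda\in Q_0\). This uses \(b,h\ge1\), together with \cref{prop:slice-local} and the hard-core-pair result \cref{lem:hardcore}. Since \(\Hsp\) is finite-dimensional, we can choose a controlled subspace \(Q\supseteq Q_0\) of maximal dimension. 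Then \(\dim Q\ge3\) and \(v_\lambda\in Q\). The goal is to show \(Q=\Hsp\); this would give \(\so(\Hsp)\subseteq\mathfrak g_{LX}\), and the reverse inclusion holds trivially.

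Suppose instead that \(Q\ne\Hsp\). First I show that some orbital generator \(\rho(\ell)\), with \(\ell\in\so(M)\), has a nonzero off-diagonal block \(C:Q\to Q^\perp\). If every such block vanished, \(Q\) would be invariant under all of \(\rho(\so(M))\), and hence under the associative algebra \(U(\so(M,\R))\) that these generators span. As \(v_\lambda\in Q\), \cref{lem:cyclicity} would then force \(Q\supseteq U(\so(M,\R))v_\lambda=\Hsp\), a contradiction. So some \(\rho(\ell)\in\mathfrak g_{LX}\) has \(C\ne0\). Writing \(Y=\im C\), which is a nonzero subspace of \(Q^\perp\), \cref{lem:infection} gives \(\so(Q\oplus Y)\oplus0\subseteq\mathfrak g_{LX}\). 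Thus \(Q\oplus Y\) is a strictly larger controlled subspace, contradicting maximality. Hence \(Q=\Hsp\).

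One point needs a check here. \Cref{lem:infection} requires \(H\in\mathfrak g\) with \(H\) skew-symmetric. The restricted orbital generators are real skew-symmetric on \(\Hsp\), so the block form used in that lemma applies directly. The diagonal block \(H_{11}\) is itself in \(\so(Q)\), and the lemma absorbs it. No other obstacle appears.

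This assembly step is straightforward. The substantive work is in the inputs, above all the strict localization in \cref{prop:slice-local}. That result is what makes \(F_{uv}\) genuinely zero outside \(\Span\{u,v\}\), rather than merely correct after compression to the slice, and both the three-dimensional block and the infection lemma rely on this strict support. The single-coordinate version in \cref{thm:main}(i) then follows at once from \cref{lem:px-orbit}: one fixed \(X_{pq}\) together with the orbital algebra generates every \(X_{rs}\), so its Lie closure coincides with \(\mathfrak g_{LX}\).
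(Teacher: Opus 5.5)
Your proposal is correct and follows the paper's proof essentially step for step. The paper also takes a maximal subspace containing the three-dimensional block \(Q_0\) on which all supported rotations lie in \(\mathfrak g_{LX}\). It then uses cyclicity of \(v_\lambda\) under the orbital algebra to find an orbital generator with a nonzero block \(Q\to Q^\perp\), and applies one-sided infection to contradict maximality.
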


\begin{proof}[Proof of \cref{thm:interior-full-pool}]
By \cref{lem:three-dimensional-block}, choose \(Q_0\) with
\(v_\lambda\in Q_0\), \(\dim Q_0=3\), and supported
\(\so(Q_0)\subseteq\mathfrak g_{LX}\).

Among subspaces \(Q\supseteq Q_0\) with supported
\(\so(Q)\subseteq\mathfrak g_{LX}\), choose one of maximal dimension;
such a choice exists because \(\Hsp\) is finite dimensional. If
\(Q\ne\Hsp\), \cref{lem:cyclicity} implies that some orbital generator
has a nonzero block \(C:Q\to Q^\perp\), since otherwise \(Q\) would
contain the entire cyclic span of \(v_\lambda\). Then
\cref{lem:infection} gives supported \(\so(Q\oplus\im C)\), contradicting
maximality because \(\im C\ne0\). Hence \(Q=\Hsp\) and
\(\mathfrak g_{LX}=\so(\Hsp)\).
\end{proof}

\begin{corollary}[Interior single-coordinate universality]
\label{thm:interior-single-seed}
Let \(G\) be a connected graph on \([M]\), and let \(b,h\ge1\). Then, for every
\(p\ne q\),
\[
\Lie\bigl(\Lpool(G),X_{pq}\bigr)=\so(d_{M,N,S}).
\]
\end{corollary}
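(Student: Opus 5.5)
The corollary should follow from \cref{thm:interior-full-pool} by two reductions: first replace the graph span \(\Lpool(G)\) by the full orbital algebra, then replace the single coordinate \(X_{pq}\) by the full pair-transfer pool. Both reductions only enlarge the input by elements already lying in the generated algebra, so nothing about the interior structure needs to be reproved.

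First, the backbone. For a connected graph \(G\), the commutator identity \([L_{ab},L_{bc}]=\pm L_{ac}\) for distinct \(a,b,c\) gives, by induction on path length, every \(L_{rs}\) with \(r\ne s\). These identities hold on the full Fock space, and restriction to \(\Hsp\) is a Lie homomorphism because every \(L_{rs}\) preserves the sector. Hence \(\Lie(\Lpool(G))\) contains \(\rho_{M,N,S}(\so(M))\), which is \cref{eq:connected-backbone}, and so
\[
\Lie\bigl(\Lpool(G),X_{pq}\bigr)=\Lie\bigl(\rho_{M,N,S}(\so(M)),X_{pq}\bigr).
\]

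Second, the pair-transfer pool. By \cref{lem:px-orbit}, the right-hand side is invariant under \(SO(M)\) conjugation. This invariance comes from \(\ad\)-invariance and exponentiation, exactly as in \cref{lem:seed-orbit}. Signed permutations of determinant \(+1\) carry \(X_{pq}\) to \(\pm X_{rs}\) for every unordered pair \(\{r,s\}\). Therefore
\[
\Lie\bigl(\rho(\so(M)),X_{pq}\bigr)\supseteq\Lie\bigl(\rho(\so(M)),\{X_{rs}:r<s\}\bigr)=\mathfrak g_{LX}.
\]
Since \(b,h\ge1\), \cref{thm:interior-full-pool} gives \(\mathfrak g_{LX}=\so(d_{M,N,S})\). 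The reverse inclusion is immediate, because every admissible restricted generator is real skew-symmetric on \(\Hsp\).

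The only step needing care is the conjugation transport of the second reduction. It must be carried out inside \(SO(M)\) rather than \(O(M)\), which is fixed by flipping one coordinate sign, and the transported operator must land on the restricted \(X_{rs}\) up to sign. Flipping a single orbital sign multiplies \(P_i^\dagger\) by \((-1)^2=1\), so no sign issue arises beyond orientation, which only changes the sign of the adjustable angle. For \(M=2\) there is a single pair and no transport is needed.

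This completes the plan, and it also yields the interior part of \cref{thm:main}.
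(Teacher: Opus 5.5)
Your proposal is correct and is essentially the paper's own proof: connected-graph commutators recover the full represented orbital algebra \(\rho_{M,N,S}(\so(M))\), \cref{lem:px-orbit} transports the single \(X_{pq}\) to every \(X_{rs}\) via determinant-one signed permutations, and \cref{thm:interior-full-pool} then gives \(\so(d_{M,N,S})\). Your added remarks make explicit two points the paper's three-line proof leaves implicit: restriction is a Lie homomorphism for sector-preserving generators, and the signed-permutation transport involves only orientation-sign bookkeeping.
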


\begin{proof}
Connected orbital rotations generate the full represented orbital algebra.
Lemma~\ref{lem:px-orbit} in Appendix~\ref{app:pair-transport} then puts
every \(X\) coordinate in the algebra generated by any one fixed
\(X_{pq}\). Theorem~\ref{thm:interior-full-pool} completes the proof.
\end{proof}

%% file: appendices/appendix_d_support_factorization.tex
\section{Minimum-support and exact-factorization proofs}
\label{app:support-factorization}

\subsection{Interior dimension gap}
\begin{lemma}[Interior dimension gap]
\label{lem:interior-dimension-gap}
If \(b,h\ge1\), then \(d_{M,N,S}>M\).
\end{lemma}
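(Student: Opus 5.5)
The plan is to show \(d>M\) by exhibiting more than \(M\) linearly independent vectors in the highest-weight space, rather than manipulating binomial coefficients directly. We work with \(M_S=S\) and write \(\lambda=(2^b,1^r,0^h)\) with \(b,h\ge1\). By \cref{lem:cyclicity}, \(\Hsp_{\C}\cong\mathbb S_\lambda\C^M\), so \(d=\dim\mathbb S_\lambda\C^M\). The simplest route is a weight-space count in this irreducible \(\mathfrak{gl}(M)\)-module. Every permutation \(\sigma\lambda\) of \(\lambda\) is an extremal weight, so it occurs with multiplicity one. Because \(b,h\ge1\), the entries of \(\lambda\) take at least the two values \(2\) and \(0\), so the number of distinct permutations is
\[
\frac{M!}{b!\,r!\,h!}\ \ge\ \binom{M}{b}\binom{M-b}{h}.
\]

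The counting step then splits into cases. If \(r\ge1\), the multinomial coefficient \(M!/(b!\,r!\,h!)\) with three positive parts is at least \(M(M-1)\ge M+1\) when \(M\ge3\). The smallest such case, \(b=r=h=1\), already gives \(3!=6>3\). If \(r=0\), then \(M=b+h\) with \(b,h\ge1\). The distinct permutations number \(\binom{M}{b}\ge M\), with equality exactly when \(b\) or \(h\) equals \(1\). In that case we need one further weight to get a strict inequality.

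Where the extremal count gives only \(M\), we add a non-extremal weight. Take \(\mu=(2^{b-1},1,1,0^{h-1})\), which is dominated by \(\lambda\) and is therefore a weight of \(\mathbb S_\lambda\). Concretely, the vector \(w\) of \cref{lem:three-dimensional-block} has this weight and is nonzero, since it arises as \(L_{pq}u=-\sqrt2\,w\ne0\). Because \(\mu\) is not a permutation of \(\lambda\), \(w\) is linearly independent of the extremal weight vectors. This gives \(d\ge\binom{M}{b}+1\ge M+1\).

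The minor obstacle is making sure that the dominance and weight-existence facts are available without extra machinery. As a fallback that avoids representation theory, we can use the explicit dimension formula \cref{eq:dimension} with \(N_\alpha=b+r\) and \(N_\beta=b\). Alternatively, we can directly list \(M+1\) orthonormal CSFs: the seniority-zero-like states \(u\) obtained by moving pairs among the \(b+h\) non-\(R\) orbitals, plus the split-singlet state \(w\). For \(r\ge1\), we can also move the singly occupied set \(R\) itself. This direct listing is only about \(M\) bookkeeping, so it is the version we would write out.
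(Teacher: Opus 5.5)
Your argument is correct and is essentially the paper's proof: the extremal-weight vectors in the \(M_S=S\) component are exactly the states \(\ket{B;R}\) spanning \(K_{\min}\). Your multinomial count is therefore \(\dim K_{\min}=\binom Mb\binom{M-b}{r}\ge\binom Mb\ge M\), and adding the split-singlet vector \(w\), which is orthogonal because it has \(r+2\) singly occupied orbitals, gives \(d\ge M+1\) for every sector at once — which makes your case split on \(r\) and the \(M(M-1)\) bound unnecessary, and your final ``direct listing'' version is precisely this argument.
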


\begin{proof}
Recall
\[
K_{\min}=\bigoplus_{|R|=r}K_R.
\]
Since \(b,h\ge1\), we have \(1\le b\le M-1\), so
\[
\begin{aligned}
\dim K_{\min}&=\binom Mr\binom{b+h}{b}\\
&=\binom Mb\binom{M-b}{r}\ge\binom Mb\ge M.
\end{aligned}
\]
The nonzero singlet-split vector \(w\) constructed in the proof of
\cref{lem:three-dimensional-block} lies in \(\Hsp\) and has
\(r+2\) singly occupied orbitals, whereas every occupation-basis vector
in \(K_{\min}\) has exactly \(r\).  Thus \(w\perp K_{\min}\), and
\[
d_{M,N,S}\ge\dim K_{\min}+1\ge M+1>M.
\]
\end{proof}

\subsection{Two-orbital restriction on maximal-spin boundaries}
\label{app:two-orbital-boundary-proof}

\begin{lemma}[Two-orbital boundary restriction]
\label{lem:two-orbital-boundary}
Let \(K\) be a number-conserving operator genuinely supported on two spatial
orbitals \(p,q\).  Suppose its restriction preserves an electron or hole
maximal-spin boundary and is real skew there.  Then on that boundary
\[
K=\kappa L_{pq}
\]
for some \(\kappa\in\R\), with \(L_{pq}\) understood in the particle or hole
representation, respectively.
\end{lemma}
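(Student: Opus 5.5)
The plan is to reduce the boundary space to spinless modes and then use a local block decomposition on the two orbitals. First I fix the highest-weight component by \cref{lem:magnetic-equivalence}. On the electron boundary the space is \(\bigwedge^k\R^M\), carried by the \(\alpha\) modes, with all \(\beta\) modes empty. On the hole boundary the \(\alpha\) sea is filled, and the \(\beta\) modes carry \(k\) holes. In both cases the boundary Fock factor splits as a tensor product \(\mathcal F_{pq}\otimes\mathcal F_{\mathrm{rest}}\). Here \(\mathcal F_{pq}=\Span\{\ket{00},\ket{10},\ket{01},\ket{11}\}\) is the local space of the two surviving spinless modes on \(p,q\).

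Second, I use the support hypothesis. Since \(K\) is a polynomial in the fermionic operators of \(p\alpha,p\beta,q\alpha,q\beta\) only, its compression to the boundary is also a polynomial in those operators. For the hole case this includes \(\Pi_\alpha K\Pi_\alpha\), where the frozen \(\alpha\) modes on \(p,q\) contribute only scalars and Jordan--Wigner signs that are even. Hence the compression acts as \(k_{pq}\otimes I_{\mathrm{rest}}\), up to the fermionic parity sign on spectators. Because \(K\) is number conserving, it must have even parity on the two local modes, so that sign is trivial. The paper's phrase ``true two-orbital support requires the same coefficient for every spectator configuration'' is exactly this step.

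Third, I analyse \(k_{pq}\). Number conservation splits \(\mathcal F_{pq}\) into blocks of local occupation \(0\), \(1\), and \(2\), of dimensions \(1,2,1\). The operator \(k_{pq}\) preserves each block because the spectator number is unchanged and the total number is fixed. By hypothesis, the restriction of \(K\) to the boundary is real skew-symmetric. To transfer this to \(k_{pq}\), I note that the boundary contains at least one spectator configuration compatible with each local block that appears. Therefore \(k_{pq}\) is real skew on every block that actually occurs.

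A real skew \(1\times1\) block is zero. On the two-dimensional block \(\Span\{\ket{10},\ket{01}\}\), the only real skew direction is the plane rotation, which coincides with the action of \(L_{pq}\), or \(L^{\mathrm h}_{pq}\) in the hole picture. So \(k_{pq}=\kappa\,\ell\) on the occurring blocks. Combined with \(L_{pq}\) vanishing on \(\ket{00},\ket{11}\), established in \cref{eq:app-local-number}, this gives \(K|_{\Hsp}=\kappa L_{pq}\).

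The main obstacle is the second step in the hole case. There the reduced spinless modes are \(\beta\) holes, and one must check that compressing by \(\Pi_\alpha\) does not generate dependence on spectator occupations. My approach is to write \(K\) in normal order on the four local modes and apply \(\Pi_\alpha(\cdot)\Pi_\alpha\) term by term, as in \cref{eq:app-filled-alpha-contraction}. Every \(a_{p\alpha},a_{q\alpha}\) then appears either in a number-operator combination that evaluates to one or gets killed. What remains is a polynomial in \(a_{p\beta},a_{q\beta}\) alone, and the particle--hole map \(d_p=a_{p\beta}^\dagger\) keeps it local to \(p,q\).

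A secondary check is the Jordan--Wigner sign bookkeeping, which I settle by the parity argument above rather than by explicit signs. The result also needs \(L_{pq}\ne0\) to be unambiguous, but the conclusion holds trivially with \(\kappa=0\) otherwise.
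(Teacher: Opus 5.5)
Your proposal is correct and follows essentially the same route as the paper. You reduce to two spinless local modes, split the local space by occupation into blocks of dimensions \(1,2,1\), kill the one-dimensional blocks and leave only the \(L_{pq}\) direction on the two-dimensional block by real skewness, and use genuine two-orbital support to make the coefficient the same for every spectator configuration; your explicit \(\Pi_\alpha\) compression and even-parity sign argument only spell out what the paper summarizes as ``tensoring with any spectator configuration'' and ``particle--hole conjugation.''
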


\begin{proof}[Proof of \cref{lem:two-orbital-boundary}]
After maximal-spin projection the two orbitals are two spinless modes.
Number conservation decomposes their local boundary space into
particle-number blocks of dimensions \(1,2,1\).  A real skew matrix vanishes
on each one-dimensional block, while the real skew matrices on the ordered
one-particle block \((\ket p,\ket q)\) form the one-dimensional space
\[
\R\begin{pmatrix}0&1\\-1&0\end{pmatrix}
=\R L_{pq}.
\]
Genuine two-orbital support excludes spectator-dependent coefficients hidden
in global projectors or occupation functions.  Tensoring with any spectator
configuration therefore gives the same constant multiple of this local
direction.  Particle--hole conjugation gives the hole-boundary statement.
\end{proof}

\subsection{Spin symmetry of the conditioned rotation}
\begin{proposition}[Symmetries of the conditioned single]
\label{prop:C-symmetry}
The operator in text Eq.~\ref{eq:conditioned-single} obeys
\begin{align*}
[C_{c;pq},\widehat N]&=0,\\
[C_{c;pq},\widehat S_\mu]&=0\quad(\mu=x,y,z),\\
C_{c;pq}^\dagger&=-C_{c;pq}.
\end{align*}
\end{proposition}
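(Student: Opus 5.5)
The plan is to reduce all three claims to properties of the two factors \(E_{cc}-1\) and \(L_{pq}\), together with the fact that they commute. Because \(c\notin\{p,q\}\), the operator \(E_{cc}=n_c\) involves only modes on orbital \(c\), and \(L_{pq}\) involves only modes on \(p,q\). Each is a sum of bilinears, hence even. So \([n_c,L_{pq}]=0\) follows from the canonical anticommutation relations, since even monomials on disjoint modes commute. This commutation is the only place where distinctness of \(c,p,q\) enters, and it is what makes the product well behaved under adjoints.

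First, particle number. Both \(n_c\) and \(L_{pq}=E_{pq}-E_{qp}\) are built from \(a^\dagger a\) bilinears, so each commutes with \(\widehat N\). The Leibniz rule then gives \([C_{c;pq},\widehat N]=[n_c,\widehat N]L_{pq}+(n_c-1)[L_{pq},\widehat N]=0\).

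Second, spin. I will use the standard fact that every spin-summed \(E_{rs}\) commutes with \(\widehat S_z\) and \(\widehat S_\pm\). I will check this once from \(\widehat S_-=\sum_t a_{t\beta}^\dagger a_{t\alpha}\). The direct computation gives \([E_{rs},\widehat S_-]=a_{r\beta}^\dagger a_{s\alpha}-a_{r\beta}^\dagger a_{s\alpha}=0\), and the analogous computations handle \(\widehat S_+\) and \(\widehat S_z\). Since \(\widehat S_x,\widehat S_y\) are real-linear (resp. complex-linear) combinations of \(\widehat S_\pm\), this covers \(\mu=x,y,z\). Both \(n_c=E_{cc}\) and \(L_{pq}\) are linear in the \(E_{rs}\), so they commute with every \(\widehat S_\mu\). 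The Leibniz rule again gives \([C_{c;pq},\widehat S_\mu]=0\).

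Third, skew-adjointness. From \(E_{pq}^\dagger=E_{qp}\) we get \(L_{pq}^\dagger=-L_{pq}\), and \(n_c-1\) is Hermitian. Using the commutation established at the start,
\[
C_{c;pq}^\dagger=L_{pq}^\dagger(n_c-1)=-L_{pq}(n_c-1)=-(n_c-1)L_{pq}=-C_{c;pq}.
\]
I would also note that the occupation-basis matrix elements are real, since all coefficients are \(0,\pm1\). Together with the three identities above, this establishes admissibility.

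I expect no serious obstacle here. The only step that needs care is the spin-commutation computation for \(E_{rs}\), where the two cross terms cancel only after reordering under the canonical anticommutation relations. I would write out that one line explicitly, together with the commutation \([n_c,L_{pq}]=0\) used in the adjoint step.
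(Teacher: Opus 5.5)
Your proposal is correct and follows the paper's route: each \(E_{rs}\) commutes with \(\widehat N\) and every \(\widehat S_\mu\), \([E_{cc},L_{pq}]=0\) because \(c\notin\{p,q\}\), and the product of a commuting Hermitian and anti-Hermitian pair is anti-Hermitian. The only difference is cosmetic: the paper gets the spin commutation from \(E_{rs}=d\Gamma(|r\rangle\langle s|\otimes I_2)\), \(\widehat S_\mu=d\Gamma(I_M\otimes\sigma_\mu/2)\) and the homomorphism property of \(d\Gamma\), whereas you verify it by a direct CAR computation with \(\widehat S_-\) (and your computation is correct).
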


\begin{proof}
Write \(E_{rs}=d\Gamma(|r\rangle\langle s|\otimes I_2)\) and
\(\widehat S_\mu=d\Gamma(I_M\otimes\sigma_\mu/2)\).  The identity
\([d\Gamma(A),d\Gamma(B)]=d\Gamma([A,B])\) shows that all \(E_{rs}\) commute
with \(\widehat N\) and every \(\widehat S_\mu\).  Since
\(c\notin\{p,q\}\),
\([E_{cc},L_{pq}]=0\).  The product of the commuting Hermitian
\(E_{cc}-1\) and anti-Hermitian \(L_{pq}\) is anti-Hermitian.
\end{proof}

\subsection{Proof of the minimum-support theorem}
\label{app:min-support-proof}

\begin{proof}[Proof of \cref{thm:min-support}]
On a nontrivial boundary, \(d=\binom Mk>M\), and in the interior
\cref{lem:interior-dimension-gap} gives \(d>M\).  In either case,
\[
\dim\rho_{M,N,S}(\so(M))
\le \frac{M(M-1)}2
<\frac{d(d-1)}2
=\dim\so(d),
\]
so the orbital backbone is not already complete; no faithfulness assumption
on \(\rho_{M,N,S}\) is needed.

A one-orbital number-conserving spin scalar is scalar on each local
irreducible block
\[
(N,S)=(0,0),\qquad(1,\tfrac12),\qquad(2,0)
\]
by Schur's lemma \cite{FultonHarris1991}.  It is therefore diagonal in the
occupation basis and cannot be a nonzero real skew-symmetric matrix.  Every
useful seed has support at least two.

On a maximal-spin boundary,
\cref{lem:two-orbital-boundary} shows that every genuinely two-orbital
admissible real seed restricts to a multiple of \(L_{pq}\), already in the
orbital algebra.  Boundary support is therefore at least three.  The seed
\(C_{c;pq}\) is admissible by \cref{prop:C-symmetry} and has genuine
three-orbital support.  On the electron boundary,
\begin{equation}
C_{c;pq}^{\mathrm e}=(n_c^{\mathrm{sl}}-1)L_{pq},
\qquad
D_{cp,cq}=n_c^{\mathrm{sl}}L_{pq}
=C_{c;pq}^{\mathrm e}+L_{pq}.
\label{eq:C-electron-Y-equivalence}
\end{equation}
On the hole boundary, \cref{eq:hole-overlap} and direct projection give
\begin{equation}
\begin{aligned}
C_{c;pq}^{\mathrm h}
&=(1-n_c^{\mathrm h})L_{pq}^{\mathrm h},\\
\Pi_\alpha Y_{\{c,p\},\{c,q\}}\Pi_\alpha
&=\left(\frac32-n_c^{\mathrm h}\right)L_{pq}^{\mathrm h}
=C_{c;pq}^{\mathrm h}+\frac12L_{pq}^{\mathrm h}.
\end{aligned}
\label{eq:C-hole-Y-equivalence}
\end{equation}
If \(A\in\mathfrak g\), then
\(\Lie(\mathfrak g,Z)=\Lie(\mathfrak g,Z+A)\).  The differences in
\cref{eq:C-electron-Y-equivalence,eq:C-hole-Y-equivalence} lie in the
orbital algebra, so each \(C_{c;pq}^{\mathrm{bdry}}\) generates the same
Lie algebra as the corresponding overlapping intermediate-triplet
coordinate.  The coordinate boundary construction proves
\cref{eq:C-coordinatewise-universality}, attaining the boundary upper bound.
In the interior, \cref{thm:interior-single-seed} supplies the genuinely
two-orbital seed \(X_{pq}\), attaining the interior upper bound.
\end{proof}

Fixed discrete gates are outside the generator minimum in
Eq.~\ref{eq:minimum-support-definition}. For example, the two-orbital gate
\(U_{ab}=e^{i\pi n_an_b}\), with \(a\ne b\), preserves particle number,
full spin symmetry, and real amplitudes, and its conjugation of orbital
rotations supplies occupation-conditioned rotations on the maximal-spin
boundaries. Its phase family \(e^{i\theta n_an_b}\) does not preserve real
amplitudes at general angles.

\subsection{Spectral projectors and eight distinct commuting strings}
\label{app:exact-factorization}
The spectral projectors of \(n_c=E_{cc}\), with eigenvalues \(0,1,2\), are
\[
\Pi_0=\frac{(n_c-1)(n_c-2)}2,\qquad
\Pi_1=n_c(2-n_c),\qquad
\Pi_2=\frac{n_c(n_c-1)}2.
\]
They give the occupation-controlled Givens action in text
Eq.~\ref{eq:controlled-givens}. For completeness, the following verifies
both commutativity and distinctness in the eight-string factorization of
text Proposition~\ref{prop:eight-pauli}.
\begin{proof}
With Majoranas
\(\gamma_{j,0}=a_j+a_j^\dagger\) and
\(\gamma_{j,1}=-i(a_j-a_j^\dagger)\),
\[
a_r^\dagger a_s-a_s^\dagger a_r
=\tfrac12(\gamma_{r,0}\gamma_{s,0}
+\gamma_{r,1}\gamma_{s,1}).
\]
Up to its scalar coefficient \(\pm i/4\), each term is a quartic Majorana
monomial.  More generally, for
canonically ordered Majorana monomials
\(\Gamma_A=\prod_{j\in A}\gamma_j\) and
\(\Gamma_B=\prod_{j\in B}\gamma_j\), reordering gives
\[
\Gamma_A\Gamma_B
=(-1)^{|A||B|-|A\cap B|}\Gamma_B\Gamma_A.
\]
Here a quartic support has the form
\begin{align*}
A=\{&(c,\tau,0),(c,\tau,1),\\
     &(p,\sigma,\kappa),(q,\sigma,\kappa)\},\\
&\tau,\sigma\in\{\alpha,\beta\},\qquad \kappa\in\{0,1\}.
\end{align*}
The assumption \(c\notin\{p,q\}\) prevents a condition index from coinciding
with a hopping index.  Hence two such supports intersect in zero indices, in
the two indices of a shared condition or hopping bilinear, or in all four
indices.  Thus \(|A\cap B|\in\{0,2,4\}\), and the displayed sign is always
positive.  Moreover, the triple \((\tau,\sigma,\kappa)\) uniquely determines
the support \(A\), because \(c,p,q\) are pairwise distinct.  The
Jordan--Wigner map is injective on the Clifford monomial basis up to its
nonzero phase, so these supports give exactly eight distinct Pauli strings.
They all commute pairwise.
\end{proof}

%% file: appendices/appendix_f_methods.tex
\Needspace{24\baselineskip}
\section{Numerical methods, validation, and formal verification}
\label{app:numerical-methods}

\begin{table}[!htbp]
\centering\small
\caption{Sample and selection rules for the two main numerical comparisons and the supplementary boundary sweep.
All resources are parameter-bound ansatz-only CNOT counts.}
\label{tab:benchmark-protocols}
\setlength{\tabcolsep}{4pt}
\renewcommand{\arraystretch}{1.15}
\begin{tabular}{@{}lll@{}}
\toprule
\tblcell{.11\textwidth}{Location}
& \tblcell{.42\textwidth}{Samples and output selection}
& \tblcell{.38\textwidth}{Reported quantities}\\
\midrule
\tblcell{.11\textwidth}{Fig.~\ref{fig:benchmark-boundary-repair}}
& \tblcell{.42\textwidth}{Five VQD chains per checkpoint; one entire chain minimizes summed final VQD objectives.}
& \tblcell{.38\textwidth}{Maximum-root error and maximum-root CNOT count of that same chain.}\\
\tblcell{.11\textwidth}{SI}
& \tblcell{.42\textwidth}{Twenty VQD chains; all contribute to statistics. A representative minimizes summed variational root energies.}
& \tblcell{.38\textwidth}{Ensemble median/IQR and strict-success count; representative-chain maximum-root CNOT count.}\\
\tblcell{.11\textwidth}{Fig.~\ref{fig:benchmark-sota}}
& \tblcell{.42\textwidth}{Twenty starts per geometry/sector; select the lowest variational energy independently at each tested size.}
& \tblcell{.38\textwidth}{Selected task error and matched CNOT count: one state for H$_6$, larger error and summed CNOT count for TME, maxima over eight geometries for N$_2$.}\\
\bottomrule
\end{tabular}
\end{table}
\FloatBarrier

\subsection{Sector construction and sequential VQD}
\label{sec:vqd-protocol}

Full-Fock generators are restricted first to fixed \((N,M_S)\), then to
\(S^2=S(S+1)\). Their real-skew matrices are closed under commutators until
the span rank stabilizes. All DLA and variational calculations use complete
fixed-spin spaces without a point-group projector; the magnetic component
is specified for each task. Closure checks are reported in Supplemental
Material, Sec.~4.1.

Variational calculations use Q\textsuperscript{2}Chemistry~\cite{Fan2022Q2Chemistry}.
Sequential VQD minimizes
\begin{equation}
\Phi_n(\theta)
=\langle\psi_n(\theta)|H|\psi_n(\theta)\rangle
+\beta\sum_{j<n}|\langle\widetilde\psi_j|\psi_n(\theta)\rangle|^2.
\label{eq:vqd-objective}
\end{equation}
For each independent benchmark chain, \(\widetilde\psi_j\) is its accepted
lower-root variational output; each root starts from the prescribed reference
with its own initialized parameters. References and generator coordinates
are fixed before optimization. Exact states enter only the subsequent
energy-error and fidelity diagnostics.

\subsection{Benchmark systems and circuit families}
\label{sec:benchmark-systems}

The electron-boundary task is linear H$_4$/6-31G at 1.50~\AA{} adjacent
spacing, with \((M,N,S,M_S)=(8,4,2,2)\), \(d=70\), and roots 0--1.
The hole-boundary task is O$_2$/cc-pVDZ at 1.21~\AA{}, with a
\(\pi/\pi^\ast\) CAS(6e,4o), \(S=M_S=1\), \(d=6\), and roots 0--5.
Interior tasks use STO-3G CAS(6e,6o): linear H$_6$ at 1.50~\AA{} spacing,
TME at the geometry of Ref.~\cite{Burton2024TUPS}, and N$_2$ at eight
independently optimized bond lengths. H$_6$ and N$_2$ are singlets;
TME targets separate singlet and triplet ground states. All use \(M_S=0\),
with \(d=175\) for singlets and \(d=189\) for the TME triplet.

Circuit families are defined in Section~\ref{sec:numerical-protocol}.
One repetition means a backbone-plus-seed unit for the proposed products,
an alternating even or odd half-layer for QNP-Q/F, and a full even--odd
cycle for tUPS. All comparisons use fixed orbitals. Complete Hamiltonian,
reference-state, frozen-orbital, and coordinate specifications are given
with the optimization protocol in Supplemental Material, Sec.~4.4.

\subsection{Sampling and output selection}
\label{app:benchmark-optimization}
\label{sec:benchmark-optimization}
\label{sec:additional-controls}

The five-chain boundary scan and twenty-start interior comparison use the
sample and output-selection rules in Table~\ref{tab:benchmark-protocols}.
H$_4$ and O$_2$ use \(\beta=2\)~Ha. Double-precision searches use independent
initializations and L-BFGS-B with analytic gradients. Deterministic
initialization keys, budgets, supplementary strict-success criteria, and
record-adoption rules are specified in Supplemental Material,
Secs.~4.3--4.4.

For the interior comparison, the reported endpoint is the smallest compiled
CNOT count among tested circuits with task error at most 0.1~mHa.
The supplementary ensemble reports initialization statistics separately.

\subsection{Circuit resources and validation}
\label{sec:circuit-validation}

CNOT counts are obtained after binding optimized parameters and compiling
the complete ansatz with Q\textsuperscript{2}Chemistry under the common
all-to-all connectivity, wire-order, and decomposition protocol. Counts
exclude reference preparation, measurements, and VQD overlaps. They are
maxima over roots for H$_4$/O$_2$, a single-state count for H$_6$, a sum
for the two TME states, and maxima over N$_2$ geometries. The compilation
settings and independent state-replay checks are given in Supplemental
Material, Secs.~4.4 and~4.8.

\subsection{Formal verification}
\label{app:formal-verification}

We used Lean 4.33.1~\cite{deMoura2021Lean4} and mathlib
4.33.1~\cite{Mathlib2020} to verify the highest-weight real-sector forms of
the classification and minimum-orbital-support results in
Theorems~\ref{thm:main} and~\ref{thm:min-support}. This covers every fixed
interior pair-transfer coordinate \(X_{pq}\) with \(p\ne q\), the complete criterion for arbitrary admissible one- and
two-body additions on nontrivial maximal-spin boundaries, and the endpoint
sectors. The support minimum is verified for globally admissible generators
of arbitrary body rank.

\begingroup
\widowpenalty=10000
Additional verified results include the complete one- and two-body
coefficient representation, complex-unitary and real-orthogonal equivalence
of magnetic restrictions and their respective complex and real Lie spans,
and the sector-uniform construction in \cref{cor:sector-uniform-seed}.
The real Lean theorem assumes that each full-Fock generator conserves particle
number, commutes with the full spin action, and preserves the occupation-real
form. Its admissible-seed specialization also verifies that anti-Hermitian
generators restrict to real skew-adjoint operators and that their generated
real Lie algebras lie in \(\so(d_{M,N,S})\) in every magnetic component.
For \(C\),
we verified its occupation-controlled exponential and eight commuting
factors in the CAR/Majorana representation underlying
\cref{prop:eight-pauli}. The transitivity of the special orthogonal group
on normalized real states is also verified.
The positive-spin fixed-\(C\) extension in Supplemental Material,
Sec.~3, is not included in the Lean formalization.
\par
\endgroup

\begin{samepage}
\paragraph*{AI assistance.}
GPT-5.6 Sol assisted with selected intermediate mathematical derivations
and searches for small counterexamples; the authors independently verified
the resulting arguments and take full responsibility for the mathematical claims.
\par
\end{samepage}

%% file: supplementary/si_body.tex
\clearpage
\begingroup
\pagenumbering{arabic}
\makeatletter
\let\addcontentsline\@gobblethree
\makeatother
\pdfbookmark[0]{Supplementary Information}{supplementary-start}
\thispagestyle{plain}

\begin{center}
  {\Large\bfseries Supplementary Information for\par}
  \vspace{0.35em}
  {\Large\bfseries Minimal building blocks for molecular quantum circuits\\
  with exact spin symmetry\par}
  \vspace{1.1em}
  {\large Mengwei Liu\textsuperscript{1,2} and Zhenyu Li\textsuperscript{1,2,*}\par}
  \vspace{0.5em}
  {\normalsize \textsuperscript{1}State Key Laboratory of Precision and Intelligent Chemistry,\par
  University of Science and Technology of China, Hefei 230026, China\par}
  \vspace{0.35em}
  {\normalsize \textsuperscript{2}Hefei National Laboratory,\par
  University of Science and Technology of China, Hefei 230088, China\par}
  \vspace{0.35em}
  {\normalsize \textsuperscript{*}Contact author:
  \href{mailto:zyli@ustc.edu.cn}{zyli@ustc.edu.cn}\par}
\end{center}
\vspace{1.25em}

\ifdefined\StandaloneSI
  \newcommand{\mainref}[1]{\ref*{main-#1}}
\else
  \newcommand{\mainref}[1]{\ref{#1}}
\fi
\newcommand{\maincite}[1]{\cite{#1}}
\ifdefined\StandaloneSI
  \newcommand{\maineqref}[1]{\eqref{main-#1}}
  \newcommand{\maincref}[1]{\cref{main-#1}}
  \newcommand{\Maincref}[1]{\Cref{main-#1}}
\else
  \newcommand{\maineqref}[1]{\eqref{#1}}
  \newcommand{\maincref}[1]{\cref{#1}}
  \newcommand{\Maincref}[1]{\Cref{#1}}
\fi

\makeatletter
\def\@sectioncntformat#1{\csname the#1\endcsname.\quad}
\def\@hangfrom@section#1#2#3{\@hangfrom{#1#2}\MakeTextUppercase{#3}}
\makeatother
\setcounter{section}{0}
\renewcommand{\thesection}{\arabic{section}}
\renewcommand{\thesubsection}{\thesection.\arabic{subsection}}
\numberwithin{equation}{section}
\renewcommand{\theequation}{S\arabic{section}.\arabic{equation}}
\setcounter{table}{0}
\setcounter{figure}{0}
\renewcommand{\thetable}{\arabic{table}}
\renewcommand{\thefigure}{\arabic{figure}}
\renewcommand{\tablename}{Supplementary Table}
\renewcommand{\figurename}{Supplementary Fig.}
\setcounter{theorem}{0}
\setcounter{notation}{0}
\setcounter{fact}{0}
\renewcommand{\thetheorem}{S\arabic{theorem}}
\renewcommand{\thenotation}{S\arabic{notation}}
\renewcommand{\thefact}{S\arabic{fact}}
\providecommand{\theHtheorem}{}
\providecommand{\theHnotation}{}
\providecommand{\theHfact}{}
\renewcommand{\theHtheorem}{supplementary.theorem.\arabic{theorem}}
\renewcommand{\theHnotation}{supplementary.notation.\arabic{notation}}
\renewcommand{\theHfact}{supplementary.fact.\arabic{fact}}

\providecommand{\theHsection}{}
\providecommand{\theHsubsection}{}
\providecommand{\theHsubsubsection}{}
\providecommand{\theHfigure}{}
\providecommand{\theHtable}{}
\providecommand{\theHequation}{}
\renewcommand{\theHsection}{supplementary.\arabic{section}}
\renewcommand{\theHsubsection}{supplementary.\arabic{section}.\arabic{subsection}}
\renewcommand{\theHsubsubsection}{supplementary.\arabic{section}.\arabic{subsection}.\arabic{subsubsection}}
\renewcommand{\theHfigure}{supplementary.\arabic{figure}}
\renewcommand{\theHtable}{supplementary.\arabic{table}}
\renewcommand{\theHequation}{supplementary.\arabic{section}.\arabic{equation}}

\crefname{figure}{Supplementary Fig.}{Supplementary Figs.}
\Crefname{figure}{Supplementary Fig.}{Supplementary Figs.}
\crefname{table}{Supplementary Table}{Supplementary Tables}
\Crefname{table}{Supplementary Table}{Supplementary Tables}

\noindent
The proofs of the main classification and minimum-support results are
included as appendices to the article. This Supplementary Information
contains detailed operator and literature comparisons, the QNP-F boundary
reduction, the positive-spin conditioned-single extension and singlet
obstruction, and additional numerical checks and diagnostics.

\input{supplementary/01_operator_comparisons.tex}
\input{supplementary/02_qnp_f_boundary.tex}
\input{supplementary/03_conditioned_extension.tex}
\input{supplementary/06_numerics.tex}

\FloatBarrier
\SIBibliography
\clearpage
\endgroup

%% file: supplementary/01_operator_comparisons.tex
\section{Operator conventions and relation to previous work}
\label{si:operator-comparisons}

\subsection{Conventions of Magoulas--Evangelista}
\label{app:generator-conventions}

We give the coefficient-level correspondence between the generators used
here and those of Magoulas and Evangelista~\cite{Magoulas2025SpinAdapted}.
For this section, denote their elementary
anti-Hermitian spin-orbital excitations by
\begin{align*}
\mathcal A_{P_\sigma}^{Q_\tau}
&=a_{Q\tau}^\dagger a_{P\sigma}
  -a_{P\sigma}^\dagger a_{Q\tau},\\
\mathcal A_{P_\sigma Q_\tau}^{R_\mu S_\nu}
&=a_{R\mu}^\dagger a_{S\nu}^\dagger
  a_{Q\tau}a_{P\sigma}-\mathrm{h.c.}
\end{align*}
Their normalized singlet single, Eq.~(3) of that reference, is
\[
\mathcal A_P^Q
=\frac{1}{\sqrt2}\left(
\mathcal A_{P_\alpha}^{Q_\alpha}
+\mathcal A_{P_\beta}^{Q_\beta}\right),
\]
and their Eq.~(4) is the perfect-pairing double
\[
\mathcal A_{PP}^{QQ}
=\mathcal A_{P_\alpha P_\beta}^{Q_\alpha Q_\beta}.
\]
For four distinct spatial orbitals, their intermediate-triplet scalar,
Eq.~(7), is
\begin{align}
\label{eq:ME-triplet}
{}^{[1]}\mathcal A_{PQ}^{RS}
=\frac{1}{\sqrt3}\Bigl[&
\mathcal A_{P_\alpha Q_\alpha}^{R_\alpha S_\alpha}
+\mathcal A_{P_\beta Q_\beta}^{R_\beta S_\beta}\notag\\
&+\frac12\Bigl(
\mathcal A_{P_\alpha Q_\beta}^{R_\alpha S_\beta}
+\mathcal A_{P_\alpha Q_\beta}^{R_\beta S_\alpha}\notag\\
&\hspace{3.2em}+
\mathcal A_{P_\beta Q_\alpha}^{R_\alpha S_\beta}
+\mathcal A_{P_\beta Q_\alpha}^{R_\beta S_\alpha}
\Bigr)\Bigr].
\end{align}
The superscript \([1]\) is the common intermediate spin of the annihilated
and created pairs.  The complete operator is a total-spin scalar; the label
does not mean that the generator transforms as a total-spin triplet.

The same reference separates the remaining double excitations into the
three-orbital pair-breaking/formation operator \(\mathcal A_{PP}^{QR}\)
[their Eq.~(5)] and the four-distinct-orbital intermediate-singlet operator
\({}^{[0]}\mathcal A_{PQ}^{RS}\) [their Eq.~(6)].  Equations~(3)--(7) together
form their saGSD pool.  The smaller saGSpD pool contains only the normalized
singlet singles and perfect-pairing doubles of Eqs.~(3)--(4).

\Needspace{12\baselineskip}
\subsection{Exact identification of the generators used here}
\label{app:generator-provenance}

Fix \(p<q\) and \(r<s\).  Subscripts label the annihilated pair and
superscripts the created pair.  Direct normal-ordering gives the following
full-Fock-space identities, with four distinct indices in the third:
\begin{equation}
\label{eq:ME-dictionary}
\boxed{
\begin{aligned}
L_{pq}&=\sqrt2\,\mathcal A_q^p,\\
X_{pq}&=\mathcal A_{qq}^{pp},\\
Y_{\{p,q\},\{r,s\}}&=\sqrt3\,{}^{[1]}\mathcal A_{rs}^{pq}.
\end{aligned}}
\end{equation}
Reversing a source and
destination pair changes the corresponding overall sign.  The coefficient
identities hold on the full Fock space.

\begin{table}[!htbp]
\centering
\caption{Provenance of the primitive coordinate directions.}
\label{tab:generator-provenance}
\renewcommand{\arraystretch}{1.18}
\begin{tabular}{@{}llll@{}}
\toprule
\tblcell{0.13\textwidth}{Present symbol}
& \tblcell{0.19\textwidth}{Ref.~\cite{Magoulas2025SpinAdapted}}
& \tblcell{0.22\textwidth}{Exact relation}
& \tblcell{0.29\textwidth}{Scope in the present work} \\
\midrule
\tblcell{0.13\textwidth}{\(L_{pq}\)}
& \tblcell{0.19\textwidth}{Eq.~(3), normalized singlet single}
& \tblcell{0.22\textwidth}{\(\sqrt2\,\mathcal A_q^p\)}
& \tblcell{0.29\textwidth}{Connected singles form the orbital-rotation backbone.} \\
\tblcell{0.13\textwidth}{\(X_{pq}\)}
& \tblcell{0.19\textwidth}{Eq.~(4), perfect-pairing double}
& \tblcell{0.22\textwidth}{\(\mathcal A_{qq}^{pp}\)}
& \tblcell{0.29\textwidth}{One coordinate seed completes every interior sector.} \\
\tblcell{0.13\textwidth}{\(Y_{\{p,q\},\{r,s\}}\), disjoint pairs}
& \tblcell{0.19\textwidth}{Eq.~(7), intermediate-triplet \([1]\) double}
& \tblcell{0.22\textwidth}{\(\sqrt3\,{}^{[1]}\mathcal A_{rs}^{pq}\)}
& \tblcell{0.29\textwidth}{One coordinate seed completes every nontrivial boundary sector.} \\
\tblcell{0.13\textwidth}{\(Y_{\{c,p\},\{c,q\}}\), shared orbital}
& \tblcell{0.19\textwidth}{Not the four-distinct case of Eq.~(7)}
& \tblcell{0.22\textwidth}{Repeated-index specialization of text Eq.~\mainref{eq:dint1-Y}}
& \tblcell{0.29\textwidth}{Reduces to an occupation-conditioned single on a boundary.} \\
\tblcell{0.13\textwidth}{\(C_{c;pq}\)}
& \tblcell{0.19\textwidth}{Not a primitive operator in Eqs.~(3)--(7)}
& \tblcell{0.22\textwidth}{\((E_{cc}-1)L_{pq}\)}
& \tblcell{0.29\textwidth}{Three-orbital boundary-optimal representative.} \\
\bottomrule
\end{tabular}
\end{table}
\FloatBarrier

Shared-orbital coordinates \(Y_{\{c,p\},\{c,q\}}\) are defined separately
to include the repeated-index contractions absent from the four-distinct
case of \cref{eq:ME-triplet}.  After boundary projection, Pauli blocking produces the conditioned
rotation used in Lemma~\mainref{lem:first-conditioned} and
hole-overlap equation.  It is also different from the
three-orbital operator \(\mathcal A_{PP}^{QR}\) in Eq.~(5) of
Ref.~\cite{Magoulas2025SpinAdapted}, which breaks or forms a doubly occupied
pair.  The factors \(\sqrt2\) and \(\sqrt3\) do not affect a Lie
closure, but they do affect ansatz angles, gradient magnitudes,
regularization, and any finite-depth numerical comparison.

For the pool-level comparisons in this section, define
the real operator spans
\begin{align*}
\Xpool&=\Span_{\R}\{X_{pq}:p<q\},\\
\Ypool&=\Span_{\R}\{Y_{e,f}:e\ne f\},\\
\Ypooldisj&=\Span_{\R}\{Y_{e,f}:e\cap f=\varnothing\},
\end{align*}
where \(e,f\) are unordered pairs of distinct spatial orbitals.
The shorthand \(\Lpool\) continues to denote the complete-graph orbital
span defined in the main text.  These spans are distinct from individual
coordinate seeds such as \(X_{pq}\) and \(Y_{e,f}\).

At the pool level, the all-pair span
\(\Lpool+\Xpool\), after the first rescaling in
\cref{eq:ME-dictionary}, is the saGSpD operator span.  A nearest-neighbor or
brickwork QNP-Q circuit, like a tUPS product, uses a sparse ordered selection
of \(L\) and \(X\) generators from that span.  The five-parameter QNP-F block
has additional independent local axes, as specified in
Section~\ref{app:F-boundary}.  The disjoint subpool \(\Ypooldisj\) belongs
to the \([1]\) sector of saGSD; a local intermediate-triplet placement is only a sparse subset
of that sector.  Full saGSD additionally contains the three-orbital
pair-breaking class and the generalized \([0]\) intermediate-singlet class,
neither of which is a primitive seed in the classification proved here.

\subsection{Relation to previous universality results}
\label{app:universality-comparison}

Exact fermionic wave-function parameterizations using products of low-body
unitary operators provide an earlier foundation for disentangled UCC
\cite{Evangelista2019UCC}.  The present question additionally fixes the
spin-scalar generator family and asks for its restricted DLA and the
minimum orbital support of an additional completing coordinate.

Burton et al.~\cite{Burton2023Exact} give a pool-level finite-product
universality argument for generalized spin-adapted singles and paired doubles,
with explicit occupation-edge exceptions when no doubly occupied or no doubly
unoccupied orbital is available.  Two distinctions matter when relating that
full-Fock-space construction to a fixed-spin restricted DLA.  First, one
displayed synthesis uses individual spin-resolved singles, so restriction by a
sector projector \(\Pi\) obeys
\[
\Pi[A,B]\Pi=[\Pi A\Pi,\Pi B\Pi]
 +\Pi A(1-\Pi)B\Pi-\Pi B(1-\Pi)A\Pi.
\]
The cross-sector terms vanish for the spin scalars used here, but need not
vanish for spin-resolved intermediates.  Second, a displayed commutator
produces an occupation-dressed double,
\[
[\hat\kappa_{p\bar q}^{r\bar s},
  \hat\kappa_{s\bar s}^{q\bar q}]
=\hat\kappa_{ps}^{rq}(\hat n_{\bar s}-\hat n_{\bar q}),
\]
rather than the bare same-spin double.  Lie closure alone does not remove the
occupation factor.  Appendix~\mainref{app:interior-full-proof} therefore treats
these conditional factors directly through strict localization, cyclicity,
and subspace expansion.

Burton's tUPS construction derives arbitrary-pair singles and paired doubles
from nearest-neighbor generators \cite[Appendix C]{Burton2024TUPS}. In an
interior sector, \(b,h\ge1\), sector-internal completion establishes the
operator equality
\[
\Lie\bigl(\rho_{M,N,S}(\so(M)),\{X_{rs}\}\bigr)
=\so(d_{M,N,S})
\]
on the complete fixed-\((N,S,M_S=S)\) multiplicity space.  This restricted
full-algebra result is the step that establishes sectorwise completion.
Orbital covariance and the signed-permutation orbit then reduce the additional
paired pool to one coordinate \(X_{pq}\).

Magoulas and Evangelista classify the spin-adapted operators above and study
exact product representations of their exponentials through finite local
dynamical Lie algebras \cite{Magoulas2025SpinAdapted}.  Their local 5-, 28-,
and 84-dimensional algebras organize exact Wei--Norman factorizations of
individual operator families.  Their projected-action arguments also treat
larger pools containing the full intermediate-singlet \([0]\) family and
related reduced subpools.  The coordinatewise result here instead fixes a
connected \(L\) backbone, determines the many-orbital fixed-spin restricted
DLA, and reduces the additional paired pool to one \(X_{pq}\).

Stergiou and Sawaya~\cite{Stergiou2026} study qubit exchange generators
without nonlocal Jordan--Wigner strings.  Their spectator-conditioning
mechanism is shared with our boundary edge-isolation argument and the
hard-core-pair construction in Lemma~\mainref{lem:hardcore}.  For the complete
molecular spin sector, the interior proof additionally localizes the action on
\(K_R\) and extends that supported block by highest-weight cyclicity and
one-sided infection.  The resulting theorems classify individual
perfect-pairing coordinates in the interior and the full admissible
at-most-two-body coefficient space on the boundary, with orbital support
measured on the full fermionic Fock space.

Their additional fuzzy-sphere constraint concerns orbital azimuthal angular momentum, rather than molecular electronic total spin; it therefore specifies a different target space from the complete fixed-\((N,S,M_S)\) space studied here.

Oszmaniec and Zimbor\'{a}s classify extensions of passive fermionic linear
optics, with a \(U(M)\) backbone acting on complex exterior-power spaces
\cite{Oszmaniec2017}.  The backbone here is the real spin-independent
\(SO(M)\) action on a fixed-spin sector, and the target is \(SO(d)\).
Their extension criteria therefore do not supply the restricted real
classification used above.  Symmetry-preserving state preparation and
particle-conserving primitives
\cite{Gard2020,Arrazola2022}, and compact complete-pool constructions
\cite{Tang2021,Shkolnikov2023,Haidar2025,Viswanathan2026}
provide complementary approaches to controllable circuit design.
Here we fix the real spin-scalar orbital backbone, classify its complete
at-most-two-body boundary seed space and its coordinatewise interior
completion, and measure support on the full Fock space.

\begin{remark}[Boundary failure combinations and the interior quantifier]
The complete boundary criterion includes nonzero linear combinations that
fail.  On an electron \(k\)-particle
boundary, orienting overlapping doubles consistently gives
\[
\sum_{p\notin\{a,b\}}D_{pa,pb}
=\sum_{p\notin\{a,b\}}n_p^{\mathrm{sl}}L_{ab}
=(k-1)L_{ab}.
\]
This nonzero linear combination already lies in the singles algebra.  The
conditioned-single family has the analogous relation
\[
\sum_{c\notin\{p,q\}}C_{c;pq}
=\sum_{c\notin\{p,q\}}(n_c^{\mathrm{sl}}-1)L_{pq}
=(k-M+1)L_{pq}.
\]
For \(2\le k\le M-2\), it is again a nonzero span element that does not
enlarge the orbital algebra.  In pair-coefficient language these combinations
lie in \(\mathcal J_1\), exactly as
\maincref{eq:boundary-failure-subspace} requires.  By contrast, the general
interior result remains coordinatewise and does not assert that every
nonzero element of a generator span completes the DLA.
\end{remark}

\FloatBarrier

%% file: supplementary/02_qnp_f_boundary.tex
\section{Boundary reduction of the local F fabric}
\label{app:F-boundary}

This section determines the action of the local five-parameter \(F\) gate of
Anselmetti et al.\ \cite{Anselmetti2021} on strict maximal-spin boundaries.
For two spatial orbitals \(p,q\), set
\begin{align*}
\ket{P_p}&=P_p^\dagger\ket0,
&\ket{P_q}&=P_q^\dagger\ket0,\\
\ket{S_{pq}}&=\frac{a_{p\alpha}^\dagger a_{q\beta}^\dagger
-a_{p\beta}^\dagger a_{q\alpha}^\dagger}{\sqrt2}\ket0.
\end{align*}
With \(\ket{\Omega_{pq}}=P_p^\dagger P_q^\dagger\ket0\), define explicitly
the one-particle states
\(\ket{p\sigma}=a_{p\sigma}^\dagger\ket0\) and the one-hole states
\(\ket{\overline{p\sigma}}=a_{p\sigma}\ket{\Omega_{pq}}\).  The unbarred and
barred blocks are respectively the local particle and hole sectors used by the
five-axis tangent convention.  The local Fock space splits by particle number
and spin, and the five nontrivial tangent directions act as
\[
\so(2)_{\mathrm{1p}}\oplus
\so(2)_{\mathrm{1h}}\oplus
\so(3)_{\mathrm{2e},S=0},
\]
while the empty, filled, and local two-electron triplet sectors are fixed.  Up
to independent orientation and angle conventions, the tangent space is
\begin{align}
\label{eq:F-five-axes}
\mathcal F_{pq}=\Span_{\R}\{K_{\mathrm{1p}},K_{\mathrm{1h}},
K_{\mathrm{PX}},K_{\mathrm{PBU}},K_{\mathrm{PBL}}\},
\end{align}
where
\begin{align*}
K_{\mathrm{1p}}
&=\sum_{\sigma=\alpha,\beta}
\bigl(\ket{p\sigma}\!\bra{q\sigma}
-\ket{q\sigma}\!\bra{p\sigma}\bigr),\\
K_{\mathrm{1h}}
&=\sum_{\sigma=\alpha,\beta}
\bigl(\ket{\overline{p\sigma}}\!\bra{\overline{q\sigma}}
-\ket{\overline{q\sigma}}\!\bra{\overline{p\sigma}}\bigr),\\
K_{\mathrm{PX}}
&=\ket{P_p}\!\bra{P_q}-\ket{P_q}\!\bra{P_p},\\
K_{\mathrm{PBU}}
&=\ket{P_p}\!\bra{S_{pq}}-\ket{S_{pq}}\!\bra{P_p},\\
K_{\mathrm{PBL}}
&=\ket{P_q}\!\bra{S_{pq}}-\ket{S_{pq}}\!\bra{P_q}.
\end{align*}
Thus \(K_{\mathrm{1p}}\) and \(K_{\mathrm{1h}}\) are the two local rotation
axes, while the last three axes generate the singlet \(\so(3)\) block.
Each of these five axes commutes with the complete local spin action, not only
with \(S^2\) and \(S_z\).  The one-particle and one-hole rotations act with
the same orbital matrix on the two spin components of their
spin-\(\tfrac12\) doublets.  The other three axes act only within the
two-electron \(S=0\) multiplicity block, while every component of the local
triplet is fixed.  Hence every finite factor, and therefore the published
five-parameter block, commutes with \(S_x,S_y,S_z\).

This identification is a block-by-block transcription of Fig.~7 and
Appendices D--E of Ref.~\cite{Anselmetti2021}.  In the ordered singlet basis
\((\ket{P_p},\ket{P_q},\ket{S_{pq}})\), the three published one-parameter
factors have tangent matrices
\begin{align*}
K_{\mathrm{PX}}&=
\begin{pmatrix}0&1&0\\-1&0&0\\0&0&0\end{pmatrix},&
K_{\mathrm{PBU}}&=
\begin{pmatrix}0&0&1\\0&0&0\\-1&0&0\end{pmatrix},&
K_{\mathrm{PBL}}&=
\begin{pmatrix}0&0&0\\0&0&1\\0&-1&0\end{pmatrix}.
\end{align*}
Choosing ``upper'' to mean \(P_p\), the exact name-to-axis map is
\begin{equation}
\begin{array}{c|ccccc}
\text{published factor}&\mathrm{QNP}_{1p}&\mathrm{QNP}_{1h}
&\mathrm{QNP}_{\mathrm{PX}}&\mathrm{QNP}_{\mathrm{PBU}}
&\mathrm{QNP}_{\mathrm{PBL}}\\ \hline
\text{axis}&K_{\mathrm{1p}}&K_{\mathrm{1h}}&K_{\mathrm{PX}}
&K_{\mathrm{PBU}}&K_{\mathrm{PBL}}
\end{array}
\label{eq:F-published-axis-map}
\end{equation}
Thus each named factor is
\(\exp[\vartheta_a(\theta_a)K_a]\), where \(\vartheta_a\) is the actual
plane-rotation angle read from the sine/cosine block in the published matrix.
The occasional half-angle used in the elementary-circuit parameter is only a
reparametrization; orientation changes replace \(\vartheta_a\) by
\(-\vartheta_a\).  This convention keeps the finite-angle statement
independent of circuit-symbol normalization while fixing the tangent and the
rotated CSF plane exactly.

More explicitly, the published finite gate consists of the two indicated
\(SO(2)\) rotations and an element \(R_0\in SO(3)\) on the singlet block.
Choosing any Euler factorization of \(R_0\) in the three axes of
\cref{eq:F-five-axes}, and absorbing orientation signs into the angles, one
may write it as
\begin{equation}
\label{eq:F-five-factor}
F_{pq}(\boldsymbol\vartheta)=
\e^{\vartheta_{\mathrm{1p}}K_{\mathrm{1p}}}
\e^{\vartheta_{\mathrm{1h}}K_{\mathrm{1h}}}
\e^{\vartheta_{\mathrm{PX}}K_{\mathrm{PX}}}
\e^{\vartheta_{\mathrm{PBU}}K_{\mathrm{PBU}}}
\e^{\vartheta_{\mathrm{PBL}}K_{\mathrm{PBL}}},
\end{equation}
where changing the order of the last three factors only changes the Euler
convention inside the singlet block.  Ordering each one-particle and one-hole
doublet as orbital multiplicity \(\otimes\) spin, the exact finite matrix
form of \cref{eq:F-five-factor} on the 16-dimensional local Fock space is
\begin{equation}
\label{eq:F-finite-block}
\begin{split}
F_{pq}(\boldsymbol\vartheta)={}&
\bigl(R_{\mathrm{1p}}(\vartheta_{\mathrm{1p}})\otimes I_2\bigr)
\oplus\bigl(R_{\mathrm{1h}}(\vartheta_{\mathrm{1h}})\otimes I_2\bigr)\\
&\oplus R_{S=0}(\vartheta_{\mathrm{PX}},
\vartheta_{\mathrm{PBU}},\vartheta_{\mathrm{PBL}})
\oplus I_5.
\end{split}
\end{equation}
Here each of the first two plane rotations acts identically on both spin
components; the fixed subspace comprises the empty and filled states and the
three two-electron triplet states.  The matrix \(R_{S=0}\) is the ordered
product of the three indicated plane rotations on
\((\ket{P_p},\ket{P_q},\ket{S_{pq}})\).  Thus
\cref{eq:F-finite-block} is an exact factorization of the finite published
gate, not merely a gate chosen to have the same tangent space.

\begin{proposition}[Boundary restriction of the local \(F\) family]
\label{prop:F-boundary}
On an electron maximal-spin boundary,
\[
\Pi_{M,N,S}\mathcal F_{pq}\Pi_{M,N,S}
=\Span_{\R}\{L_{pq}^{(M,N,S)}\},
\]
and the particle--hole-dual equality holds on a hole maximal-spin boundary.
Consequently, even the all-pair \(F\) tangent pool satisfies
\[
\Lie\!\left(\mathcal F_{pq}^{(M,N,S)}:p<q\right)
=\rho_k\!\left(\so(M)\right).
\]
For \(2\le k\le M-2\), this is a proper subalgebra of
\(\so\!\left(\binom Mk\right)\).
\end{proposition}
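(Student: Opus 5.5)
The plan is to reduce the whole statement to \cref{lem:two-orbital-boundary}, applied axis by axis. Each of the five tangent directions in \cref{eq:F-five-axes} is defined purely on the sixteen-dimensional local Fock space of the two spatial orbitals \(p,q\), tensored with the identity on all other orbitals. None carries a global projector or spectator-dependent factor, so each is genuinely supported on two orbitals. The preceding discussion shows that each axis conserves particle number, commutes with the full local spin action (hence with the global one, since \(\widehat S_\mu\) is a sum of local terms), and is real skew in the occupation basis. Each axis is therefore admissible and preserves the highest-weight boundary sector. \Cref{lem:two-orbital-boundary} then gives \(\Pi K_a\Pi=\kappa_aL_{pq}^{(M,N,S)}\) on the electron boundary, and the particle--hole dual statement on the hole boundary. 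By linearity this yields the inclusion \(\Pi\mathcal F_{pq}\Pi\subseteq\Span_{\R}\{L_{pq}^{(M,N,S)}\}\).

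For equality I only need one axis with \(\kappa_a\ne0\). On the electron boundary I will compute directly. There the local content of \(p,q\) consists only of \(\alpha\) electrons: empty, one \(\alpha\) particle, or the \(\alpha\alpha\) pair, which is part of the local triplet and is fixed. On the one-particle component, \(K_{\mathrm{1p}}\) acts exactly as \(a_{p\alpha}^\dagger a_{q\alpha}-a_{q\alpha}^\dagger a_{p\alpha}\) on the states with no further local electron, so \(\kappa_{\mathrm{1p}}=1\) up to orientation. This direction is nonzero because for \(1\le k\le M-1\) there are configurations with exactly one of \(p,q\) occupied. The remaining axes vanish on the electron boundary: \(K_{\mathrm{1h}}\) needs three local electrons, and the singlet axes need a \(\beta\) electron. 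On the hole boundary, \(K_{\mathrm{1h}}\) plays the same role after particle--hole conjugation, with \(\alpha\) filled and one \(\beta\) hole. Here \(K_{\mathrm{1p}}\) vanishes because every orbital already holds an \(\alpha\) electron, so local occupation is at least two.

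The Lie-algebra claim then follows quickly. The generated algebra is the Lie closure of \(\{\kappa L_{pq}^{(M,N,S)}\}\) over all pairs, which equals \(\rho_k(\so(M))\) because \([L_{pq},L_{qr}]=L_{pr}\) and the all-pair spans already give the image. Properness for \(2\le k\le M-2\) uses the dimension count from the proof of \cref{thm:min-support}: we have \(\binom Mk>M\), so \(\dim\rho_k(\so(M))\le M(M-1)/2<\dim\so(\binom Mk)\).

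The main obstacle is making sure the local-to-global step is legitimate. The axes are defined as matrices on local states, with barred one-hole states built from \(\ket{\Omega_{pq}}\). I must check that the Fock-space operator they define, extended with Jordan--Wigner signs from spectators, still has spectator-independent coefficients on the boundary, so that the ``same constant for every spectator'' clause of \cref{lem:two-orbital-boundary} applies. The approach is to write \(K_{\mathrm{1p}}\) and \(K_{\mathrm{1h}}\) in second quantization, as \(\sum_\sigma(a_{p\sigma}^\dagger a_{q\sigma}-\mathrm{h.c.})\) multiplied by local projectors onto total \(p,q\) occupation one or three. Both factors are even operators on the modes of \(p,q\), so spectator signs cancel.
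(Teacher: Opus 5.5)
Your proof is correct, but the inclusion step takes a different route from the paper's. The paper does not invoke \cref{lem:two-orbital-boundary} here; it checks each axis directly. On the electron boundary the local one-hole block and the three local singlet CSFs are absent, so $K_{\mathrm{1h}},K_{\mathrm{PX}},K_{\mathrm{PBU}},K_{\mathrm{PBL}}$ restrict to zero and $K_{\mathrm{1p}}$ restricts to $L_{pq}$. On the hole boundary the one-particle block is absent and the only local two-electron state is the $m=+1$ triplet, so only $K_{\mathrm{1h}}$ survives, as $\pm L^{\mathrm h}_{pq}$.

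You instead get $\Pi\mathcal F_{pq}\Pi\subseteq\Span_{\R}\{L_{pq}^{(M,N,S)}\}$ for all five axes at once from the general two-orbital lemma, and then need only one nonvanishing axis. Your route is more uniform: it would apply unchanged to any other admissible two-orbital local block. Your factorization of $K_{\mathrm{1p}}$ and $K_{\mathrm{1h}}$ as $L_{pq}$ times even local occupation projectors also makes explicit the spectator-tensoring step that the paper leaves implicit.

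The paper's direct route gives a sharper fact: four of the five axes restrict to exactly zero, not merely to some multiple of $L_{pq}$. This block-absence observation is what identifies the single surviving angle in \cref{cor:F-finite-boundary}. Your electron-boundary remarks already contain it. On the hole boundary you would have to add the triplet observation for the singlet axes, but that is not needed for the proposition as stated, which concerns only the span.
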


\begin{proof}
On the electron maximal-spin boundary, every electron has spin \(\alpha\).
The local one-hole sector requires a \(\beta\) electron and is absent, while
the three local two-electron singlet CSFs are orthogonal to the allowed
two-\(\alpha\) pattern.  Consequently
\begin{align*}
\Pi_{b=0}K_{\mathrm{1h}}\Pi_{b=0}
&=\Pi_{b=0}K_{\mathrm{PX}}\Pi_{b=0}=0,\\
\Pi_{b=0}K_{\mathrm{PBU}}\Pi_{b=0}
&=\Pi_{b=0}K_{\mathrm{PBL}}\Pi_{b=0}=0,
\end{align*}
whereas
\[
\Pi_{b=0}K_{\mathrm{1p}}\Pi_{b=0}
=\Pi_{b=0}L_{pq}\Pi_{b=0}.
\]
Tensoring the local statement with every spectator occupation pattern gives
the exterior-power orbital rotation on the complete boundary sector.

On the hole maximal-spin boundary, every spatial orbital contains an
\(\alpha\) electron and the degrees of freedom are the \(\beta\) holes
\[
h_p^\dagger=a_{p\beta},\qquad
h_p=a_{p\beta}^\dagger,\qquad
n_p^{\mathrm h}=h_p^\dagger h_p.
\]
The local one-particle sector is now absent.  If both \(\beta\) electrons are
missing, the two surviving \(\alpha\) electrons form the \(m=+1\) triplet, not
a local singlet; hence all three singlet-block axes again vanish.  The
one-hole axis is the only surviving direction and, up to the harmless phase
choice in the barred basis,
\[
\Pi_{h=0}K_{\mathrm{1h}}\Pi_{h=0}
=\pm\bigl(h_p^\dagger h_q-h_q^\dagger h_p\bigr)
=\pm L_{pq}^{\mathrm h}.
\]
Thus the all-pair \(F\) tangent pool restricts on either boundary to the same
represented orbital algebra \(\rho_k(\so(M))\).  The apparent availability of
five local parameters does not provide five independent boundary directions:
four act entirely through local blocks missing from the boundary sector.
Finally, for \(2\le k\le M-2\), one has
\(d=\binom Mk>M\), and hence
\(\dim\rho_k(\so(M))=\binom M2<\binom d2=\dim\so(d)\).
\end{proof}

\begin{corollary}[Finite \(F\)-gate boundary restriction]
\label{cor:F-finite-boundary}
Write the finite local gate as
\[
F_{pq}(\vartheta_{\mathrm{1p}},\vartheta_{\mathrm{1h}},R_0),
\qquad R_0\in SO(3),
\]
in the block convention of \cref{eq:F-five-axes}.  On an electron
maximal-spin boundary,
\[
F_{pq}(\vartheta_{\mathrm{1p}},\vartheta_{\mathrm{1h}},R_0)
\big|_{\Hsp}
=\exp\!\left(\vartheta_{\mathrm{1p}}
L_{pq}^{(M,N,S)}\right),
\]
independently of \(\vartheta_{\mathrm{1h}}\) and \(R_0\).  On a hole
maximal-spin boundary,
\[
F_{pq}(\vartheta_{\mathrm{1p}},\vartheta_{\mathrm{1h}},R_0)
\big|_{\Hsp}
=\exp\!\left(\vartheta_{\mathrm{1h}}L_{pq}^{\mathrm h}\right),
\]
independently of \(\vartheta_{\mathrm{1p}}\) and \(R_0\), up to the fixed
orientation convention for the barred basis.  Consequently,
\[
\left\langle F_{pq}\big|_{\Hsp}:p<q\right\rangle
=R_k(SO(M)).
\]
\end{corollary}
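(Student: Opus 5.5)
Since every finite \(F\)-gate is an exact product of one-parameter exponentials by the finite block factorization \cref{eq:F-five-factor}, the plan is to pass from the tangent-level \cref{prop:F-boundary} to finite gates one factor at a time. Each factor \(\e^{\vartheta_aK_a}\) is a full-Fock-space operator that conserves particle number and commutes with the full spin action. It therefore commutes with the sector projector \(\Pi_{M,N,S}\), and the restriction of a product equals the product of the restrictions: \(\Pi F\Pi=\prod_a\Pi\e^{\vartheta_aK_a}\Pi\). Because \(\Pi\) commutes with \(K_a\), we also have \(\Pi\e^{\vartheta_aK_a}\Pi=\exp(\vartheta_a\,\Pi K_a\Pi)\) on \(\Hsp\). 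This reduces the finite statement to the tangent statement.

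On the electron boundary, the proof of \cref{prop:F-boundary} gives \(\Pi K_{\mathrm{1h}}\Pi=\Pi K_{\mathrm{PX}}\Pi=\Pi K_{\mathrm{PBU}}\Pi=\Pi K_{\mathrm{PBL}}\Pi=0\) and \(\Pi K_{\mathrm{1p}}\Pi=L_{pq}^{(M,N,S)}\). The four vanishing factors therefore restrict to the identity. This holds for any Euler factorization of \(R_0\), because every singlet-block factor restricts to the identity, so the result is independent of \(\vartheta_{\mathrm{1h}}\) and \(R_0\). Only \(\exp(\vartheta_{\mathrm{1p}}L_{pq}^{(M,N,S)})\) survives.

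One subtlety must be checked first. The direct-sum form \cref{eq:F-finite-block} is a local statement on the two orbitals \(p,q\), and it must be tensored with spectator occupations, including fermionic signs. I would handle this exactly as in the proposition: the local blocks are even operators, and each local occupation component on the boundary carries only \(\alpha\) electrons, so the local restriction is that of a single \(\alpha\)-particle hop. The hole boundary follows in the same way with the roles of the axes exchanged. Here only \(\Pi K_{\mathrm{1h}}\Pi=\pm L^{\mathrm h}_{pq}\) survives, and the sign is fixed by the barred-basis orientation, which gives \(\exp(\vartheta_{\mathrm{1h}}L_{pq}^{\mathrm h})\).

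For the group statement, the restricted gates are exactly \(R_k(\e^{\vartheta\ell_{pq}})\) for all real \(\vartheta\). Since the one-parameter subgroups \(\e^{\vartheta\ell_{pq}}\), \(p<q\), generate the connected group \(SO(M)\) (they generate its Lie algebra), the group generated by the restricted \(F\)-gates is \(R_k(SO(M))\). On the hole side, identifying \(L^{\mathrm h}\) with the exterior-power action in the hole representation gives the same image. The main point requiring care is the sign and spectator bookkeeping in the hole identification; everything else follows from the commutation of \(\Pi\) with the factors.
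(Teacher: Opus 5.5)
Your proof is correct, but it takes a different route from the paper's.

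\textbf{Your route.} You work at the tangent level. You use the exact factorization \eqref{eq:F-five-factor} and the fact that each factor conserves \(N\) and commutes with the full spin action, so each factor commutes with \(\Pi_{M,N,S}\). With the identity \(\Pi\e^{\vartheta K}\Pi=\exp(\vartheta\,\Pi K\Pi)\) on \(\Hsp\), this lets you import the generator restrictions from the proof of \cref{prop:F-boundary}.

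\textbf{The paper's route.} The paper deliberately avoids the tangent-space step. It restricts the finite direct-sum matrix \eqref{eq:F-finite-block} directly to the four-dimensional local boundary space \(\mathcal B_{\mathrm e}=\Span\{\ket0,\ket{p\alpha},\ket{q\alpha},a_{p\alpha}^\dagger a_{q\alpha}^\dagger\ket0\}\), or its hole analogue \(\mathcal B_{\mathrm h}\). It observes that this space is invariant and that \(F_{pq}\) acts on it as \(1\oplus R_{\mathrm{1p}}(\vartheta_{\mathrm{1p}})\oplus1\) (respectively \(1\oplus R_{\mathrm{1h}}(\vartheta_{\mathrm{1h}})\oplus1\)). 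It then tensors with spectator occupations.

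\textbf{What each buys.} The paper's argument needs neither an Euler decomposition of \(R_0\) nor the projector--exponential commutation. It also makes manifest that no finite value of the four inactive parameters can couple a boundary state to a local block absent from that boundary.

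Your argument is more modular. It shows that the finite statement is a formal consequence of the tangent statement plus number and spin symmetry. It would therefore apply to any product of symmetry-preserving exponentials whose generators have known restrictions, without the explicit block-diagonal form. The Euler step is in fact dispensable: \(R_0=\exp(K)\) for some \(K\) in the singlet \(\so(3)\) span, and \(\Pi K\Pi=0\).

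You also spell out the final group equality, which the paper leaves implicit. Your argument is that the one-parameter subgroups \(\e^{\vartheta\ell_{pq}}\) generate the connected group \(SO(M)\) and that \(R_k\) is a homomorphism. It also correctly covers the hole side through the identification of \(L^{\mathrm h}_{pq}\) with the exterior-power action.
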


\begin{proof}
On an electron maximal-spin boundary, the allowed local space of the two
orbitals is
\[
\mathcal B_{\mathrm e}=\Span\!\left\{
\ket0,\ket{p\alpha},\ket{q\alpha},
a_{p\alpha}^\dagger a_{q\alpha}^\dagger\ket0
\right\}.
\]
The exact direct sum in \cref{eq:F-finite-block} leaves this space invariant
and restricts to
\[
F_{pq}\big|_{\mathcal B_{\mathrm e}}
=1\oplus R_{\mathrm{1p}}(\vartheta_{\mathrm{1p}})\oplus1
=\exp\!\left(\vartheta_{\mathrm{1p}}L_{pq}^{\mathrm{boundary}}\right).
\]
Thus the one-hole and three singlet-block factors act as the identity at
finite angle, proving the first equality without a tangent-space argument.

For the hole boundary, let
\(\ket{\Omega_{pq}}=P_p^\dagger P_q^\dagger\ket0\).  Its allowed local space
is
\[
\mathcal B_{\mathrm h}=\Span\!\left\{
\ket{\Omega_{pq}},
a_{p\beta}\ket{\Omega_{pq}},
a_{q\beta}\ket{\Omega_{pq}},
a_{p\beta}a_{q\beta}\ket{\Omega_{pq}}
\right\},
\]
and the corresponding restriction is
\[
F_{pq}\big|_{\mathcal B_{\mathrm h}}
=1\oplus R_{\mathrm{1h}}(\vartheta_{\mathrm{1h}})\oplus1,
\]
up to the fixed orientation of the barred basis.  This is the
particle--hole-dual equality.  Tensoring either local restriction with
arbitrary spectator occupations and using all orbital pairs gives
\(R_k(SO(M))\).  In particular, no finite choice of the four inactive
parameters can couple a boundary state to a local block absent from that
boundary.
\end{proof}

\begin{remark}[Restricted-matrix validation]
Independent restricted-DLA checks reported in the Supplemental Material validate the five-axis
implementation in \cref{eq:F-five-axes}, but they are not used in either
proof.
\end{remark}

\begin{remark}[Relation to the reported \(F\)-fabric numerics]
\label{rem:F-prior-numerics}
Figure~7 of Ref.~\cite{Anselmetti2021} labels the many-orbital \(F\) fabric as
hypothesized to be universal, and its Appendix~G reports no \(F\)-specific
edge non-universality in the tested cases.  \Cref{cor:F-finite-boundary} shows
that, for the published five-parameter block, the conjectured universality does
not extend to strict maximal-spin boundaries: its restricted action remains in
\(R_k(SO(M))\).  This does not affect the reported performance away from
these boundaries.
\end{remark}

%% file: supplementary/03_conditioned_extension.tex
\section{Positive-spin universality and the singlet obstruction for
conditioned singles}
\label{app:C-positive-spin}

This section records a strengthening of the conditioned-single
result that does not alter the minimum-support classification in the main
text.  The three-orbital generator \(C_{c;pq}\) remains support optimal on a
nontrivial maximal-spin boundary, whereas the two-orbital \(X_{pq}\) remains
support optimal in the interior.  Nevertheless, for \(M\ge3\), every
elementary \(C_{c;pq}\) with pairwise distinct \(c,p,q\in[M]\) is also
sufficient in every physically allowed positive-spin interior sector
\((b,h\ge1,\ S>0)\).  The
proof first constructs a perfect-pairing rotation conditioned on single
occupation, then removes all occupation conditions after restriction to
\(S>0\).

For the full Fock-space argument, fix \(M\ge3\).  All orbital indices below
belong to \([M]\); each elementary \(C_{c;pq}\) has pairwise distinct
\(c,p,q\).  Set
\begin{equation}
D_a=E_{aa}-1,
\qquad
q_a=1-D_a^2=E_{aa}(2-E_{aa}).
\label{eq:C-app-single-projector}
\end{equation}
Thus \(q_a\) is the orthogonal projector onto single occupation of spatial
orbital \(a\), and \(C_{a;rs}=D_aL_{rs}\).  All operators used below commute
with \(\widehat N\) and every \(\widehat S_\mu\).  Consequently their full-space
identities may be restricted at the end to any fixed-\((N,S)\) multiplicity
space.  For each fixed pairwise distinct \(c,p,q\), write
\begin{equation}
\widehat{\mathfrak g}_C(c;p,q)
=\Lie_{\R}\bigl(\{L_{ab}:a<b\},C_{c;pq}\bigr)
\label{eq:C-app-full-algebra}
\end{equation}
for the full-Fock-space algebra and
\begin{equation}
\mathfrak g_C^{(M,N,S)}(c;p,q)
=\Lie_{\R}\bigl(\rho_{M,N,S}(\so(M)),
C_{c;pq}^{(M,N,S)}\bigr)
\label{eq:C-app-sector-algebra}
\end{equation}
for its restricted counterpart.

\begin{lemma}[Coordinate transport in all orbital dimensions]
\label{lem:C-app-transport}
Let \(M\ge3\).  From any elementary \(C_{c;pq}\) with pairwise distinct
\(c,p,q\in[M]\), \(\widehat{\mathfrak g}_C(c;p,q)\) contains every
\(C_{a;rs}\) with pairwise distinct \(a,r,s\in[M]\).
\end{lemma}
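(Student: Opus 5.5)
The plan is to imitate the orbit-transport arguments of \cref{lem:seed-orbit,lem:px-orbit}. We conjugate the fixed seed \(C_{c;pq}\) by signed orbital permutations realized inside \(SO(M)\), working directly on the full Fock space.

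First, \(\widehat{\mathfrak g}_C(c;p,q)\) is invariant under \(\ad_{L_{ab}}\) for every \(a<b\). Since the \(L_{ab}\) preserve each finite-dimensional fixed-\(N\) Fock sector, the algebra is also invariant under \(\e^{t\ad_{L_{ab}}}\), that is, under conjugation by \(U(g)=\e^{tL_{ab}}\). Because \(SO(M)\) is connected and generated by such exponentials, the algebra is invariant under \(U(g)\,\cdot\,U(g)^{-1}\) for every \(g\in SO(M)\), where \(U\) is the spin-free second-quantized lift \(g\mapsto\Gamma(g\otimes I_2)\). If one prefers to avoid Cayley--Hamilton on the infinite-looking full space, the same argument runs sector by sector on \(\Fsp_N\); there the space is finite dimensional.

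Second, compute the transported seed. Take a signed permutation \(g\), with \(g e_i=\epsilon_i e_{\pi(i)}\) and \(\epsilon_i=\pm1\). Then \(U(g)a^\dagger_{i\sigma}U(g)^{-1}=\epsilon_i a^\dagger_{\pi(i)\sigma}\), and hence
\[
U(g)E_{ij}U(g)^{-1}=\epsilon_i\epsilon_j E_{\pi(i)\pi(j)}.
\]
In particular \(n_c\mapsto n_{\pi(c)}\), because the sign is squared, and \(L_{pq}\mapsto\epsilon_p\epsilon_q L_{\pi(p)\pi(q)}\). Therefore
\[
C_{c;pq}\mapsto \epsilon_p\epsilon_q\,C_{\pi(c);\pi(p)\pi(q)}.
\]
The overall sign is irrelevant for a real linear span, and so is the orientation, since \(C_{a;sr}=-C_{a;rs}\).

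Third, handle transitivity and the determinant. Given pairwise distinct \(a,r,s\), choose any permutation \(\pi\) with \(\pi(c)=a\), \(\pi(p)=r\), \(\pi(q)=s\). If \(\det\) of the unsigned permutation matrix is \(-1\), flip a single coordinate sign, for example \(\epsilon_p=-1\). This gives determinant \(+1\), so \(g\in SO(M)\), and only the overall sign of the transported seed changes. For \(M=3\) the permutation is determined by the ordered triple, and the sign flip still works since \(M\ge3\) leaves at least one coordinate. Hence every \(C_{a;rs}\) lies in \(\widehat{\mathfrak g}_C(c;p,q)\).

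The main obstacle is the first step: justifying invariance of the full-Fock-space Lie algebra under group conjugation without a finite-dimensionality assumption. I expect to resolve it by noting that every operator involved commutes with \(\widehat N\), so all statements decompose into the finitely many finite-dimensional sectors \(\Fsp_N\), \(0\le N\le 2M\). Then \(\e^{t\ad_X}\) is a polynomial in \(\ad_X\) on the finite-dimensional space of number-conserving operators.
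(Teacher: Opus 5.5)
Your proposal is correct and follows essentially the same route as the paper. Both use \(SO(M)\)-conjugation invariance of \(\widehat{\mathfrak g}_C(c;p,q)\), transport by a permutation sending \((c,p,q)\) to \((a,r,s)\), and a single coordinate sign flip to fix the determinant. The paper flips the target condition orbital \(a\), which leaves \(E_{aa}\) and \(L_{rs}\) unchanged, so it transports the seed exactly rather than up to the harmless sign that your choice \(\epsilon_p=-1\) produces. The obstacle you anticipated does not arise: the full Fock space is \(4^M\)-dimensional, so \(\ad\)-invariance of the generated algebra gives invariance under \(\e^{t\ad_{L_{ab}}}\) directly, with no sector decomposition needed.
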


\begin{proof}
The algebra in \cref{eq:C-app-full-algebra} is invariant under conjugation by
the connected orbital group \(SO(M)\).  Choose a permutation \(\pi\) sending
\((c,p,q)\) to \((a,r,s)\).  If its permutation matrix has determinant
\(+1\), it already gives the required transport.  If its determinant is
\(-1\), compose it with a sign flip of the target condition orbital \(a\).
This changes the determinant to \(+1\), while leaving both \(E_{aa}\) and
\(L_{rs}\) unchanged because \(a,r,s\) are distinct.  The resulting signed
permutation lies in \(SO(M)\) and conjugates \(C_{c;pq}\) exactly to
\(C_{a;rs}\).  This determinant correction also works for \(M=3\); no fourth
spectator orbital is required.
\end{proof}

\begin{lemma}[Single-occupation-conditioned perfect pairing]
\label{lem:C-app-single-X}
Let \(M\ge3\), and let both \((c_0,p_0,q_0)\) and \((c,p,q)\) be
pairwise-distinct triples in \([M]\).  Then
\begin{equation}
q_cX_{pq}\in\widehat{\mathfrak g}_C(c_0;p_0,q_0)
\label{eq:C-app-qX-membership}
\end{equation}
\end{lemma}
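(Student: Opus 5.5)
The plan is to work entirely in the full Fock space on the three orbitals \(c,p,q\), treating all other orbitals as spectators, and to build \(q_cX_{pq}\) in three stages.

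\textbf{Transport.} By \cref{lem:C-app-transport}, \(\widehat{\mathfrak g}_C(c_0;p_0,q_0)\) contains every \(C_{a;rs}=D_aL_{rs}\) with \(a,r,s\) pairwise distinct. Every ingredient below can therefore use arbitrary condition and rotation labels drawn from \(\{c,p,q\}\), and no fourth orbital is needed. This matters because \(M=3\) is allowed.

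\textbf{Single-occupation conditioning.} For pairwise distinct \(c,p,r\), the factor \(D_c\) commutes with \(L_{pr}\) and \(L_{rq}\) whenever \(c\notin\{p,r,q\}\). In that case
\[
[D_cL_{pr},D_cL_{rq}]=D_c^2L_{pq}=(1-q_c)L_{pq},
\]
and subtracting the available \(L_{pq}\) gives \(q_cL_{pq}\). This needs a fourth orbital \(r\), so for \(M=3\) I would instead use the overlapping version \([D_cL_{pq},[D_cL_{pq},L_{pq}]]\). Any term that does not commute there will be evaluated explicitly on the local blocks of \(p,q\).

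\textbf{Generating the pair transfer.} The key mechanism is that \(X_{pq}\) is the quadratic part of the orbital-rotation string, since \(X_{pq}=\tfrac12(E_{pq}^2-E_{qp}^2)\). Such a quadratic term appears when a rotation on \(p,q\) is conditioned on the occupation of \(p\) or \(q\) itself. I would therefore commute the two "crossed" conditioned rotations \(C_{p;cq}=D_pL_{cq}\) and \(C_{q;cp}=D_qL_{cp}\). Using \([L_{cq},D_q]=\pm(E_{cq}+E_{qc})\) and the analogous identity for \(p\), the commutator \([D_pL_{cq},D_qL_{cp}]\) should expand as
\[
D_pD_q[L_{cq},L_{cp}]\;+\;\text{terms of the form }(E_{cq}+E_{qc})\,(\cdots)\,L_{cp}\text{ and }(E_{cp}+E_{pc})\,(\cdots)\,L_{cq}.
\]
The cross terms move two electrons through \(c\). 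On the subspace where \(c\) is singly occupied, they should reduce to a multiple of \(P_p^\dagger P_q-P_q^\dagger P_p\) plus occupation-dressed rotations \(f(n_p,n_q)L_{pq}\).

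To isolate the pair transfer, I would then project onto single occupation of \(c\). This is done by forming linear combinations of nested commutators with \(q_cL_{pq}\) and with \(C_{c;pq}\), which distinguish the three occupation values of \(c\) just as the spectral projectors \(\Pi_0,\Pi_1,\Pi_2\) do. Finally, the dressed-rotation remainders \(f(n_p,n_q)q_cL_{pq}\) would be removed by one further double commutator with \(q_cL_{pq}\). This step uses the fact that, on the local two-orbital block, \(L_{pq}\) and the pair transfer span the singlet \(\so(3)\) discussed for the \(F\) gate.

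\textbf{Main obstacle.} I expect the hardest step to be the exact CAR bookkeeping of \([D_pL_{cq},D_qL_{cp}]\) and the subsequent cleanup. The cross terms carry occupation factors on the rotated orbitals, and an error there could leave a residual \(q_cn_pn_qL_{pq}\)-type term.

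I would control this by computing block-by-block on the 64-dimensional local Fock space of \(c,p,q\). The target is an identity that holds on every local block, not just on selected matrix elements, so that it extends to the full Fock space when tensored with arbitrary spectator configurations. Since the claimed identity is an equality of operators on a finite local space, a symbolic verification on the three-orbital Fock space, with all spin components included, would fully settle the coefficients. Once \(q_cX_{pq}\) is obtained for one triple, \cref{lem:C-app-transport} transports it to all pairwise distinct triples \((c,p,q)\).
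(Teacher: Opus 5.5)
Your ingredients are the right ones. The crossed seeds $C_{p;cq}$, $C_{q;cp}$ and the identity $X_{pq}=\tfrac12(E_{pq}^2-E_{qp}^2)$ are exactly what the paper uses. However, the mechanism you describe for extracting $q_cX_{pq}$ fails, and the identity that closes the argument is never produced.

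\textbf{The crossed commutator has no pair transfer.} Each factor of $[C_{p;cq},C_{q;cp}]$ is a diagonal occupation factor times one hop between $c$ and $q$ (resp.\ $p$). Every term therefore changes $n_p$ and $n_q$ by exactly $\pm1$, so two such hops move at most one electron net from $q$ to $p$. By contrast, $P_p^\dagger P_q$ changes $(n_p,n_q)$ by $(+2,-2)$. The $n_c$-preserving part consists only of $c$-mediated one-electron transfers. There are also $n_c$-changing parts, e.g.\ the $\Delta n_c=+2$ component $(n_p-n_q)E_{cq}E_{cp}$.

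\textbf{Projection onto single occupation of $c$ is not a Lie operation here.} The identity $[C_{c;pq},[C_{c;pq},Y]]=-4D_c^2Y$ holds when $Y$ commutes with $n_c$ and satisfies $[L_{pq},[L_{pq},Y]]=-4Y$. This is how the paper later attaches further factors $q_\ell$ to $Y=q_AX_{ij}$. Your crossed commutator does not even commute with $n_c$. Moreover, commuting with $C_{c;pq}$ or $q_cL_{pq}$ inserts a new $p\leftrightarrow q$ hop, so it changes the operator content rather than filtering it.

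\textbf{The $M=3$ fallback is zero.} Your only source of $q_cL_{pq}$ for $M=3$, namely $[C_{c;pq},[C_{c;pq},L_{pq}]]$, vanishes identically because $[D_c,L_{pq}]=0$.

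Appealing to a symbolic check on the $64$-dimensional local space does not repair these gaps. You must exhibit the finite set of Lie words whose combination equals $q_cX_{pq}$, and that is the entire content of the lemma.

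The missing idea is to sandwich $B=C_{c;pq}$, which carries the direct $p\leftrightarrow q$ hop, between the crossed seeds $A=C_{p;qc}$ and $G=C_{q;pc}$. Set $\mathcal R_{c;pq}=[A,[B,G]]+[L_{pq},[A,L_{pc}]]+[G,[B,A]]$ and $\mathcal T_{c;pq}=[L_{pq},[A,[B,A]]]$. Normal ordering with $E_{pq}^2=2P_p^\dagger P_q$ and $E_{cc}^2=E_{cc}+2P_c^\dagger P_c$ then gives the full-Fock-space identity $\mathcal T_{c;pq}-2\mathcal R_{c;pq}=6(2E_{cc}-E_{cc}^2)(E_{pq}^2-E_{qp}^2)=12q_cX_{pq}$. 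The factor $q_c=2E_{cc}-E_{cc}^2$ emerges from the algebra itself, so no projection or cleanup step is needed. Only the orbitals $c,p,q$ occur, so $M=3$ is covered directly once transport supplies all $C_{a;rs}$.
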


\begin{proof}
By \cref{lem:C-app-transport}, all conditioned coordinates appearing below
belong to the algebra generated by the initial seed.  Define
\begin{align*}
A&=C_{p;qc},\qquad B=C_{c;pq},\\
G&=C_{q;pc},\\
\mathcal R_{c;pq}
&=[A,[B,G]]\\
&\quad+[L_{pq},[A,L_{pc}]]+[G,[B,A]],\\
\mathcal T_{c;pq}
&=[L_{pq},[A,[B,A]]].
\end{align*}
A direct three-orbital CAR calculation gives the full-Fock-space identity
\begin{equation}
\boxed{
\mathcal T_{c;pq}-2\mathcal R_{c;pq}=12q_cX_{pq}.
}
\label{eq:C-app-core-CAR}
\end{equation}
For completeness, substitute \(C_{a;rs}=D_aL_{rs}\), apply the Leibniz rule,
and use
\begin{align*}
[E_{ij},E_{kl}]&=\delta_{jk}E_{il}-\delta_{il}E_{kj},\\
E_{pq}^2&=2P_p^\dagger P_q,
&E_{cc}^2&=E_{cc}+2P_c^\dagger P_c.
\end{align*}
Normal ordering the four commutators on the left of
\cref{eq:C-app-core-CAR} yields
\begin{align*}
\mathcal T_{c;pq}-2\mathcal R_{c;pq}
&=6(2E_{cc}-E_{cc}^2)(E_{pq}^2-E_{qp}^2)\\
&=12\bigl(1-(E_{cc}-1)^2\bigr)X_{pq}
=12q_cX_{pq}.
\end{align*}
Every term on the left of \cref{eq:C-app-core-CAR} is a Lie word in orbital
rotations and elementary conditioned coordinates, proving
\cref{eq:C-app-qX-membership}.
\end{proof}

\begin{lemma}[Multiplication of a spectator condition]
\label{lem:C-app-conditioning}
Let \(M\ge3\) and fix pairwise distinct initial seed indices
\(c_0,p_0,q_0\in[M]\).  Let \(i,j\in[M]\) with \(i\ne j\),
\(A\subseteq[M]\setminus\{i,j\}\), and
\(\ell\in[M]\setminus(A\cup\{i,j\})\). Set
\(q_A=\prod_{a\in A}q_a\), with \(q_\varnothing=1\), and
\(Y=q_AX_{ij}\). If
\(Y\in\widehat{\mathfrak g}_C(c_0;p_0,q_0)\), then
\begin{equation}
q_\ell Y
=Y+\frac14[C_{\ell;ij},[C_{\ell;ij},Y]]
\in\widehat{\mathfrak g}_C(c_0;p_0,q_0).
\label{eq:C-app-condition-multiplied}
\end{equation}
The identity holds on the full Fock space.
\end{lemma}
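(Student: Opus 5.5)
The plan is to prove the identity \eqref{eq:C-app-condition-multiplied} directly by a double-commutator computation. Membership then follows at once: \(C_{\ell;ij}\) has pairwise distinct indices, so \cref{lem:C-app-transport} places it in \(\widehat{\mathfrak g}_C(c_0;p_0,q_0)\). Since \(Y\) is assumed to lie in the same algebra, both \(Y\) and \([C_{\ell;ij},[C_{\ell;ij},Y]]\) are Lie elements of it, and hence so is any real linear combination of them.

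First I separate the spectator factor. Write \(C_{\ell;ij}=D_\ell L_{ij}\), where \(D_\ell=E_{\ell\ell}-1\). Because \(\ell\notin A\cup\{i,j\}\), the operator \(D_\ell\) commutes with \(L_{ij}\), with \(X_{ij}\), and with every \(q_a\) for \(a\in A\). Each of these pairs of operators is built from distinct spatial modes through even operators, or both factors are functions of occupation numbers only. Hence \(D_\ell\) can be pulled out of each commutator:
\[
[C_{\ell;ij},[C_{\ell;ij},Y]]=D_\ell^2\,[L_{ij},[L_{ij},Y]].
\]
Similarly, \(A\cap\{i,j\}=\varnothing\) implies \([q_A,L_{ij}]=0\). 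This gives \([L_{ij},[L_{ij},Y]]=q_A[L_{ij},[L_{ij},X_{ij}]]\).

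The remaining step is the local two-orbital identity
\[
[L_{ij},[L_{ij},X_{ij}]]=-4X_{ij},
\]
which also appears in the proof of \cref{cor:sector-uniform-seed}. To verify it, I would reduce to the 16-dimensional local Fock space of orbitals \(i,j\). There, \(X_{ij}\) is supported on the local two-electron singlet block with ordered basis \((P_i^\dagger\ket0,\,P_j^\dagger\ket0,\,\ket{S_{ij}})\). On this block \(X_{ij}=F_{uv}\), and by the relations \(L_{ij}u=-\sqrt2\,w\) and \(L_{ij}v=\sqrt2\,w\) from \cref{lem:three-dimensional-block}, \(L_{ij}\) acts as \(\sqrt2(F_{wu}+F_{vw})\) up to orientation. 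Identifying \(\so(3)\) with \(\R^3\) under the cross product, \(\ad_a^2 b=-|a|^2 b\) whenever \(b\perp a\). Here \(L_{ij}\) has squared norm \(2+2=4\) and no \(F_{uv}\) component, so the identity gives \(-4X_{ij}\). Outside this block \(X_{ij}\) vanishes, and \(L_{ij}\) preserves the local particle-number and spin blocks, so the double commutator also vanishes there. Tensoring with the spectator modes then yields the full-Fock-space identity. This sign and normalization check is the only delicate point.

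Combining the three steps,
\[
Y+\tfrac14[C_{\ell;ij},[C_{\ell;ij},Y]]=(1-D_\ell^2)\,q_AX_{ij}=q_\ell Y,
\]
where \(q_\ell=1-D_\ell^2\) is the single-occupation projector of \eqref{eq:C-app-single-projector} and commutes with \(q_AX_{ij}\). This proves the lemma on the full Fock space.
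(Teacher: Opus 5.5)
Your proof is correct and follows the paper's argument essentially step for step. You factor \(D_\ell\) out of the double commutator, commute \(q_A\) past \(L_{ij}\), invoke \([L_{ij},[L_{ij},X_{ij}]]=-4X_{ij}\), use \(q_\ell=1-D_\ell^2\), and obtain membership from \cref{lem:C-app-transport}. The only addition is your explicit singlet-block \(\so(3)\) check of the local identity, which the paper simply quotes from the proof of \cref{cor:sector-uniform-seed}; your signs and normalization there (\(|L_{ij}|^2=4\), with \(X_{ij}=F_{uv}\) orthogonal to \(L_{ij}\)) are correct.
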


\begin{proof}
The two-orbital CAR identity used in the proof of
\maincref{cor:sector-uniform-seed} is
\[
[L_{ij},[L_{ij},X_{ij}]]=-4X_{ij}.
\]
Every factor of \(q_A\) commutes with \(L_{ij}\), so
\([L_{ij},[L_{ij},Y]]=-4Y\). Moreover,
\([D_\ell,Y]=[D_\ell,L_{ij}]=0\), and therefore
\begin{align*}
[C_{\ell;ij},[C_{\ell;ij},Y]]
&=D_\ell^2[L_{ij},[L_{ij},Y]]\\
&=-4D_\ell^2Y.
\end{align*}
Using \(q_\ell=1-D_\ell^2\) proves
\eqref{eq:C-app-condition-multiplied}. By \cref{lem:C-app-transport},
\(C_{\ell;ij}\) belongs to the generated algebra. The right-hand side thus
uses only real linear combinations and commutators of available elements.
\end{proof}

\begin{corollary}[All nonempty single-occupation conditions]
\label{cor:C-app-multiconditioned}
Let \(M\ge3\) and fix pairwise distinct \(c,p,q\in[M]\).  For
\(i,j\in[M]\) with \(i\ne j\) and every nonempty
\(A\subseteq[M]\setminus\{i,j\}\), set \(q_A=\prod_{a\in A}q_a\).  Then
\begin{equation}
q_AX_{ij}\in\widehat{\mathfrak g}_C(c;p,q).
\label{eq:C-app-all-qA-X}
\end{equation}
\end{corollary}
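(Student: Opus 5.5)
We prove \cref{cor:C-app-multiconditioned} by induction on \(|A|\), starting from the single-condition membership of \cref{lem:C-app-single-X} and adding one spectator condition at a time with \cref{lem:C-app-conditioning}.

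\emph{Base case.} Let \(|A|=1\), say \(A=\{a\}\) with \(a\notin\{i,j\}\). Then \((a,i,j)\) is a pairwise-distinct triple in \([M]\). Applying \cref{lem:C-app-single-X} with initial seed \((c_0,p_0,q_0)=(c,p,q)\) and target triple \((a,i,j)\) gives
\[
q_aX_{ij}\in\widehat{\mathfrak g}_C(c;p,q).
\]
This covers every singleton \(A\), including the case \(M=3\), where \(A\) is forced to be the single remaining orbital.

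\emph{Inductive step.} Suppose \(|A|\ge2\). Choose any \(\ell\in A\) and set \(A'=A\setminus\{\ell\}\). Then \(A'\) is nonempty, \(A'\subseteq[M]\setminus\{i,j\}\), and \(\ell\notin A'\cup\{i,j\}\). By the induction hypothesis,
\[
Y=q_{A'}X_{ij}\in\widehat{\mathfrak g}_C(c;p,q).
\]
The hypotheses of \cref{lem:C-app-conditioning} are therefore met, with initial seed \((c_0,p_0,q_0)=(c,p,q)\), pair \(i,j\), and set \(A'\). It yields \(q_\ell Y\in\widehat{\mathfrak g}_C(c;p,q)\).

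To finish, we identify \(q_\ell Y\) with \(q_AX_{ij}\). The projectors \(q_a\) for distinct \(a\) are functions of the commuting occupations \(E_{aa}\), so they commute with one another. Hence \(q_\ell q_{A'}=q_A\) as operators, and \(q_\ell Y=q_AX_{ij}\). The order in which elements of \(A\) are added therefore does not matter.

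The argument has no real obstacle; the substantive CAR identity was already established in \cref{lem:C-app-single-X}. The one point needing care is that every conditioned coordinate \(C_{\ell;ij}\) used in the inductive step must lie in the algebra generated by the single fixed seed \(C_{c;pq}\). This holds by \cref{lem:C-app-transport}, which is invoked inside \cref{lem:C-app-conditioning}. We also note that the induction never needs an orbital outside \(A\cup\{i,j\}\), so the argument stays within the stated range \(M\ge3\).
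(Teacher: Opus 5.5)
Your proposal is correct and matches the paper's proof essentially step for step. The base case $|A|=1$ is the single-occupation-conditioned perfect-pairing lemma applied to the target triple $(a,i,j)$. Each further condition $q_\ell$ with $\ell\notin A'\cup\{i,j\}$ is then attached by the spectator-multiplication lemma, and $q_\ell q_{A'}=q_A$ holds because the occupation projectors commute.
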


\begin{proof}
The case \(\lvert A\rvert=1\) is \cref{lem:C-app-single-X}. If
\(q_AX_{ij}\) is available and \(\ell\notin A\cup\{i,j\}\),
\cref{lem:C-app-conditioning} gives
\(q_\ell q_AX_{ij}=q_{A\cup\{\ell\}}X_{ij}
\in\widehat{\mathfrak g}_C(c;p,q)\).
Induction proves the claim for every nonempty allowed \(A\).
\end{proof}

\begin{theorem}[Single \(C\) generates every perfect-pairing coordinate at
positive spin]
\label{thm:C-app-to-X}
Let \(M\ge3\), let \((N,S)\) be physically allowed with \(S>0\), and let
\(c,p,q\in[M]\) be pairwise distinct.  Then, for every
\(i,j\in[M]\) with \(i\ne j\),
\begin{equation}
\boxed{
X_{ij}^{(M,N,S)}\in\mathfrak g_C^{(M,N,S)}(c;p,q).
}
\label{eq:C-app-X-in-sector-algebra}
\end{equation}
No interior assumption on \(b\) and \(h\) is required.
\end{theorem}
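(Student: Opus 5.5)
The plan is to remove the spectator conditions by inclusion--exclusion. Corollary~\ref{cor:C-app-multiconditioned} already supplies every \(q_AX_{ij}\) with \(A\) nonempty. The only piece of \(X_{ij}\) it misses is the component in which no spectator orbital is singly occupied, and I expect that component to vanish on every sector with \(S>0\).

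Fix \(i\ne j\) and set \(\Omega=[M]\setminus\{i,j\}\). The number operators \(q_a\) commute with one another and with \(X_{ij}\), since \(a\notin\{i,j\}\). Expanding the product therefore gives the full-Fock-space identity
\[
\prod_{a\in\Omega}(1-q_a)\,X_{ij}
=\sum_{A\subseteq\Omega}(-1)^{|A|}q_AX_{ij}.
\]
Rearranging,
\[
X_{ij}=\prod_{a\in\Omega}(1-q_a)\,X_{ij}
-\sum_{\varnothing\ne A\subseteq\Omega}(-1)^{|A|}q_AX_{ij}.
\]
By Corollary~\ref{cor:C-app-multiconditioned}, every term of the second sum lies in \(\widehat{\mathfrak g}_C(c;p,q)\). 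The case \(\Omega=\varnothing\) cannot occur, because \(M\ge3\).

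All generators involved conserve \(\widehat N\) and commute with every \(\widehat S_\mu\). Hence restriction to the highest-weight multiplicity space is a Lie homomorphism on \(\widehat{\mathfrak g}_C(c;p,q)\): the cross-sector terms \(\Pi A(1-\Pi)B\Pi\) vanish for spin scalars, as in Supplemental Sec.~1.3. It maps \(\widehat{\mathfrak g}_C(c;p,q)\) into \(\mathfrak g_C^{(M,N,S)}(c;p,q)\), once the full orbital span is replaced by \(\rho_{M,N,S}(\so(M))\), which is the same algebra after restriction. It therefore remains to show that
\[
\Pi_{M,N,S}\prod_{a\in\Omega}(1-q_a)X_{ij}\Pi_{M,N,S}=0
\qquad\text{whenever } S>0.
\]

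For this I will argue in the occupation basis. Take any basis determinant on which \(\prod_a(1-q_a)X_{ij}\) acts nontrivially. Then \(X_{ij}\) must move a pair between \(i\) and \(j\). This requires one of \(i,j\) to be doubly occupied and the other empty, both before and after the move: a singly occupied orbital blocks pair creation and forbids pair annihilation. In addition, the factor \(\prod_a(1-q_a)\) forces every spectator in \(\Omega\) to be empty or doubly occupied in the output.

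Consequently the output vector, and likewise the input vector by skew-adjointness, has seniority zero. By the occupation-seniority spin bound of Appendix~\ref{app:strict-localization}, a seniority-zero occupation pattern spans only total spin \(0\). The range of the operator is therefore orthogonal to every \(S>0\) sector, and its compression to \(\Hsp\) vanishes. This gives \(X_{ij}^{(M,N,S)}\in\mathfrak g_C^{(M,N,S)}(c;p,q)\).

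Nothing in the argument uses \(b,h\ge1\). On a boundary, \(X_{ij}\) restricts to zero anyway, which is consistent with the conclusion.

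The main point requiring care is this last vanishing step. The operator \(\prod_a(1-q_a)X_{ij}\) is a spin scalar, so it preserves each spin sector. Its range on the full Fock space lies in the seniority-zero span, which is contained in the \(S=0\) subspace. Restricting to an \(S>0\) sector therefore gives zero without any cross-term issues. The remaining bookkeeping, namely commuting \(q_a\) with \(X_{ij}\) and checking the signs of the inclusion--exclusion, is routine.
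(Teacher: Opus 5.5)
Your proposal is correct and matches the paper's proof. You expand over the commuting spectator projectors \(q_a\) by inclusion--exclusion, obtain the nonempty-\(A\) terms from \cref{cor:C-app-multiconditioned}, and show that the remaining term vanishes on every \(S>0\) sector because its range has seniority zero. The paper makes your occupation-basis check via the identities \(q_iX_{ij}=q_jX_{ij}=0\), which turn \(\prod_{k\ne i,j}(1-q_k)X_{ij}\) into \(P_{\mathrm{zsen}}X_{ij}\) with \(\Pi_{M,N,S}P_{\mathrm{zsen}}=0\); this is the same argument.
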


\begin{proof}
The zero-seniority projector is
\begin{equation}
P_{\mathrm{zsen}}=\prod_{a=1}^M(1-q_a).
\label{eq:C-app-zsen-projector}
\end{equation}
Its range is spanned by determinants having only empty and doubly occupied
spatial orbitals, hence lies entirely in total spin zero.  Therefore
\begin{equation}
\Pi_{M,N,S}P_{\mathrm{zsen}}=0
\qquad(S>0).
\label{eq:C-app-zsen-vanishes}
\end{equation}
The perfect-pairing coordinate obeys the full-space identities
\begin{equation}
q_iX_{ij}=q_jX_{ij}=0,
\qquad
[q_k,X_{ij}]=0\quad(k\notin\{i,j\}).
\label{eq:C-app-q-X-identities}
\end{equation}
Consequently, after restriction to the target positive-spin sector,
\begin{align}
X_{ij}^{(M,N,S)}
&=\left[1-\prod_{k\ne i,j}(1-q_k)\right]
X_{ij}^{(M,N,S)}\notag\\
&=\sum_{\varnothing\ne A\subseteq[M]\setminus\{i,j\}}
(-1)^{\lvert A\rvert+1}(q_AX_{ij})^{(M,N,S)}.
\label{eq:C-app-inclusion-exclusion}
\end{align}
Every term in the last line belongs to the restricted generated algebra by
\cref{cor:C-app-multiconditioned}.  This proves
\cref{eq:C-app-X-in-sector-algebra}.
\end{proof}

\begin{corollary}[Positive-spin interior universality of one conditioned single]
\label{cor:C-app-open-shell-universal}
Let \(M\ge3\), let \((N,S)\) be physically allowed with \(b,h\ge1\)
and \(S>0\), and let \(c,p,q\in[M]\) be pairwise distinct.  Then
\begin{equation}
\boxed{
\mathfrak g_C^{(M,N,S)}(c;p,q)=\so(d_{M,N,S}).
}
\label{eq:C-app-open-shell-so}
\end{equation}
Together with the boundary part of \maincref{thm:main} and the one-particle and
one-hole cases where orbital rotations are already complete, this covers
every physically allowed positive-spin sector, including the trivial
\(d=1\) cases.
\end{corollary}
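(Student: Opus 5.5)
The plan is to reduce the corollary to the interior completion already established. Fix \(M\ge3\), \(b,h\ge1\), \(S>0\), and a pairwise distinct triple \(c,p,q\). Throughout, work in the highest-weight component \(M_S=S\); this costs nothing, because \cref{lem:magnetic-equivalence} identifies the restricted real Lie algebras of all magnetic components by orthogonal conjugation.

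\emph{Step 1.} Apply \cref{thm:C-app-to-X}. It places every restricted \(X_{ij}^{(M,N,S)}\), \(i\ne j\), inside \(\mathfrak g_C^{(M,N,S)}(c;p,q)\); positivity of the spin is the only hypothesis it needs. In particular one fixed coordinate, say \(X_{12}^{(M,N,S)}\), lies in the restricted algebra. Since \(\rho_{M,N,S}(\so(M))\subseteq\mathfrak g_C^{(M,N,S)}(c;p,q)\) by definition,
\[
\Lie\bigl(\rho_{M,N,S}(\so(M)),X_{12}^{(M,N,S)}\bigr)\subseteq\mathfrak g_C^{(M,N,S)}(c;p,q)\subseteq\so(d_{M,N,S}).
\]
The final inclusion holds because every generator is admissible, and hence restricts to a real skew matrix by \cref{prop:C-symmetry} together with \cref{lem:magnetic-equivalence}.

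\emph{Step 2.} The hypotheses \(b,h\ge1\) are exactly those of \cref{thm:interior-full-pool} and \cref{thm:interior-single-seed}. Take the complete orbital graph, or note that any connected graph generates \(\rho_{M,N,S}(\so(M))\) by \cref{eq:connected-backbone}. Then the left-hand algebra already equals \(\so(d_{M,N,S})\), so both inclusions are equalities and \cref{eq:C-app-open-shell-so} follows.

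\emph{Step 3.} The coverage remark requires a short case split over positive-spin sectors, organized by \((b,h)\). The interior is handled above. On a nontrivial boundary with \(bh=0\) and \(2\le k\le M-2\), \cref{eq:C-coordinatewise-universality} from \cref{thm:main}(ii) applies. For \(k=1\) or \(M-1\), \cref{thm:main}(iii) shows that orbital rotations already give \(\so(M)=\so(d)\). For \(k=0\) or \(M\), \(d=1\) and the target algebra is zero, so the equality is trivial. The case \(k=0\) requires \(S=0\) and is therefore excluded here anyway; the case \(k=M\) with \(S=M/2\) is trivially included.

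The main point of care is Step 1. \Cref{thm:C-app-to-X} is proved by inclusion--exclusion after restriction, so one must check that restricting a full-Fock Lie word equals the corresponding Lie word of the restrictions. This holds because every operator involved commutes with \(\widehat N\) and the full spin action, so the sector projector commutes with them and no cross-sector terms arise. With that granted, the remaining steps are bookkeeping, and no new commutator calculation is needed.
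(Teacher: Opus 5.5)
Your proposal is correct and follows the paper's proof essentially verbatim: it too uses the positive-spin perfect-pairing theorem to place a chosen $X_{ij}^{(M,N,S)}$ in $\mathfrak g_C^{(M,N,S)}(c;p,q)$, then sandwiches this algebra between the interior single-coordinate result, which already gives $\so(d_{M,N,S})$, and $\so(d_{M,N,S})$ itself. Your case split for the coverage remark ($k=0$ forces $S=0$, and $k=M$ gives $d=1$) and your check that restriction commutes with Lie words are correct bookkeeping that the paper leaves implicit.
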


\begin{proof}
By \cref{thm:C-app-to-X}, the algebra generated by the orbital backbone and
one elementary \(C\) contains any chosen \(X_{ij}^{(M,N,S)}\).
Corollary~\mainref{thm:interior-single-seed} then gives
\begin{align*}
\so(d_{M,N,S})
&=\Lie_{\R}\bigl(\rho_{M,N,S}(\so(M)),
X_{ij}^{(M,N,S)}\bigr)\\
&\subseteq\mathfrak g_C^{(M,N,S)}(c;p,q)\\
&\subseteq\so(d_{M,N,S}),
\end{align*}
which proves equality.
\end{proof}

\begin{proposition}[Common fixed vector in every nontrivial singlet sector]
\label{prop:C-app-singlet-obstruction}
Let \(M\ge3\), let \(S=0\), \(N=2b\), and \(1\le b\le M-1\), and fix
pairwise distinct \(c,p,q\in[M]\).  Define
\begin{equation}
\Delta^\dagger=\sum_{a=1}^MP_a^\dagger,
\qquad
\ket{\Omega_b}=(\Delta^\dagger)^b\ket0.
\label{eq:C-app-omega}
\end{equation}
Then \(\ket{\Omega_b}\ne0\), it lies in \(\Hsp_{M,2b,0}\), and every orbital
rotation and every elementary conditioned single annihilates it.  Hence
\begin{equation}
\boxed{\begin{aligned}
\mathfrak g_C^{(M,2b,0)}(c;p,q)
&\subseteq\so(\Omega_b^\perp)\\
&\cong\so(d_{M,2b,0}-1)
\subsetneq\so(d_{M,2b,0}).
\end{aligned}}
\label{eq:C-app-singlet-stabilizer}
\end{equation}
\end{proposition}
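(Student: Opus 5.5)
We must prove three things about \(\ket{\Omega_b}\): it is nonzero, it lies in the singlet sector, and it is annihilated by every \(L_{ab}\) and every \(C_{c;pq}\). The stabilizer inclusion then follows formally.

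\textbf{Nonzero and in the sector.} Since the \(P_a^\dagger\) are even, they commute with one another, and \((P_a^\dagger)^2=0\). Hence
\[
(\Delta^\dagger)^b=b!\sum_{|B|=b}\prod_{a\in B}P_a^\dagger .
\]
This is a nonzero sum of orthogonal zero-seniority determinants whenever \(1\le b\le M\); in our range \(b\le M-1\). Each \(P_a^\dagger\) creates an on-site singlet, so it commutes with all \(\widehat S_\mu\). Therefore \(\widehat S_\mu\ket{\Omega_b}=0\), and \(\ket{\Omega_b}\) has \(N=2b\), \(S=0\), \(M_S=0\). Its coefficients are real, so it lies in the real sector \(\Hsp_{M,2b,0}\).

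\textbf{Orbital rotations.} Compute \([E_{pq},P_a^\dagger]=\delta_{qa}(a_{p\alpha}^\dagger a_{q\beta}^\dagger+a_{q\alpha}^\dagger a_{p\beta}^\dagger)\). This gives
\[
[L_{pq},\Delta^\dagger]=(a_{p\alpha}^\dagger a_{q\beta}^\dagger+a_{q\alpha}^\dagger a_{p\beta}^\dagger)-(a_{q\alpha}^\dagger a_{p\beta}^\dagger+a_{p\alpha}^\dagger a_{q\beta}^\dagger)=0 .
\]
Equivalently, \(\Delta^\dagger\) corresponds to the \(O(M)\)-invariant form \(\sum_a e_a\otimes e_a\) in \(\operatorname{Sym}^2V\). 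Since also \(L_{pq}\ket0=0\), it follows that \(L_{pq}\ket{\Omega_b}=0\) for all \(p,q\).

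\textbf{Conditioned singles.} Write \(C_{c;pq}=(n_c-1)L_{pq}\). Then \(C_{c;pq}\ket{\Omega_b}=(n_c-1)L_{pq}\ket{\Omega_b}=0\) directly, so no further computation is needed. For the real restricted setting, recall that \(n_c\) and \(L_{pq}\) commute, so the operator is well defined on the sector.

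\textbf{Lie-algebraic consequence.} The set of real skew operators on \(\Hsp\) that annihilate \(\Omega_b\) is closed under linear combinations and commutators. Since \([X,Y]\Omega_b=XY\Omega_b-YX\Omega_b=0\), this set is a Lie subalgebra. It contains every restricted generator, so it contains \(\mathfrak g_C^{(M,2b,0)}(c;p,q)\).

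A skew operator \(K\) with \(K\Omega_b=0\) also satisfies \(\langle \Omega_b,K\,\cdot\rangle=-\langle K\Omega_b,\cdot\rangle=0\), so its range lies in \(\Omega_b^\perp\). Hence \(K\) lies in \(\so(\Omega_b^\perp)\oplus0\cong\so(d-1)\). This is a proper subalgebra of \(\so(d)\) because \(d\ge1\), and the dimension strictly drops whenever \(d\ge2\).

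The only potential obstacle is the sign bookkeeping in \([L_{pq},\Delta^\dagger]=0\). I would settle it with the invariant-form argument rather than a raw CAR computation.
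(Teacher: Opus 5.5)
Up to the last step your argument is the paper's own: the same pair expansion for nonvanishing, spin commutation of each \(P_a^\dagger\) for membership in \(\Hsp_{M,2b,0}\), \(C_{c;pq}\ket{\Omega_b}=(n_c-1)L_{pq}\ket{\Omega_b}=0\), and closure of the annihilator under commutators. Your direct computation \([L_{pq},\Delta^\dagger]=0\) is correct, so no sign issue remains. It is the infinitesimal form of the paper's group-level identity \(\mathcal U(Q)\Delta^\dagger\mathcal U(Q)^{-1}=\Delta^\dagger\) for \(Q\in SO(M)\).

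The gap is the strict inclusion \(\so(d-1)\subsetneq\so(d)\) in the boxed conclusion. You justify properness by \(d\ge1\), which is false, since both algebras are zero at \(d=1\). You then note that strictness needs \(d\ge2\) but never prove it, and this is exactly where the hypothesis \(1\le b\le M-1\) is used. Everything else you wrote goes through verbatim at \(b=M\). There the filled shell \(\ket{\Omega_M}\ne0\) is a singlet annihilated by every \(L_{pq}\) and \(C_{c;pq}\), yet \(d_{M,2M,0}=1\) and there is no obstruction. So the conclusion does not follow from the steps you gave.

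The repair is short and uses objects you already have. The \(\binom Mb\) closed-shell vectors \(\prod_{a\in B}P_a^\dagger\ket0\) in your expansion are orthonormal, and each lies in \(\Hsp_{M,2b,0}\). Hence \(d_{M,2b,0}\ge\binom Mb\ge M\ge3\) whenever \(1\le b\le M-1\). The paper instead evaluates the dimension formula, \(d_{M,2b,0}=\binom Mb^2\frac{M+1}{(b+1)(M-b+1)}>1\). You could also observe that \(S=0\) with \(1\le b\le M-1\) is an interior sector (\(h=M-b\ge1\)), so the interior dimension-gap lemma in the appendix gives \(d>M\). Once \(d\ge2\) is established, either your count \(\binom{d-1}2<\binom d2\) or the paper's remark finishes the proof. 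That remark is that the defining representation of \(\so(d)\) has no nonzero common fixed vector for \(d>1\).
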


\begin{proof}
Even pair-creation operators on different orbitals commute and square to
zero, so
\begin{align}
\ket{\Omega_b}
&=b!\sum_{\substack{I\subseteq[M]\\\lvert I\rvert=b}}
\left(\prod_{a\in I}P_a^\dagger\right)\ket0,\notag\\
\|\Omega_b\|^2&=(b!)^2\binom Mb>0.
\label{eq:C-app-omega-norm}
\end{align}
Each \(P_a^\dagger\) creates an on-site spin singlet, proving that
\(\ket{\Omega_b}\in\Hsp_{M,2b,0}\).  For an orbital rotation
\(Q\in SO(M)\), second quantization gives
\begin{align*}
\mathcal U(Q)\Delta^\dagger\mathcal U(Q)^{-1}
&=\sum_{a,j,k}Q_{ja}Q_{ka}
a_{j\alpha}^\dagger a_{k\beta}^\dagger\\
&=\sum_{j,k}(QQ^{\mathsf T})_{jk}
a_{j\alpha}^\dagger a_{k\beta}^\dagger
=\Delta^\dagger.
\end{align*}
Differentiating the invariance of \(\ket{\Omega_b}\) gives
\(L_{pq}\ket{\Omega_b}=0\) for every \(p\ne q\), and therefore
\begin{equation}
C_{c;pq}\ket{\Omega_b}
=D_cL_{pq}\ket{\Omega_b}=0.
\end{equation}
All Lie words in the generators annihilate the same vector.  Finally,
\begin{align*}
d_{M,2b,0}
&=\binom Mb^2-\binom M{b+1}\binom M{b-1}\\
&=\binom Mb^2\frac{M+1}{(b+1)(M-b+1)}>1.
\end{align*}
The defining representation of \(\so(d)\) has no nonzero common fixed
vector when \(d>1\), proving the strict obstruction.
\end{proof}

%% file: supplementary/06_numerics.tex
\section{Finite-dimensional and molecular cross-checks}
\label{app:dla-checks}

\subsection{Restricted DLA cross-checks}

As a check on signs, spin projection, boundary conditions, and exceptional
small sectors, we formed restricted generator matrices and computed their
commutator closures.  Except for the explicitly labelled \(d=70\)
sector-native calculation, the calculations began with full-Fock-space
fermionic operators, restricted first to fixed
\((N_\alpha,N_\beta)\), equivalently fixed \((N,M_S)\), and then to a fixed
\(S^2\) eigenspace.  No
Hamiltonian eigenvector or exact-root projector entered this construction.
Each rank below is reported as
\(\dim\mathfrak g/\dim\so(d)=\dim\mathfrak g/\binom d2\).

The one-seed checks use a relative rank tolerance of $10^{-10}$.
Pooled calculations vary this tolerance from $10^{-8}$ to $10^{-12}$ and
use three independent log-uniform generator rescalings, three generator
permutations and three random real orthogonal sector-basis rotations.
The $F$-boundary cases are also compared across two closure algorithms and
a direct span-closure test.

\begingroup
\footnotesize
\renewcommand{\arraystretch}{1.18}
\setlength{\tabcolsep}{5pt}
\begin{longtable}{@{}llll@{}}
\caption{Representative fixed-sector DLA checks.  ``One seed'' means that
each listed nonzero coordinate was tested separately with the connected
singlet-single backbone; parentheses give successful over tested seeds.
Boundary \(Y\) counts refer to disjoint coordinates unless shared coordinates
are stated explicitly.  For each listed sector,
\(\mathcal F:=\sum_{p<q}(\mathcal F_{pq}|_{\Hsp})\) is the real span of
the restricted local five-axis \(F\)-block tangent spaces defined in
Section~\ref{app:F-boundary}. Here \(\Ypooldisj\) is the disjoint subpool
and \(\Ypool\) includes both disjoint and shared-orbital coordinates.}
\label{tab:dla-checks}\\
\toprule
\tblcell{0.19\textwidth}{Sector}
& \tblcell{0.09\textwidth}{\(d\) / \(\binom d2\)}
& \tblcell{0.19\textwidth}{Matrix provenance}
& \tblcell{0.35\textwidth}{Selected closure ranks} \\
\midrule
\endfirsthead
\multicolumn{4}{@{}l}{\textbf{Table \thetable} \textit{(continued)}}\\
\toprule
\tblcell{0.19\textwidth}{Sector}
& \tblcell{0.09\textwidth}{\(d\) / \(\binom d2\)}
& \tblcell{0.19\textwidth}{Matrix provenance}
& \tblcell{0.35\textwidth}{Selected closure ranks} \\
\midrule
\endhead
\midrule
\multicolumn{4}{r@{}}{\textit{Continued on the next page}}\\
\endfoot
\bottomrule
\endlastfoot
\tblcell{0.19\textwidth}{\(M=4,N=2,S=1\), electron boundary}
& \tblcell{0.09\textwidth}{\(6/15\)}
& \tblcell{0.19\textwidth}{Full-space fermionic projection}
& \tblcell{0.35\textwidth}{\(\Lpool:6/15\); \(\Lpool+\Xpool:6/15\); \(\mathcal F:6/15\); \(\Lie(\Lpool,Y):15/15\) for one disjoint seed \((3/3)\).} \\
\tblcell{0.19\textwidth}{\(M=4,N=6,S=1\), hole boundary}
& \tblcell{0.09\textwidth}{\(6/15\)}
& \tblcell{0.19\textwidth}{Full-space fermionic and five-axis projection}
& \tblcell{0.35\textwidth}{\(\Lpool:6/15\); \(\mathcal F:6/15\); \(\Lie(\Lpool,Y):15/15\) for one disjoint seed \((3/3)\) and one shared-orbital seed \((12/12)\).} \\
\tblcell{0.19\textwidth}{\(M=6,N=3,S=3/2\), electron boundary}
& \tblcell{0.09\textwidth}{\(20/190\)}
& \tblcell{0.19\textwidth}{Full-space fermionic projection}
& \tblcell{0.35\textwidth}{\(\Lpool:15/190\); \(\mathcal F:15/190\); \(\Lie(\Lpool,Y):190/190\) for one disjoint seed \((45/45)\).} \\
\tblcell{0.19\textwidth}{\(M=3,N=3,S=1/2\), interior, \(D=2\)}
& \tblcell{0.09\textwidth}{\(8/28\)}
& \tblcell{0.19\textwidth}{Full-space fermionic projection}
& \tblcell{0.35\textwidth}{\(\Lpool:3/28\); \(\Lie(\Lpool,X):28/28\) for one seed \((3/3)\).} \\
\tblcell{0.19\textwidth}{\(M=4,N=4,S=1\), interior}
& \tblcell{0.09\textwidth}{\(15/105\)}
& \tblcell{0.19\textwidth}{Full-space fermionic projection}
& \tblcell{0.35\textwidth}{\(\Lpool:6/105\); \(\Lie(\Lpool,X):105/105\) for one seed \((6/6)\); \(\Lpool+\Ypooldisj:15/105\); \(\Lie(\Lpool,Y_{\rm shared}):105/105\) \((12/12)\); and \(\Lpool+\Ypool:105/105\).} \\
\tblcell{0.19\textwidth}{\(M=5,N=2,S=0\), interior, \(D=5\)}
& \tblcell{0.09\textwidth}{\(15/105\)}
& \tblcell{0.19\textwidth}{Full-space fermionic projection}
& \tblcell{0.35\textwidth}{\(\Lpool:10/105\); \(\Lie(\Lpool,X):105/105\) for one seed \((10/10)\).} \\
\tblcell{0.19\textwidth}{\(M=8,N=4,S=2\), electron boundary}
& \tblcell{0.09\textwidth}{\(70/2415\)}
& \tblcell{0.19\textwidth}{Sector-native restricted matrices}
& \tblcell{0.35\textwidth}{\(\Lpool+\Xpool:28/2415\); the full intermediate-triplet coordinate pool gives \(\Lpool+\Ypool:2415/2415\).} \\
\end{longtable}
\endgroup
\FloatBarrier

The ranks were stable under the tolerance, rescaling and basis changes
described above.  The $F$-boundary calculations also agreed across
the independent closure tests.  The counts \(3\) and \(45\) exhaust the disjoint
pair-pair coordinates in the listed boundary sectors.  The \(F\)-family rows
correspond, after translating the spin-label convention of
Ref.~\maincite{Anselmetti2021}, to representative edge entries in its
Tables~III--IV.  The \(M=3\) and \(M=5\) interior rows exercise the
\(D=2\) and \(D\ge3\) slice dimensions,
respectively.

\subsection{Fixed-coordinate boundary mechanism validation}
\label{app:fixed-coordinate-vqd}

\subsubsection{Systems and protocol}
\label{si:fixed-coordinate-vqd-methods}

This separate ten-start suite tests fixed completing coordinates on electron
and hole boundaries; it does not supply the main-text resource comparisons.
The H$_2$, H$_3$, and H$_4$ calculations use all 6-31G spatial orbitals,
interleaved $(p\alpha,p\beta)$ Jordan--Wigner ordering, and the systems in
Table~\ref{tab:vqd-systems}. Their chemically specified references are
\begin{align*}
T(0,1)&=\frac{a^\dagger_{0\alpha}a^\dagger_{1\beta}
                 +a^\dagger_{0\beta}a^\dagger_{1\alpha}}{\sqrt2}\ket{\mathrm{vac}},\\
Q(0,1,2)&=\frac{\ket{0\beta,1\alpha,2\alpha}
+\ket{0\alpha,1\beta,2\alpha}
+\ket{0\alpha,1\alpha,2\beta}}{\sqrt3},\\
D(0,1,2,3)&=\ket{0\alpha,1\alpha,2\alpha,3\alpha}.
\end{align*}
Creation operators in the occupation kets are ordered by increasing spatial
index; the explicit $T(0,1)$ definition fixes its relative fermionic sign.
The O$_2$ Hamiltonian and reference are those specified in
Sec.~\ref{si:benchmark-systems-details}: canonical ROHF/cc-pVDZ at
1.21~\AA{}, with five doubly occupied orbitals frozen and a
$\pi$-CAS(6e,4o) triplet reference.

Each repeat applies all $L_{pq}$ in lexicographic order, followed by either
no completing coordinate, all $X_{pq}$, one fixed
$Y_\star=Y_{\{0,1\},\{0,2\}}$, or one fixed $C_\star=C_{0;1,2}$.
These coordinates are chosen before optimization. Projected-zero $X$
matrices are removed, so $L$ and $L+X$ have identical effective products,
initial points, and parameter counts on these boundaries.
Sequential VQD uses the objective in Eq.~\mainref{eq:vqd-objective} with
$\beta=1$~Ha. Each lower-root deflator is its own lowest-objective
variational output, and each root's restarts are selected by that objective.
Exact eigenvectors and overlaps enter only post-hoc error and fidelity
checks, never reference, coordinate, restart, or root selection.

Each point uses ten starts: zero, a linear ramp from $-0.05$ to $0.05$,
and eight Gaussian draws with standard deviation $0.05$ and base seed
20260707. L-BFGS-B uses bounds $[-\pi,\pi]$, at most 1000 iterations,
$\mathtt{ftol}=10^{-10}$, and $\mathtt{gtol}=10^{-8}$.
Repetition grids are $1,2,3,4,8$ for H$_2$, $1,2,4$ for H$_3$, and
$1,2,3,4,6$ for H$_4$ and O$_2$.
Success requires $|\Delta E_n|\le0.1$~mHa and target fidelity at least
$0.99$. For O$_2$, degenerate roots 1 and 2 use their joint eigenspace
projector; individual projector fidelities do not establish collective
coverage of that subspace.

\subsubsection{Fixed-coordinate results}

Figure~\ref{fig:vqd-depth} summarizes the error curves for all four systems,
and Table~\ref{tab:vqd-systems} records the largest tested repetition
counts. The H$_2$ and H$_3$ tests use non-highest magnetic components;
O$_2$ tests the hole boundary. The fixed $Y$ and $C$ seeds remove the
observed orbital-only plateaus. The finite-product success counts remain
separate from DLA completeness and from the twenty-chain benchmark
statistics below.

\begin{figure}[!htbp]
\centering
\includegraphics[width=\textwidth]{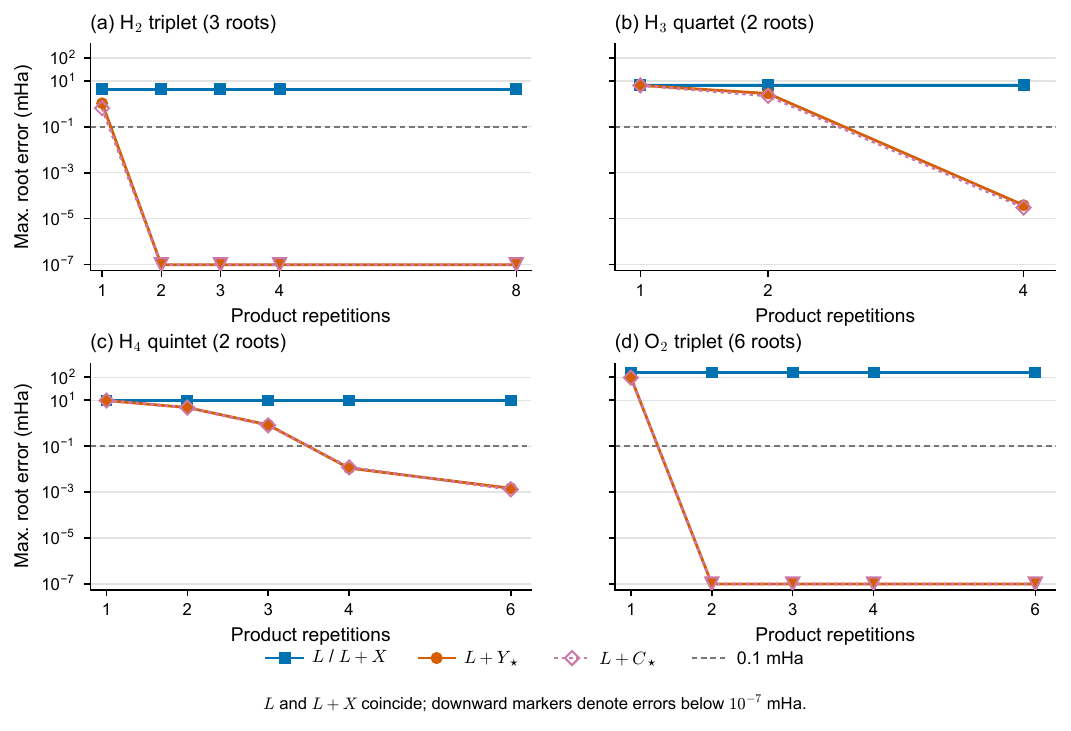}
\caption{\textbf{Fixed completing coordinates remove electron- and
hole-boundary plateaus.} Each panel shows the maximum error over
objective-selected roots versus product repetitions in the separate
fixed-coordinate suite. $L$ and $L+X$ coincide because $X$ projects to zero.
The dotted line marks 0.1~mHa; values below $10^{-7}$~mHa are displayed at
that floor. Roots are selected individually by their VQD objectives,
not by exact-state diagnostics.}
\label{fig:vqd-depth}
\end{figure}

\begin{table}[!htbp]
\centering\footnotesize
\caption{Fixed-coordinate mechanism-validation systems and endpoints.
The lower panel reports maximum objective-selected root energy errors
(in mHa) at the largest tested repetition count.
$L+X$ has the same errors as $L$; errors below $10^{-7}$~mHa are reported
as $<10^{-7}$.
Success counts for $L/Y/C$ sum root--restart pairs, with denominators
30 for H$_2$ and 20 for H$_3$ and H$_4$; they are not whole-chain success
counts. A dash indicates that this statistic is not reported for the
O$_2$ suite.}
\label{tab:vqd-systems}
\begin{tabular*}{\textwidth}{@{\extracolsep{\fill}}lll@{}}
\toprule
System and geometry & $(M,N,S,M_S);d$ & Roots; reference\\
\midrule
H$_2$, $R=0.80$~\AA{} & $(4,2,1,0);6$ & $0$--$2$; $T(0,1)$\\
H$_3$, equilateral $R=1.50$~\AA{} & $(6,3,3/2,1/2);20$ & $0$--$1$; $Q(0,1,2)$\\
H$_4$, linear spacing $R=1.50$~\AA{} & $(8,4,2,2);70$ & $0$--$1$; $D(0,1,2,3)$\\
O$_2$, $R=1.21$~\AA{} & $(4,6,1,1);6$ & $0$--$5$; $\pi$-shell determinant\\
\bottomrule
\end{tabular*}
\smallskip
\input{benchmark/si_fixed_coordinate_endpoints.tex}
\end{table}
\FloatBarrier

\subsection{Main-benchmark initialization keys and coordinate ordering}
\label{si:benchmark-keys}

Each uniquely keyed parameter is initialized by a deterministic PCG64 stream,
uniform on \([-0.01,0.01]\).  For a canonical parameter key \(q\), the
unsigned 32-bit seed is the big-endian integer represented by the first four
bytes of the SHA-256 digest of the compact, key-sorted JSON serialization
of the following object, shown schematically with run-specific fields:
\begin{center}
\begin{minipage}{0.70\textwidth}
\small
\begin{verbatim}
{
  "namespace": "sector-ansatz-benchmark-v1",
  "parts": ["paired_initial_parameter",
            protocol, system, reference, r, q]
}
\end{verbatim}
\end{minipage}
\end{center}
Here \(r\) is the restart index, the reference identifier includes the VQD
root when applicable, and
the key \(q\) has the literal format \texttt{layer=LLL/operator}, including
the layer and generator label.
A fresh NumPy PCG64 generator seeded in this way supplies one uniform draw.
Thus parameters with the same canonical key receive bitwise-identical initial
values whenever the protocol, system, reference and restart also agree;
an equal restart number alone does not imply equal perturbations across
different ansatz families or references.  The construction above fixes these
initial values deterministically.  The public numerical archive retains the
adopted endpoint parameter arrays and their SHA-256 digests.

The boundary redundancy coordinates are fixed without Hamiltonian or target
information.  Every H$_4$ backbone contains the path edges
\((0,1),(1,2),\ldots,(6,7)\).  The added chords, in selection order, are
\[
 (0,2),(1,3),(2,4),(3,5),(4,6),(5,7),(0,3),
 (1,4),(2,5),(3,6),(4,7),(0,4),(1,5),(2,6),
\]
with the remaining chords included at the dense endpoint.  This order is
obtained from the integer tuple
\((N_{\rm CX}^{\rm LNN},N_{\rm CX}^{\rm A2A},q-p,p,q)\); after a prefix is
selected, its union with the path is executed in lexicographic \((p,q)\)
order.  The nested H$_4$ backbones contain 7, 10, 14, 21 and 28 \(L\)
directions, respectively, and every repeat ends with \(C_{1;0,2}\).

For O$_2$, the cost-ranked conditioned-coordinate prefix is
\[
 C_{1;0,2},\quad C_{2;1,3},\quad C_{0;1,2},\quad C_{1;2,3},
\]
for the one-, two- and four-coordinate variants; the twelve-coordinate
endpoint contains every \(C_{c;pq}\) with pairwise distinct indices and
\(p<q\).  Selection uses the integer tuple
\((N_{\rm CX}^{\rm LNN},\allowbreak N_{\rm CX}^{\rm A2A},\allowbreak V_{\rm line},
\allowbreak g_{\max},\allowbreak w_{\rm tot},\allowbreak c,\allowbreak p,\allowbreak q)\), where \(V_{\rm line}\) is the summed
line-edge traversal count, \(g_{\max}\) is the largest direct Pauli-support
gap and \(w_{\rm tot}\) is the total Pauli weight of the fixed eight-string
factorization.  Each selected set is executed in
lexicographic \((c,p,q)\) order following the path backbone.  Circuit
repetition conventions are specified in
Methods, Section~\mainref{sec:benchmark-systems}; the optimization budgets,
completion rules, and record-adoption policy are specified below.
These twenty-start ensembles are distinct from the five-chain repair scan.

\subsection{Benchmark optimization and compilation protocol}
\label{si:main-benchmark-methods}

\subsubsection{Benchmark systems and references}
\label{si:benchmark-systems-details}

The boundary tasks use linear H$_4$/6-31G at 1.50~\AA{} adjacent spacing
with all eight spatial orbitals, \(N=4\), \(S=M_S=2\), and \(d=70\).
The reference occupies orbitals \(0,1,2,3\) with four \(\alpha\) electrons;
roots 0--1 are targeted. O$_2$ uses a spin-free canonical-ROHF/cc-pVDZ
Hamiltonian at 1.21~\AA{}. Freezing five doubly occupied core and
\(\sigma\)-type orbitals leaves the four \(\pi/\pi^\ast\) orbitals as
CAS(6e,4o), with \(S=M_S=1\), \(d=6\), and roots 0--5.
Its reference doubly occupies both bonding-\(\pi\) orbitals and singly
occupies both \(\pi^\ast\) orbitals with \(\alpha\) electrons.

The main benchmark uses complete fixed-spin multiplicity spaces and no
additional point-group projector.  Linear H\(_6\) has adjacent H--H spacing
1.50~\AA{}, an STO-3G full CAS(6e,6o), \(S=M_S=0\), \(d=175\), and the
canonical closed-shell RHF reference.  Planar tetramethyleneethane uses the
geometry in Sec.~I.F of the Supplemental Material to Ref.~\cite{Burton2024TUPS},
with the STO-3G carbon-\(\pi\) CAS(6e,6o) specified in that article;
state-specific VQE is performed independently in the
singlet \((S=M_S=0,d=175)\) and triplet \((S=1,M_S=0,d=189)\)
sectors.  The singlet uses the
canonical \(\pi\)-RHF reference and the triplet a spin-adapted HOMO--LUMO
\(T_0\) reference declared before active-space diagonalization.  N\(_2\) uses
STO-3G with four tracked occupied orbitals frozen and a CAS(6e,6o) singlet
sector of dimension 175.  Eight bond lengths,
\(R=1.10,1.20,1.30,1.50,1.80,2.25,3.00,4.00\)~\AA{}, are optimized
independently without parameter continuation.  Occupied/active orbital
identities are propagated by nearest-neighbour cross-AO overlap with separate
occupied and virtual assignments; no exact active-space energy enters the
tracking rule.

Interior one-seed products use \(X_{23}\) for the six-orbital active spaces;
boundary one-seed products use \(C_{1;0,2}\). QNP-Q, QNP-F, tUPS, and tUPS
with a perfect-pairing reference are implemented with the same Hamiltonian,
orbital ordering, and complete fixed-spin scope. The common reference is used
except for the explicitly labelled tUPS (PP ref.) singlet row, which uses its
target-independent alternating perfect-pairing occupation pattern; its triplet
component uses the common spin-adapted reference. All methods in the main
interior comparison use fixed orbitals and the local optimization procedure
specified below.

In the five-chain boundary-repair scan, every root uses the common reference
specified above. Boundary-effective tUPS retains the ordered adjacent-pair
orbital rotations of the alternating tiles, and tUPS+$C$ appends
\(C_{1;0,2}\) per repeat.

\subsubsection{Optimization and statistics}

The main benchmark calculations use double precision.  Each fixed-ansatz
checkpoint in the twenty-start ensemble has 20 independent initializations,
with parameters drawn uniformly from $[-0.01,0.01]$.  L-BFGS-B uses analytic
gradients, bounds $[-\pi,\pi]$, $\mathtt{ftol}=10^{-12}$,
$\mathtt{gtol}=10^{-8}$, and $\mathtt{maxls}=40$.  H$_4$ and O$_2$
respectively target roots 0--1 and 0--5 with $\beta=2$~Ha.  Each
initialization supplies its own VQD chain, and root $n$ uses only the
lower-root states from that chain.

Strict success requires \(|\Delta E|\le0.1\)~mHa and fidelity
(or degenerate-eigenspace fidelity) at least 0.99 for every state in the
task. TME joint success additionally requires a singlet--triplet-gap error
at most 0.1~mHa. The supplementary N\(_2\) initialization statistic
requires at least 16 of 20 strict successes independently at every
geometry.

The per-root or per-state iteration/evaluation limits are collected in
Supplementary Table~\ref{tab:si-optimization-budgets}.  The extended budget
was used only for declared reruns of records that had exhausted an original
iteration limit; it was not applied as an additional set of random starts.

\begin{table}[!htbp]
\centering\footnotesize
\caption{Optimization limits used in the reported numerical datasets.  A dash
means that no extended-budget records were adopted for that dataset.  Limits
are stated as maximum optimizer iterations/objective--gradient evaluations per
root or state.}
\label{tab:si-optimization-budgets}
\begin{tabular*}{\textwidth}{@{\extracolsep{\fill}}lll@{}}
\toprule
Dataset & Original limit & Extended limit\\
\midrule
H$_4$, 20-start boundary ensemble & 20,000/40,000 & 200,000/400,000\\
H$_6$, 20-start interior ensemble & 50,000/100,000 & 200,000/400,000\\
O$_2$, 20-start boundary ensemble & 2,000/5,000 & --\\
TME and N$_2$, 20-start interior ensembles & 50,000/100,000 & --\\
H$_4$, five-chain repair scan & 100,000/200,000 & --\\
O$_2$, five-chain repair scan & 50,000/100,000 & --\\
\bottomrule
\end{tabular*}
\end{table}

Exactly 27 H$_4$ chains and four H$_6$ single-state trajectories that were
right-censored by their original iteration limits were rerun independently
from the same deterministic initial parameter vectors under the extended
limits.  These were fresh optimizations, not continuations from the censored
parameters.  Only those 31 declared coordinates replace their original
records in the adopted extended-budget summaries.  The public archive retains
the aggregate original-versus-extended comparison, the adoption policy, and
all adopted plotting endpoints, but not optimizer trajectories or superseded
per-run payloads.  Two additional H$_6$ reruns whose originals were not
right-censored are excluded.  No records are replaced for O$_2$, TME, or
N$_2$.

All 20 starts remain in every success-count denominator.  In an
original-budget summary, an incomplete chain is counted as a nonsuccess and
omitted only from the conditional median and interquartile range.  After the
declared replacements, all 20 chains are complete in every adopted H$_4$ and
O$_2$ boundary configuration, so the final error bands and success counts use
the same starts.

The five-chain boundary scan uses L-BFGS-B bounds, gradient tolerances, and
line-search settings as above, with iteration/objective--gradient limits of
100,000/200,000 for H$_4$ and 50,000/100,000 for O$_2$.  The common
repetition grids are $1,2,4,6,8,12$ for H$_4$ and $1,2,3,4,5,6$ for
O$_2$.  The tUPS scans also include repetitions $16,22,28$ for H$_4$ and
$12,16,24,32$ for O$_2$.  At the reported endpoints, the selected
optimizations satisfied
an L-BFGS-B stopping condition before reaching the iteration and evaluation
limits.

The deterministic initialization keys and target-independent coordinate
ordering used by the main benchmark are recorded in
Sec.~\ref{si:benchmark-keys}.

We count CNOT gates after binding the optimized parameters and compiling
each complete ansatz with Q\textsuperscript{2}Chemistry 0.3.0.  All methods use
the same all-to-all connectivity and interleaved spin-orbital wire order.
Adjacent orbital rotations and pair transfers use the exact QNP orbital-
rotation and pair-exchange templates.  Nonadjacent $L$ and $X$ generators,
and every conditioned rotation $C$, are synthesized from their respective
commuting Jordan--Wigner Pauli factors.  In particular, the eight-factor
commutativity statement applies to one $C$ block.

QNP-F is compiled separately from the published one-parameter circuits for
its five-axis local block \cite{Anselmetti2021}.  The one-particle and one-hole
coordinates use the corresponding source circuits, with the source angle
convention mapping each bound coordinate \(\theta\) to \(2\theta\).  For the
remaining three coordinates \(v_X,v_{\rm BU},v_{\rm BL}\), the simultaneous
rotation is first evaluated on the local two-electron singlet space in the
ordered basis \((\ket{P_p},\ket{P_q},\ket{S_{pq}})\).  We then solve the exact
\(SO(3)\) factorization
\[
\exp(v_XK_{\rm PX}+v_{\rm BU}K_{\rm PBU}+v_{\rm BL}K_{\rm PBL})
=\exp(zK_{\rm PBL})\exp(yK_{\rm PBU})\exp(xK_{\rm PX}).
\]
The published \(P_X\), \(P_{\rm BU}\), and \(P_{\rm BL}\) source circuits are
applied in that order with circuit angles \(2x\), \(2y\), and \(-2z\),
respectively.  The reconstructed \(3\times3\) rotation is checked against the
simultaneous target before circuit decomposition.  This block-specific
factorization does not assume that the Pauli terms of a QNP-F generator
commute.

Each circuit is then decomposed into $H$, $R_z$, $X$, and CNOT gates and
simplified by one default optimization pass that cancels inverse gates and
merges adjacent rotations.
The same compilation convention is used for the supplementary interior
ablation and direct-repair resource tables below.

Counts cover the ansatz only and exclude reference preparation,
measurements, and VQD overlaps.  Independent
simulation checks the circuit states before and after compilation after basis
conversion and global-phase alignment.  For the main compiled circuits, the
largest amplitude and normalized-energy differences are
$2.23\times10^{-8}$ and $1.14\times10^{-13}$~Ha, respectively.  The
separate supplementary frozen-parameter batch also passes the stated
tolerances; its extrema are reported below.  Additional replay checks,
including reference-spin, degenerate-subspace, and N$_2$ embedding
diagnostics, are given in
Sec.~\ref{si:endpoint-replay}.

\subsection{Interior accuracy--resource diagnostics}

The interior figure uses variational-energy-selected outputs.
Initialization-success statistics are reported separately from the
energy--CX endpoints.

Supplementary Table~\ref{tab:si-interior-cx-depth} reports the CNOT counts
and two-qubit depths of the same selected circuits at the energy endpoints
and the TME strict-success checkpoints.

\begin{table}[!htbp]
\centering\small
\caption{Compiled ansatz resources under the common parameter-bound,
all-to-all protocol above. CNOT counts follow the main comparison: one
circuit for H$_6$, the sum for the two TME states, and the maximum over
the eight N$_2$ geometries. Two-qubit depth is the maximum over the task's
separately executed circuits.}
\label{tab:si-interior-cx-depth}
\begin{tabular*}{\textwidth}{@{\extracolsep{\fill}}llrrr@{}}
\toprule
Task / checkpoint & Method & Repeats & CNOTs & Max. 2Q depth\\
\midrule
H$_6$, energy endpoint & Tree+one $X$ & 24 & 816 & 411\\
 & tUPS & 8 & 880 & 256\\
N$_2$, energy endpoint & Tree+one $X$ & 30 & 1020 & 513\\
 & tUPS & 12 & 1320 & 384\\
TME, energy endpoint & Tree+one $X$ & 16 & 1088 & 275\\
 & tUPS & 12 & 2640 & 384\\
TME, strict-success checkpoint & Tree+one $X$ & 24 & 1632 & 411\\
 & tUPS & 12 & 2640 & 384\\
\bottomrule
\end{tabular*}
\end{table}

For fixed-orbital tUPS on N\(_2\), the maximum error over eight geometries
is 0.003800~mHa at 12 layers (1320 A2A CX), 10.507277~mHa at
16 layers (1760 CX), and 0.00004443~mHa at 24 layers (2640 CX).
The first maximum occurs at 1.80~\AA{}, while the latter two occur at
1.50~\AA{}.  At 1.50~\AA{}, all 20 starts of the 16-layer circuit return
errors between 10.507277 and 10.507434~mHa.  The selected restart, index 9,
stops after 742 of 50,000 allowed iterations and 800 of 100,000 allowed
function evaluations, with the L-BFGS-B relative-function-reduction criterion
satisfied.  Thus the selected output is not budget-censored.

The 12-, 16-, and 24-layer records use the same frozen physical problem
and reference.  A diagnostic Qiskit replay of the saved parameters agrees
with their stored sector states to approximately \(1.5\times10^{-13}\)
in Euclidean norm.  Padding the 180-parameter 12-layer solution with 60
zeros embeds it in the 16-layer circuit: the full-state norm difference
is \(2.5\times10^{-14}\), and the original approximately 0.000223~mHa
error at 1.50~\AA{} is preserved.  The rebound therefore reflects the
independent local searches, not reduced expressivity at greater repetition
count; the plotted outputs are retained without smoothing.

\input{benchmark/si_benchmark_evidence.tex}

%% file: benchmark/si_fixed_coordinate_endpoints.tex
\begin{tabular*}{\textwidth}{@{\extracolsep{\fill}}lrrrrl@{}}
\toprule
System & Repeats & $L$ error & $L+Y_\star$ error & $L+C_\star$ error & Successes $L/Y/C$\\
\midrule
H$_2$ & 8 & 4.348 & $<10^{-7}$ & $<10^{-7}$ & $0/30;\ 28/30;\ 28/30$\\
H$_3$ & 4 & 6.436 & $3.91\times10^{-5}$ & $3.08\times10^{-5}$ & $10/20;\ 20/20;\ 20/20$\\
H$_4$ & 6 & 9.669 & $1.46\times10^{-3}$ & $1.32\times10^{-3}$ & $10/20;\ 19/20;\ 20/20$\\
O$_2$ & 6 & 160.2 & $<10^{-7}$ & $<10^{-7}$ & --\\
\bottomrule
\end{tabular*}

%% file: benchmark/si_benchmark_evidence.tex
\subsection{Boundary operation choices, repetitions, and compiled CNOT counts}
\label{si:boundary-depth-cx}

We vary the number of orbital connections and the number of conditioned rotations separately, within families whose completeness is established in the article. For H$_4$, each repeated block contains one fixed $C$ and either the seven-edge path, a denser orbital graph, or all 28 orbital connections. For O$_2$, the orbital graph is a path and the repeated block contains one, two, four, or all twelve $C$ coordinates. Additional connections and coordinates are ordered by circuit cost and locality before optimization. All configurations use 20 initializations; the original and extended optimization budgets and the record-adoption policy are specified in Sec.~\ref{si:main-benchmark-methods} and compared in Supplementary Table~\ref{tab:si-boundary-budget}.

With the extended-budget reruns included, the path, ten-edge and dense
backbones for the $\mathrm{H}_4$/6-31G quintet attain 20/20 strict successes
at 10, 8 and 4 repetitions, respectively (\cref{fig:benchmark-boundary}).
The corresponding circuits use 80, 88 and 116 parameters per root, and
maximum per-root CNOT counts of 600, 864 and 2704. Additional
edges thus reduce the number of repetitions required at these endpoints, at
the cost of more parameters and CNOTs. At the same path and ten-edge
checkpoints, the extended-budget reruns increase strict-success counts from
14/20 to 20/20 and from 17/20 to 20/20, respectively; dense already reaches
20/20 under the original budget. Supplementary
Table~\ref{tab:si-boundary-budget} records these repetition counts and
retains the original-budget diagnostics.

The $\mathrm{O}_2$ hole boundary shows a complementary dependence on seed count under the original budget. One $C$ coordinate attains 19/20 strict successes at the largest tested circuit size, whereas two coordinates attain 20/20 with 10 parameters and 152 CNOTs per root. Four or all twelve coordinates give no further improvement in the success count and require more resources.

\begin{figure}[!htbp]
\centering
\includegraphics[width=\textwidth]{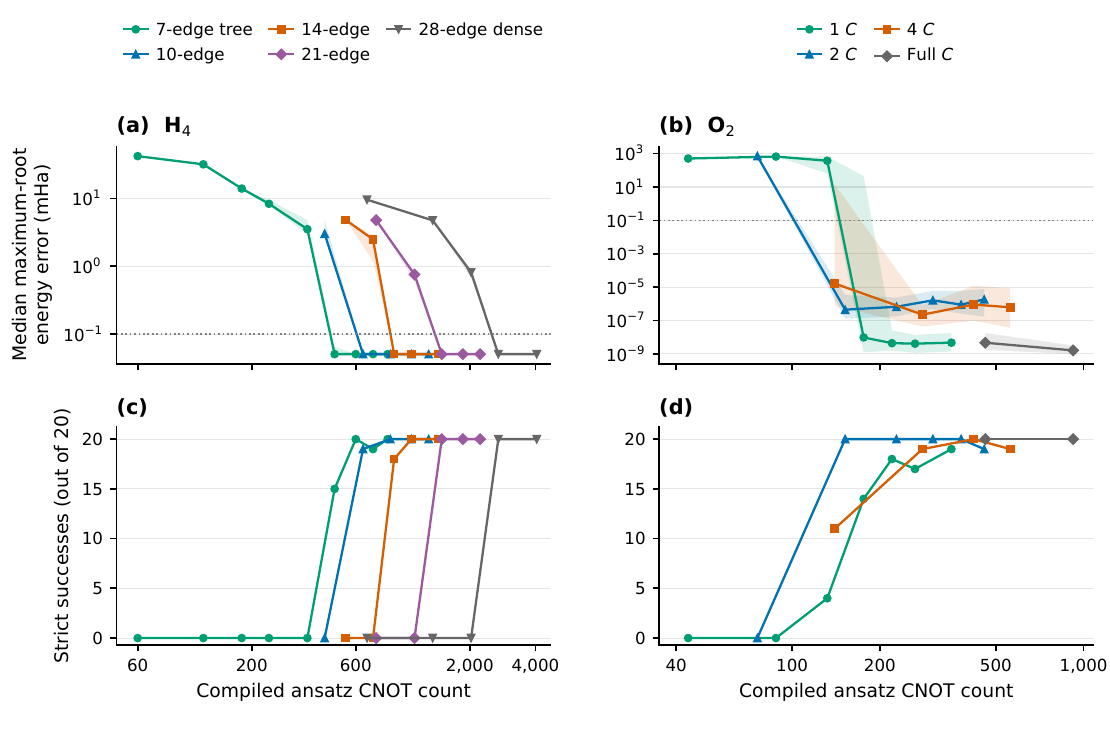}
\caption{Boundary redundancy versus compiled ansatz CNOT count. (a,b) Median maximum-root energy error, with interquartile shading over 20 completed VQD chains; dotted lines mark 0.1 mHa. (c,d) Strict successes among the same starts: every targeted root has energy error at most 0.1 mHa and fidelity at least 0.99. The horizontal coordinate is the maximum per-root count of the representative chain selected by minimum summed variational root energies. H$_4$ varies orbital connectivity with one $C$; O$_2$ varies the number of $C$ coordinates on a path. All 48 configurations are shown on logarithmic horizontal axes. Counts describe representative circuits, whereas the vertical statistics summarize all 20 starts. The extended-budget reruns are included.}
\label{fig:benchmark-boundary}
\end{figure}

The distinction between algebraic completeness and efficient convergence also motivates enlarging complete Pauli pools in earlier adaptive studies \cite{Sapova2022,Viswanathan2026}.

The stored parameters and adopted-record membership were verified for all 960 chains across these 48 configurations. The 172 root circuits of the representative chains all pass the independent state and energy checks in Section~\ref{si:endpoint-replay}. Compilation leaves the selected parameters and ensemble statistics unchanged.

The separate five-chain repair scan in main-text Fig.~\mainref{fig:benchmark-boundary-repair} follows Methods, Section~\mainref{sec:additional-controls}. All 222 selected root circuits across its 55 configurations pass the same compilation checks. The main figure shows all 50 checkpoints within its displayed windows; five H$_4$ Dense $L$ checkpoints lie beyond the window and extend the same energy plateau to 7728 CNOTs. Nonmonotonic checkpoints are retained, and errors below 0.01 mHa are placed at the stated display floor. This energy comparison remains separate from the twenty-start strict-success statistic and the common-deflator control below.

\FloatBarrier
\subsection{Initialization sensitivity and budget dependence}
\label{si:benchmark-protocol}

The main-benchmark initialization, optimization, chain-completion criteria,
original/extended budgets, and record-adoption policy are specified in
Sec.~\ref{si:main-benchmark-methods}.  The PP-reference comparisons use
fixed orbitals and local optimization, not the full literature pp-tUPS
search procedure.

The original and extended-budget boundary endpoints are
listed in Table~\ref{tab:si-boundary-budget}.  At the H$_6$ Tree+one $X$
24-repeat checkpoint, original-budget restart 5 already has error
0.005424~mHa.  The main figure instead uses the adopted extended-budget restart 16
and its own 816-CNOT compiled circuit, giving 0.004571~mHa.  The original result
establishes energy attainment at this circuit size; its parameters are not
assigned the other restart's compiled resource count.

\begin{table}[!htbp]
\centering\footnotesize
\caption{Boundary initialization sensitivity at unchanged checkpoints.
Counts are out of 20; extended-budget results are not matched-budget
success probabilities.  The CNOT count is the maximum ansatz cost per root.}
\label{tab:si-boundary-budget}
\begin{tabular*}{\textwidth}{@{\extracolsep{\fill}}llrrrrr@{}}
\toprule
Task & Circuit & Repeats & Params. & Original & Extended & CNOTs\\
\midrule
H$_4$ & Tree+one $C$ & 10 & 80 & 14 & 20 & 600\\
 & 10 $L$+one $C$ & 8 & 88 & 17 & 20 & 864\\
 & Dense+one $C$ & 4 & 116 & 20 & -- & 2704\\
O$_2$ & Tree+two $C$ & 2 & 10 & 20 & -- & 152\\
\bottomrule
\end{tabular*}
\end{table}

\subsection{Independent circuit validation}
\label{si:endpoint-replay}

The source data for the two main resource figures and Supplementary
Fig.~\ref{fig:benchmark-boundary} comprise 187 checkpoints, represented by
650 root or state circuits.  The supplementary interior dataset and
direct-repair table add a separate 511-circuit frozen-parameter compilation
batch; 256 of its 506 interior circuits repeat inputs from the main interior
comparison, while 250 supply the additional ablation methods.  Every source
and compiled circuit in these saved batches is replayed against its saved
production state, including every selected N$_2$ geometry/checkpoint circuit
in the figures. The comparison includes sector embedding,
fermionic ordering and phase alignment.  No variational optimization is
repeated.  The largest source-state amplitude errors are
$4.29\times10^{-13}$ in the main batch and $3.96\times10^{-12}$ in the
supplementary batch; Table~\ref{tab:si-replay-summary} gives the
compiled-state checks.

\begin{table}[!htbp]
\centering\footnotesize
\caption{Independent validation of the compiled circuits used in the resource figures. The acceptance thresholds are $2\times10^{-6}$ for the maximum phase-aligned amplitude error and $5\times10^{-10}$ Ha for the normalized energy difference from the saved state. Every listed circuit passes both checks.}
\label{tab:si-replay-summary}
\begin{tabular*}{\textwidth}{@{\extracolsep{\fill}}lrrrr@{}}
\toprule
Comparison & Points & Circuits & Max. amplitude error & Max. $|\delta E|$ (Ha)\\
\midrule
Boundary repair & 55 & 222 & $1.65\times10^{-8}$ & $8.53\times10^{-14}$\\
Boundary redundancy & 48 & 172 & $2.23\times10^{-8}$ & $8.53\times10^{-14}$\\
Interior comparison & 84 & 256 & $2.15\times10^{-8}$ & $1.14\times10^{-13}$\\
SI interior dataset & 164 & 506 & $3.15\times10^{-7}$ & $3.78\times10^{-12}$\\
Direct repair & 5 & 5 & $1.92\times10^{-13}$ & $4.44\times10^{-16}$\\
\bottomrule
\end{tabular*}
\end{table}

For reproducibility, accepted parameters, source hashes and individual
residuals are retained with the numerical data.  The public data package omits
compiled gate-sequence payloads; the independent Qiskit workflow reconstructs
the source circuits from the frozen parameters and reports its compilation and
validation results separately. Energy is normalized using the full state norm,
without separately renormalizing its projection into the target sector.
Validation covers all circuits in these comparisons, but does not imply an
ordering between nearly exact energy tails.

\FloatBarrier
\subsection{Internal compression ablations and task-level accuracy}

The four Dense/Tree $\times$ one/full $X$ constructions and QNP-F are
retained in Fig.~\ref{fig:si-internal-ablation} and
Table~\ref{tab:si-internal-ablation}. Dense uses all $\binom{M}{2}$ orbital rotations, whereas Tree uses the $M-1$ rotations along a path. The one/full $X$ choices use one fixed pair-transfer generator or all pair-transfer generators, respectively. They use the same objective-selected
energy criterion as the interior main figure.  All existing checkpoints remain visible, including search
rebounds.  The complete 164-point dataset accompanies the numerical results.  These additional results separate coordinate redundancy
from backbone density; they do not imply that adding generators guarantees
a better local-search result.

The ablation figure and common-deflator table use the same post-bind
Q\textsuperscript{2}Chemistry all-to-all protocol as the main resource
figures.  Their 511 frozen circuits all pass the common state and energy
tolerances.  The supplementary compilation changes only the resource axis;
the selected parameters, energies, fidelities and checkpoint set are
unchanged.

\begin{figure}[!htbp]
\centering
\includegraphics[width=\textwidth]{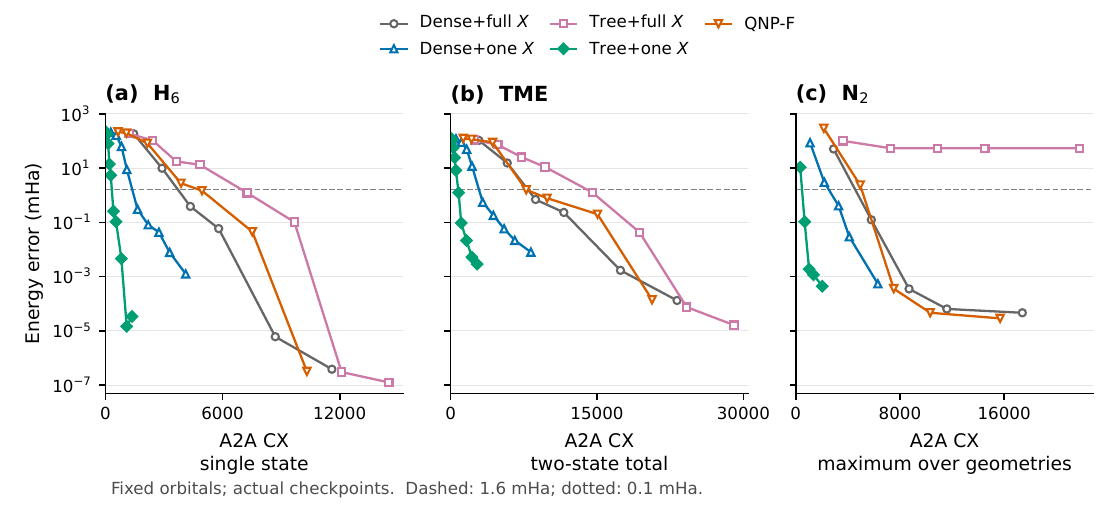}
\caption{\textbf{Separate backbone and seed compression in interior sectors.}
The four proposed products and QNP-F use existing recorded outputs.  Errors
are the H$_6$ ground-state energy error, maximum TME two-state energy error, and maximum
N$_2$ eight-geometry error; costs are respectively one-state, two-state sum,
and maximum geometry CNOT counts under the common
Q\textsuperscript{2}Chemistry protocol. Threshold crossings are not
interpolated, and no monotonic envelope is applied. The complete 164-point
dataset is validated in the separate 506-circuit supplementary batch.}
\label{fig:si-internal-ablation}
\end{figure}
\begin{table}[!htbp]
\centering\footnotesize
\caption{Internal ablation endpoints at 0.1~mHa, selected by smallest
observed compiled CNOT count under the common
Q\textsuperscript{2}Chemistry protocol among qualifying checkpoints. A star instead reports
the largest tested checkpoint when the threshold is not reached.}
\label{tab:si-internal-ablation}
\input{benchmark/si_internal_ablation.tex}
\end{table}

For the selected TME Tree+one $X$ pair at 16 repeats per state, the singlet
and triplet errors are respectively 0.065706 and 0.094966~mHa
(restarts 0 and 3).  The variational and exact gaps are 3.321437 and
3.292177~mHa, giving 0.029260~mHa gap error.  With normalized states in the
same frozen active-space problem, the sector variational principle gives
$\delta_S,\delta_T\ge0$ and hence
\[
 |\Delta_{ST}^{\rm var}-\Delta_{ST}^{0}|
   =|\delta_T-\delta_S|\le\max(\delta_T,\delta_S).
\]
This independently selected energy pair is not itself a joint-restart
success statistic.  The separate twenty-paired-start audit is reported in
Table~\ref{tab:si-tme-joint}: joint strict success requires both state errors
to be at most 0.1~mHa, both fidelities to be at least 0.99, and the gap error
to be at most 0.1~mHa.  At the 16-repeat Tree+one \(X\) energy endpoint,
none of the 20 paired starts meets this joint criterion; its first tested
checkpoint with at least 16 joint successes is 24 repeats.  This paired-start
count describes the specified matched-initialization audit, while the plotted
energy endpoint independently selects the two spin sectors.

\begin{table}[!htbp]
\centering\footnotesize
\caption{TME joint-initialization robustness.  Each row is the first tested
checkpoint at which at least 16 of 20 paired starts satisfy the joint strict
criterion.  Parameter counts are given per state and in total for the two
independently optimized spin sectors.}
\label{tab:si-tme-joint}
\begin{tabular*}{\textwidth}{@{\extracolsep{\fill}}lrrrr@{}}
\toprule
Method & Repeats/state & Params./state & Total params. & Joint strict \\
\midrule
Tree+one $X$ & 24 & 144 & 288 & 20/20 \\
Dense+one $X$ & 12 & 192 & 384 & 20/20 \\
Dense+full $X$ & 6 & 180 & 360 & 20/20 \\
Tree+full $X$ & 10 & 200 & 400 & 20/20 \\
QNP-Q & 35 & 176 & 352 & 18/20 \\
QNP-F & 19 & 240 & 480 & 20/20 \\
tUPS & 12 & 180 & 360 & 19/20 \\
tUPS (PP ref.) & 12 & 180 & 360 & 19/20 \\
\bottomrule
\end{tabular*}
\end{table}

\begin{samepage}
For N$_2$, Table~\ref{tab:si-n2-success} gives the per-geometry auxiliary
strict-success counts.  The companion full-curve table also retains the mean
absolute error and
$\mathrm{NPE}=\max_R\delta E(R)-\min_R\delta E(R)$.
NPE describes curve shape and is not used without absolute-error information.
\par
\end{samepage}

\begin{table}[!htbp]
\centering\scriptsize
\caption{N$_2$ strict successes out of 20 at each bond length (in \AA{}). For methods
that attain the initialization-robustness criterion, the row is the first tested
checkpoint with at least 16 successes at every geometry.  A dagger denotes
the largest tested checkpoint for a method that does not attain that
criterion.  These 20-start success counts are separate from the
variational-energy-selected endpoints in the main figure.}
\label{tab:si-n2-success}
\begin{tabular*}{\textwidth}{@{\extracolsep{\fill}}lrrrrrrrrrr@{}}
\toprule
Method & Rep. & Params. & 1.10 & 1.20 & 1.30 & 1.50 & 1.80 & 2.25 & 3.00 & 4.00 \\
\midrule
Tree+one $X$ & 40 & 240 & 20 & 20 & 20 & 20 & 17 & 20 & 18 & 16 \\
Dense+one $X$ & 15 & 240 & 20 & 20 & 20 & 18 & 20 & 20 & 19 & 17 \\
Dense+full $X$ & 6 & 180 & 20 & 20 & 20 & 20 & 20 & 20 & 20 & 20 \\
Tree+full $X^{\dagger}$ & 18 & 360 & 0 & 0 & 0 & 0 & 0 & 0 & 0 & 0 \\
QNP-F & 29 & 365 & 20 & 20 & 20 & 20 & 18 & 20 & 20 & 20 \\
QNP-Q$^{\dagger}$ & 72 & 360 & 16 & 15 & 6 & 1 & 0 & 9 & 8 & 13 \\
tUPS$^{\dagger}$ & 24 & 360 & 20 & 20 & 17 & 4 & 6 & 7 & 19 & 19 \\
tUPS (PP ref.)$^{\dagger}$ & 24 & 360 & 20 & 20 & 12 & 9 & 14 & 6 & 13 & 10 \\
\bottomrule
\end{tabular*}
\end{table}

\begin{table}[!htbp]
\centering\footnotesize
\caption{N$_2$ full-curve diagnostics at the main-table endpoints, in mHa.
All eight geometries use a common repetition count for each method.  The star retains QNP-Q's
largest tested, nonattaining checkpoint rather than omitting that method.}
\input{benchmark/si_n2_pes_metrics.tex}
\end{table}

\FloatBarrier

\subsection{Direct boundary repair}

\label{si:common-deflator-protocol}
This is a separate common-deflator control, not the five-chain sequential
scan of main-text Fig.~\mainref{fig:benchmark-boundary-repair}.
The H$_4$ repair comparison targets root 1 with a common variational
root-0 state $\widetilde\psi_0$, selected by minimum variational energy
from an earlier full-saGSD calculation.  All circuits minimize the
expectation value of
$H_{\rm def}=H+4|\widetilde\psi_0\rangle\langle\widetilde\psi_0|$
from the four-$\alpha$ reference.  The boundary-effective tUPS fabric
contains seven independent orbital rotations per repeat, since its
$L$--$X$--$L$ tiles restrict exactly to
$e^{\theta_1L}e^{\theta_2X}e^{\theta_3L}=e^{(\theta_1+\theta_3)L}$:
the $X$ factor is the identity and the adjacent rotations merge.
The repaired fabric appends $C_{1;0,2}$ to
each repeat; the tree control applies the path-ordered orbital rotations
followed by the same $C$.  The five comparisons are tUPS and tUPS+$C$ at
four and eight repeats, and Tree+$C$ at eight repeats.  Each uses three
starts with the deterministic uniform \([-0.01,0.01]\) initialization
and L-BFGS-B bounds \([-\pi,\pi]\), \(\mathtt{ftol}=10^{-12}\),
\(\mathtt{gtol}=10^{-8}\), \(\mathtt{maxls}=40\), and limits of
50,000 iterations and 100,000 joint objective--gradient calls.  Outputs
are selected by the VQD objective.  Deflator preparation is outside this
controlled single-root resource comparison.

The selected variational deflator has energy approximately
$-1.86386215833$~Ha.  Table~\ref{tab:si-direct-repair} reports the
objective-selected outputs for the five circuit settings.

Despite algebraic completion, four repaired repetitions remain near the
orbital-only plateau, whereas eight reach the energy threshold
(Table~\ref{tab:si-direct-repair}).

\begin{table}[!htbp]
\centering\footnotesize
\caption{H$_4$ root-1 direct repair with a common variational deflator.
The deflator's construction and preparation costs are outside this controlled
single-root ansatz comparison. Each row is selected from three starts by the
VQD objective and uses the common Q\textsuperscript{2}Chemistry CNOT count.}
\label{tab:si-direct-repair}
\input{benchmark/si_h4_direct_repair.tex}
\end{table}

All five objective-selected parameter/state replays pass the source and
compiled checks in the direct-repair comparison; the largest source and
compiled component residuals are $1.75\times10^{-13}$ and
$1.92\times10^{-13}$, respectively, and the largest compiled energy
difference is $4.44\times10^{-16}$~Ha.  Under the common compilation
protocol, the eight-repeat repaired fabric and Tree+$C$ each use 480 CNOTs,
while the tree gives the lower observed error. These are separate outputs
from the five-chain main comparison.

\subsection{Reference spin and degenerate-subspace diagnostics}

A direct occupation-basis CAR check of the explicit two-orbital $T(0,1)$
definition in Section~\ref{si:fixed-coordinate-vqd-methods} gives $\langle S^2\rangle=2$ and
$\langle S_z\rangle=0$.  Applying the same full-space test to the frozen
TME triplet reference gives the same values, normalization error below
$3\times10^{-16}$, and zero computed $S^2-2$ and $S_z$ action residuals.
These checks establish the spin and normalization of the reference states.

To distinguish individual eigenspace fidelity from joint coverage, all
20 saved six-root O$_2$ Tree+two-$C$, depth-two chains were inspected
without reoptimization.  The largest entrywise deviation of their Gram
matrices from identity is $1.147\times10^{-4}$; the largest overlap
between roots 1 and 2 is $1.112\times10^{-4}$.  The minimum singular
value of the two outputs projected onto the exact two-dimensional target
eigenspace is at least 0.9992229.  After orthonormalizing the two-output
span, the minimum principal-angle cosine is at least 0.9992269.
Thus these saved outputs cover two independent directions, rather than
duplicating one vector with high individual projector fidelity.
This validation concerns the main-benchmark seed-sweep point; it is
not assigned to the separate fixed-coordinate O$_2$ tests.

\FloatBarrier

%% file: benchmark/si_internal_ablation.tex
\begin{tabular*}{\textwidth}{@{\extracolsep{\fill}}llrrrr@{}}
\toprule
Task & Method & Repeats & Params. & A2A CX & Error (mHa) \\
\midrule
H$_6$ & Dense+full $X$ & 4 & 120 & 5800 & 0.059014 \\
 & Dense+one $X$ & 8 & 128 & 2192 & 0.081321 \\
 & Tree+full $X$ & 10 & 200 & 12100 & 3.03e-07 \\
 & Tree+one $X$ & 24 & 144 & 816 & 0.004571 \\
 & QNP-F & 14 & 175 & 7525 & 0.045178 \\
TME & Dense+full $X$ & 6 & 360 & 17400 & 0.001705 \\
 & Dense+one $X$ & 10 & 320 & 5480 & 0.057940 \\
 & Tree+full $X$ & 8 & 320 & 19360 & 0.042889 \\
 & Tree+one $X$ & 16 & 192 & 1088 & 0.094966 \\
 & QNP-F & 19 & 480 & 20640 & 0.000141 \\
N$_2$ & Dense+full $X$ & 6 & 180 & 8700 & 0.000352 \\
 & Dense+one $X$ & 15 & 240 & 4110 & 0.029495 \\
 & Tree+full $X$$^{*}$ & 18 & 360 & 21780 & 54.127851 \\
 & Tree+one $X$ & 30 & 180 & 1020 & 0.001888 \\
 & QNP-F & 14 & 175 & 7525 & 0.000359 \\
\bottomrule
\end{tabular*}

%% file: benchmark/si_n2_pes_metrics.tex
\begin{tabular*}{\textwidth}{@{\extracolsep{\fill}}lrrrr@{}}
\toprule
Method & Repeats & MAE & Maximum error & NPE \\
\midrule
Tree+one $X$ & 30 & 0.000948 & 0.001888 & 0.001732 \\
QNP-Q$^{*}$ & 72 & 0.783245 & 6.265837 & 6.265835 \\
tUPS & 12 & 0.000528 & 0.003800 & 0.003796 \\
tUPS (PP ref.) & 12 & 0.000570 & 0.001565 & 0.001538 \\
\bottomrule
\end{tabular*}

%% file: benchmark/si_h4_direct_repair.tex
\begin{tabular*}{\textwidth}{@{\extracolsep{\fill}}lrrrrr@{}}
\toprule
Circuit & Repeats & Params. & Error (mHa) & Fidelity & A2A CX \\
\midrule
Tree+$C$ & 8 & 64 & 0.004229 & 0.999997 & 480 \\
Effective tUPS & 4 & 28 & 9.671224 & 0.327324 & 112 \\
Effective tUPS & 8 & 56 & 9.671224 & 0.327323 & 224 \\
Effective tUPS+$C$ & 4 & 32 & 9.643108 & 0.327299 & 240 \\
Effective tUPS+$C$ & 8 & 64 & 0.024221 & 0.999983 & 480 \\
\bottomrule
\end{tabular*}

%% file: main_and_si.bbl
\begin{thebibliography}{30}%
\makeatletter
\providecommand \@ifxundefined [1]{%
 \@ifx{#1\undefined}
}%
\providecommand \@ifnum [1]{%
 \ifnum #1\expandafter \@firstoftwo
 \else \expandafter \@secondoftwo
 \fi
}%
\providecommand \@ifx [1]{%
 \ifx #1\expandafter \@firstoftwo
 \else \expandafter \@secondoftwo
 \fi
}%
\providecommand \natexlab [1]{#1}%
\providecommand \enquote  [1]{``#1''}%
\providecommand \bibnamefont  [1]{#1}%
\providecommand \bibfnamefont [1]{#1}%
\providecommand \citenamefont [1]{#1}%
\providecommand \href@noop [0]{\@secondoftwo}%
\providecommand \href [0]{\begingroup \@sanitize@url \@href}%
\providecommand \@href[1]{\@@startlink{#1}\@@href}%
\providecommand \@@href[1]{\endgroup#1\@@endlink}%
\providecommand \@sanitize@url [0]{\catcode `\\12\catcode `\$12\catcode
  `\&12\catcode `\#12\catcode `\^12\catcode `\_12\catcode `\%12\relax}%
\providecommand \@@startlink[1]{}%
\providecommand \@@endlink[0]{}%
\providecommand \url  [0]{\begingroup\@sanitize@url \@url }%
\providecommand \@url [1]{\endgroup\@href {#1}{\urlprefix }}%
\providecommand \urlprefix  [0]{URL }%
\providecommand \Eprint [0]{\href }%
\providecommand \doibase [0]{https://doi.org/}%
\providecommand \selectlanguage [0]{\@gobble}%
\providecommand \bibinfo  [0]{\@secondoftwo}%
\providecommand \bibfield  [0]{\@secondoftwo}%
\providecommand \translation [1]{[#1]}%
\providecommand \BibitemOpen [0]{}%
\providecommand \bibitemStop [0]{}%
\providecommand \bibitemNoStop [0]{.\EOS\space}%
\providecommand \EOS [0]{\spacefactor3000\relax}%
\providecommand \BibitemShut  [1]{\csname bibitem#1\endcsname}%
\let\auto@bib@innerbib\@empty
\bibitem [{\citenamefont {Gard}\ \emph {et~al.}(2020)\citenamefont {Gard},
  \citenamefont {Zhu}, \citenamefont {Barron}, \citenamefont {Mayhall},
  \citenamefont {Economou},\ and\ \citenamefont {Barnes}}]{Gard2020}%
  \BibitemOpen
  \bibfield  {author} {\bibinfo {author} {\bibfnamefont {B.~T.}\ \bibnamefont
  {Gard}}, \bibinfo {author} {\bibfnamefont {L.}~\bibnamefont {Zhu}}, \bibinfo
  {author} {\bibfnamefont {G.~S.}\ \bibnamefont {Barron}}, \bibinfo {author}
  {\bibfnamefont {N.~J.}\ \bibnamefont {Mayhall}}, \bibinfo {author}
  {\bibfnamefont {S.~E.}\ \bibnamefont {Economou}},\ and\ \bibinfo {author}
  {\bibfnamefont {E.}~\bibnamefont {Barnes}},\ }\bibfield  {title} {\bibinfo
  {title} {Efficient symmetry-preserving state preparation circuits for the
  variational quantum eigensolver algorithm},\ }\href
  {https://doi.org/10.1038/s41534-019-0240-1} {\bibfield  {journal} {\bibinfo
  {journal} {npj Quantum Information}\ }\textbf {\bibinfo {volume} {6}},\
  \bibinfo {pages} {10} (\bibinfo {year} {2020})}\BibitemShut {NoStop}%
\bibitem [{\citenamefont {Higgott}\ \emph {et~al.}(2019)\citenamefont
  {Higgott}, \citenamefont {Wang},\ and\ \citenamefont
  {Brierley}}]{Higgott2019VQD}%
  \BibitemOpen
  \bibfield  {author} {\bibinfo {author} {\bibfnamefont {O.}~\bibnamefont
  {Higgott}}, \bibinfo {author} {\bibfnamefont {D.}~\bibnamefont {Wang}},\ and\
  \bibinfo {author} {\bibfnamefont {S.}~\bibnamefont {Brierley}},\ }\bibfield
  {title} {\bibinfo {title} {Variational quantum computation of excited
  states},\ }\href {https://doi.org/10.22331/q-2019-07-01-156} {\bibfield
  {journal} {\bibinfo  {journal} {Quantum}\ }\textbf {\bibinfo {volume} {3}},\
  \bibinfo {pages} {156} (\bibinfo {year} {2019})}\BibitemShut {NoStop}%
\bibitem [{\citenamefont {Helgaker}\ \emph {et~al.}(2000)\citenamefont
  {Helgaker}, \citenamefont {J{\o}rgensen},\ and\ \citenamefont
  {Olsen}}]{Helgaker2000}%
  \BibitemOpen
  \bibfield  {author} {\bibinfo {author} {\bibfnamefont {T.}~\bibnamefont
  {Helgaker}}, \bibinfo {author} {\bibfnamefont {P.}~\bibnamefont
  {J{\o}rgensen}},\ and\ \bibinfo {author} {\bibfnamefont {J.}~\bibnamefont
  {Olsen}},\ }\href {https://doi.org/10.1002/9781119019572} {\emph {\bibinfo
  {title} {Molecular Electronic-Structure Theory}}}\ (\bibinfo  {publisher}
  {Wiley},\ \bibinfo {address} {Chichester},\ \bibinfo {year}
  {2000})\BibitemShut {NoStop}%
\bibitem [{\citenamefont {Evangelista}\ \emph {et~al.}(2019)\citenamefont
  {Evangelista}, \citenamefont {Chan},\ and\ \citenamefont
  {Scuseria}}]{Evangelista2019UCC}%
  \BibitemOpen
  \bibfield  {author} {\bibinfo {author} {\bibfnamefont {F.~A.}\ \bibnamefont
  {Evangelista}}, \bibinfo {author} {\bibfnamefont {G.~K.-L.}\ \bibnamefont
  {Chan}},\ and\ \bibinfo {author} {\bibfnamefont {G.~E.}\ \bibnamefont
  {Scuseria}},\ }\bibfield  {title} {\bibinfo {title} {Exact parameterization
  of fermionic wave functions via unitary coupled cluster theory},\ }\href
  {https://doi.org/10.1063/1.5133059} {\bibfield  {journal} {\bibinfo
  {journal} {The Journal of Chemical Physics}\ }\textbf {\bibinfo {volume}
  {151}},\ \bibinfo {pages} {244112} (\bibinfo {year} {2019})}\BibitemShut
  {NoStop}%
\bibitem [{\citenamefont {Burton}\ \emph {et~al.}(2023)\citenamefont {Burton},
  \citenamefont {Marti-Dafcik}, \citenamefont {Tew},\ and\ \citenamefont
  {Wales}}]{Burton2023Exact}%
  \BibitemOpen
  \bibfield  {author} {\bibinfo {author} {\bibfnamefont {H.~G.~A.}\
  \bibnamefont {Burton}}, \bibinfo {author} {\bibfnamefont {D.}~\bibnamefont
  {Marti-Dafcik}}, \bibinfo {author} {\bibfnamefont {D.~P.}\ \bibnamefont
  {Tew}},\ and\ \bibinfo {author} {\bibfnamefont {D.~J.}\ \bibnamefont
  {Wales}},\ }\bibfield  {title} {\bibinfo {title} {Exact electronic states
  with shallow quantum circuits from global optimisation},\ }\href
  {https://doi.org/10.1038/s41534-023-00744-2} {\bibfield  {journal} {\bibinfo
  {journal} {npj Quantum Information}\ }\textbf {\bibinfo {volume} {9}},\
  \bibinfo {pages} {75} (\bibinfo {year} {2023})}\BibitemShut {NoStop}%
\bibitem [{\citenamefont {Magoulas}\ and\ \citenamefont
  {Evangelista}(2025)}]{Magoulas2025SpinAdapted}%
  \BibitemOpen
  \bibfield  {author} {\bibinfo {author} {\bibfnamefont {I.}~\bibnamefont
  {Magoulas}}\ and\ \bibinfo {author} {\bibfnamefont {F.~A.}\ \bibnamefont
  {Evangelista}},\ }\href {https://doi.org/10.48550/arXiv.2511.13485} {\bibinfo
  {title} {Spin-adapted fermionic unitaries: From {Lie} algebras to compact
  quantum circuits}} (\bibinfo {year} {2025}),\ \Eprint
  {https://arxiv.org/abs/2511.13485v2} {arXiv:2511.13485v2 [quant-ph]}
  \BibitemShut {NoStop}%
\bibitem [{\citenamefont {D'Alessandro}(2007)}]{DAlessandro2007}%
  \BibitemOpen
  \bibfield  {author} {\bibinfo {author} {\bibfnamefont {D.}~\bibnamefont
  {D'Alessandro}},\ }\href {https://doi.org/10.1201/9781584888833} {\emph
  {\bibinfo {title} {Introduction to Quantum Control and Dynamics}}}\ (\bibinfo
   {publisher} {Chapman and Hall/CRC},\ \bibinfo {address} {Boca Raton},\
  \bibinfo {year} {2007})\BibitemShut {NoStop}%
\bibitem [{\citenamefont {Anselmetti}\ \emph {et~al.}(2021)\citenamefont
  {Anselmetti}, \citenamefont {Wierichs}, \citenamefont {Gogolin},\ and\
  \citenamefont {Parrish}}]{Anselmetti2021}%
  \BibitemOpen
  \bibfield  {author} {\bibinfo {author} {\bibfnamefont {G.-L.~R.}\
  \bibnamefont {Anselmetti}}, \bibinfo {author} {\bibfnamefont
  {D.}~\bibnamefont {Wierichs}}, \bibinfo {author} {\bibfnamefont
  {C.}~\bibnamefont {Gogolin}},\ and\ \bibinfo {author} {\bibfnamefont {R.~M.}\
  \bibnamefont {Parrish}},\ }\bibfield  {title} {\bibinfo {title} {Local,
  expressive, quantum-number-preserving {VQE} ans{\"a}tze for fermionic
  systems},\ }\href {https://doi.org/10.1088/1367-2630/ac2cb3} {\bibfield
  {journal} {\bibinfo  {journal} {New Journal of Physics}\ }\textbf {\bibinfo
  {volume} {23}},\ \bibinfo {pages} {113010} (\bibinfo {year}
  {2021})}\BibitemShut {NoStop}%
\bibitem [{Sup()}]{SupplementalMaterial}%
  \BibitemOpen
  \href@noop {} {}\bibinfo {note} {See the Supplemental Material appended to
  this manuscript for detailed operator comparisons, additional mathematical
  results, extended numerical checks, and circuit-replay
  diagnostics}\BibitemShut {NoStop}%
\bibitem [{\citenamefont {Lee}\ \emph {et~al.}(2019)\citenamefont {Lee},
  \citenamefont {Huggins}, \citenamefont {Head-Gordon},\ and\ \citenamefont
  {Whaley}}]{Lee2019UpCCGSD}%
  \BibitemOpen
  \bibfield  {author} {\bibinfo {author} {\bibfnamefont {J.}~\bibnamefont
  {Lee}}, \bibinfo {author} {\bibfnamefont {W.~J.}\ \bibnamefont {Huggins}},
  \bibinfo {author} {\bibfnamefont {M.}~\bibnamefont {Head-Gordon}},\ and\
  \bibinfo {author} {\bibfnamefont {K.~B.}\ \bibnamefont {Whaley}},\ }\bibfield
   {title} {\bibinfo {title} {Generalized unitary coupled cluster wave
  functions for quantum computation},\ }\href
  {https://doi.org/10.1021/acs.jctc.8b01004} {\bibfield  {journal} {\bibinfo
  {journal} {Journal of Chemical Theory and Computation}\ }\textbf {\bibinfo
  {volume} {15}},\ \bibinfo {pages} {311} (\bibinfo {year} {2019})}\BibitemShut
  {NoStop}%
\bibitem [{\citenamefont {Burton}(2024)}]{Burton2024TUPS}%
  \BibitemOpen
  \bibfield  {author} {\bibinfo {author} {\bibfnamefont {H.~G.~A.}\
  \bibnamefont {Burton}},\ }\bibfield  {title} {\bibinfo {title} {Accurate and
  gate-efficient quantum ans{\"a}tze for electronic states without adaptive
  optimization},\ }\href {https://doi.org/10.1103/PhysRevResearch.6.023300}
  {\bibfield  {journal} {\bibinfo  {journal} {Physical Review Research}\
  }\textbf {\bibinfo {volume} {6}},\ \bibinfo {pages} {023300} (\bibinfo {year}
  {2024})}\BibitemShut {NoStop}%
\bibitem [{\citenamefont {Magoulas}\ \emph {et~al.}(2026)\citenamefont
  {Magoulas}, \citenamefont {Zhang},\ and\ \citenamefont
  {Evangelista}}]{Magoulas2026Dilemmas}%
  \BibitemOpen
  \bibfield  {author} {\bibinfo {author} {\bibfnamefont {I.}~\bibnamefont
  {Magoulas}}, \bibinfo {author} {\bibfnamefont {M.}~\bibnamefont {Zhang}},\
  and\ \bibinfo {author} {\bibfnamefont {F.~A.}\ \bibnamefont {Evangelista}},\
  }\bibfield  {title} {\bibinfo {title} {Symmetry dilemmas in quantum computing
  for chemistry: A comprehensive analysis},\ }\href
  {https://doi.org/10.1063/5.0316482} {\bibfield  {journal} {\bibinfo
  {journal} {The Journal of Chemical Physics}\ }\textbf {\bibinfo {volume}
  {164}},\ \bibinfo {pages} {144113} (\bibinfo {year} {2026})}\BibitemShut
  {NoStop}%
\bibitem [{\citenamefont {Stergiou}\ and\ \citenamefont
  {Sawaya}(2026)}]{Stergiou2026}%
  \BibitemOpen
  \bibfield  {author} {\bibinfo {author} {\bibfnamefont {A.}~\bibnamefont
  {Stergiou}}\ and\ \bibinfo {author} {\bibfnamefont {N.~P.~D.}\ \bibnamefont
  {Sawaya}},\ }\href {https://doi.org/10.48550/arXiv.2605.00979} {\bibinfo
  {title} {Universality of quantum gates in particle and symmetry constrained
  subspaces}} (\bibinfo {year} {2026}),\ \Eprint
  {https://arxiv.org/abs/2605.00979v1} {arXiv:2605.00979v1 [quant-ph]}
  \BibitemShut {NoStop}%
\bibitem [{\citenamefont {Li}\ and\ \citenamefont
  {Paldus}(1993)}]{LiPaldus1993}%
  \BibitemOpen
  \bibfield  {author} {\bibinfo {author} {\bibfnamefont {X.}~\bibnamefont
  {Li}}\ and\ \bibinfo {author} {\bibfnamefont {J.}~\bibnamefont {Paldus}},\
  }\bibfield  {title} {\bibinfo {title} {Unitary group tensor operator algebras
  for many-electron systems: {II}. one- and two-body matrix elements},\ }\href
  {https://doi.org/10.1007/BF01165571} {\bibfield  {journal} {\bibinfo
  {journal} {Journal of Mathematical Chemistry}\ }\textbf {\bibinfo {volume}
  {13}},\ \bibinfo {pages} {273} (\bibinfo {year} {1993})}\BibitemShut
  {NoStop}%
\bibitem [{\citenamefont {Jain}\ \emph {et~al.}(2026)\citenamefont {Jain},
  \citenamefont {Izmaylov},\ and\ \citenamefont {Kjellgren}}]{Jain2026Exact}%
  \BibitemOpen
  \bibfield  {author} {\bibinfo {author} {\bibfnamefont {P.}~\bibnamefont
  {Jain}}, \bibinfo {author} {\bibfnamefont {A.~F.}\ \bibnamefont {Izmaylov}},\
  and\ \bibinfo {author} {\bibfnamefont {E.~R.}\ \bibnamefont {Kjellgren}},\
  }\bibfield  {title} {\bibinfo {title} {Exact factorization of unitary
  transformations with spin-adapted generators},\ }\href
  {https://doi.org/10.1063/5.0326865} {\bibfield  {journal} {\bibinfo
  {journal} {The Journal of Chemical Physics}\ }\textbf {\bibinfo {volume}
  {164}},\ \bibinfo {pages} {194107} (\bibinfo {year} {2026})}\BibitemShut
  {NoStop}%
\bibitem [{\citenamefont {Avetisyan}\ and\ \citenamefont
  {Mkrtchyan}(2020)}]{Avetisyan2020X2}%
  \BibitemOpen
  \bibfield  {author} {\bibinfo {author} {\bibfnamefont {M.~Y.}\ \bibnamefont
  {Avetisyan}}\ and\ \bibinfo {author} {\bibfnamefont {R.~L.}\ \bibnamefont
  {Mkrtchyan}},\ }\bibfield  {title} {\bibinfo {title} {\({X}_2\) series of
  universal quantum dimensions},\ }\href
  {https://doi.org/10.1088/1751-8121/ab5f4d} {\bibfield  {journal} {\bibinfo
  {journal} {Journal of Physics A: Mathematical and Theoretical}\ }\textbf
  {\bibinfo {volume} {53}},\ \bibinfo {pages} {045202} (\bibinfo {year}
  {2020})}\BibitemShut {NoStop}%
\bibitem [{\citenamefont {Arrazola}\ \emph {et~al.}(2022)\citenamefont
  {Arrazola}, \citenamefont {{Di Matteo}}, \citenamefont {Quesada},
  \citenamefont {Jahangiri}, \citenamefont {Delgado},\ and\ \citenamefont
  {Killoran}}]{Arrazola2022}%
  \BibitemOpen
  \bibfield  {author} {\bibinfo {author} {\bibfnamefont {J.~M.}\ \bibnamefont
  {Arrazola}}, \bibinfo {author} {\bibfnamefont {O.}~\bibnamefont {{Di
  Matteo}}}, \bibinfo {author} {\bibfnamefont {N.}~\bibnamefont {Quesada}},
  \bibinfo {author} {\bibfnamefont {S.}~\bibnamefont {Jahangiri}}, \bibinfo
  {author} {\bibfnamefont {A.}~\bibnamefont {Delgado}},\ and\ \bibinfo {author}
  {\bibfnamefont {N.}~\bibnamefont {Killoran}},\ }\bibfield  {title} {\bibinfo
  {title} {Universal quantum circuits for quantum chemistry},\ }\href
  {https://doi.org/10.22331/q-2022-06-20-742} {\bibfield  {journal} {\bibinfo
  {journal} {Quantum}\ }\textbf {\bibinfo {volume} {6}},\ \bibinfo {pages}
  {742} (\bibinfo {year} {2022})}\BibitemShut {NoStop}%
\bibitem [{\citenamefont {Jordan}\ and\ \citenamefont
  {Wigner}(1928)}]{JordanWigner1928}%
  \BibitemOpen
  \bibfield  {author} {\bibinfo {author} {\bibfnamefont {P.}~\bibnamefont
  {Jordan}}\ and\ \bibinfo {author} {\bibfnamefont {E.}~\bibnamefont
  {Wigner}},\ }\bibfield  {title} {\bibinfo {title} {{{\"U}ber das Paulische
  {\"A}quivalenzverbot}},\ }\href {https://doi.org/10.1007/BF01331938}
  {\bibfield  {journal} {\bibinfo  {journal} {Zeitschrift f{\"u}r Physik}\
  }\textbf {\bibinfo {volume} {47}},\ \bibinfo {pages} {631} (\bibinfo {year}
  {1928})}\BibitemShut {NoStop}%
\bibitem [{\citenamefont {Liu}(2026)}]{Liu2026Zenodo}%
  \BibitemOpen
  \bibfield  {author} {\bibinfo {author} {\bibfnamefont {M.}~\bibnamefont
  {Liu}},\ }\href {https://doi.org/10.5281/zenodo.22230014} {\bibinfo {title}
  {Formal verification, numerical data, and reproducibility code for ``minimal
  building blocks for molecular quantum circuits with exact spin symmetry''}}
  (\bibinfo {year} {2026})\BibitemShut {NoStop}%
\bibitem [{\citenamefont {Javadi-Abhari}\ \emph {et~al.}(2024)\citenamefont
  {Javadi-Abhari}, \citenamefont {Treinish}, \citenamefont {Krsulich},
  \citenamefont {Wood}, \citenamefont {Lishman}, \citenamefont {Gacon},
  \citenamefont {Martiel}, \citenamefont {Nation}, \citenamefont {Bishop},
  \citenamefont {Cross}, \citenamefont {Johnson},\ and\ \citenamefont
  {Gambetta}}]{Qiskit2024}%
  \BibitemOpen
  \bibfield  {author} {\bibinfo {author} {\bibfnamefont {A.}~\bibnamefont
  {Javadi-Abhari}}, \bibinfo {author} {\bibfnamefont {M.}~\bibnamefont
  {Treinish}}, \bibinfo {author} {\bibfnamefont {K.}~\bibnamefont {Krsulich}},
  \bibinfo {author} {\bibfnamefont {C.~J.}\ \bibnamefont {Wood}}, \bibinfo
  {author} {\bibfnamefont {J.}~\bibnamefont {Lishman}}, \bibinfo {author}
  {\bibfnamefont {J.}~\bibnamefont {Gacon}}, \bibinfo {author} {\bibfnamefont
  {S.}~\bibnamefont {Martiel}}, \bibinfo {author} {\bibfnamefont {P.~D.}\
  \bibnamefont {Nation}}, \bibinfo {author} {\bibfnamefont {L.~S.}\
  \bibnamefont {Bishop}}, \bibinfo {author} {\bibfnamefont {A.~W.}\
  \bibnamefont {Cross}}, \bibinfo {author} {\bibfnamefont {B.~R.}\ \bibnamefont
  {Johnson}},\ and\ \bibinfo {author} {\bibfnamefont {J.~M.}\ \bibnamefont
  {Gambetta}},\ }\href {https://doi.org/10.48550/arXiv.2405.08810} {\bibinfo
  {title} {Quantum computing with {Qiskit}}} (\bibinfo {year} {2024}),\ \Eprint
  {https://arxiv.org/abs/2405.08810} {arXiv:2405.08810 [quant-ph]} \BibitemShut
  {NoStop}%
\bibitem [{\citenamefont {Fan}\ \emph {et~al.}(2022)\citenamefont {Fan},
  \citenamefont {Liu}, \citenamefont {Zeng}, \citenamefont {Xu}, \citenamefont
  {Shang}, \citenamefont {Li},\ and\ \citenamefont
  {Yang}}]{Fan2022Q2Chemistry}%
  \BibitemOpen
  \bibfield  {author} {\bibinfo {author} {\bibfnamefont {Y.}~\bibnamefont
  {Fan}}, \bibinfo {author} {\bibfnamefont {J.}~\bibnamefont {Liu}}, \bibinfo
  {author} {\bibfnamefont {X.}~\bibnamefont {Zeng}}, \bibinfo {author}
  {\bibfnamefont {Z.}~\bibnamefont {Xu}}, \bibinfo {author} {\bibfnamefont
  {H.}~\bibnamefont {Shang}}, \bibinfo {author} {\bibfnamefont
  {Z.}~\bibnamefont {Li}},\ and\ \bibinfo {author} {\bibfnamefont
  {J.}~\bibnamefont {Yang}},\ }\bibfield  {title} {\bibinfo {title}
  {{Q\textsuperscript{2}Chemistry}: A quantum computation platform for quantum
  chemistry},\ }\href {https://doi.org/10.52396/JUSTC-2022-0118} {\bibfield
  {journal} {\bibinfo  {journal} {JUSTC}\ }\textbf {\bibinfo {volume} {52}},\
  \bibinfo {pages} {2} (\bibinfo {year} {2022})}\BibitemShut {NoStop}%
\bibitem [{\citenamefont {Oszmaniec}\ and\ \citenamefont
  {Zimbor{\'a}s}(2017)}]{Oszmaniec2017}%
  \BibitemOpen
  \bibfield  {author} {\bibinfo {author} {\bibfnamefont {M.}~\bibnamefont
  {Oszmaniec}}\ and\ \bibinfo {author} {\bibfnamefont {Z.}~\bibnamefont
  {Zimbor{\'a}s}},\ }\bibfield  {title} {\bibinfo {title} {Universal extensions
  of restricted classes of quantum operations},\ }\href
  {https://doi.org/10.1103/PhysRevLett.119.220502} {\bibfield  {journal}
  {\bibinfo  {journal} {Physical Review Letters}\ }\textbf {\bibinfo {volume}
  {119}},\ \bibinfo {pages} {220502} (\bibinfo {year} {2017})}\BibitemShut
  {NoStop}%
\bibitem [{\citenamefont {Fulton}\ and\ \citenamefont
  {Harris}(1991)}]{FultonHarris1991}%
  \BibitemOpen
  \bibfield  {author} {\bibinfo {author} {\bibfnamefont {W.}~\bibnamefont
  {Fulton}}\ and\ \bibinfo {author} {\bibfnamefont {J.}~\bibnamefont
  {Harris}},\ }\href {https://doi.org/10.1007/978-1-4612-0979-9} {\emph
  {\bibinfo {title} {Representation Theory: A First Course}}},\ \bibinfo
  {series} {Graduate Texts in Mathematics}, Vol.\ \bibinfo {volume} {129}\
  (\bibinfo  {publisher} {Springer},\ \bibinfo {address} {New York},\ \bibinfo
  {year} {1991})\BibitemShut {NoStop}%
\bibitem [{\citenamefont {{de Moura}}\ and\ \citenamefont
  {Ullrich}(2021)}]{deMoura2021Lean4}%
  \BibitemOpen
  \bibfield  {author} {\bibinfo {author} {\bibfnamefont {L.}~\bibnamefont {{de
  Moura}}}\ and\ \bibinfo {author} {\bibfnamefont {S.}~\bibnamefont
  {Ullrich}},\ }\bibfield  {title} {\bibinfo {title} {The {Lean 4} theorem
  prover and programming language},\ }in\ \href
  {https://doi.org/10.1007/978-3-030-79876-5_37} {\emph {\bibinfo {booktitle}
  {Automated Deduction---CADE 28}}},\ \bibinfo {series} {Lecture Notes in
  Computer Science}, Vol.\ \bibinfo {volume} {12699}\ (\bibinfo  {publisher}
  {Springer},\ \bibinfo {year} {2021})\ pp.\ \bibinfo {pages}
  {625--635}\BibitemShut {NoStop}%
\bibitem [{\citenamefont {{The mathlib Community}}(2020)}]{Mathlib2020}%
  \BibitemOpen
  \bibfield  {author} {\bibinfo {author} {\bibnamefont {{The mathlib
  Community}}},\ }\bibfield  {title} {\bibinfo {title} {The {Lean} mathematical
  library},\ }in\ \href {https://doi.org/10.1145/3372885.3373824} {\emph
  {\bibinfo {booktitle} {Proceedings of the 9th ACM SIGPLAN International
  Conference on Certified Programs and Proofs}}}\ (\bibinfo  {publisher}
  {ACM},\ \bibinfo {year} {2020})\ pp.\ \bibinfo {pages} {367--381}\BibitemShut
  {NoStop}%
\bibitem [{\citenamefont {Tang}\ \emph {et~al.}(2021)\citenamefont {Tang},
  \citenamefont {Shkolnikov}, \citenamefont {Barron}, \citenamefont {Grimsley},
  \citenamefont {Mayhall}, \citenamefont {Barnes},\ and\ \citenamefont
  {Economou}}]{Tang2021}%
  \BibitemOpen
  \bibfield  {author} {\bibinfo {author} {\bibfnamefont {H.~L.}\ \bibnamefont
  {Tang}}, \bibinfo {author} {\bibfnamefont {V.~O.}\ \bibnamefont
  {Shkolnikov}}, \bibinfo {author} {\bibfnamefont {G.~S.}\ \bibnamefont
  {Barron}}, \bibinfo {author} {\bibfnamefont {H.~R.}\ \bibnamefont
  {Grimsley}}, \bibinfo {author} {\bibfnamefont {N.~J.}\ \bibnamefont
  {Mayhall}}, \bibinfo {author} {\bibfnamefont {E.}~\bibnamefont {Barnes}},\
  and\ \bibinfo {author} {\bibfnamefont {S.~E.}\ \bibnamefont {Economou}},\
  }\bibfield  {title} {\bibinfo {title} {{Qubit-ADAPT-VQE}: An adaptive
  algorithm for constructing hardware-efficient ans{\"a}tze on a quantum
  processor},\ }\href {https://doi.org/10.1103/PRXQuantum.2.020310} {\bibfield
  {journal} {\bibinfo  {journal} {PRX Quantum}\ }\textbf {\bibinfo {volume}
  {2}},\ \bibinfo {pages} {020310} (\bibinfo {year} {2021})}\BibitemShut
  {NoStop}%
\bibitem [{\citenamefont {Shkolnikov}\ \emph {et~al.}(2023)\citenamefont
  {Shkolnikov}, \citenamefont {Mayhall}, \citenamefont {Economou},\ and\
  \citenamefont {Barnes}}]{Shkolnikov2023}%
  \BibitemOpen
  \bibfield  {author} {\bibinfo {author} {\bibfnamefont {V.~O.}\ \bibnamefont
  {Shkolnikov}}, \bibinfo {author} {\bibfnamefont {N.~J.}\ \bibnamefont
  {Mayhall}}, \bibinfo {author} {\bibfnamefont {S.~E.}\ \bibnamefont
  {Economou}},\ and\ \bibinfo {author} {\bibfnamefont {E.}~\bibnamefont
  {Barnes}},\ }\bibfield  {title} {\bibinfo {title} {Avoiding symmetry
  roadblocks and minimizing the measurement overhead of adaptive variational
  quantum eigensolvers},\ }\href {https://doi.org/10.22331/q-2023-06-12-1040}
  {\bibfield  {journal} {\bibinfo  {journal} {Quantum}\ }\textbf {\bibinfo
  {volume} {7}},\ \bibinfo {pages} {1040} (\bibinfo {year} {2023})}\BibitemShut
  {NoStop}%
\bibitem [{\citenamefont {Haidar}\ \emph {et~al.}(2025)\citenamefont {Haidar},
  \citenamefont {Adjoua}, \citenamefont {Badreddine}, \citenamefont {Peruzzo},\
  and\ \citenamefont {Piquemal}}]{Haidar2025}%
  \BibitemOpen
  \bibfield  {author} {\bibinfo {author} {\bibfnamefont {M.}~\bibnamefont
  {Haidar}}, \bibinfo {author} {\bibfnamefont {O.}~\bibnamefont {Adjoua}},
  \bibinfo {author} {\bibfnamefont {S.}~\bibnamefont {Badreddine}}, \bibinfo
  {author} {\bibfnamefont {A.}~\bibnamefont {Peruzzo}},\ and\ \bibinfo {author}
  {\bibfnamefont {J.-P.}\ \bibnamefont {Piquemal}},\ }\bibfield  {title}
  {\bibinfo {title} {Non-iterative disentangled unitary coupled-cluster based
  on {Lie}-algebraic structure},\ }\href
  {https://doi.org/10.1088/2058-9565/adb3c5} {\bibfield  {journal} {\bibinfo
  {journal} {Quantum Science and Technology}\ }\textbf {\bibinfo {volume}
  {10}},\ \bibinfo {pages} {025031} (\bibinfo {year} {2025})}\BibitemShut
  {NoStop}%
\bibitem [{\citenamefont {Viswanathan}\ \emph {et~al.}(2026)\citenamefont
  {Viswanathan}, \citenamefont {Adjoua}, \citenamefont {Feniou}, \citenamefont
  {Badreddine},\ and\ \citenamefont {Piquemal}}]{Viswanathan2026}%
  \BibitemOpen
  \bibfield  {author} {\bibinfo {author} {\bibfnamefont {Y.}~\bibnamefont
  {Viswanathan}}, \bibinfo {author} {\bibfnamefont {O.}~\bibnamefont {Adjoua}},
  \bibinfo {author} {\bibfnamefont {C.}~\bibnamefont {Feniou}}, \bibinfo
  {author} {\bibfnamefont {S.}~\bibnamefont {Badreddine}},\ and\ \bibinfo
  {author} {\bibfnamefont {J.-P.}\ \bibnamefont {Piquemal}},\ }\href
  {https://doi.org/10.48550/arXiv.2511.22593} {\bibinfo {title} {An optimized
  construction of {Lie} algebra generator pools for variational quantum
  eigensolvers in chemistry}} (\bibinfo {year} {2026}),\ \Eprint
  {https://arxiv.org/abs/2511.22593v3} {arXiv:2511.22593v3 [quant-ph]}
  \BibitemShut {NoStop}%
\bibitem [{\citenamefont {Sapova}\ and\ \citenamefont
  {Fedorov}(2022)}]{Sapova2022}%
  \BibitemOpen
  \bibfield  {author} {\bibinfo {author} {\bibfnamefont {M.~D.}\ \bibnamefont
  {Sapova}}\ and\ \bibinfo {author} {\bibfnamefont {A.~K.}\ \bibnamefont
  {Fedorov}},\ }\bibfield  {title} {\bibinfo {title} {Variational quantum
  eigensolver techniques for simulating carbon monoxide oxidation},\ }\href
  {https://doi.org/10.1038/s42005-022-00982-4} {\bibfield  {journal} {\bibinfo
  {journal} {Communications Physics}\ }\textbf {\bibinfo {volume} {5}},\
  \bibinfo {pages} {199} (\bibinfo {year} {2022})}\BibitemShut {NoStop}%
\end{thebibliography}
